\numberwithin{equation}{section}
\newtheorem{theorem}{Theorem}[section]
\newtheorem{lemma}[theorem]{Lemma}
\newtheorem{proposition}[theorem]{Proposition}
\newtheorem{rem}[theorem]{Remark}
\newcommand{\overbar}[1]{\mkern 1.5mu\overline{\mkern-1.5mu#1\mkern-1.5mu}\mkern 1.5mu}
\newcommand{\tf}{\textsc{f}}
\newcommand{\lint}{\llbracket}
\newcommand{\rint}{\rrbracket}
\newcommand{\dist}{\mathrm{dist}}
\newcommand{\dd}{\mathrm{d}}
\newcommand{\ind}{\mathbf{1}}
\newcommand{\Z}{\mathbb{Z}}
\renewcommand{\tilde}{\widetilde}
\newcommand{\cB}{{\ensuremath{\mathcal B}} }
\newcommand{\cA}{{\ensuremath{\mathcal A}} }
\newcommand{\cF}{{\ensuremath{\mathcal F}} }
\newcommand{\cG}{{\ensuremath{\mathcal G}} }
\newcommand{\cP}{{\ensuremath{\mathcal P}} }
\newcommand{\cE}{{\ensuremath{\mathcal E}} }
\newcommand{\cC}{{\ensuremath{\mathcal C}} }
\newcommand{\cN}{{\ensuremath{\mathcal N}} }
\newcommand{\cD}{{\ensuremath{\mathcal D}} }
\newcommand{\bP}{{\ensuremath{\mathbf P}} }
\newcommand{\bE}{{\ensuremath{\mathbf E}} }
\DeclareMathSymbol{\leqslant}{\mathalpha}{AMSa}{"36} % nicer `smaller or equal'
\DeclareMathSymbol{\geqslant}{\mathalpha}{AMSa}{"3E} % nicer `larger or equal'
\DeclareMathSymbol{\eset}{\mathalpha}{AMSb}{"3F}     % nicer `emptyset'
\newcommand{\Var}{\mathrm{Var}}
\DeclareMathOperator*{\inter}{\bigcap}       % \sum-like symbol for inter
\newcommand{\sumtwo}[2]{\sum_{\substack{#1 \\ #2}}} % sum with 2 lines
\newcommand{\bbE}{{\ensuremath{\mathbb E}} }
\newcommand{\bbG}{{\ensuremath{\mathbb G}} }
\newcommand{\bbH}{{\ensuremath{\mathbb H}} }
\newcommand{\bbN}{{\ensuremath{\mathbb N}} }
\newcommand{\bbP}{{\ensuremath{\mathbb P}} }
\newcommand{\bbR}{{\ensuremath{\mathbb R}} }
\newcommand{\bbZ}{{\ensuremath{\mathbb Z}} }
\newcommand{\ga}{\alpha}
\newcommand{\gb}{\beta}
\newcommand{\gga}{\gamma}            % \gg already exists...
\newcommand{\gd}{\delta}
\newcommand{\gep}{\varepsilon}       % \ge already exists...
\newcommand{\gp}{\varphi}
\newcommand{\gG}{\Gamma}
\newcommand{\gD}{\Delta}
\newcommand{\go}{\omega}
\newcommand{\gl}{\lambda}
\newcommand{\gL}{\Lambda}
\newcommand{\gs}{\sigma}
\newcommand{\supp}{\mathrm{Supp}}
\def\captionfont@{\footnotesize}
\def\captionheadfont@{\scshape}
\long\def\@makecaption#1#2{%
  \vspace{2mm}
  \setbox\@tempboxa\vbox{\color@setgroup
    \advance\hsize-6pc\noindent
    \captionfont@\captionheadfont@#1\@xp\@ifnotempty\@xp
        {\@cdr#2\@nil}{.\captionfont@\upshape\enspace#2}%
    \unskip\kern-6pc\par
    \global\setbox\@ne\lastbox\color@endgroup}%
  \ifhbox\@ne % the normal case
    \setbox\@ne\hbox{\unhbox\@ne\unskip\unskip\unpenalty\unkern}%
  \fi
  \ifdim\wd\@tempboxa=\z@ % this means caption will fit on one line
    \setbox\@ne\hbox to\columnwidth{\hss\kern-6pc\box\@ne\hss}%
  \else % tempboxa contained more than one line
    \setbox\@ne\vbox{\unvbox\@tempboxa\parskip\z@skip
        \noindent\unhbox\@ne\advance\hsize-6pc\par}%
\fi
  \ifnum\@tempcnta<64 % if the float IS a figure...
    \addvspace\abovecaptionskip
    \moveright 3pc\box\@ne
  \else % if the float IS NOT a figure...
    \moveright 3pc\box\@ne
    \nobreak
    \vskip\belowcaptionskip
  \fi
\relax
}
\def\writefig#1 #2 #3 {\rlap{\kern #1 truecm
\raise #2 truecm \hbox{#3}}}
\newcommand{\cc}{\complement }
\newcommand{\phibar}{H }
\begin{document}

\title[The  disordered lattice free field pinning model approaching criticality]{The  disordered lattice free field pinning model approaching criticality}

\author{Giambattista Giacomin}
\address{Universit\'e de Paris,   Laboratoire de Probabilit{\'e}s, Statistiques  et Mod\'elisation, UMR 8001,
            F-75205 Paris, France}

\address{IMPA - Instituto Nacional de Matem\'atica Pura e Aplicada,
Estrada Dona Castorina 110,
Rio de Janeiro, Brasil 22460-320}
%\email{lacoin@impa.br}
\author{Hubert Lacoin}

\begin{abstract}
We continue the study, initiated in \cite{cf:GL}, of the localization transition of a lattice free field $\phi=(\phi(x))_{x \in \bbZ^d}$, $d\ge3$, in presence of a quenched disordered substrate. The presence of the substrate affects the interface at the spatial sites in which the interface height    is close to zero. This corresponds to the Hamiltonian 
\begin{equation*} \sum_{x\in \bbZ^d }(\beta \go_x+h)\gd_x,\end{equation*}
where $\gd_x=\ind_{[-1,1]}(\phi(x))$, and  $(\go_x)_{x\in \bbZ^d}$ is an IID centered field. A transition takes place when the average pinning potential $h$ goes past a threshold $h_c(\gb)$: from a delocalized phase $h<h_c(\gb)$, where the field is macroscopically repelled by the substrate, to a localized one $h>h_c(\gb)$ where the field sticks to the substrate. In  \cite{cf:GL} the critical value of $h$ is identified and it coincides, up to the sign, with the $\log$-Laplace transform of $\go=\go_x$, that is  $-h_c(\gb)=\gl(\gb):=\log \bbE[e^{\gb\go}]$. Here we obtain  the sharp critical behavior of the free energy approaching criticality:
\begin{equation*}\lim_{u\searrow 0} \frac{  \tf(\gb,h_c(\gb)+u)}{u^2}= \frac{1}{2\,  \Var\left(e^{\gb \go-\gl(\gb)}\right)}.\end{equation*}
Moreover, we give a precise description of the trajectories of the field in the same regime:  the absolute value of the field is $\sqrt{2\gs_d^2\vert\log(h-h_c(\gb))\vert}$   to leading order when $h\searrow h_c(\gb)$  except on a  vanishing fraction of sites ($\gs_d^2$ is the single site variance of the free field). 
%While the behavior of the free energy approaching criticality follows from a two scale decomposition of the field, the  properties of the trajectories require a full multiscale analysis.
\\[10pt]
  2010 \textit{Mathematics Subject Classification: 60K35, 60K37, 82B27, 82B44}
  \\[10pt]
  \textit{Keywords:  Lattice  Free Field,  Disordered Pinning Model, Localization Transition, Critical Behavior, Disorder Relevance, Multiscale Analysis}
\end{abstract}

\maketitle

\tableofcontents

\section{Introduction}
\subsection{Disorder and critical phenomena: an overview, till the free field pinning case}
A fundamental issue in statistical mechanics is the effect of  \emph{disorder}, synonymous of \emph{random environment} and (with more old fashioned language) of
 \emph{impurities}, 
 on phase transitions. The issue is very general and applies to any  statistical model that exhibits a phase transition: in mathematical terms, a phase transition happens at a given value, called critical, of a parameter (the temperature, an external field,$\ldots$) when one or more observables on the system  have a singular,  i.e. non-analytic, behavior in the parameter we are considering, at the critical value. 
 The behavior of systems approaching criticality is particularly interesting because of the appearance of \emph{universal behaviors} 
 that are, to a certain extent, characterized in terms of \emph{critical exponents} (e.g. \cite{cf:FFS,cf:FV}).
 Consider now    a \emph{spatially homogeneous} model, i.~e. a model in which the interactions are translation invariant, that has a phase transition.  If we modify the interactions by perturbing them in a spatially random way, we obtain, for every realization of the randomness, a different non homogeneous model (that we call \emph{disordered}): does the phase transition survive to the introduction of this randomness? And, if it does, is the nature of the  transition affected? That is, are the critical exponents  the same as in the homogeneous case?  
 
In spite of the general nature of the problem, phase transitions and critical phenomena are under control only for particular homogeneous models, or classes of homogeneous models. The most famous one, and first (nontrivial) one to be solved (in 1944, by Lars Onsager) is the two dimensional Ising model (on square lattice, with ferromagnetic interactions and  in absence of external field): 
this model has been at the heart of the activity of a large community of researchers since. A part of this community, mostly on the physical side, focused on the issue that interests us, that is whether Onsager's results withstand the introduction of disorder, for example a small amount of disorder. 
And it is precisely in the Ising model context that 
  A.~B.~Harris \cite{cf:Harris} took an approach to this question that turned out to be very successful in the physical community.
  Harris' approach is based on the renormalization group and can be summed up (in a vague but hopefully evocative fashion) by 
  saying that one has to consider what is the effect of the renormalization transformation on the disorder when the system is close to criticality. If the renormalization tends to suppress the disorder then one expects that on large scale the disorder will be \emph{irrelevant}, and the critical phenomenon will not be affected by the disorder. On the other hand, if disorder is enhanced by the renormalization group transformation, one generically expects that the critical behavior is affected by the disorder, that is therefore dubbed \emph{relevant}. 
  The success of Harris' arguments  is in part due to the fact that he was able to make them boil down to a very simple criterion, called \emph{Harris criterion}.  
  
  In spite of the fact that these ideas are around since at least 45 years and that they are commonly applied in physics, from the mathematical viewpoint the understanding of the Harris criterion is very limited, notably for the original example of the two dimensional Ising model (see \cite[Ch.~5]{cf:G} for a review). 
  Only more recently (see \cite{cf:BL,cf:G} and references therein) the Harris criterion prediction has been proven in full  for a class of statistical mechanics models: the pinning models.
  
 Keeping at a very informal level, pinning models can be visualized as \emph{interface} pinning models.  
 An \emph{effective} $d+1$ dimensional  interface is modeled by considering a random function from $\bbZ^d$ to $\bbR$ (or to $\bbZ$): examples include the Lattice Free Field (LFF) or other \emph{gradient} fields like the massless fields  or the Solid On Solid (SOS) models (see e.g.\ \cite[Ch.8]{cf:FV} for an introduction to the LFF or \cite{cf:Fun, cf:Shef,cf:Vel} for more advanced material).
 In $d=1$ these interface models    just reduce to random walk models.
 The pinning potential is a reward that is introduced via an energy term (we are taking a Gibbsian viewpoint of the probability law of the model) that rewards or penalizes the visit to level zero (if the interface takes values in $\bbZ$) or a neighborhood of level $0$
 (if the interface takes values in $\bbR$): we call these visits \emph{contacts}. The intensity of the reward is parametrized by a variable $h$, and it can become a penalization if we change the sign of $h$. In the disordered case we simply make $h$ depend on $x$: the parametrization we choose is $h+\gb \go_x$, where $(\go_x)_{x\in \bbZ ^d}$ are IID centered random variables (with suitable integrability properties), and $\gb\ge 0$.
 
As already understood in \cite{cf:MFisher}, the  $d=1$ model has an intrinsic independence structure that allows in particular a generalization of the model that turns out to be very important in the Harris criterion perspective: in mathematical terms the model can be rewritten (for every value of $d$) just in terms of the point process represented by the location of the contacts and  if $d=1$ the contact set is just a renewal process (if the interface takes values in $\bbZ$, otherwise it is  a Markov renewal process)\cite{cf:GB,cf:G}. This is not only precious in solving the model -- notably, the $d=1$ homogeneous case is exactly solvable -- but it offers an immediate natural generalization to the large class of renewal pinning models. So for $d=1$, in the generalized context we just hinted to, 
one can obtain models for which Harris criterion predicts disorder irrelevance and other ones for which it predicts disorder relevance. We refer to \cite{cf:GB, cf:G} for the large literature on $1+1$ dimensional pinning and renewal pinning. But we want to stress that if the irrelevant disorder results are very satisfactory (and they are proven exactly when Harris criterion predicts irrelevance), relevant disorder results are much weaker. This is not surprising: Harris criterion does not bear information about what the  critical behavior is, if disorder is relevant. Nevertheless, it has been shown that, when the Harris criterion predicts disorder relevance, the critical behavior is not the same as the one of the homogeneous model. What the disordered critical behavior really is remains mathematically a fully open issue.
Substantial progress on this problem has been recently achieved, but not for the pinning models itself:
the critical behavior of a relevant disorder case for one class of \emph{copolymer} pinning models 
and of a simplified version of the hierarchical   pinning model  have been identified respectively in \cite{cf:BGL} and  in \cite{cf:CDHLS}. 
 
The $d>1$ case is a priori more difficult to handle and, above all, the contact set does not enjoy the independence (renewal) structure of the $d=1$ case: it is replaced by a more geometric \emph{spatial} Markov property. %In particular, natural generalizations    of the contact set process are less obvious. 
The problem has been attacked in \cite{cf:CM1,cf:GL,cf:FF2} by using the LFF as interface model: the homogeneous model 
turns out to be  solvable (or, at least, has a certain degree of solvability) and, as far as the questions we raise, in a rather elementary way. In particular, it displays a delocalization to localization transition at a critical value $h_c$, as $h$ grows. 
It is also rather straightforward to see that this transition survives when disorder is introduced, that is for $\gb>0$.
A peculiar feature of this transition is that $h_c$, or $h_c(\gb)$ when $\gb>0$, separates the regime $h<h_c(\gb)$ in which the contact fraction is zero (delocalized regime), and the regime $h>h_c(\gb)$ in which the contact fraction is positive (localized regime).

What is instead much less obvious \cite{cf:GL,cf:FF2} is the identification of   the critical value $h_c(\gb)$, along with  estimates on the contact fraction  of the system  that show that disorder is relevant in all dimensions $d\ge 2$. 
In particular, for $d\ge 3$, the case on which we focus here, we have proven in \cite{cf:GL} that 
the contact fraction approaching criticality, i.e. $h\searrow h_c(\gb)$, is bounded 
above and below by $h-h_c(\gb)$ times a positive constant (different for lower and upper bound). This result has been established only for Gaussian disorder, while for more general disorder a lower bound of $(h-h_c(\gb))^c$, $c>1$ is a constant that depends on $d$. 
Therefore the contact fraction is (Lipschitz) continuous for $\gb>0$ and 
this is sufficient to infer that disorder is relevant. In  fact for $\gb=0$ the contact fraction is discontinuous at the critical value.

The content of the work we present now is: 
\smallskip

\begin{enumerate}
\item showing that, when $h\searrow h_c(\gb)$, the contact fraction behaves like $h-h_c(\gb)$ times a constant that depends on $\gb$ and on the law of the disorder, on which we make only integrability assumptions;
\item providing precise path estimates in the same limit. That is, describing  the trajectories on which the  system concentrates near criticality.
\end{enumerate}   

\smallskip

Precise contact fraction estimates like the ones in point (1)  typically demand at least a heuristic understanding of 
the path behavior of point (2). Therefore  in our context they  demand a good understanding of the \emph{localization mechanism} for the disordered system near criticality. This is one of the main achievements of our analysis. 

On the other hand, the step from (1) to (2) is by no mean evident and, as a matter of fact, it is technically the most demanding part of our analysis, involving in particular a full multiscale analysis.

\subsection{The model's building blocks: Lattice Free Field and disorder} 
\label{sec:building}
We set 
$\gL_N:=\lint 0,N \rint^d$, $d\ge 3$,  $N\in \bbN$,  and consider the centered free field on 
this set. That is, we consider a Gaussian family of centered random variables $(\phi(x))_{x\in \bbZ^d}$, whose law is denoted 
by $\bP_N$, with $\bE_N[\phi(x)\phi(y)]:= G_N(x,y)$ where $G_N(\cdot, \cdot)$ 
is the Green function associated with the simple symmetric random walk on $\bbZ^d$  killed upon exiting $\mathring{\gL}_N:= 
\lint 1,N-1 \rint^d$. 
More explicitly, if $P_x$ denotes the distribution law of
a simple symmetric continuous time random walk  $(S_t)_{t\ge 0}$ with jump rate one in each direction and initial condition $S_0=x$, 
 then
\begin{equation}
\label{zoop}
G_N(x,y)\, =\, E_x\left[ \int_0^{\tau_N} \ind_{\{S_t=y\}} \dd t \right]\, , \  \ \text{ with }  \ \ \tau_N:= \inf\left\{t>0: \, S_t \notin \mathring{\gL}_N\right\}\,.
\end{equation}
% (more generally we write $G_\gL$ for the Green function in $\gL\subset \bbZ^d$ where $\tau_N$ is replaced by the hitting time of the internal boundary of $\gL$).
It is well known that the Green function $G(\cdot, \cdot)$ of the simple random walk without killing (obtained by replacing $\tau_N$ in \eqref{zoop} by $\infty$) exists for $d\ge 3$. We let $\bP$ denote the law of the Gaussian field on $\bbZ^d$ with covariance $G(\cdot,\cdot)$. We also set  $\gs_d^2:= G(0,0)$.

Let us recall from now some well known random walk estimates in transient dimensions that we will repeatedly use. First of all
 $\lim_{x \to \infty}\vert x\vert ^{d-2} G(0,x)$ exists and it is positive, so in particular we can find $C_d>1$ such that for every $x\neq 0$
\begin{equation}
\label{eq:Gest-tail}
 C_d^{-1} \vert x \vert^{2-d} \, \le \, 
G(0, x) \, \le \, C_d \vert x \vert^{2-d}\, .
\end{equation}
Moreover
\begin{equation}
\label{eq:Gest0}
0\le \, G(0,0)- G_N(0,0)\, =\, O \left(1/N^{d-2}\right)\, ,
\end{equation}
and,  always aiming at comparing the Green function and its killed version, we have that for
any sequence  $z_N$ such that $\lim_{N\to \infty}\dist(z_N, \gL^{\cc}_N)=\infty$
(e.g.\
$z_N:=\lceil N/2 \rceil( 1, 1,\dots,1)$) for every $x, y \in \bbZ^d$
\begin{equation}
\label{eq:zoop}
G(x,y)=G(0,y-x)= \lim_{N\to \infty} G_N(x+z_N,y+z_N)\, .
\end{equation}
Of course, for $x\in \bbZ^d$ and $A\subset \bbZ^d$,  $\dist(x,A):= \min_{y\in A}|x-y|$ and we make the choice (irrelevant in most of the cases, but of some importance for some geometric constructions) that 
$|\cdot|$ denotes the $\ell_1$ distance in $\bbZ^d$, that is $\vert x \vert = \sum_{j=1}^d \vert x_j \vert$
(but  for $x \in \bbR^d$ we use $\vert x \vert$ for the Euclidean norm).

\medskip

The disorder, or random environment, $(\go_x)_{x \in \bbZ^d}$ is a family of IID random variables, $\bbP$ is its law.  Free field and disorder are independent. We assume that
\begin{equation}
\label{eq:lambda}
\gl(s)\, :=\, \log \bbE \left[ \exp\left( s \go\right)\right]\, < \, \infty\ \ \text{ for every } s\in \bbR\,,
\end{equation}
and that $\gl'(\cdot)$ is not a constant, i.~e. $\go$ is not a constant.
In \eqref{eq:lambda} 
we have dropped the index $x$ for obvious reasons. Without loss of generality we  assume $\bbE[\go]=0$:
this is largely irrelevant because 
$\go$  appears in the model  in the form $\gb\go -\gl(\gb)$ which is invariant under the transformation $\go \mapsto \go+$constant.
However, $\bbE[\go]=0$ assures that $\gl(\cdot)$ is increasing on the positive semi-axis and decreasing in the negative one, and this is practical.

The generalization of the results to the case in which we assume \eqref{eq:lambda} only, say, for $\vert s \vert$ smaller than a constant is not straightforward. The full hypothesis \eqref{eq:lambda} is used for a cut-off estimate, see Section~\ref{sec:cut-off}, that is probably not necessary but it does not appear to be easy to circumvent. On the other hand, 
a part of the main results (notably, the probability upper bound) can be obtained under very mild hypothesis on the lower (negative) tail of $\go$, in particular for this results \eqref{eq:lambda} is exploited only for $s>0$. 
For sake of readability we will make precise this aspect only in the technical part of the work (see Remark~\ref{rem:weakcond}).

%and that the derivative $\gl'(\cdot)$ is not a constant. In \eqref{eq:lambda}  we have dropped the index $x$ for obvious reasons. On the lower tail of $\go$ we assume only that $\bbE[\go_-]< \infty$, with the notation $\go_- = - \go \ind_{\{\go <0\}}$ for the negative part, and this is of course equivalent to $\bbE[ \vert \go \vert] < \infty$ since the positive part is controlled by \eqref{eq:lambda}. Without loss of generality we set $\bbE [\go]=0$, hence $\gl(0)=0$ and $\gl(s)>0$ for every $s\neq 0$. 

\subsection{The disordered lattice free field pinning model}

The model we consider is the disordered pinning model based on the LFF  with law $\bP_N$.
For $\gb\ge 0$ and $h\in \bbR$ we set
\begin{equation}
\label{eq:themodel}
\frac{\dd \bP_{N, \go, \gb,h}}{\dd \bP_N}(\phi)\, :=\, 
\frac 1{Z_{N, \go, \gb,h}}
e^{ \sum_{x \in \gL_N}(\gb \go_x -\gl(\gb) +h) \gd_x}\, , \ \ \text{ with } \ 
\gd_x:= \ind_{[-1,1]}(\phi (x))\, ,
\end{equation}
where of course $Z_{N, \go, \gb,h}$ is the normalization constant (or \emph{partition function})
\begin{equation}
\label{eq:theZ}
Z_{N, \go, \gb,h}\, :=\, \bE_N \left[ \exp\left( \sum_{x \in \gL_N}(\gb \go_x -\gl(\gb) +h) \gd_x\right)\right]\, .
\end{equation}
We will often use  the  notation 
\begin{equation}
\label{eq:cF}
\cF_A^\phi\, :=\, 
\gs ( \phi(x):\, x \in A)\,.
\end{equation} 
Note that $\bP_{N, \go, \gb,h}$ does not change if we replace summing over $x\in \gL_N$ with $x \in \mathring{\gL}_N$ or any other set
$\gL$,  $\mathring{\gL}_N \subset \gL \subset \gL_N$. 
$Z_{N, \go, \gb,h}$ is affected by such a change, but only in a trivial way and, in particular, the \emph{free energy density} 
(that we will simply call  \emph{free energy})
\begin{equation}
\label{eq:theF}
\tf(\gb, h) \, :=\, 
\lim_{N \to \infty} \frac 1{\vert \gL_N \vert} \bbE \log Z_{N, \go, \gb,h}\, ,
\end{equation}
is clearly not affected either. The proof of the  existence of the limit in \eqref{eq:theF} can be found in \cite{cf:CM1}: the argument is based on the almost super-additive behavior  of the sequence $(\bbE \log Z_{N, \go, \gb,h})_{N\in \bbN}$. We will come back to to this issue (see Proposition \ref{th:superadd}) because a sharp super-additive  behavior for a modified partition function, that gives rise to the same free energy, is going to be important, notably  for the lower bound on  the free energy in Section~\ref{sec:LB}. Here are some basic, but crucial, properties of the free energy (see the introduction of \cite{cf:GL} for full details):

\medskip

\begin{itemize}
\item The map $(\gb, h) \mapsto \tf(\gb, h)$ is convex, moreover it is non decreasing in $h\in \bbR$ for $\gb$ fixed  and in $\gb\ge 0$ for $h$ fixed;
\item The inequality $\tf(\gb, h)\ge 0$ holds because of the rough \emph{entropic repulsion} estimate $\log \bP_1(\phi(x)>1$ for every $x\in \mathring{\gL}_N)= o(\vert \gL_N \vert)$ which is easily derived by exploiting the continuum symmetry of the LFF \cite{cf:LM};
\item  By Jensen's inequality, we have $\tf(\gb, h) \le \tf(0,h)$ (\emph{annealed bound})\, .
\end{itemize}
\noindent The convexity and monotonicity properties in $h$ lead to identifying 
\begin{equation}
h_c(\gb)\, :=\,  \inf \{h:\, \tf(\gb, h)>0\}\,=\, \inf \{ h : \,   \partial_h\tf(\gb,h)>0\}\, , 
\end{equation} 
as a critical point, provided that $h_c(\gb)\not= \pm \infty$. Elementary estimates lead to excluding $h_c(\gb)\not= \pm \infty$, but much more than that is true: 
 in \cite{cf:GL} is shown that  
$h_c(\gb)=0$ for every $\gb\ge 0$. Again we refer to  the introduction of \cite{cf:GL} for full details, 
but we stress that establishing $h_c(\gb)\ge 0$ is a rather straightforward consequence of comparison with the model without disorder: of course
$\tf(0, h)=0$ for $h\le 0$ and a very moderate amount of work leads to 
\begin{equation}
\label{eq:beta0}
\tf(0, h) \stackrel{h \searrow 0}\sim c_d h\, , \ \ \text{ with } 
c_d\, :=\, \bP( \phi( 0) \in[-1,1])\, =\, P(\gs_d \cN \in [-1,1])\, ,
\end{equation}
and  $\cN$ is our notation for a standard Gaussian variable. In particular
\eqref{eq:beta0} yields $h_c(0)=0$ and, in turn, from  the annealed bound we obtain 
$h_c(\gb)\ge  h_c(0)$.  Remark also  that from the annealed bound we extract $\tf(\gb, h) \le c h$, for any $c> c_d$ and $h>0$ small. 
However this result is poor precisely because disorder is relevant for this model:  
the main result in \cite{cf:GL} is that if $\go_x$'s are standard Gaussian variables then for $h\in(0,1)$
\begin{equation}
\label{eq:mainGL}
  C_-(\gb) h^2 \, \le\, \tf(\gb, h) \, \le \, C_+(\gb) h^2\, ,
\end{equation} 
with $C_\pm(\gb)>0$ satisfying $\lim_{\gb \searrow 0}\gb ^2 C_\pm(\gb)= c_{\pm}>0$,  and $c_-< c_+$. 
We remark that, exploiting the convexity of $\tf(\gb, \cdot)$ and the fact that 
$\partial_h  \log Z_{N, \go, \gb,h}= \bE_{N, \go, \gb,h}[ \sum_{x \in \gL_N} \gd_x ]$, 
one readily obtains that the infinite volume contact density 
$\rho (\gb, h) := \lim_N \rho_N(\gb, h)$, with $\rho_N(\gb, h):= 
 \bE_{N, \go, \gb,h}\left[\sum_{x \in \gL_N} \gd_x/ {\vert \gL_N\vert}\right]$, exists 
 except possibly for countably many values of $h$: the non decreasing function $\rho (\gb, \cdot)$ may have jumps and, when it does, 
its value is not well defined at the jump. In order to avoid this nuisance we extend the definition of $\rho(\gb, h)$ 
choosing the right continuous version of $\rho(\gb, \cdot)$  (the results that follow are exactly the same for the left continuous version).  From \eqref{eq:mainGL} one easily obtains 
\begin{equation}
\label{eq:mainGLrho}
2C_-(\gb) h \, \le\, \rho(\gb, h)  \, \le \, 4C_+(\gb) h \, ,
\end{equation} 
for every $h \in (0,1)$ for the lower bound and for every $h \in (0,1/2)$ for the upper bound, see Remark~\ref{rem:Ftorho}.

Also results for general disorder are given in  \cite{cf:GL}, but they are rougher than 
\eqref{eq:mainGL}, in the sense that, in the lower bound in \eqref{eq:mainGL}, $h^2$  is replaced by $h$ to a large power. 
Nevertheless, the results we just cited show that disorder is relevant for the model we consider: the critical behavior changes when disorder is switched on.

\subsection{The main results}

The first main result of this paper is a sharp version of \eqref{eq:mainGL}, valid for general disorder distribution. We prove that $\tf(\gb, h)$ is asymptotically proportional to $h^2$ when $h\searrow 0$ and identify
the value of the constant in front of $h^2$.

\medskip

\begin{theorem}
\label{th:mainF}
For every $\gb >0$ we have 
that 
\begin{equation}
\label{eq:mainF}
\tf (\gb, h) \stackrel{h \searrow 0 } \sim \chi(\gb) h^2\, ,
\end{equation}
with
\begin{equation}
\label{eq:chi}
\chi(\gb)\,:=\, \frac{1}{2\Var(e^{\gb \go_x-\gl(\gb)})}= \frac{1}{2(e^{\gl(2\gb)- 2\gl(\gb)}-1)}\,.
\end{equation}
\end{theorem}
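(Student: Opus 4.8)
The plan is to prove the two matching bounds $\liminf_{h\searrow0}\tf(\gb,h)/h^2\ge\chi(\gb)$ and $\limsup_{h\searrow0}\tf(\gb,h)/h^2\le\chi(\gb)$ separately, since they require genuinely different inputs: the upper bound is essentially a second-moment (annealed-type) computation combined with the path picture, whereas the lower bound needs a careful change-of-measure construction carried out on many scales. For the \emph{lower bound}, the heuristic is that near criticality the field wants to sit at a typical height of order $\sqrt{2\gs_d^2|\log h|}$ away from the substrate on most sites, but on a sparse set of favourable sites --- where $\gb\go_x-\gl(\gb)$ is large and positive --- it pays to make a contact; optimizing the energetic gain $\sum(\gb\go_x-\gl(\gb)+h)\gd_x$ against the entropic cost of forcing $\phi(x)\in[-1,1]$ (which is roughly $-\log \bP(\phi(x)\in[-1,1])$ plus the cost of depressing the field around $x$) leads to a variational problem whose value is $\chi(\gb)h^2$ per unit volume. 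Concretely, I would use the sharp super-additivity of a suitably modified partition function (Proposition~\ref{th:superadd}) to reduce to a single large but finite box, then introduce a tilted reference measure that (i) shifts the bulk of the field up to the scale $\sqrt{2\gs_d^2|\log h|}$ and (ii) independently, on each site $x$, conditions $\phi(x)$ to be a contact with a small probability $p_x$ chosen proportional to $h\,e^{\gb\go_x-\gl(\gb)}$ (up to normalization). Computing $\bbE\log Z_N$ from below via Jensen applied to this tilted measure, the leading term is $\sum_x\big[p_x(\gb\go_x-\gl(\gb)+h)-\text{(relative entropy cost)}\big]$, and after taking $\bbE$ over the disorder and optimizing in the proportionality constant one gets exactly $\frac{1}{2\Var(e^{\gb\go-\gl(\gb)})}h^2$, using $\bbE[e^{\gb\go-\gl(\gb)}]=1$ and $\bbE[(e^{\gb\go-\gl(\gb)})^2]=e^{\gl(2\gb)-2\gl(\gb)}$.

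For the \emph{upper bound}, I would start from $\bbE\log Z_N\le\log\bbE[Z_N]$ applied not to $Z_N$ itself (that only gives the annealed $O(h)$ bound, which is useless here) but to the partition function of a field that has been \emph{shifted} so that contacts become rare. The key point is a tilt of the free field: writing the Radon--Nikodym density for shifting $\phi$ by a height profile, one can absorb most of the cost of the pinning reward into a Gaussian Cameron--Martin term, and the residual contribution is controlled by a second moment of $e^{\gb\go_x-\gl(\gb)}\gd_x$. More precisely, since $\tf(\gb,h)=\lim_N\frac1{|\gL_N|}\bbE\log Z_{N,\go,\gb,h}$, I would bound $Z_{N,\go,\gb,h}$ from above by comparing with the field conditioned (or reweighted) to be at the scale $\sqrt{2\gs_d^2|\log h|}$, so that for each $x$ the quenched cost of a contact is $\gb\go_x-\gl(\gb)+h$ against an entropic gain whose leading order is $|\log h|$; a Markov/Chebyshev step then shows that the only contribution of order $h^2$ comes from the variance of $e^{\gb\go-\gl(\gb)}$, and the constant $\chi(\gb)$ emerges from the same optimization as in the lower bound. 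The cut-off estimate of Section~\ref{sec:cut-off} (which uses the full hypothesis~\eqref{eq:lambda}) is what lets one discard the contribution of atypically large $\go_x$, so that the second-moment computation is legitimate.

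I expect the \textbf{main obstacle} to be the lower bound, and specifically making the multiscale/change-of-measure argument sharp rather than merely of the right order. The difficulty is that the tilted field which raises the bulk to height $\sqrt{2\gs_d^2|\log h|}$ interacts non-trivially with the sparse contact sites: forcing $\phi(x)\in[-1,1]$ on a favourable site depresses $\phi$ on a whole neighbourhood of $x$ (on the scale set by the massless field's correlations), and when $h\searrow0$ the typical spacing between favourable sites, $\sim 1/h$, is much larger than this neighbourhood, so to leading order the contacts decouple --- but proving this decoupling with errors that are $o(h^2)$ per unit volume, uniformly, is exactly where the full multiscale analysis is needed. The matching of the constants then hinges on the elementary but crucial identity $\inf_{t>0}\big(\tfrac{t^2}{2}\Var(e^{\gb\go-\gl(\gb)}) - t\big)\cdot(-1) = \tfrac{1}{2\Var(e^{\gb\go-\gl(\gb)})}$, i.e.\ the quadratic optimization over the tilt parameter, which I would present as a short lemma to make transparent why $\chi(\gb)$ takes the stated form and why both bounds converge to it.
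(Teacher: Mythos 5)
Your upper bound is where the proposal breaks down, and the missing idea is precisely the one the paper's proof hinges on. An upper bound on $\tf(\gb,h)$ must dominate \emph{every} strategy of the field, including the one where $\phi$ simply stays near level zero: there the contacts are free (they occur at density $c_d=\bP(\phi(0)\in[-1,1])$ at no entropic cost), and annealing over the disorder at that point collapses $\gb\go_x-\gl(\gb)+h$ to the homogeneous reward $\log(1+\gd_x(e^h-1))$, giving only the useless $O(h)$ bound. Shifting the field by a height profile via Cameron--Martin is an identity, not an inequality, and conditioning or reweighting the field to sit at height $\sqrt{2\gs_d^2\log(1/h)}$ only produces bounds in the wrong direction (restricting to an event lowers $Z$, hence lower-bounds $\tf$); your ``Markov/Chebyshev step'' controlling a second moment of $e^{\gb\go_x-\gl(\gb)}\gd_x$ is not substantiated and I do not see how it recovers the order $h^2$, let alone the sharp constant. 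What the paper actually does (Propositions~\ref{th:final-ub} and \ref{th:ub}) uses no path information at all: it proves the universal inequality $\bbE\log\bE_\gL\bigl[\prod_{\gd_x=1}(1+\xi_x)\bigr]\le|\gL|\max_{p}\bbE[\log(1+p\xi)]$ valid for an \emph{arbitrary} law of the contact field, by tilting the \emph{disorder} with the density $\prod_x(1+p_\star\xi_x)^{-1}$ built from the maximizer $p_\star$ (the first-order condition $\bbE[\xi/(1+p_\star\xi)]=0$ gives $\bbE[(1+p_\star\xi)^{-1}]=1$ and $\tilde\bbE[1+\xi_x]=1$, whence $\bbE[Z_{\gL,\xi}Y_\gL^{-1}]\le1$); Taylor expanding the one-dimensional optimization then yields $\chi(\gb)h^2+O(h^3)$. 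Without this (or an equivalent quenched-versus-annealed device such as fractional moments), your upper bound does not get off the ground. Incidentally, the cut-off estimate of Section~\ref{sec:cut-off} is needed for the \emph{lower} bound, not here: the upper bound only uses $\gl(s)<\infty$ for $s>0$ (Remark~\ref{rem:weakcond}).

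Your lower bound is closer to being a genuine alternative: the relative-entropy tilt with $p_x$ proportional to $h\,e^{\gb\go_x-\gl(\gb)}$ does reproduce the constant, since the gain $p_x(\gb\go_x-\gl(\gb))$ cancels the entropy term $p_x\log(p_x/q)$ exactly and what remains per site is $qh-\tfrac{q^2}{2}\Var(e^{\gb\go-\gl(\gb)})+o(h^2)$, optimized at $q=2\chi(\gb)h$. But the entire difficulty — showing that under the $u_h$-shifted field the contact indicators behave like independent Bernoullis up to errors $o(h^2)$ per site — is only named, not addressed, and your attribution of it to ``a full multiscale analysis'' is off: in the paper the multiscale machinery serves only the path upper bound (B) of \eqref{upperandlower}, while the free-energy lower bound (Proposition~\ref{lloobb}) uses a single-scale finite-range decomposition (Proposition~\ref{decomp}) into independent box fields plus a small long-range piece, a deterministic shift $u_h$ chosen so that the contact probability equals $2\chi(\gb)h$, and then a \emph{second-moment} estimate per box (Lemma~\ref{pticube}, with the event $\cD_N$ capping the number of high points so that the two-replica overlap term carrying $e^{\gl(2\gb)-2\gl(\gb)}$ is controllable), rather than a quenched site-dependent tilt. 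So even granting your lower-bound heuristic, the proposal as written neither supplies the decoupling it needs nor the key inequality for the upper bound.
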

Theorem~\ref{th:mainF} sums up  quantitative upper  and lower bounds (Propositions~\ref{th:final-ub} and \ref{lloobb} respectively) that bear more information on the rate of convergence of \eqref{eq:mainF}. Moreover one easily extracts from \eqref{eq:mainF}
the following asymptotic equivalence on the contact fraction (cf. Remark~\ref{rem:Ftorho})
\begin{equation}
\label{eq:mainFrho}
\rho(\gb, h) \stackrel{h \searrow 0 } \sim 2\chi(\gb) h\,.
\end{equation}
Of course \eqref{eq:mainFrho} gives already a precise information on the behavior of the trajectories with law $\bP_{N, \go, \gb, h}$ in the infinite volume limit and near criticality.  
Our second main result, Theorem~\ref{th:paths} below, goes much farther in this direction.

\medskip

Recall, cf. the end of Section~\ref{sec:building},  that $\sigma_d$ denotes the standard deviation of the one site marginal of the  infinite volume LFF.
The following result shows that asymptotically most of the points in the field are located around height $\sigma_d \sqrt{2 \log (1/h)}$.

\medskip

\begin{theorem}
\label{th:paths}
Given $\gep>0$, there exists $h_0(\gep)>0$  such that for every $h\in (0,h_0(\gep))$ we can find 
$c:=c(\gep,h)>0$ and $N_0:=N_0(\go, \gep, h)$, with $\bbP( N_0 < \infty)=1$, such that $\bbP(\dd \go)$-a.s. we have  for $N>N_0$ 
 \begin{equation}
\bP_{N, \go, \gb, h}\left( \left \vert 
 \left \{
  {x\in \gL_N}:\, 
   \left| 
  {|\phi(x)|}{{(\log (1/h))^{-1/2}}}-\sqrt{2}\,  \sigma_d   
  \right|> \gep 
  \right\} \right \vert
   \, \ge\,  \gep N^d 
  \right)\, \le\,  e^{-c N^d}\, .
 \end{equation}
\end{theorem}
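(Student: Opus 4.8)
The plan is to derive Theorem~\ref{th:paths} from the sharp free energy asymptotics \eqref{eq:mainF} — more precisely from the quantitative bounds of Propositions~\ref{th:final-ub} and \ref{lloobb} — together with a coarse‑graining (multiscale) argument that localizes the energy–entropy balance behind \eqref{eq:mainF}. Fix $\gb>0$, take $h$ small and let $N\to\infty$. For a field $\phi$ call $x\in\gL_N$ \emph{bad} if $\big||\phi(x)|(\log(1/h))^{-1/2}-\sqrt2\,\sigma_d\big|>\gep$, and split the bad sites into \emph{high} ones, $|\phi(x)|\ge(\sqrt2\,\sigma_d+\gep)\sqrt{\log(1/h)}$, and \emph{low} ones, $|\phi(x)|\le(\sqrt2\,\sigma_d-\gep)\sqrt{\log(1/h)}$. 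Writing $Z^{(\gep)}_{N,\go,\gb,h}$ for the partition function \eqref{eq:theZ} with the indicator of $\{\text{at least }\gep N^d\text{ bad sites}\}$ inserted in the expectation, the probability in the statement is $Z^{(\gep)}_{N,\go,\gb,h}/Z_{N,\go,\gb,h}$, and on that event at least one of $\{\#\text{high}\ge\tfrac\gep2 N^d\}$, $\{\#\text{low}\ge\tfrac\gep2 N^d\}$ holds. The proof thus reduces to proving, $\bbP$-a.s.\ for $N$ large, that $\log Z_{N,\go,\gb,h}\ge(\chi(\gb)-o_h(1))h^2 N^d$ and $\log Z^{(\gep)}_{N,\go,\gb,h}\le(\chi(\gb)h^2-\kappa)N^d$ for some $\kappa=\kappa(\gep,h)>0$; since the bounds below will give $\kappa\gg h^2$, this yields the claim with some $c=c(\gep,h)>0$. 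The first inequality is Proposition~\ref{lloobb} combined with concentration of $\log Z_{N,\go,\gb,h}$ about its mean: for Gaussian $\go$, Gaussian concentration applies with variance proxy $\gb^2\sum_x\bE_{N,\go,\gb,h}[\gd_x]^2=O(\gb^2 h N^d)$ (using $\partial_{\go_x}\log Z_{N,\go,\gb,h}=\gb\,\bE_{N,\go,\gb,h}[\gd_x]\in[0,\gb]$ and $\sum_x\bE_{N,\go,\gb,h}[\gd_x]=O(hN^d)$), and for general $\go$ satisfying \eqref{eq:lambda} one uses the truncation estimates of Section~\ref{sec:cut-off}; in both cases the deviation probability is $e^{-c'(h)N^d}$, hence summable.

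For the high contribution I would use that high sites carry $\gd_x=0$ and so drop out of the Hamiltonian, giving $\exp(\sum_x(\gb\go_x-\gl(\gb)+h)\gd_x)\le\exp(\sum_{x\ \text{not high}}(\gb\go_x-\gl(\gb)+h)^+)$ on $\{\#\text{high}\ge\tfrac\gep2 N^d\}$, whose disorder-average is handled by the cut-off estimate of Section~\ref{sec:cut-off}; meanwhile $\bP_N$ of the event that $|\phi(x)|\ge(\sqrt2\sigma_d+\gep)\sqrt{\log(1/h)}$ on a given set of $\ge\tfrac\gep2 N^d$ sites is bounded, after restricting to a sub-lattice of mutual distance $\ge L_0$ (a fixed large constant) and decorrelating via transience (\eqref{eq:Gest-tail}--\eqref{eq:zoop}, the $2^{|B|}$ cost of the sign pattern being harmless since $\log2\ll\log(1/h)$), by $\exp(-c''\gep N^d\log(1/h))$. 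As $\log(1/h)\to\infty$ while $\chi(\gb)h^2\to0$, this already yields $\log Z^{(\gep),\text{high}}_{N,\go,\gb,h}\le(\chi(\gb)h^2-\kappa)N^d$; the high part is in fact super‑exponentially costly.

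The low contribution is the substantial part and where the multiscale analysis is required. I would tile $\mathring{\gL}_N$ into blocks of side $L=L(h)$ (a large scale), decouple them given the field on thin separating shells via the domain-Markov property of the LFF, and establish a block version of the free energy upper bound (Proposition~\ref{th:final-ub}): to leading order the per-site contribution of a block is $\max_{p\in(0,1)}\big(ph-\tfrac{p^2}{4\chi(\gb)}\big)=\chi(\gb)h^2$, attained \emph{only} at effective contact density $p=2\chi(\gb)h$; since at background height $b$ the Gibbs-tilted contact density is of order $e^{-b^2/(2\sigma_d^2)}$, this forces $b=\sigma_d\sqrt{2\log(1/h)}$ up to lower order and is exactly what identifies the typical height. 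Conversely, a block carrying $\ge\tfrac\gep2 L^d$ low sites has effective contact density comparable to $e^{-b^2/(2\sigma_d^2)}$ with $b\le(\sqrt2\sigma_d-\gep)\sqrt{\log(1/h)}$, hence bounded away from $2\chi(\gb)h$ by a factor $h^{-\gep''}$ with $\gep''=\gep''(\gep,\sigma_d)>0$, so its per-site contribution is at most $\chi(\gb)h^2-\eta(\gep,h)$ with $\eta(\gep,h)>0$ of order exceeding $h^2$. A field with $\ge\tfrac\gep2 N^d$ low sites has at least $\gep'(N/L)^d$ such ``low blocks''; summing the block bounds, absorbing the $O(L^{d-1}\log(1/h))$ entropic cost of each block face (negligible once $L=L(h)$ is large) and the combinatorial factor $\binom{(N/L)^d}{\gep'(N/L)^d}\le 2^{(N/L)^d}$ (negligible once $L^d\gg(\gep'\eta)^{-1}$), and discarding the vanishing fraction of blocks with atypical empirical disorder (by the law of large numbers over $L^d$ sites), gives $\log Z^{(\gep),\text{low}}_{N,\go,\gb,h}\le(\chi(\gb)h^2-\kappa)N^d$ for $N$ large, $\bbP$-a.s.\ after a union bound over the $\exp(o(N^d))$ possible block locations and a further use of concentration.

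\textbf{Main obstacle.} The core difficulty is exactly the block-wise energy–entropy balance in the low part. One must control the LFF boundary and interface effects at the \emph{divergent} scale $\sqrt{\log(1/h)}$ (so corrections carry a factor $\log(1/h)$, forcing $L$ to depend on $h$), prove that the block-optimal strategy is genuinely \emph{pinned} at height $\sigma_d\sqrt{2\log(1/h)}$ rather than merely that the block free energy is $\le\chi(\gb)h^2$, control the sparse blocks with atypical disorder, and push everything through for general disorder satisfying only \eqref{eq:lambda} rather than Gaussian disorder — this is the full multiscale analysis announced in the introduction. By contrast, the high part and the concentration arguments are comparatively routine.
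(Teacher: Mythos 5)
There is a genuine gap, and it lies exactly where you declare the problem ``comparatively routine'': the high sites. Your bound asserts that, under $\bP_N$, the event that some $\tfrac{\gep}{2}N^d$ sites satisfy $|\phi(x)|\ge(\sqrt2\,\sigma_d+\gep)\sqrt{\log(1/h)}$ has probability at most $\exp(-c''\gep N^d\log(1/h))$, obtained by passing to a sparse sub-lattice and ``decorrelating via transience''. This is false for the massless lattice free field: the field can be shifted collectively to any height $u$ at a cost governed by the capacity of the box, i.e.\ $O(N^{d-2}u^2)$, not by the volume. Hence $\bP_N\bigl(\#\{x:|\phi(x)|\ge(1+\gep)u_h\}\ge \gep N^d\bigr)$ decays only like $\exp(-C N^{d-2}\log(1/h))=\exp(-o(N^d))$ — the paper states precisely this obstruction at the beginning of Section~\ref{sec:Decomp} (``$\bP_N(\cD(N,\gep))$ does not decrease exponentially in the volume, due to the massless character of the LFF''). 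The sparse-sub-lattice decorrelation heuristic ignores the long-range positive correlations carried by the low-frequency (harmonic) part of the field, which is exactly what makes the collective shift cheap. In addition, your energetic bound for the high part, replacing the Hamiltonian by $\sum_{x\ \mathrm{not\ high}}(\gb\go_x-\gl(\gb)+h)^+$, produces a term of order $c(\gb)N^d$ with $c(\gb)>0$, which cannot possibly be beaten by the reference scale $\chi(\gb)h^2N^d$; so even granting a correct probability estimate the comparison with $\log Z_{N,\go,\gb,h}$ would fail. The real content of part (B) is to couple the high-site event with the contact statistics: one must show that a region with a density of high points cannot simultaneously exhibit the contact fraction $\approx 2\chi(\gb)h$ that the free-energy asymptotics forces on most mesoscopic boxes. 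This is what the paper does, and it is the hard half: first a fractional-moment argument (Lemma~\ref{locadenz}, via Lemma~\ref{th:forub}) controls the contact density in elementary boxes of volume $\approx(\log(1/h)/h)^{2}$; then the decomposition $\phi=\psi+\overbar\psi$ with $\psi$ independent across level-$0$ boxes, the multiscale control of the bi-gradients of $\overbar\psi$ (Proposition~\ref{smalllap}), and the level-$0$ case analysis (Proposition~\ref{unecas}) show that a locally affine background compatible with a density of high points necessarily destroys the contact density in a neighbouring elementary box. None of this coupling is present in your proposal.

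Two further remarks. First, you have the difficulty inverted also for the low sites: your block argument for the low part (low background height forces an anomalously large contact density, contradicting the optimal balance $p\approx 2\chi(\gb)h$) is in spirit the paper's Section~\ref{sec:LBH} — contact-fraction upper bound from the free-energy asymptotics (Lemma~\ref{lalem}), harmonic averages over fixed-size blocks, and a Dirichlet-energy penalty for coexisting low sites and high block averages — and that half can indeed be carried out with comparatively standard tools; but you would still need to make precise the link between ``many low sites in a block'' and ``contact density too high'', which in the paper goes through the harmonic average and a conditional binomial estimate rather than through a block free-energy bound at a diverging height scale. Second, your reduction of the theorem to $\log Z^{(\gep)}\le(\chi(\gb)h^2-\kappa)N^d$ versus $\log Z\ge(\chi(\gb)-o(1))h^2N^d$, and the concentration/truncation step for general disorder, are fine and match the paper's usage (cf.\ Lemma~\ref{th:ctrlP}); the gap is entirely in producing the upper bound on $\log Z^{(\gep)}$ for the high-site event.
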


\medskip

This result considerably refines previous estimates obtained on the trajectories. In \cite{cf:GL} it was only proved that typically most point are above height $c_d  \sqrt{ \log( 1/h)}$ (in absolute value) for an explicit  non-optimal constant $c_d$. 

\medskip

\begin{rem}
\label{rem:Ftorho}
The arguments to go  from \eqref{eq:mainGL} to \eqref{eq:mainGLrho} and from \eqref{eq:mainF} to 
\eqref{eq:mainFrho} are standard, but we sketch them here. If $f:[0, \infty) \to [0, \infty)$ is  a convex increasing function  such that $C_1 h^2 \le  f(h) \le C_2 h^2$ for $h\in [0, h_0]$, with $C_1>0$,
then convexity directly yields $\partial f(h) \ge 2 C_1 h$. Here $\partial f (\cdot)$ is either the upper or the lower differential 
of $f$.  Since $\partial f(\cdot) $ is non decreasing we have that, for every $h\in [0, h_0/2]$, $\partial f (u) \ge \partial f(h) \ind_{[h, \infty)}(u)$.
By integrating from $0$ to $2h$ this inequality we obtain $\partial f(h) h \le f(2h) \le 4 C_2 h^2$, that is $\partial f(h) \le 4C_2 h$. On the other hand, 
if $C_2=(1+ \gep^2)C_1$, then by integrating $\partial f(u) \ge 2C_1 u  \ind_{(0, h)}(u)+ \partial f(h) \ind_{[ h, \infty)}(u)$ for $u$ going from $0$ to $(1+ \gep)h \le h_0$ we obtain that  $ \partial f(h)\le 2C_1 h (1+3 \gep)$, for $\gep\in [0, 1]$.
\end{rem}

\bigskip 

\subsection{Discussion of the results, relevant literature and organization of the paper}

\subsubsection{Localization strategy and %high-level 
sketch of proofs}
A key point and one of the main novelty of our work is that we  identify the \emph{localization mechanism}. And this mechanism 
is crucially suggested by the argument for  
the upper bound on the free energy. The upper bound we give  is a universal bound: it holds for an arbitrary random contact set. More explicitly:   
the partition function \eqref{eq:theZ} depends only on the random field $(\gd_x)_{x\in \mathring{\gL}_N}$ or, equivalently,  on the random set $\{x \in \mathring{\gL}_N: \, \gd_x=1\}$, and we give
an upper bound on the free energy density not only for  $\gd_x= \ind_{[-1,1]}(\phi(x))$ with $\phi$ the LFF on $\gL_N$ with zero boundary conditions, but with an arbitrary law of $(\gd_x)_{x\in \mathring{\gL}_N}$. 
%which is equivalent to an arbitrary choice of $\bP_N$.
Moreover this bound is saturated by choosing $(\gd_x)_{x\in \mathring{\gL}_N}$ IID Bernoulli variables of a  parameter $p$
chosen to be the value $p(\gb,h)$ that maximizes the function
\begin{equation}\label{Taylz}
p \mapsto \bbE\log \left(1+p \left( \exp\left(\gb \go -\gl(\gb) +h \right)\right) \right)\, .
\end{equation}
In the limit  $h \searrow 0$ we obtain that $p(\gb,h) \sim 2 \chi(\gb) h$.   
Why should the contact set that we obtain from the LFF be close to saturating this bound too? 
At a heuristic level the reason is a combination of two well known facts on the LFF: 
\begin{enumerate}
\item the continuum symmetry of the LFF that makes 
\emph{rigid} vertical translations of the interface little expensive (this is of course very much in the logic of the \emph{entropic repulsion phenomena}
\cite{cf:BDG,cf:BDZ,cf:D,cf:DG,cf:LM}) 
\item large excursions of the LFF have very mild correlations (an issue underlying \cite{cf:BDZ,cf:DG} and developed in detail in \cite{cf:CCH}). 
\end{enumerate}
This suggests the following behavior for the field when $h>0$ is small: the field shifts away from level $0$, 
precisely it shifts to a level $u(\gb,h)$, or $-u(\gb,h)$, so that the probability that $\phi(x)\sim \cN (\gs_d^2, \pm u(\gb,h))$ belongs to
$[-1,1]$ is equal to $p(\gb,h)$. As the reader might expect in view of Theorem~\ref{th:paths},  it turns out that $u(\gb,h)$ is asymptotically equivalent to the square root 
of $2\gs_d^2 \log (1/h)$.

\medskip

To  substantiate this localization strategy we need  to provide a lower bound. This is achieved by considering the field with boundary conditions $u(\gb,h)$ -- the value of the free energy does not depend on on this choice -- and  via a two step decomposition of the LFF that is in the spirit of several earlier works: in three or more dimension the LFF can be written 
as the superposition of a field with small variance, and spatially power law decaying covariance, and a field that accounts for almost all the variance of the original field, but with exponentially decaying covariance: it is  for example the case of the decomposition of the field theory literature \cite{cf:BGM,cf:FFS} in which 
$G=G_\epsilon+ (G- G_\epsilon)$ where $G_\epsilon$ is the Green function of a walk with a rate of death $\epsilon>0$, see \cite[Sec.~4]{cf:DG} for 
a probabilistic presentation. 
  We propose instead a 
decomposition that is much more geometrically structured: we write $\phi$ as a power law correlated field (with small variance) plus  independent fields that are compactly supported over boxes (i.e. hypercubes). The boxes have edge length  proportional to $h^{-c}$, 
$c>0$ a small constant, and they overlap only near the boundary.  Recalling that the boundary of the LFF is set to height $u(\gb, h)$, hence the mean of the field is $u(\gb, h)$,   in each one of these boxes we typically expect no contact, because  the contact density is proportional to $h$ and the volume of each box is $h^{-cd}$ (and $c$ is small, in particular $c<1/d$). On this scale we are able to perform an accurate analysis that shows  that the leading contribution to the free energy in each of these boxes is given by configurations 
with one or two contacts. The errors introduced by the power law correlated field (with small variance) and by the overlap regions turn out to be higher order corrections.

\medskip

The next step is proving that the trajectories of the field behave like what is suggested by the asymptotic of the free energy and its proof.
This is a matter of proving upper and lower bounds on the height of the field with law $\bP_{N, \go, \gb, h}$, that is 
\begin{equation}
\label{upperandlower}
 \begin{split}
   \text{(A)}& \quad  \bP_{N, \go, \gb, h}\left( \left \vert
 \left \{
  {x\in \gL_N}:\, 
  \frac{|\phi(x)|}{\sqrt{\log (1/h)}}\,<\, \sqrt{2}\,  \sigma_d  - \gep 
  \right\}\right\vert \ge \gep N^d \right) \,\le\,  e^{-c  N^d}\, ,
  \\
   \text{(B)}& \quad \bP_{N, \go, \gb, h}\left( \left \vert
 \left \{
  {x\in \gL_N}:\, 
  \frac{|\phi(x)|}{\sqrt{\log (1/h)}}\,>\, \sqrt{2} \, \sigma_d  + \gep 
  \right\}\right\vert
   \ge \gep N^d \right) \,\le\,  e^{-c  N^d}\,.
 \end{split}
\end{equation}

Showing (A), that is  that the field shifts, except possibly  for $\gep \vert\gL_N\vert$ sites,   to a height of at least $(1-\gep) (2\gs_d^2 \log(1/h))^{1/2}$ is not too difficult.
In fact, again by a two scale argument (but rather standard, in the spirit of the arguments in \cite{cf:BDZ,cf:DG} and already exploited in \cite{cf:GL}) one establishes that if one partitions $\gL_N$
in boxes of edge length $L$ (large but fixed), in most of these boxes there is a point in which the field is close to the correct height 
$(2\gs_d^2 \log(1/h))^{1/2}$.  This is incompatible with having a density of sites on which the field is below $(1-\gep) (2\gs_d^2 \log(1/h))^{1/2}$, because it forces 
a density of sites $x$ to have a neighbor $y$ with $\vert \phi(x)- \phi(y)\vert \ge c (\log(1/h))^{1/2}$ ($c>$ small but not depending on $h$).
And this is highly penalized by the LFF Hamiltonian.

 The lower bound on the trajectory we just outlined  is relatively short and it is just a refinement of  the argument in \cite{cf:GL}.
 On the other hand, showing (B) in \eqref{upperandlower}, that is  that the field shifts, except possibly  for $\gep\vert\gL_N\vert $ sites,  to a height of at most $(1+\gep) (2\gs_d^2 \log(1/h))^{1/2}$ is
 substantially harder and requires novel arguments. This is because being too close to the pinning region, i.e. level zero,  is directly penalized. However
   the fact that the field is too far from the pinning region in a small (but positive) density of sites says that
it cannot collect the expected amount of rewards on those sites, but this  does not exclude, at least not in an obvious way,   that these rewards are collected elsewhere. After all, only rare spikes hit the pinning potential region with the strategy we have outlined. This estimate therefore has to exploit a more collective behavior of the field and the keyword at this stage is certainly \emph{rigidity of the interface}.
 But, in practice, implementing a proof along the reasoning that we just sketched is not straightforward and the control from above of the trajectories  comes via two non trivial and technically demanding estimates:
 \begin{enumerate}
 \item a control of the contact fraction on  \emph{mesoscopic} scales, notably down to the boxes of volumes that are just a bit larger than $h^{-2}$: we stress that these boxes become large when $h\searrow 0$, but they are of constant size with respect to $N$;
 \item an estimate of the rigidity of the field that demands a full multiscale analysis.
 \end{enumerate}  

 \subsubsection{Open problems: sign of $\phi$ and disordered induced symmetry breaking} 
 
An obvious question at this stage is: what about the sign of the field? It is natural to conjecture that for small values of $h$ most sites are located on the same side of the interface. On the other hand, we 
 believe that $N^{-d}\sum_{x\in \gL_N} \ind_{\{\phi_x>0\}}$ converges to $1/2$ for sufficiently large values of $h$, since, in that regime, most sites are favorable to contact. 
This corresponds to the following convergence in law (conjectural) statement on $\bP_{N, \go, \gb, h}$ for $h> 0$:
\begin{equation}
\label{eq:conj1}
 \frac{1}{N^d}\sum_{x\in \gL_N} \ind_{\{\phi_x>0\}} \stackrel{N\to \infty}{\Rightarrow} \rho_\gb(h) \mathrm{Ber}(1/2) +(1-\rho_\gb(h)) 
(1- \mathrm{Ber}(1/2))\, ,
\end{equation}
where $\rho_\gb(h)\in [1/2, 1)$ and approaches $1$ as $h \searrow 0$, while $\rho_\gb(h)=1/2$ when $h$ is above a threshold. %(and $ \mathrm{Ber}(1/2)$ is a Bernoulli random variable of parameter $1/2$).
According to this conjecture, the interface lies above level zero in a majority of sites if Ber$(1/2)=1$ and $\rho_\gb(h)>1/2$: $\rho_\gb(h)$ is precisely the density of these sites. In this case there is  therefore a density $1-\rho_\gb(h)$ of sites below level zero and, by symmetry,
if Ber$(1/2)=0$, i.e. the interface lies below level zero in a majority of sites, the density of sites above zero is $1-\rho_\gb(h)$.
This phenomenon disappears when, for $h$ above a threshold, $\rho_\gb(h)$ becomes $1/2$, and  the right-hand side of 
\eqref{eq:conj1} becomes equal to $1/2$ too.

As we already pointed out, the  value $1/2$ for the parameter of the limiting Bernoulli variable comes from symmetry. We believe that this probability for the field to be \emph{mostly positive} is very sensitive to boundary condition:  
if we replace the centered LFF $\phi$ that defines the model with $\phi+c$, any $c>0$,
we expect  \eqref{eq:conj2} to become
\begin{equation}
\label{eq:conj2}
\lim_{N \to \infty} \frac{1}{N^d}\sum_{x\in \gL_N} \ind_{\{\phi_x>0\}} \, =\,  \rho_\gb(h) \, ,
\end{equation}
in $\bP_{N, \go, \gb, h}$-probability. 

Obtaining a proof of \eqref{eq:conj1} and/or   \eqref{eq:conj2} appears to be very challenging. 
Nevertheless, sidetracking farther, we observe that
 they suggest to the following consideration concerning Gibbs states for the disordered model.
It seems reasonable to expect  that for positive values of $h$  there is a unique translation invariant Gibbs state associated with the homogenous model (we warn the reader that already this step is speculative and it represents in itself  a challenging conjecture). On the other hand, \eqref{eq:conj1}-\eqref{eq:conj2} indicate that for the disordered model there are at least two Gibbs states: one corresponding the limit obtained with positive boundary condition (i.e., $c>0$) and another one corresponding to the limit with negative boundary condition.
 
\subsubsection{Comparison with another interface repulsion phenomenon}
The disorder-induced repulsion phenomenon  highlighted in the present paper bears some analogy with the entropic repulsion phenomenon observed in the SOS model constrained to remain positive and recently studied in detail by one of the authors \cite{cf:Sos1,cf:Sos2}.
The introduction of disorder has in fact effects that are very similar to those induced by the imposing a positivity constraint to the SOS model: the phase transition is smoothened, it vanishes  like of $(h-h_c)^{\nu}$ with $\nu\ge 2$ approaching the critical point (as opposed to linearly for the model without constraint), and 
the interface is  repelled to  a  distance from level zero 
that diverges in this limit.

While the specific mechanisms that triggers these phenomena are different for the two models, two common ingredients can be identified. Firstly, in both models, the contact set  is well approximated (at least at a heuristic level) by an IID Bernoulli field. Secondly, the optimal value for the contact fraction $p$  is obtained by optimizing the balance between a reward which is proportional to $(h-h_c)p$ and a penalty term which takes the form $p^{\mu}$ for some $\mu>1$. Then one can easily conclude that the optimal  balance between penalties occurs for a contact fraction of order $(h-h_c)^{\frac{1}{\mu-1}}$, 
which therefore yields a critical exponent for the free energy equal to $\frac{\mu}{\mu-1}$. We have $\mu=2$ for the disordered pinning and $\mu\in(1,2)$ for SOS (the specific value depends on the lattice, it is equal to $3/2$ on $\bbZ^2$).

One important difference between the two models lies in the origin of the penalty term.
In the disordered model we study here, this penalty term is produced by the second order term in the Taylor expansion of \eqref{Taylz}.
We have
\begin{equation}
\label{elaleq}
\bbE\left[\log \left(1+p \left( \exp\left(\gb \go -\gl(\gb) +h \right)\right) \right)\right]= p h- \frac{p^2}{2} e^{2h} \Var\left(\exp\left(\gb \go -\gl(\gb)\right) \right)+O(p^3).  \
\end{equation}
This quadratic terms in $p$ indicates by how much Jensen's inequality fails to be an equality, in other words it quantifies by  how much the disorder can fail to self average for a fixed contact fraction: recall (observation right after \eqref{Taylz}) that  
$p$ becomes asymptotically proportional to $h$
when $h\searrow 0$, so the first two terms in the right-hand side of \eqref{elaleq} are competing with each other.
For SOS, the penalty term comes from a rewriting of the model, which transforms the wall constraint into a shift of $h$ and an additional penalty for pairs of neighboring contact points, and a similar \emph{first versus second order} competition arises.

\begin{rem}
 Note that the LFF pinning model in presence of a hard wall
 (studied in \cite{cf:BDZwetting, cf:GLw}) presents a different phenomenology, since the critical exponent changes from $1$ to $\infty$ when the hard wall constraint is introduced.
 Informally the reason why this happens is that in that case 
 the penalty term induced by the hard wall constraint is of the form  $p \sqrt{|\log p|}$ instead of $p^{\mu}$, which results in a much smaller optimal value for $p$.
\end{rem}

\subsubsection{Entropic repulsion and critical disordered pinning }
Entropic repulsion models (like \cite{cf:BDZ, cf:D, cf:DG,cf:LM} for LFF and \cite{cf:CLMST} for SOS) have already entered the discussion and there is of course more than a flavor of a connection between our results and  entropic repulsion phenomena. There is however the substantial difference 
that the repulsion phenomenon we observe is to a height that is finite, and diverges only approaching the critical point.
We believe that a direct connection between the wall repulsion studied  in  \cite{cf:BDZ, cf:D, cf:DG,cf:LM}  can be made with the critical disordered pinning model: we quickly develop this next,  just 
 keeping at a heuristic level. 

The main question is: what is the typical value $u(N)$  for $|\phi(x)|$, for $x$ in the bulk of the box, at the critical point, that is, under the measure $\bP_{N, \go, \gb, 0}$.
%Let us briefly formulate a conjecture for the asymptotic of $u(N)$ based on the connection with the model with a wall.
Clearly Theorem \ref{th:paths} indicates that $\lim_n u(N)=\infty$ and, at an intuitive level, the pinning \emph{strength} 
$\gb \go_x -\gl(\gb)$ is negative, i.e. repulsive, on  average, but of course there is a density of sites that are attracting the interface. 
Let us recall the mechanism which induces entropic repulsion for the the measure $\bP_N( \cdot \, |\, \phi(x)\ge 0$ for every $ x\in \gL_N)$. We let 
 $u_{\mathrm{wall}}(N)$ denote the typical bulk height under this measure (note that a softer potential would lead to a similar heuristics). 
The height  $u_{\mathrm{wall}}(N)$  can be understood as 
the one that balances the two penalties $N^{d-2}u$ that  accounts for shifting the field away from level zero in the bulk and
$N^d\exp(-u_N^2/ (2 \gs_d^2))$ that comes from the penalization coming from hitting the penalized (or forbidden) region:
if we want to minimize the sum of these two quantities, we find that $u_{\mathrm{wall}}(N)$ has to be asymptotically equivalent to the square root of
 ${4 \gs_d^2 \log N}$ (see  \cite{cf:BDZ, cf:D, cf:DG}). In the critical disordered model the penalization coming from the potential is weaker:
it is proportional to  $N^d(\exp(-u_N^2/ (2\gs_d^2)))^2$, as it is strongly suggested by the leading $p^2$ in the Taylor expansion of 
\eqref{Taylz} at $h=0$ (see  \eqref{elaleq}). This leads to a (conjectured) repulsion $u(N)^2 \sim {2 \gs_d^2 \log N}$ for the critical disordered pinning model.

On the other hand, we believe that the square root of
${4\sigma_d^2 \log N}$ corresponds to the typical height for negative values of $h$ both in the homogeneous and disordered case. We mention in relation to this problem the disordered entropic repulsion model studied in \cite{cf:BG}: for the repulsion mechanism of  \cite{cf:BG}, quenched and annealed models have, to leading order, the same behavior.

% {I leave this only for a question. Heuristically, that distance should be found by optimizing the 
% contact fraction: on one hand there is a boundary cost proportional to $ u N^{d-1}$ to be at height $u$, and on the other one, and on the other one there is a penalty which is proportional to $N^d p^2$ for the fact that the environment is non-averaging where $p\approx e^{-\frac{u^2}{2\sigma_d}}$.}

\subsubsection{Organization of the paper}
In Section \ref{sec:UB} and \ref{sec:LB}, we prove quantitative upper bound and lower bounds for the free energy respectively. 
The proof of Theorem \ref{th:paths} is also split into two sections:
Section~\ref{sec:LBH} for (A)   in \eqref{upperandlower} and Section~\ref{sec:UBH} for (B) in \eqref{upperandlower}.

The four sections -- upper and lower bounds on the free energy, lower and upper bound on the height of the field --  are almost completely  independent from the technical viewpoint.

% While the proof of Theorem \ref{th:paths} relies on the quantitative bound obtained for the free energy, The sections are mostly independent.

\medskip

 \noindent{\bf Acknowledgements:} This work has been performed in part while G.G.\ was visiting IMPA with the support of the Franco-Brazilian network in mathematics. G.G. also acknowledges support from grant ANR-15-CE40-0020.
 H.L.
 acknowledges support from a productivity grant from CNPq and a Jovem Cientísta do Nosso Estado grant from FAPERJ.

\section{Proof of Theorem \ref{th:mainF}: Upper bound on the free energy}
\label{sec:UB}

Let us first present the quantitative bound proved in this section. As anticipated in the introduction, for the results  in this section   we can sensibly weaken the assumptions on $\go$.  

\medskip

\begin{rem}
\label{rem:weakcond}
The main result of this section, that is  Proposition~\ref{th:final-ub}, holds (and it is proven) assuming less than 
  \eqref{eq:lambda}. More precisely for Proposition~\ref{th:final-ub} we only require  \eqref{eq:lambda} for $s>0$ on  the upper tail of the disorder. For the lower tail
we  assume  that $\bbE[\go_-]< \infty$, with the notation $\go_- = - \go \ind_{\{\go <0\}}$ for the negative part. An analogous standard notation for the positive part is used below.  Like before, we keep the convention 
that  $\bbE [\go]=0$.  
\end{rem}

\medskip

 \begin{proposition}
 \label{th:final-ub}
  For every $\gb>0$ there exists a constant $C_{\gb}$ such that for every $h\in [0,1]$
  \begin{equation}
  \label{eq:final-ub}
    \tf(\gb,h)\, \le\,  \chi(\gb)h^2 +C_{\gb} h^3\, .
  \end{equation}
 \end{proposition}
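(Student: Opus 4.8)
The plan is to split the bound into a \emph{universal} estimate --- bounding $\tf(\gb,h)$ by a single-site quantity for an \emph{arbitrary} law of the contact field --- followed by an elementary optimisation of that quantity. Throughout set $b_x:=e^{\gb\go_x-\gl(\gb)+h}>0$, so that the $(b_x)_x$ are IID and $Z_{N,\go,\gb,h}=\bE_N\bigl[\prod_{x\in\mathring{\gL}_N}b_x^{\gd_x}\bigr]$ depends on the interface only through the law $\mu$ of $(\gd_x)_{x\in\mathring{\gL}_N}$.

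First I would prove, by induction on the number of sites $n$, that for every probability measure $\mu$ on $\{0,1\}^n$ (coordinates $1,\dots,n$) and any independent non-negative $b_1,\dots,b_n$,
\[
\bbE\Bigl[\log\bE_\mu\Bigl[\prod_{i=1}^{n}b_i^{\gd_i}\Bigr]\Bigr]\ \le\ \sum_{i=1}^{n}\sup_{p\in[0,1]}\bbE\bigl[\log\bigl(1+p(b_i-1)\bigr)\bigr].
\]
The case $n=1$ is immediate since $\bE_\mu[b_1^{\gd_1}]=1+\mu(\gd_1{=}1)(b_1-1)$. For the inductive step, reveal the last coordinate under $\mu$: with $q:=\mu(\gd_n{=}1)$ and $W_j:=\bE_{\mu(\cdot\mid\gd_n=j)}\bigl[\prod_{i<n}b_i^{\gd_i}\bigr]$ one has $Z:=\bE_\mu\bigl[\prod_{i\le n}b_i^{\gd_i}\bigr]=(1-q)W_0+q\,b_nW_1=:A+b_nB$ with $A,B\ge0$ measurable w.r.t.\ $(b_1,\dots,b_{n-1})$ only, hence the \emph{Bernoulli factorisation}
\[
Z\ =\ (A+B)\bigl(1+p_n(b_n-1)\bigr),\qquad p_n:=\frac{B}{A+B}\in[0,1]
\]
(with $A+B>0$ a.s.\ since $b_i>0$), in which $p_n$ does not depend on $b_n$, while $A+B=\bE_{\mu^{(n-1)}}\bigl[\prod_{i<n}b_i^{\gd_i}\bigr]$, for $\mu^{(n-1)}$ the marginal of $\mu$ on the first $n-1$ coordinates, is again a partition function of the same type. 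Taking logarithms, conditioning on $(b_1,\dots,b_{n-1})$ and using independence of $b_n$ bounds $\bbE[\log(1+p_n(b_n-1))]$ by $\sup_p\bbE[\log(1+p(b_n-1))]$ uniformly, while the induction hypothesis takes care of $\bbE[\log(A+B)]$. Specialising to $b_i$ distributed as $e^{\gb\go-\gl(\gb)+h}$, dividing by $|\gL_N|$ and using $|\mathring{\gL}_N|/|\gL_N|\to1$ gives $\tf(\gb,h)\le\Phi(h):=\sup_{p\in[0,1]}\bbE\bigl[\log\bigl(1+p(e^{\gb\go-\gl(\gb)+h}-1)\bigr)\bigr]$; note this step uses only $\gl(\gb)<\infty$ and $\bbE[\go_-]<\infty$, matching Remark~\ref{rem:weakcond}.

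It then remains to verify $\Phi(h)\le\chi(\gb)h^2+C_\gb h^3$ on $[0,1]$, which is elementary. With $e_\go:=e^{\gb\go-\gl(\gb)}$, $t_\go:=e^he_\go-1$, one has $\bbE[t_\go]=e^h-1$, $\bbE[t_\go^2]=(e^h-1)^2+e^{2h}\Var(e_\go)$, and $\bbE[|t_\go|^3]<\infty$ by \eqref{eq:lambda}, while $\Var(e_\go)>0$ for $\gb>0$. Since $p\mapsto\bbE[\log(1+pt_\go)]$ is strictly concave with positive derivative at $p=0$ and value $h-\gl(\gb)<0$ at $p=1$ for small $h$, its maximiser is the unique interior $p^*=p^*(h)$ solving $\bbE[t_\go/(1+p^*t_\go)]=0$; expanding $\tfrac{t}{1+p^*t}=t-p^*t^2+\tfrac{(p^*)^2t^3}{1+p^*t}$ and using $p^*\to0$ as $h\searrow0$ yields $p^*(h)=h/\Var(e_\go)+O(h^2)$. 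Combining this with $\log(1+t)\le t-\tfrac{t^2}{2}+\tfrac{t^3}{3}$, valid for all $t>-1$ (its remainder has derivative $t^3/(1+t)$, hence a strict minimum $0$ at $t=0$), gives
\[
\Phi(h)=\bbE[\log(1+p^*t_\go)]\le p^*(e^h-1)-\tfrac{(p^*)^2}{2}\bbE[t_\go^2]+\tfrac{(p^*)^3}{3}\bbE[t_\go^3]=\frac{h^2}{2\Var(e_\go)}+O(h^3)
\]
as $h\searrow0$ (plug in $p^*=h/\Var(e_\go)+O(h^2)$, $e^h-1=h+O(h^2)$, $\bbE[t_\go^2]=\Var(e_\go)+O(h)$); hence $\Phi(h)\le\chi(\gb)h^2+C_\gb h^3$ for $h\le h_0(\gb)$. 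For $h\in(h_0(\gb),1]$ the inequality is trivial, since $\Phi(h)\le\bbE[t_\go]=e^h-1\le(e-1)h\le(e-1)h_0(\gb)^{-2}h^3$. Combining the two steps proves the proposition.

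The genuinely non-routine point is the Bernoulli factorisation $Z=(A+B)(1+p_n(b_n-1))$ of the first step: it is precisely what lets one peel sites off one at a time while keeping, for each site, the \emph{quenched} single-site expectation $\bbE\log(1+p(b-1))$ --- whose maximum is $O(h^2)$ and encodes the failure of self-averaging responsible for disorder relevance --- instead of the much larger annealed value $\log\bbE[\,\cdot\,]$ that a naive use of Jensen's inequality would give. Everything else is routine; in particular, and in contrast with the lower bound, no truncation of the disorder is needed here.
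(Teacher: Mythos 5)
Your argument is correct, and it reaches the paper's key universal estimate by a genuinely different route. The paper isolates the same bound as Proposition~\ref{th:ub} and proves it globally by tilting: at the maximiser $p_\star$ the first-order condition $\bbE[\xi/(1+p_\star\xi)]=0$ makes $Y_\gL^{-1}=\prod_x(1+p_\star\xi_x)^{-1}$ a probability density under which $\bbE[1+\xi_x]=1$, and \eqref{thegeneral} follows from Jensen together with $\bbE[Z_{\gL,\xi}Y_\gL^{-1}]\le 1$, with a separate treatment of the boundary case $p_\star=1$. You instead peel off one site at a time through the factorisation $Z=(A+B)\bigl(1+p_n(b_n-1)\bigr)$ with $p_n$ independent of $b_n$, and condition; this is precisely the induction the paper itself uses later for the fractional-moment analogue \eqref{eq:forub1} in Lemma~\ref{th:forub} (where the authors note the tilting proof would also work). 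Your induction buys a slightly more general, site-dependent bound $\sum_i\sup_p\bbE\log(1+p(b_i-1))$, avoids the change of measure and the $p_\star=1$ corner case, and uses only $\gl(\gb)<\infty$, $\bbE[\go_-]<\infty$ as in Remark~\ref{rem:weakcond}; the paper's tilting argument is more structural, exhibiting the IID Bernoulli field at density $p_\star$ as the saturating case, which is the heuristic later exploited for the lower bound. Your optimisation step is essentially the paper's Taylor expansion; the one point you assert rather than prove is $p^*(h)\to 0$ as $h\searrow 0$ (needed before bootstrapping to $p^*=h/\Var(e^{\gb\go-\gl(\gb)})+O(h^2)$), which requires the strict-Jensen observation that $\bbE\log\bigl(1+p(e^{\gb\go-\gl(\gb)}-1)\bigr)<0$ for fixed $p>0$, exactly as in the paper; this is routine, and the remaining details (the cubic upper bound on $\log(1+t)$, the crude bound for $h$ bounded away from $0$, and the harmless discrepancy between summing over $\gL_N$ and $\mathring\gL_N$) are handled correctly.
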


\medskip

To achieve this bound, we do not rely at all on the fact that $(\gd_x)_{x\in \gL_N}$ is the contact set of  the LFF.
We instead prove a general statement which says that the averaged $\log$ partition function associated to any point process 
$(\gd_x)_{x\in \gL_N}$ -- or any family of Bernoulli random variables (with arbitrary parameters and correlations) -- is always smaller than the one obtained when the $(\gd_x)_{x\in \gL_N}$ are IID Bernoulli variables  with an optimal density. 

\medskip

\begin{proposition} 
\label{th:ub}
Consider $\gL$ a finite non empty set, $(\xi_x)_{x\in \gL}$ a field of IID  random variables satisfying $\bbP( \xi\ge -1 )=1$ and  $\bbE[ (\log (1+ \xi))_+]<\infty$. Moreover we assume that
$\bP_{\gL}$ is the probability distribution of an arbitrary random vector $(\gd_x)_{x\in \gL}$ on $\{0,1\}^{\gL}$. 
Then we have 
\begin{equation}
\label{thegeneral}
 \bbE \log \bE_{\gL}\left[ \prod_{\{x \ : \ \gd_x=1\}}(1+\xi_x) \right]
 \le |\gL|\max_{p\in[0,1]} \bbE \left[ \log \left(1+p \xi \right) \right]\, ,
\end{equation}
with the convention $ \prod_{\{x \in \emptyset\}}(1+\xi_x) =1$.

\end{proposition}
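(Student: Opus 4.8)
The plan is to prove Proposition~\ref{th:ub} by a one-site \emph{peeling} induction on $|\gL|$, the single-site case being essentially trivial. Throughout write $Z:=\bE_{\gL}\big[\prod_{\{x\,:\,\gd_x=1\}}(1+\xi_x)\big]$ and $F:=\max_{p\in[0,1]}\bbE[\log(1+p\xi)]$; one should first record that $F$ is well defined and finite, with $0\le F\le\bbE[(\log(1+\xi))_+]$, because at $p=0$ the expectation vanishes and, thanks to $\xi\ge-1$, the positive part of $\log(1+p\xi)$ is dominated by $(\log(1+\xi))_+$ uniformly in $p\in[0,1]$. More generally $Z\le\prod_{x\in\gL}(1+(\xi_x)_+)$ (a convex combination is at most the maximal term, attained by $A=\{x:\xi_x\ge0\}$), so $(\log Z)_+\le\sum_{x\in\gL}(\log(1+\xi_x))_+$ is integrable and all the logarithms below are well defined in $[-\infty,\infty)$. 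When $|\gL|=1$ the random vector is a single Bernoulli variable of some parameter $q$, $Z=1+q\xi$, and $\bbE\log(1+q\xi)\le F$ is immediate; $|\gL|=0$ gives $\log 1=0$.

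For the inductive step I would fix $x_0\in\gL$, set $\gL':=\gL\setminus\{x_0\}$, and split $Z$ according to the value of $\gd_{x_0}$:
$Z=(1+\xi_{x_0})Y_1+Y_0$, where $Y_i:=\bE_{\gL}\big[\ind_{\{\gd_{x_0}=i\}}\prod_{\{x\in\gL':\gd_x=1\}}(1+\xi_x)\big]$ for $i\in\{0,1\}$. Two observations make the argument run: (i) $Y_0,Y_1$ are nonnegative and are functions of $(\xi_x)_{x\in\gL'}$ only; and (ii) $Y_0+Y_1=\bE_{\gL}\big[\prod_{\{x\in\gL':\gd_x=1\}}(1+\xi_x)\big]=\bE_{\gL'}\big[\prod_{\{x\in\gL':\gd_x=1\}}(1+\xi_x)\big]=:Z'$ is exactly the left-hand side of \eqref{thegeneral} for the marginal law of $(\gd_x)_{x\in\gL'}$ under $\bP_\gL$ (the integrand depending only on $(\gd_x)_{x\in\gL'}$). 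Hence, on $\{Z'>0\}$, one can factor $Z=Z'\,(1+r\xi_{x_0})$ with $r:=Y_1/(Y_0+Y_1)\in[0,1]$ depending on $(\xi_x)_{x\in\gL'}$ only; and $Z'=0$ forces $Y_0=Y_1=0$, hence $Z=0$, so nothing is lost on $\{Z'=0\}$, where $\log Z=\log Z'=-\infty$. Thus $\log Z=\log Z'+\log(1+r\xi_{x_0})$ (with the convention $-\infty$ on $\{Z'=0\}$).

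Then I would take the disorder expectation, using that $\xi_{x_0}$ is independent of $(\xi_x)_{x\in\gL'}$, hence of $Z'$ and $r$. Conditioning on $(\xi_x)_{x\in\gL'}$, the variable $r$ is a deterministic point of $[0,1]$, so $\bbE[\log(1+r\xi_{x_0})\mid(\xi_x)_{x\in\gL'}]\le\max_{p\in[0,1]}\bbE[\log(1+p\xi)]=F$. Therefore $\bbE\log Z=\bbE\log Z'+\bbE\log(1+r\xi_{x_0})\le\bbE\log Z'+F$, and the induction hypothesis applied to the marginal law of $(\gd_x)_{x\in\gL'}$ gives $\bbE\log Z'\le(|\gL|-1)F$, whence $\bbE\log Z\le|\gL|\,F$, which is exactly \eqref{thegeneral}.

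I do not expect a genuine obstacle here: the only points requiring care are the measure-theoretic bookkeeping (integrability of all logarithms via the domination $Z\le\prod_{x\in\gL}(1+(\xi_x)_+)$, and the handling of $\{Z'=0\}$) and the correct identification of $Z'$ with the marginal partition function. Once Proposition~\ref{th:ub} is established, Proposition~\ref{th:final-ub} follows by applying it with $\gL$ the relevant box, $\gd_x=\ind_{[-1,1]}(\phi(x))$ and $\xi_x=e^{\gb\go_x-\gl(\gb)+h}-1$ (so that $1+\xi\ge0$ automatically and $\bE_{\gL}\big[\prod(1+\xi_x)\big]=Z_{N,\go,\gb,h}$), dividing by $|\gL_N|$, letting $N\to\infty$, and finally bounding $\max_{p\in[0,1]}\bbE[\log(1+p\xi)]$ by $\chi(\gb)h^2+C_\gb h^3$ via the Taylor expansion \eqref{elaleq} and optimization of the resulting quadratic in $p$ — the step that produces the constant $\chi(\gb)=1/\big(2\Var(e^{\gb\go-\gl(\gb)})\big)$.
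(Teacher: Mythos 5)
Your proof is correct, but it follows a genuinely different route from the paper's. You peel off one site at a time: writing $Z=Z'\,(1+r\xi_{x_0})$ with $r=Y_1/(Y_0+Y_1)\in[0,1]$ measurable with respect to $(\xi_x)_{x\in\gL'}$, using independence of $\xi_{x_0}$ to bound the conditional expectation of $\log(1+r\xi_{x_0})$ by $\sup_{p\in[0,1]}\bbE[\log(1+p\xi)]$, and inducting on the marginal law on $\gL'$; your bookkeeping (domination of $(\log Z)_+$ by $\sum_x(\log(1+\xi_x))_+$, the $\{Z'=0\}$ degeneracy, identification of $Z'$ with the marginal partition function) is all sound, and note you only ever need the supremum, attainment of the maximum being a separate (easy, concavity/continuity) point settled in the paper. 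The paper instead argues globally: it first reduces to $\bbE[\xi]>0$, takes the maximizer $p_\star$, and in the case $p_\star\in(0,1)$ exploits the first-order condition $\bbE[(1+p_\star\xi)^{-1}]=1$ to tilt the disorder law by $Y_\gL^{-1}=\prod_x(1+p_\star\xi_x)^{-1}$, showing $\bbE[Z\,Y_\gL^{-1}]\le 1$ and concluding via Jensen, with a separate direct computation when $p_\star=1$. Your induction avoids the existence/characterization of $p_\star$ and the boundary case entirely, so it is arguably more elementary; the paper's tilting argument, on the other hand, makes explicit that the bound is saturated by IID Bernoulli$(p_\star)$ contact variables, which is the heuristic driving the whole localization strategy. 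Amusingly, your argument is exactly the one the paper does use later for the fractional-moment analogue (Lemma~\ref{th:forub}, inequality \eqref{eq:forub1}), where it remarks that either proof works; so the two routes are both present in the paper, just attached to different statements. Your concluding remarks on deducing Proposition~\ref{th:final-ub} by optimizing the Taylor expansion in $p$ also match the paper.
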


%{The statement is in fact also valid when $\bbE[ (\log (1+ \xi))_+]=\infty$ (modulo the fact that $\bbE \left[ \log \left(1+p \xi \right))\right]$ might not be defined for $p=1$), but is of little interest since the right-hand side \ of \eqref{thegeneral} is equal to infinity in that case (with the maximum attained at any $p\in(0,1)$).}
% Of course by using the \emph{maximum} in the right-hand side we mean that this maximum exists.
% We have excluded the case $\bbE[ (\log (1+ \xi))_+]=\bbE[ (\log (1+ \xi))_-]=\infty$ because the right-hand side
% of \eqref{thegeneral} is not well defined in this case. But the problem is only at $p=1$ and 
% the statement -- if one excludes $p=1$ --  covers also this case simply because 
% $\bbE[ (\log (1+ \xi))_+]=\bbE[ (\log (1+ \xi))_-]=\infty$ directly implies 
% $ \bbE \left[ \log \left(1+p \xi \right) \right]=\infty$ for every $p \in (0,1)$. 

\medskip

Proposition \ref{th:final-ub} then follows from Proposition \ref{th:ub} by  solving the corresponding optimization problem. This is what we do first.

\medskip

\begin{proof}[Proof of Proposition \ref{th:final-ub}]
 As a consequence of \eqref{thegeneral} for $\xi_x\, :=\, e^{\gb\go_x-\gl(\gb)+h}-1$
 we have 
 \begin{equation} 
  \bbE \log Z_{N, \go, \gb,h} \le |\gL_N|\max_{p\in[0,1]} \bbE \left[ \log \left(1+p \xi \right) \right]\, ,
 \end{equation}
and thus for every $\gb>0$ and every $h$ we have 
 \begin{equation}
 \label{eq:ub}
 \tf(\gb,h)\, \le\,  \max_{p\in[0,1]} \bbE \left[ \log \left(1+p\left(e^{\gb\go-\gl(\gb)+h}-1\right)\right) \right]\, .
 \end{equation}
We are now going to expand   the right-hand side for $h \searrow 0$, and everything we are going to require is that  $\gl(3\gb)<\infty$. We use the fact that the maximum is achieved for some $p$ that we call $p_h$ (which is unique as shown in the proof of Proposition \ref{th:ub} although it is not needed in the argument here). 
 First of all remark that   $\lim_{h \searrow 0}p_h=0$. Indeed if we set $p_0=\limsup_{h \searrow 0} p_h $, using dominated convergence and the definition of $p_h$, we have, for some positive sequence $h_n$ tending to $0$
 \begin{equation}
 \lim_{n\to \infty}\max_{p\in[0,1]} \bbE \left[ \log \left(1+p\left(e^{\gb\go-\gl(\gb)+h_n}-1\right)\right) \right]=\bbE[ \log (1+p_0(e^{\gb \go- \gl(\gb)}-1))]\,.
 \end{equation} 
 The right-hand side  is non-negative (set $p=0$ in the left-hand side ) while by the (strict) Jensen's inequality the left-hand side is 
 strictly negative if $p_0>0$ since $\gb>0$ and the distribution of  $\go$ is non degenerate. Therefore $p_0=0$. 
%  $0$ and, since ,  Jensen's inequality is strict so 
%  $\bbE[ \log (1+p_0(e^{\gb \go- \gl(\gb)}-1)]< 0$. But then the same is true for $h>0$ sufficiently small and this is absurd
%  because from the value at $p=0$, $\max_{p\in[0,1]} \bbE \left[ \log \left(1+p\left(e^{\gb\go-\gl(\gb)+h}-1\right)\right) \right]\ge \log 1 =0$.
%  
%  $\tf(\gb, h)\ge 0$. 

To conclude, 
we analyse the asymptotic behavior $\bbE[ \log (1+p(e^{\gb \go- \gl(\gb)+h}-1))]$ in the limit when $p, h \searrow 0$ via Taylor expansion. %{ avant on faisait une expansion en $p$ et $p_h$ ce qui me parait moins pratique pour raisonner ensuite}
This will allow to determine the asymptotics for $p_h$ and subsquently also that of r.h.s of \eqref{eq:ub}.
We use the elementary bound
\begin{equation}
-(x_-)^3 \, \le \, \log(1+x)-x + \frac 12 x^2 \, \le \, \frac 1 3 x^3,
\end{equation}
where the upper bounds holds for every $x>-1$, while the lower bound holds for $x>-0.8156\ldots$. Since we have assumed that $\gl(3\gb) < \infty$  
we  obtain that  
\begin{equation}
\bbE\left[ \log \left(1+p\left(e^{\gb \go- \gl(\gb)+h}-1\right)\right)\right]- p(e^h-1) +\frac { p^2}2 \left(e^{\gl(2\gb)-2 \gl(\gb)+2h}-2e^h+1 \right) \, =\, O\left(p^3\right)\,,
\end{equation}
and this readily entails ($\chi (\gb)$ is given in \eqref{eq:chi})
\begin{equation}
\label{laleq}
\bbE\left[ \log \left(1+p\left(e^{\gb \go- \gl(\gb)+h}-1\right)\right)\right] 
\, = p h - \frac { p^2}{4 \chi (\gb)}+  \, O\left(p^3\right)+O(p^2 h)
+O(p h^2)
\,.
\end{equation}
We stress that an expression (that depends on $p$, $h$ and $\gb$) is $O(p)$ (or $O(h)$,  etc$\ldots$) 
 means that there exists a constant $C_\gb$ such that its absolute value is bounded by $C_\gb p$ (or by
 $C_\gb h$,  etc$\ldots$) for all  $h$ and $p$ sufficiently small. 
It follows then by simple computations (using the fact that $p_h$ is the maximizer and that it tends to $0$) that $p_h \sim 2h \chi(\gb)$ and by bootstrap (using the fact that the left-hand side \ in \eqref{laleq} is $O(h^3)$) we obtain that  
\begin{equation}
\label{optimiz}
p_h \,= \,2h \chi(\gb)+O(h^2)\, .
\end{equation}
Therefore
\eqref{eq:final-ub} follows and the  proof of Proposition~\ref{th:final-ub} is  complete.
\end{proof}

\medskip

\begin{rem}
\label{rem:forfuture}
For future use let us remark that what we have just proven implies that, with
$\xi_x\, :=\, e^{\gb\go_x-\gl(\gb)+h}-1$, we have that for every $\gb>0$ there exists 
$C_\gb>0$ such that 
 \begin{equation}
\max_{p\in[0,1]} \bbE \left[ \log \left(1+p \xi \right) \right]\, \le \, \chi(\gb) h^2 + C_\gb h^3\, ,
 \end{equation}
 for every $h\in [0,1]$.
\end{rem}

\medskip

\begin{proof}[Proof of Proposition \ref{th:ub}]
Let us first observe that we can restrict to the case when $\bbE[\xi]> 0$. When this is not the case, by Jensen's inequality the left hand side of \eqref{thegeneral} is bounded above by $0$ and thus, considering the case $p=0$, the inequality holds. 

 By analyzing the function   $ p \mapsto \log (1+p\xi)$, which is in particular  strictly concave and smooth for $p \in (0,1)$, 
 one readily establishes also that 
 \begin{equation}
 \label{eq:themap}
 [0,1] \ni p   \mapsto 
  \bbE \left[ \log (1+p\xi ) \right]  \in \bbR\cup\{-\infty\}\, ,
  \end{equation}
  is a strictly concave smooth  function  which is continuous up to the boundary points. More precisely, keeping in mind that 
  $(\log(1+p \xi))_+= \log(1+p \xi_+)$ and $(\log(1+p \xi))_-= -\log(1-p \xi_-)$, we have that the function in \eqref{eq:themap} 
  \smallskip
  
  \begin{itemize}
  \item  
   converges to zero  when $p\searrow 0$. This is a consequence of the Dominated Convergence Theorem: 
   $(\log(1+p \xi))_+ \le (\log(1+\xi))_+$ (recall that $\bbE[(\log(1+\xi))_+]<\infty$)
  and  $(\log(1+p \xi))_-$ is bounded for $p$ away from one;
  \item  converges to its boundary values  when $p\nearrow 1 $: the boundary value is
   finite if   $\bbE[ (\log(1+\xi))_-] < \infty$ and it takes values 
  $-\infty$ otherwise. This follows because $(\log(1+p \xi))_+$ is non decreasing  in $p$ to the limit value $(\log(1+ \xi))_+$ that has bounded expectation and because also $(\log(1+p \xi))_-$ is non decreasing in $p$. 
 \end{itemize}
 \medskip
  Let $p_\star$ denote the (unique) value of $p$ for which the maximum in  \eqref{eq:ub} is attained.

 Let us argue first that $p_\star>0$. In fact, $p_\star=0$ is not possible because  the function \eqref{eq:themap} takes value  zero for $p=0$, but its derivative is equal to $\bbE[\xi/(1+p \xi)]$ which approaches 
 $\bbE \xi >0$ (see the beginning of the proof) for $p\searrow 0$: this follows by separating once again the case of $\xi$ positive, for which we apply the Monotone Convergence Theorem, and $\xi$ negative, for which the integrand is bounded. 
%  Therefore the maximum cannot be reached at $p=0$. 
  
 Suppose now that $p_\star\in (0,1)$ (the case $p_\star=1$ is treated at the end). In this case we  exploit the fact that the first derivative of the map \eqref{eq:themap} is zero at $p_\star$ and we obtain 
 \begin{equation}
 \label{lesconds}
  \bbE \left[ \frac{\xi}{1+p_\star\xi} \right]=0 \ \  \text{ which implies } \ \   \bbE \left[ \frac{1}{1+p_\star\xi} \right]=1\, .
 \end{equation}
Reintroducing the dependence on $x$ we set
\begin{equation}
Y_\gL\, :=\, \prod_{x\in  \gL} (1+p_\star \xi_x)\, .
\end{equation}
With the notation 
\begin{equation}
Z_{\gL, \xi}\, :=\,  \bE_{\gL}\left[ \prod_{\{x \ : \ \gd_x=1\}}(1+\xi_x) \right]\, ,
\end{equation} 
we have
\begin{equation}
\label{eq:sotosayfract}
\begin{split}
 \bbE \log Z_{\gL, \xi} \, &=\,  \bbE \log \left[Z_{\gL, \xi} (Y_\gL)^{-1}\right]+ \bbE \log Y_\gL
  \\ 
 &\le \log \bbE \left[Z_{\gL, \xi} (Y_\gL)^{-1}\right]+ \vert \gL \vert \bbE \log (1+p_\star \xi).
\end{split}
\end{equation}
Hence to conclude it suffices  to show that $\bbE \left[Z_{\gL, \xi} (Y_\gL)^{-1}\right]\le 1$.
To establish this we introduce a new law $\tilde \bbP$ for the disorder 
\begin{equation}
  \frac{\dd \tilde \bbP}{\dd \bbP}(\go)\, :=\,  (Y_\gL(\go))^{-1}\, ,
  \end{equation}
where 
$\bbE[Y_\gL^{-1}]=1$ because of the second equality in  \eqref{lesconds}. 
Under this new probability, the variables $\xi_x$ are still IID and  we have 
from \eqref{lesconds}
\begin{equation}
 \tilde \bbE\left[ 1+\xi_x \right]\, =\, 1\, .
\end{equation}
Hence for this reason we have (recall the convention that a product over an empty set is equal to one)
\begin{equation}
\bbE \left[Z_{\gL, \xi} (Y_\gL)^{-1}\right]\, =\,  \tilde \bbE Z_{\gL, \xi}=  \bE_\gL \tilde \bbE \left[ \prod_{x\in  \gL:\,  \gd_x=1} (1+\xi_x) \right]\, =\,1\, .  
\end{equation}

We are left with the case $p_\star=1$. In this case the derivative of the map \eqref{eq:themap} 
must be positive for every $p\in (0,1)$, that is  $\bbE \left[ (1+p\xi)^{-1} \right]<1$ for 
every $p\in (0,1)$, so $\bbE \left[ (1+\xi)^{-1} \right]\le1$: the continuity for $p\nearrow 1$ is established by splitting the expectations according to $\xi\ge 0$, in this case the integrand is bounded,  and $\xi<0$ for which we can apply the Monotone Convergence Theorem.   We use again
\eqref{eq:sotosayfract}, even if in this case $1/Y_\gL$ is not a probability density. But we can argue directly  that
\begin{multline}
\bbE \left[Z_{\gL, \xi} (Y_\gL)^{-1}\right]\, =\, \bE_\gL \bbE \left[\prod_{x \in  \gL} \frac 1{ (1+\xi_x)}  \prod_{x\in  \gL:\,  \gd_x =1 } (1+\xi_x) \right]\\  =\, \bE_\gL\prod_{x \in  \gL: \,  \gd_x=0 } \bbE\left[ \frac 1{ (1+\xi_x)}\right] \, \le \, 1\, .
\end{multline}
This completes the proof of Proposition~\ref{th:ub}.
 \end{proof}

\section{Proof of Theorem \ref{th:mainF}: Lower bound  on the free energy}

\label{sec:LB}

The lower bound we obtain on the free energy is in a sense less precise than the upper bound since the correction we obtain is $h^{2+\gep}$ instead of $h^3$. %Nevertheless, having a quantitative upper bound here is really important since it will have a crucial role in the proof of Theorem \ref{th:paths}.

\begin{proposition}
\label{lloobb}
Choose $\gb>0$.
There exist $\gep=\gep_d$ and
$C_\gb=C_{\gb,d}>0$ such that for every $h>0$
 \begin{equation}
 \label{eq:lloobb}
  \tf(\gb,h)\, \ge\,   \chi(\gb) h^2 -C_{\gb} h^{2+\gep}\, .
 \end{equation}
\end{proposition}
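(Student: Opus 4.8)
\textbf{Proof plan for Proposition~\ref{lloobb}.}
The plan is to produce a trial measure (or rather a modification of the partition function that has the same free energy, as announced in the discussion of Proposition~\ref{th:superadd}) obtained by shifting the boundary condition of the LFF to height $u=u(\gb,h)$, where $u$ is chosen so that the single-site contact probability of a $\cN(u,\gs_d^2)$ variable equals the optimal Bernoulli parameter $p_h$ from \eqref{optimiz}; concretely $u \sim \sqrt{2\gs_d^2 \log(1/h)}$. Because the free energy is unchanged by a rigid vertical shift of the boundary data, it suffices to lower bound $\bbE \log Z$ for the shifted field. First I would fix a mesoscopic scale $L=L(h) := \lceil h^{-c} \rceil$ with $c>0$ a small constant (in particular $c<1/d$, so that $L^d h \to 0$) and tile $\mathring\gL_N$ by (essentially disjoint) boxes of edge $L$. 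Using the two-scale decomposition sketched in the introduction, I would write $\phi = \psi + \sum_i \eta_i$, where $\psi$ is a Gaussian field with small variance $\gep_0$ and polynomially decaying covariance, and the $\eta_i$ are \emph{independent} fields each compactly supported in one box (this is the ``massive + local'' splitting; one realises it by taking $\psi$ to carry the long-range tail $G_\gep$ of a killed walk and $\eta_i$ the complementary, exponentially-localised part restricted to box $i$). Conditioning on $\psi$, the boxes decouple, so
\begin{equation*}
\bbE \log Z_{N,\go,\gb,h} \, \ge \, \sum_i \bbE\, \bbE^\psi \log \bE^{\eta_i}\!\left[ \exp\Big( \sum_{x\in B_i}(\gb\go_x-\gl(\gb)+h)\gd_x\Big)\right] - (\text{boundary overlap error})\, ,
\end{equation*}
and by translation invariance it is enough to bound one such box term from below by $|B_i|\,(\chi(\gb)h^2 - C_\gb h^{2+\gep})$.

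The heart of the argument is then the single-box estimate. Inside a box $B$ of volume $L^d$, with the field centred at height $u$ and contact density of order $h$, the expected number of contacts $L^d h$ is bounded but may be large; however the dominant contribution to $\bE^{\eta}[\exp(\sum_{x\in B}(\gb\go_x-\gl(\gb)+h)\gd_x)]$ comes from configurations with zero, one, or two contacts. I would expand $\log \bE^\eta[\cdots] = \log(1 + \sum_{x} q_x e^{\gb\go_x-\gl(\gb)+h} + \tfrac12\sum_{x\ne y} q_{x,y} e^{\cdots} + \ldots)$, where $q_x = \bP^\eta(\gd_x=1)$ and $q_{x,y}=\bP^\eta(\gd_x=\gd_y=1)$ are the one- and two-point contact probabilities of the (shifted, small-variance-corrected) Gaussian field; the point is that $q_x \approx p_h(1+o(1))$ uniformly in the bulk of the box, and pair correlations $q_{x,y}-q_xq_y$ decay fast enough (by the weak-correlation property of large excursions, item (2) of the heuristic) that the sum over pairs contributes only a higher-order term. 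Taking $\bbE$ over $\go$ and matching with \eqref{elaleq}/\eqref{laleq} — i.e.\ $\bbE[\log(1+p_h\xi)] = \chi(\gb)h^2 - \tfrac{p_h^2}{4\chi(\gb)}+\ldots = \chi(\gb)h^2 + O(h^3)$ after optimising — gives the per-site value $\chi(\gb)h^2$ up to errors that are $o(h^2)$; the errors come from (i) replacing $u$ by its log-asymptotic value, (ii) the $O(p^3)$ and pair terms, (iii) the small-variance field $\psi$ perturbing $q_x$, and (iv) the boundary regions of the boxes where the $\eta_i$ overlap and the contact probability is not controlled, whose relative volume is $O(1/L)=O(h^c)$. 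Choosing $c$ and the variance split $\gep_0$ appropriately, all of these are absorbed into $C_\gb h^{2+\gep}$ for a suitable $\gep = \gep_d > 0$.

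I expect the main obstacle to be the single-box lower bound, specifically controlling the contact statistics $q_x$ and $q_{x,y}$ of the shifted Gaussian field uniformly and with the right precision: one needs that, away from the box boundary, $q_x = \bP(\eta(x) + \psi(x) \in [-u-1,-u+1])$ is asymptotically $p_h$ despite the variance being slightly less than $\gs_d^2$ (because some variance was moved into $\psi$) and despite $\eta$ not being stationary near $\partial B$; and one needs Gaussian correlation inequalities (or explicit Gaussian computations exploiting \eqref{eq:Gest-tail}) to show $\sum_{y\in B}(q_{x,y}-q_xq_y)$ is $o(h)$ per site, which is what makes the two-contact and higher terms negligible. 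A secondary technical point is making the ``same free energy after shifting the boundary condition'' rigorous together with the super-additivity of the modified partition function (Proposition~\ref{th:superadd}), so that the single-box bound actually propagates to a bound on $\tf(\gb,h)$; this is where the slightly weaker exponent $h^{2+\gep}$ (rather than $h^3$) enters, since the geometric tiling and overlap corrections are cruder than the soft universal argument used for the upper bound.
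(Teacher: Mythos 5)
Your overall architecture coincides with the paper's: reduce to the shifted-boundary, super-additive partition function of Proposition~\ref{th:superadd} with $u\sim\sqrt{2\gs_d^2\log(1/h)}$ chosen to match the optimal Bernoulli density, decompose the field into a small long-range part plus independent local fields on mesoscopic boxes with $L^d h\to0$, and extract $\chi(\gb)h^2$ per site from a second-order expansion in each box. Two of your steps, however, contain genuine gaps rather than routine technicalities. First, the single-box estimate. The assertion that ``configurations with zero, one or two contacts dominate'' is not secured by pairwise decorrelation of $q_{x,y}$: once you write $\bbE\log(1+W)\ge \bbE W-C\,\bbE W^2$ (or any equivalent expansion of the $\log$), the term $\bbE W^2$ is a replica quantity of the form $\left(\bE^{\eta}\right)^{\otimes2}\bigl[\exp\bigl((\gl(2\gb)-2\gl(\gb))\sum_{x}\gd^{(1)}_x\gd^{(2)}_x\bigr)\bigr]$, i.e.\ an \emph{exponential} moment of the number of coincident contacts of two independent copies, and this is not controlled by two-point estimates alone. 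The paper handles this by conditioning on the event $\cD_N$ that each local field has at most $\kappa_d$ sites above level $2^{-d}(u-2)$ (Lemma~\ref{th:pqs4}), and all the second-moment bounds of Lemma~\ref{th:lemformagic} are carried out under that cap. Your plan contains no analogue of this truncation on the number of high points, so the ``higher-order'' status of the multi-contact terms is unproven.

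Second, the overlap regions cannot be dismissed because their relative volume is $O(h^{c})$. Adding the interaction at a site $x$ can lower $\bbE\log Z$ by as much as $\gl(\gb)$ times the contact probability of $x$ under the measure tilted by all the other interaction terms; a priori that probability is only bounded by $1$, which would produce an error of order $h^{c}N^d\gg h^{2}N^d$. One must prove that this tilted contact probability is itself $O(h^{1-\gga})$ at overlap sites; in the paper this is Lemma~\ref{ctunlemme}, whose proof rests on a Radon--Nikodym bound for the conditional law of a single local field (Lemma~\ref{derivs}) and requires both the cap $\cD_N$ and a truncation $\overbar{\go}$ of the disorder at level $K_h$ (this is where the full assumption \eqref{eq:lambda} and Lemma~\ref{th:theboundz} are used). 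Your sketch does not address this point at all. A lesser, but real, issue is the construction of the decomposition itself: taking $\psi$ to carry the massive Green function $G_\gep$ and ``restricting the complementary part to a box'' does not yield independent, compactly supported local fields (the complementary covariance decays exponentially but does not vanish across boxes); the paper's Proposition~\ref{decomp} builds a specifically geometric decomposition, by averaging free fields conditioned on translated grids, precisely to obtain exact independence and supports overlapping only in thin frames, and the widths $L\ll M$ of those frames are what the overlap estimate above is calibrated against.
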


\medskip

Note that we can assume that $h$ is sufficiently small whenever needed.
Indeed if \eqref{eq:lloobb} holds for $h<h_0=h_0(d, \gb)$ and a constant $C_{\gb}$ then it necessary holds for all $h>0$ with a modified  constant
 $C'_\gb=\max (C_{\gb},  \chi(\gb) h_0^{-\gep})$) (this choice makes the  right-hand side \ in \eqref{eq:lloobb} negative for $h>h_0$).

% in fact assume that \eqref{eq:lloobb} is proven for $h< h_0=h_0(d, \gb)$. If 
% the right-hand side of \eqref{eq:lloobb} evaluated at $h_0$ is non positive,
% then the same expression is negative for larger values of $h$ and, since the left-hans side is non negative,
%  \eqref{eq:lloobb} holds for every $h>0$. On the other hand if the right-hand side of \eqref{eq:lloobb} evaluated at $h_0$ is positive then we can make $C_\gb$ larger so that the right-hand side of \eqref{eq:lloobb} evaluated at $h_0$ is zero (it suffices to set $C_\gb=  \chi(\gb) h_0^{-\gep}$). With this new $C_\gb$ we can apply the first part of the argument and   \eqref{eq:lloobb} holds for every $h>0$.

\medskip

The proof of Proposition~\ref{eq:lloobb} is essentially self-contained except for a result  of super-additivity connected to the existence of 
the free energy that we cite from  \cite{cf:GL}. 
For this result we introduce
\begin{equation}
\tilde \gL_N\, :=\, \lint 1,N \rint^d\,,
\end{equation}
and we let
  $\partial \gL_N:= \gL_N \setminus \mathring{\gL}_N$ denote the internal boundary of $\gL_N$ and set
\begin{equation}
\gd^u_x\, :=\, \ind_{ [u-1,u+1]}(\phi(x))\, .
\end{equation}

\medskip

\begin{proposition}[{\cite[Prop.~4.2]{cf:GL}}]
\label{th:superadd}
For every $\gb>0$,  every $h\in \bbR$ and every $u \in \bbR$  we have 
that 
\begin{equation}
\label{eq:superadd1}
\tf (\gb, h)\, =\, \lim_{N\to \infty} \frac 1{N^d}
\bbE \bE \left[\log \bE \left[e^{\sum_{x\in \tilde \gL_N}(\gb \go_x-\gl(\gb)+h)\gd^u_x } \Big | \,
 \cF^\phi_{\partial \gL_N} \right] \right]\, .
\end{equation} 
Moreover for every $N$
\begin{equation}
\label{eq:superadd2}
\tf (\gb, h)\, \ge \,  \frac 1{N^d}
\bbE \bE \left[\log \bE \left[e^{\sum_{x\in \tilde \gL_N}(\gb \go_x-\gl(\gb)+h)\gd^u_x } \Big | \,
 \cF^\phi_{\partial \gL_N} \right] \right]\, .
\end{equation}
\end{proposition}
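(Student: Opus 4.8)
Here is how I would approach Proposition~\ref{th:superadd}. Write $\hat Z^u_N:=\bE\big[e^{\sum_{x\in\tilde\gL_N}(\gb\go_x-\gl(\gb)+h)\gd^u_x}\mid\cF^\phi_{\partial\gL_N}\big]$, a random variable (under $\bP$, the \emph{infinite-volume} LFF) that, by the domain Markov property, depends on $\phi$ only through $\phi|_{\partial\gL_N}$, and set $b_N:=\bbE\bE[\log\hat Z^u_N]$, so that the right-hand sides of \eqref{eq:superadd1}--\eqref{eq:superadd2} are $\lim_N b_N/N^d$ and $b_N/N^d$. The plan has two essentially independent parts: (i) establish the \emph{exact} super-additivity $b_{kN}\ge k^d b_N$ for all $k,N\ge1$ — this is precisely what is gained by passing to the conditioned, level-$u$ partition function, and it yields both the existence of $\lim_k b_{kN}/(kN)^d=\sup_k b_{kN}/(kN)^d$ and the bound \eqref{eq:superadd2} once that limit is identified with $\tf(\gb,h)$; and (ii) prove the comparison $b_N=\bbE\log Z_{N,\go,\gb,h}+o(N^d)$, which together with the existence of the limit in \eqref{eq:theF} (from \cite{cf:CM1}) gives $\lim_N b_N/N^d=\tf(\gb,h)$, its $u$-independence, and hence \eqref{eq:superadd1}.

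For the super-additivity, use that $\tilde\gL_N=\lint1,N\rint^d$ tiles $\bbZ^d$ under translations by $N\bbZ^d$, so the boxes $\tilde\gL_N^{(\vec j)}:=\vec jN+\tilde\gL_N$, $\vec j\in\lint0,k-1\rint^d$, form a genuine partition of $\tilde\gL_{kN}$. With $\mathring\gL_N^{(\vec j)}:=\vec jN+\mathring\gL_N$, $\partial\gL_N^{(\vec j)}:=\vec jN+\partial\gL_N$ and $F_N^+:=\tilde\gL_N\setminus\mathring\gL_N\subset\partial\gL_N$, one checks the elementary set identities $S:=\bigcup_{\vec j}\partial\gL_N^{(\vec j)}=\{x:\exists i,\ x_i\in N\bbZ\}$, $\partial\gL_{kN}\subset S$, $S\cap\tilde\gL_{kN}=\bigsqcup_{\vec j}(\vec jN+F_N^+)$ and $\tilde\gL_{kN}\setminus S=\bigsqcup_{\vec j}\mathring\gL_N^{(\vec j)}$. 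Conditioning $\hat Z^u_{kN}$ first on $\cF^\phi_S$: the Hamiltonian splits into its $S\cap\tilde\gL_{kN}$ part, which is $\cF^\phi_S$-measurable and factors out, and its $\bigsqcup_{\vec j}\mathring\gL_N^{(\vec j)}$ part; by the Markov property each $\partial\gL_N^{(\vec j)}$ shields $\mathring\gL_N^{(\vec j)}$ from everything else, so the interiors are conditionally independent given $\cF^\phi_S$, and the $\vec j$-th factor equals a translated $\hat Z^u_N$ with its top-face terms $e^{\sum_{x\in\vec jN+F_N^+}(\gb\go_x-\gl(\gb)+h)\gd^u_x}$ pulled out. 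Summing these top-face terms over $\vec j$ reconstitutes exactly the $S\cap\tilde\gL_{kN}$ contribution that was factored out, so the two cancel and one is left with the identity $\hat Z^u_{kN}=\bE\big[\prod_{\vec j}\hat Z^u_N(\go_{\vec jN+\cdot},\phi|_{\partial\gL_N^{(\vec j)}})\mid\cF^\phi_{\partial\gL_{kN}}\big]$. Applying the conditional Jensen inequality ($\log$ concave) inside, then $\bbE\bE$ and the tower property, and finally the translation invariance of the disorder law $\bbP$ and of the infinite-volume LFF $\bP$ (so that each $(\go_{\vec jN+\cdot},\phi|_{\partial\gL_N^{(\vec j)}})$ has the same law as $(\go_\cdot,\phi|_{\partial\gL_N})$), we get $b_{kN}\ge k^d b_N$, with no error term.

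For the identification $b_N=\bbE\log Z_{N,\go,\gb,h}+o(N^d)$: the upper bound follows from the conditional Jensen inequality, $b_N\le\bbE\log\bE\big[e^{\sum_{x\in\tilde\gL_N}(\gb\go_x-\gl(\gb)+h)\gd^u_x}\big]$ with $\bE$ now unconditional, after which the pinning level $u$ and the random boundary are removed at cost $o(N^d)$ using the vertical continuum symmetry of the LFF and the comparison of the infinite-volume field with $\bP_N$ (the discrete-harmonic extension of the boundary data has variance $\gs_d^2-G_N(x,x)$, which tends to $0$ in the bulk, hence is negligible after dividing by $N^d$). The lower bound is obtained by restricting, inside $\hat Z^u_N$, the conditioned interior field to the entropic-repulsion event that it is lifted to height $\approx u$ throughout $\mathring\gL_N$; this event has conditional probability $e^{-o(N^d)}$ in dimension $d\ge3$ — this is precisely the estimate behind $\log\bP_1(\phi(x)>1,\ \forall x\in\mathring\gL_N)=o(|\gL_N|)$, cf. \cite{cf:LM} — and on it $\gd^u_x=\ind_{[u-1,u+1]}(\phi(x))$ reproduces, after subtracting the lift, the level-$0$ reward structure of an essentially centred bulk field, recovering $Z_{N,\go,\gb,h}$ up to $o(N^d)$. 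Combining the two bounds gives \eqref{eq:superadd1}; then \eqref{eq:superadd2} is immediate from $\tf(\gb,h)=\lim_k b_{kN}/(kN)^d=\sup_k b_{kN}/(kN)^d\ge b_N/N^d$.

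The super-additivity step is, in effect, for free once the geometry is arranged correctly: the nesting $\mathring\gL_N\subset\tilde\gL_N\subset\gL_N$ is exactly calibrated so that gluing $k^d$ boxes produces no overlap and no leftover boundary terms, which is the whole reason for introducing the conditioned, level-$u$ partition function in the first place. The technical heart — and the main obstacle — is the comparison $b_N=\bbE\log Z_{N,\go,\gb,h}+o(N^d)$, equivalently the $u$-independence of the limit: the subtle point is that the averaged quantity $\bbE\bE[\log\hat Z^u_N]$ is \emph{not} governed by typical boundary configurations, for which a level-$u\ne0$ pinning is nearly inert, but by the atypical entropic-repulsion configuration in which the whole interface has been rigidly shifted to height $\approx u$ at sub-volume cost, a mechanism available only thanks to the continuum symmetry of the LFF in $d\ge3$; controlling this, together with the discrete-harmonic extension of the boundary data, uniformly in $N$, is the delicate part of the argument.
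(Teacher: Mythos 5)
Your super-additivity step is correct, carefully argued, and is exactly the mechanism the cited source relies on: the box $\tilde\gL_N=\lint1,N\rint^d$, placed strictly between $\mathring\gL_N$ and $\gL_N$, is calibrated so that its translates tile $\tilde\gL_{kN}$, the face contributions on $\tilde\gL_N\setminus\mathring\gL_N$ reconstitute the interaction on the internal conditioning grid, and translation invariance of the infinite-volume free field $\bP$ and of the disorder then gives $b_{kN}\ge k^d\,b_N$ with no error term. The domain-Markov cancellation and the passage through $\cF^\phi_S$ with the tower property are the right bookkeeping (keep explicit that $\hat Z^{u,(\vec j)}_N$ depends only on $\go$ over the $\vec j$-th translate and on $\phi|_{\partial\gL_N^{(\vec j)}}$, and that Fekete along multiples of a fixed $N$ is reconciled with the full $N\to\infty$ limit through the existence, coming from step (ii), of $\lim_N b_N/N^d$).

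The genuine gap is in step (ii), the identification $b_N=\bbE\log Z_{N,\go,\gb,h}+o(N^d)$. You correctly single it out as the delicate part, but the mechanism you describe for the lower bound does not do what you claim. If you restrict the conditioned interior field inside $\hat Z^u_N$ to the entropic-repulsion event that it lies at height $\approx u$ throughout $\mathring\gL_N$, you force $\gd^u_x\approx1$ at essentially every bulk site, and what remains is $\exp\bigl(\sum_{x}(\gb\go_x-\gl(\gb)+h)\bigr)$ times the conditional probability of that rare event — this is not $Z_{N,\go,\gb,h}$ to within $e^{o(N^d)}$: the object you want is governed by fields touching the pinning band only on a vanishing fraction of sites, and conditioning the whole interior into the band destroys that structure rather than recovering it. What actually removes the level-$u$ shift and the random boundary is a quasi-invariance (Cameron--Martin) comparison: shift the $\bP_N$ field by a deterministic bridge interpolating between the boundary data $\phi|_{\partial\gL_N}-u$ and zero in the bulk, pay the Dirichlet energy of the bridge, and show that both the energy and the residual shift are $o(N^d)$ with uniform tail control on the (random) boundary. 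Your upper bound is vague for the same reason: after the unconditional Jensen step you are comparing level-$u$ pinning for the infinite-volume field against level-$0$ pinning for $\bP_N$, two changes that are not monotone and are not dispatched by the observation that $\gs_d^2-G_N(x,x)\to0$ in the bulk. So part (i) is solid and on-target; part (ii) names the right obstacle but does not overcome it.
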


\medskip

Proposition~\ref{th:superadd} deserves some discussion. The point is that it introduces a different 
partition function: let us discuss first the case $u=0$. The main difference in this case is that we are not considering  
$0$ boundary conditions, but boundary conditions that are random and that they are sampled from 
a LFF, so they are zero only in some averaged sense (there is also the milder difference that the contacts are only the ones in the box $\tilde \gL_N$ which is slightly smaller than 
$\gL_N$). When we introduce $u\neq 0$ we can think of this new partition function as 
the partition function of the model in which the boundary conditions are not sampled from a LFF with
mean zero, but with mean $-u$: we have then written the partition function by exploiting the continuum symmetry of the LFF and we have  translated 
the region in which the contact potential acts  up by $u$ and reset the boundary mean to zero. %, but it is more intuitive to think of the model as the original one (so contact intervals centered at level zero) and boundary with mean $-u$. 
 Proposition~\ref{eq:superadd1} states two facts:
 \smallskip
 
 \begin{itemize}
 \item [(A)] The free energy associated to this new model coincides with the free energy of the original model, and this regardless of the value of $u$;
 \item [(B)] The free energy  dominates its finite $N$ approximation, if we choose the modified partition function we have just introduced for the  finite $N$ approximation: this is proven in \cite{cf:GL} as a direct consequence of the fact 
 that the logarithm of the modified partition function forms a super-additive sequence. 
 \end{itemize}
\smallskip

\medskip
The proof of Proposition~\ref{lloobb}, which involves several steps, is given in Section \ref{lloobb}. 
Before going through it we provide (in Section \ref{sketech} below) a quick exposition the main underlying ideas, and introduce in Section \ref{sec:splitting} a decomposition of the field which serves as an important technical tool for the proof.

\subsection{Sketch of proof for Proposition~\ref{lloobb}}\label{sketech}

The intuition behind our proof of Proposition \ref{lloobb} comes from the inequality 
\eqref{thegeneral}, which implies that for every choice of $u$ and $N$
\begin{equation}
\label{zineq}
\log \bE \left[e^{\sum_{x\in \tilde \gL_N}(\gb \go_x-\gl(\gb)+h)\gd^u_x } \Big | \,
 \cF^\phi_{\partial \gL_N} \right]\le |\tilde \gL_N|\max_{p\in[0,1]} \bbE \left[ \log \left(1+p [e^{\gb \go-\gl(\gb)+h}-1]\right) \right].
\end{equation}
We observe that this inequality is an equality if 
$(\gd^u_x)_{x\in \bbZ^d}$ is replaced by a field of Bernoulli variable with parameter $p(\gb,h)$ which is the maximizer of the right-hand side  of \eqref{zineq} . Hence our strategy relies on fixing $N$ large so that the influence of the boundary condition vanishes, and the value of $u$ in such a way that $(\gd^u_x)_{x\in \bbZ^d}$ 
resembles an IID Bernoulli field with optimal density. This is achieved by fixing $u=u(\gb, h)$ in such a way so that $\bbE[ \gd^u_x]=2h\chi(\beta)$ (recall that from \eqref{optimiz} this ensures that the density is close to optimal). Moreover, at a heuristical level, when $h\searrow 0$, the dependence between the variables  $(\gd^u_x)_{x\in \tilde \gL_N}$  vanishes: indeed as our fixed density vanishes, we have $u(\gb,h)\to \infty$, and high peaks of the LFF are known to display some asymptotic independence (see \cite{cf:CCH} for an illustration).
Most of the challenge is then to transform this
intuition of asymptotic independence into a quantitative statement.

\medskip

The strategy of proof is the following: we split $\tilde \gL_N$ into smaller boxes of edge length $M$ with $1\ll M\ll N$, and we wish to consider the contribution of each box separately. To do so we write $\phi$ as a sum of ``local'' fields whose compact supports corresponds roughly to a box, plus a negligible rest  (see Proposition \ref{decomp}). It is not possible for the support of the local field to match exactly with boxes and they must display some overlap, but we play with an extra parameter  $1 \ll L \ll M$ to make the total area of overlapping region negligible.

\medskip

Once this decomposition is made, we need to show the following two estimates:
\begin{itemize}
\item[(i)] The contribution per site to the free energy inside the region where there is no overlap is given to leading order in $h\searrow 0$  by
$\chi(\gb)h^2$. This is the content of Lemma \ref{pticube}.
\item[(ii)] The contribution per site to the free energy in regions where the support of different local fields intersect is larger than $-h^{2-\gep}$. While the second estimate might seem very rough, it turns out to be sufficient for our purpose since 
the overlap of the supports of local fields only accounts for a small portion of the box $\tilde \gL_N$.
\end{itemize}

\subsection{A finite range decomposition of the free field}
\label{sec:splitting}

Let us explain in this section our decomposition of $\phi$ into a sum of 
random field supported cubic boxes plus a random field of much smaller amplitude (which \emph{contains} all the long range correlations).

\medskip

Throughout the text we use cube for hyper-cube.
When $d\ge 3$, given $L\ge 1$ ($L$ will later be chosen as a function  of $h$ that diverges in the limit $h\searrow 0$, so  we can think of $L$ as a large integer). 
We choose the support of the local fields to be cubes of edge length $M=L^2+L$ while the length $L$ corresponds to the width of 
the overlap region between the support of two neighboring local fields.

% We let $|\cdot|$ and $\| \cdot\|
% $\vert x \vert= \sqrt{x_1^2+\ldots+ x_d^2}$, 
% $\Vert x\Vert= \max_{j=1, \ldots, d} \vert x_j \vert$ and $\dist (x,A):=\min_{y\in A} \vert x-y\vert$. 
% \medskip

\medskip

\begin{proposition}
\label{decomp}
If $\phi=(\phi(x))_{x\in \bbZ^d}$ is a LFF on $\bbZ^d$,
then one can construct a collection of independent  random fields 
$\{\phi_0,(\phi^{(z)})_{z\in \bbZ^d}\}$ (with non negative covariance entries) which satisfy the following properties
\begin{itemize}
 \item [(i)] We have %the following identity in law
 \begin{equation}
\label{ladeco}
\phi\stackrel{\mathrm{law}}= \phi_0+\sum_{z\in \bbZ^d} \phi^{(z)}\,,
\end{equation}
  \item [(ii)] The  field $\phi_0$ satisfies
  \begin{equation}\label{lioub}
  \sup_x \Var(\phi_0(x))\, = \begin{cases} 
  O\left( L^{-1} (\log L)\right) 
  & \text{ if } d=3,
   \\ O\left( L^{-1}\right) & \text{ if } d\ge  4.
   \end{cases}
  \end{equation}
  \item [(ii)] The fields $\phi^{(z)}$ are  identically distributed up to a lattice translation and they are supported in a box of diameter $M+L$. More precisely we have 
  $\phi^{(z)}(\cdot)\stackrel{\mathrm{law}}{=} \phi^{(0)}(\cdot- Mz)$ for every $z$ and %we have 
  almost surely 
  \begin{equation}
   \supp\left(\phi^{(z)}\right)\, =\, M z+\lint 1,M+L-1 \rint^d.
  \end{equation}
  where    $\supp\left(\phi^{(z)}\right):=\{ x\in \bbZ^d \ : \  \phi^{(z)}(x)\ne0\}$.
 \end{itemize}
\end{proposition}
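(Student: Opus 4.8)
The plan is to construct the decomposition by exploiting the transient random walk representation \eqref{zoop} of the Green function together with a geometric partition of the time spent by the walk. Recall that $G(x,y)=E_x[\int_0^\infty \ind_{\{S_t=y\}}\dd t]$; the idea is to tile $\bbZ^d$ by boxes $Q_z:= Mz+\lint 1,M-1\rint^d$ (so that consecutive boxes are separated, with the gaps filled by ``overlap slabs'' of width roughly $L$), and to decompose the walk's local time according to which box it is currently confined to. Concretely, I would introduce, for each $z$, the (slightly enlarged) box $\hat Q_z:= Mz+\lint 1,M+L-1\rint^d$ and write $G = G_0 + \sum_z G^{(z)}$, where $G^{(z)}(x,y)$ is obtained by only counting time intervals during which the walk stays inside $\hat Q_z$ (more precisely: excursions that, started and ending near the ``core'' of $\hat Q_z$, do not leave $\hat Q_z$), and $G_0$ collects the rest --- the contribution of long excursions that travel a macroscopic distance. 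By construction each $G^{(z)}$ is supported in $\hat Q_z\times \hat Q_z$, and translation invariance of the walk gives $G^{(z)}(\cdot,\cdot)=G^{(0)}(\cdot-Mz,\cdot-Mz)$. Since each $G^{(z)}$ is itself the Green function of a killed walk (killed on exiting $\hat Q_z$), it is symmetric and positive semi-definite, hence is the covariance of a genuine Gaussian field $\phi^{(z)}$; likewise $G_0$ must be checked to be positive semi-definite, which is where a little care is needed. I would then take $\phi_0,(\phi^{(z)})_z$ jointly independent with these covariances, giving \eqref{ladeco} since covariances add.

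The main work is the variance bound \eqref{lioub} on $\phi_0$, i.e.\ bounding $G_0(x,x)=\Var(\phi_0(x))$ uniformly in $x$. The quantity $G_0(x,x)$ is the expected time the walk started at $x$ spends at $x$ during excursions that are ``long'' in the sense of escaping the enlarged box around $x$ (or more precisely the relevant box in the tiling) before returning. Here I would use the standard transient estimates collected in the excerpt: by \eqref{eq:Gest-tail}, $G(0,y)\asymp |y|^{2-d}$, so the probability that a walk at $x$ reaches distance $\asymp L$ before returning to $x$ is of order $L^{2-d}\cdot(\text{something})$; more carefully, decomposing $G(x,x)=G_{\hat Q}(x,x)+E_x[\,\ind_{\{\text{walk exits } \hat Q \text{ before return}\}} G(S_{\text{exit}}, x)\text{-type terms}\,]$ and using that from the boundary of a box of side $\asymp L$ around $x$ the Green function back to $x$ is $O(L^{2-d})$, while the number of such returns is geometric, one gets $G_0(x,x)=O(L^{2-d}\cdot L^{?})$. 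Getting the exact exponent $L^{-1}$ (with the extra $\log L$ in $d=3$) requires summing the contributions over all scales/positions within the overlap region --- there are $\asymp L^{d-1}$ lattice points in a slab of width $L$ within a face of a box of side $M\asymp L^2$, and each contributes $O(L^{2-d})$ after the return structure is accounted for, but one also has to handle the points $x$ that sit in the bulk of a box (for which $G_0(x,x)$ is even smaller) versus points near the overlap region. The harmonic-measure/last-exit decomposition, together with \eqref{eq:Gest-tail} and the comparison \eqref{eq:Gest0} between $G$ and its killed version, should yield precisely the stated rates; the $d=3$ logarithm is the usual boundary effect in the critical-dimension-for-this-estimate sense.

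There is a subtlety in making the supports \emph{exactly} $Mz+\lint 1,M+L-1\rint^d$ with the overlaps confined to width-$L$ slabs and the decomposition \emph{exact} (an honest identity in law, not just asymptotic). The clean way is to not split the walk's trajectory literally but instead to split the \emph{Green function as a sum of killed Green functions} via a partition-of-unity-type argument: one checks that the function $G_0:=G-\sum_z G^{(z)}$ is positive semi-definite by exhibiting it as $G_0 = \lim_N (G_N - \sum_z G^{(z)}_N)$ or directly via a probabilistic coupling (a walk that, whenever it is ``deep inside'' a box, runs the local killed walk, and otherwise runs a residual walk). I expect the positive-semidefiniteness of $G_0$ and the precise geometry of the overlap slabs to be the fussiest bookkeeping, while the genuine mathematical content --- and the main obstacle --- is the sharp variance estimate \eqref{lioub}, which is really a quantitative statement about how much ``late'' local time a transient walk accumulates, and which must be pushed to the correct power of $L$ rather than merely $o(1)$.

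\medskip

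\noindent\emph{Remark on the role of this decomposition.} Once Proposition~\ref{decomp} is in hand, the field $\phi_0$ with variance $O(L^{-1})$ (or $O(L^{-1}\log L)$) is small enough that, after choosing $L=L(h)\to\infty$ polynomially in $1/h$, its effect on the contact events $\gd^u_x$ can be absorbed into the $h^{2+\gep}$ error, while the independent compactly supported pieces $\phi^{(z)}$ give the product structure needed to reduce the free energy lower bound to a single-box computation as outlined in Section~\ref{sketech}.
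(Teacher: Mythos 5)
Your route — a direct finite-range decomposition of the Green function via killed walks/path splitting, in the spirit of \cite{cf:BGM} — is genuinely different from the paper's, but as written it has a real gap rather than just ``fussy bookkeeping'': the decomposition is never actually constructed, and the naive version you start from provably fails. If $G^{(z)}$ were the Green function killed outside $\hat Q_z:=Mz+\lint 1,M+L-1\rint^d$, then for $x$ in the middle of an overlap slab (which lies within distance $\sim L/2$ of the boundary of \emph{both} boxes containing it) one has $\sum_z G^{(z)}(x,x)\approx 2\gs_d^2-O(L^{2-d})>G(x,x)$, so $G_0=G-\sum_z G^{(z)}$ has negative diagonal and cannot be a covariance. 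Any repair must \emph{share} the covariance between overlapping boxes (a hard assignment of each point to its ``deep'' box is also impossible: for adjacent $x,x'$ straddling the assignment seam, independent pieces of small cross-variance cannot reproduce $G(x,x')=O(1)$), and doing this sharing while keeping every $G^{(z)}$ a positive semi-definite kernel supported exactly in $\hat Q_z$, the remainder positive semi-definite, and the identity \eqref{ladeco} exact is precisely the content of the proposition. Defining pieces by ``time intervals during which the walk stays in $\hat Q_z$'' does not obviously yield symmetric PSD kernels and double-counts excursions confined to an overlap, so the key construction is missing, not merely unpolished. Likewise the bound \eqref{lioub} is not derived: the escape-probability heuristic produces quantities of order $L^{2-d}$ per site, and no argument is given that yields the uniform-in-$x$ rate $L^{-1}$ (with the $\log L$ in $d=3$), which is the quantitatively essential point for the application.

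For comparison, the paper sidesteps all of these issues by never decomposing the Green function directly: it writes $\phi=(L+1)^{-d/2}\sum_{y\in\lint 0,L\rint^d}\varphi^{(y)}$ as a normalized sum of $(L+1)^d$ IID infinite-volume fields, decomposes each copy by the spatial Markov property along the shifted grid $\bbH^{(y)}_M$ into its harmonic extension $H^{(y)}$ plus independent zero-boundary fields on the boxes of that grid, and then groups the box pieces over $y$ into $\phi^{(z)}$ and the harmonic parts into $\phi_0$. Independence, exact supports $Mz+\lint 1,M+L-1\rint^d$, and positive semi-definiteness are then automatic, because every piece is a genuine conditional decomposition of a genuine Gaussian field; and the rate in \eqref{lioub} has a different origin from the one you invoke: it is an averaging-over-shifts effect, $\Var(\phi_0(x))=(L+1)^{-d}\sum_y\Var(H^{(y)}(x))$ combined with $\Var(H^{(y)}(x))\le C_d(\dist(x,\bbH^{(y)}_M)+1)^{2-d}$ (Lemma~\ref{th:var-est0lem}), the sum over the $(L+1)^d$ shifts producing exactly $O(L^{-1})$ for $d\ge 4$ and $O(L^{-1}\log L)$ for $d=3$. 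A correctly executed BGM-type construction could conceivably give an even smaller remainder, but to make your proposal a proof you would have to supply the overlap-sharing construction with its PSD verification and a uniform diagonal estimate — the two points your sketch leaves open.
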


\medskip

One way to picture the decomposition is thinking that the support of the $z$-local field 
are the integer points in  $(0, M+L)$ translated by $Mz$. The supports of the fields 
$\phi^{(z)}$ and $\phi^{(z')}$, $z\neq z'$, do not overlap if $| z-z' | \ge 2$ (recall that $|\cdot|$ denotes the $l_1$ norm).
If  $| z-z' | =1$ they do overlap, but at most on $(M+L-1)^{d-1} (L-1)$ sites. There are regions
(of size $(L-1)^{d}$ sites: the \emph{corners}) where $2^d$ local fields overlap and this is the maximal number 
of overlapping fields.  

\medskip

\begin{proof}
We first decompose $\phi(x)$ into $(L+1)^d$ independent fields
\begin{equation}
\phi(x)= (L+1)^{-d/2} \sum_{y\in \lint 0,L \rint^d} \varphi^{(y)}(x)\, ,
\end{equation}
 where $\left(\varphi^{(y)}\right)_{y\in \lint 0,L \rint^d}$ are IID infinite volume LFF.
Then we  introduce the grid $\bbH_M$ 
\begin{equation}
\bbH_M\, :=\, \{ x\in \bbZ^d : \, \textrm{ther exists }  i\in \lint 1,d \rint  \textrm{ such that } x_i\in M \bbZ \}\, ,
\end{equation}
which splits the lattice $\bbZ^d$ into cubic boxes of edge length $M-1$. Let us also introduce  the translations of $\bbH_M$: 
\begin{equation}
\bbH^{(y)}_M\,:=\, y+\bbH_M\, , 
\end{equation}
which will be used for 
   $y\in \lint 0,L \rint^d$ and  the boxes (with boundaries) delimited by $\bbH^{(y)}_M$ are:
\begin{equation}
 B_M^{(y,z)}\, :=\, \prod_{i=1}^d\lint y_i+Mz_i,y_i+M(z_i+1)\rint \quad \text{ for }  y\in \lint 0
, L \rint^d  \text{ and } \ z \in \bbZ^d\,.
\end{equation}
 We let   $H^{(y)}$ be the harmonic extension of the restriction of $\varphi^{(y)}$ to $\bbH_M^{(y)}$, 
 which is the solution of the system
 \begin{equation}
  \begin{cases}
   \left(\gD H^{(y)}\right)(x)=0 & \text{ for } x\in \bbZ^d \setminus \bbH^{(y)}_M\, ,\\
    H^{(y)}(x)=\varphi^{(y)}(x) &  \text{ for } x\in \bbH^{(y)}_M\, .
  \end{cases}
 \end{equation}
Recall that $H^{(y)}(\cdot)$ is the conditional expectation of  $\varphi^{(y)}$  knowing its value on $\bbH^{(y)}_M$, that is
\begin{equation}
\bbE\left[ \varphi^{(y)}(x) \, \bigg| \, \cF^{\gp^{(y)}}_{\bbH^{(y)}_M} \right]\, =\,  H^{(y)}(x)\, . 
\end{equation}
Now we define the fields
\begin{equation}
 \psi^{(y,z)}(x)\, :=\,  \left(\varphi^{(y)}(x)-H^{(y)}(x)\right)\ind_{B_M^{(y,z)}}(x)\, ,
\end{equation}
and it follows from the spatial Markov property that, for every $y$, the random fields $\{\psi^{(y,z)}\}_{z\in \bbZ^d}$ are independent, and 
$\psi^{(y,z)}$ is  a free field on  $B_M^{(y,z)}$ with zero boundary conditions.
% 
% Now if we let $(G^{(y,z)})_{z\in \bbZ^d}$ be independent free fields with zero boundary condition defined on 
% the box \end{equation}B^{(y,z)}=\prod_{i=1}^d\lint y_i+Mz_i,y_i+M(z_i+1)\rint\end{equation},
% we have the following identity in distribution
% \end{equation}\varphi^{(y)}= H^{(y)}+\sum_{z\in \bbZ^d}G^{(y,z)}\end{equation}
% where $G^{(y,z)}$ is extended to $\bbZ^d$ by setting it equal to zero outside of $B^{(y,z)}$.
We are now ready to make the fields $\phi_0$ and $\phi^{(z)}$ explicit:
\begin{equation}
\label{eq:phi0phiy}
 \begin{split}
\phi_0(x)&:=(L+1)^{-d/2}\sum_{y\in \lint 0,L \rint^d} H^{(y)}(x),\\
  \phi^{(z)}(x)&:= (L+1)^{-d/2}\sum_{y\in \lint 0,L \rint^d} \psi^{(y,z)}(x).
 \end{split}
\end{equation}
The support property of $\phi^{(z)}$ is evident, so 
it remains to show that \eqref{lioub} holds.
This is a consequence of the bound (Lemma~\ref{th:var-est0lem} for a proof)
\begin{equation}
\label{eq:var-est0}
\Var \left(H^{(y)}(x)\right)\, \le\, C_d \left(\left(\dist\left(x,\bbH^{(y)}\right)+1\right)^{2-d}\right)\,.
\end{equation}
In fact since $\left(\varphi^{(y)}\right)_{y \in  \lint 0,L \rint^d}$ is a family of independent fields, from \eqref{eq:var-est0} we have
\begin{equation}
\label{eq:abnd}
\begin{split}
\Var \left(\phi_0(x) \right)\, &=\, \frac 1{(L+1)^d} \sum_{y \in  \lint 0,L \rint^d} \Var \left( H^{(y)}(x)\right)
\\ 
&\le \,   \frac {C_d}{(L+1)^d} \sum_{y \in  \lint 0,L \rint^d} \frac 1{\left(\dist\left(x,\bbH^{(y)}\right)+1\right)^{d-2}}\, ,
\end{split}
\end{equation}
and simple symmetry arguments, assuming $L>1$, show that the last expression is bounded by the case in which
$-x_j$ is equal to the  (upper or lower) integer part of $L/2$ for every $j$:
this corresponds to  summing on $y$ over a cube of edge length $L+1$ centered, up to parity issues, at $x$. We now assume $L$ even to simplify the notations: the last expression in \eqref{eq:abnd} is therefore bounded for every $x$ by
\begin{equation}
\begin{split}
 \frac {2^d C_d}{(L+1)^d} \sum_{y \in  \lint 0,L/2 \rint^d} \frac 1{\left(\dist\left(0,\bbH^{(y)}\right)+1\right)^{d-2}}
 \, &\le \,  \frac {C_d 2^{2d}}{(L+1)^d} \sumtwo{y \in  \lint 0,L/2 \rint^d:}{y_1\le y_2 \le \ldots \le y_d}
 \frac 1{\left(y_1+1\right)^{d-2}}
 \\
 & \le \, \frac {C_d 2^{2d} \left( \tfrac L 2 +1\right)^{d-1}}{(L+1)^d} \sum_{y=0}^{L/2}
 \frac 1{\left(y+1\right)^{d-2}}\, ,
 \end{split}
\end{equation}
which, separating the cases $d=3$ and $d>3$, directly yields the desired estimate. This completes the proof of Proposition~\ref{decomp}.
\end{proof}

\subsection{Proof of Proposition~\ref{lloobb}}

\subsubsection{Step 1: Choice of the finite size parameters}
Recall that by Proposition~\ref{th:superadd} it suffices to show that there exists $h_0>0$ such that for every $h\in (0, h_0)$ there exist $N$ and $u$ such that 
\begin{equation}
\label{toestim}
 \bbE \bE \left[\log \bE \left[e^{\sum_{x\in \tilde \gL_N}(\gb \go_x-\gl(\gb)+h)\gd^u_x } \Big | \,
 \cF^\phi_{\partial \gL_N} \right] \right]
 \, \ge\, N^d\left(\chi(\gb)h^2 -C_{\gb} h^{2+\gep}\right)\, .
\end{equation}
So $N$ may (and will) depend on $h$ as well as $u$. In view of exploiting the finite range decomposition  of Proposition \ref{decomp}
we introduce also an $h$ dependent quantity 
 $L$ and recall that $M=L^2+L$:
\begin{equation}
\label{parami}
 L\, :=\, h^{-\frac{1}{\kappa_d}}, \quad k\, =\, h^{-10}, \quad \text{and } N\, =\, k M +L \, =\,  k L^2+(k+1)L\,
 \stackrel{h\searrow 0}\sim\, {h^{-10-\tfrac 2{\kappa_d}}} \, ,
\end{equation}
where $\kappa_d \ge 4d$ is a positive integer that depends only on  $d$ (the choice is made just after \eqref{eq:choosekappa} below).
We drop integer parts in the notation for the sake of readability.
Finally  we fix $u_h$ (we will often omit the subscript for better readability) to be the unique positive solution $u$ to the equation
\begin{equation}
\label{defu}
\bP\left( \gs_d \cN \in [u-1,u+1]\right) 
%\, =\, \frac1{\sqrt{2 \pi \gs_d^2}} \int_{u_h-1}^{u_h+1} e^{-\frac{v^2}{2\sigma^2_d}}\dd v
\, = \, 2h\chi(\gb)\, .%=:\, p_{\gb}\, . 
\end{equation}
% We could have replaced $2h\chi(\gb)$ by $p_d(\gb,h)$ but using a first order approximation (cf. \eqref{optimiz}) is sufficient for our purpose.
Note that the left-hand side of \eqref{defu} decreases from  $\bP( \gs_d \cN \in [-1,+1])$ to zero when $u$ goes from $0$ to $\infty$. Hence a unique solution $u_h$ to \eqref{defu}
exists if (and only if) $h \in [0,  \bP( \gs_d \cN \in [-1,+1])/(2\chi(\gb))]$ (which we assume). Lemma~\ref{th:uh} below provides a sharp asymptotic expression for $u_h$ along with a useful technical estimate.
\medskip

\noindent We need a preliminary notation: for  $a>0$ and $h>0$ small we set
\begin{equation}
\label{eq:u}
 u(a,h)\,:=\, \gs_d \sqrt{2 \log(1/h)}+1- \frac{\gs_d}2 \frac{\log  \log(1/h) }{\sqrt{2 \log(1/h)}}- 
\gs_d  \frac{\log \left(2a \sqrt{\pi}\right)}{\sqrt{2 \log(1/h)}}\, .
\end{equation}

\medskip

\begin{lemma}
\label{th:uh}
For $h \searrow 0$
\begin{equation}
\label{eq:uh}
u_h \,= \,  u\left( 2\chi(\gb), h\right) + o \left(1/\sqrt{\log (1/h)}
\right)\, ,
\end{equation}
and if $0 \le r(h)=o(1/\sqrt{\log (1/h)})$ then both for $I_h= [ 0, r(h)]$ and
$I_h= [-r(h), 0]$ we have
\begin{equation}
\label{eq:uh2}
\bP\left( \gs_d \cN
- u_h+1 \in I_h
 \right)  \, =\, 
\frac{2 \chi(\gb)}{\gs _d^2} \, u_h h\,  r(h) (1+o(1))\, .
\end{equation}
On the other hand, for every choice of two positive constants $c$ and $C$ we have for $h$ sufficiently small
\begin{equation}
\label{eq:uh2b}
\bP\left( \gs_d \cN \ge 
 u_h-1+c 
 \right)  \, \le \, h \left( \log (1/h)\right)^{-C}\, .
 \end{equation}  
\end{lemma}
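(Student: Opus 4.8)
Everything reduces to sharp asymptotics, as $u\to\infty$, for
\[
g(u)\, :=\, \bP\left(\gs_d\cN\in[u-1,u+1]\right)\, =\, \int_{u-1}^{u+1}\frac{1}{\gs_d\sqrt{2\pi}}\,e^{-t^2/(2\gs_d^2)}\,\dd t\, .
\]
The first step will be to establish
\[
g(u)\, =\, \frac{\gs_d}{u\sqrt{2\pi}}\,e^{-(u-1)^2/(2\gs_d^2)}\big(1+O(1/u)\big)\qquad(u\to\infty)\, .
\]
This follows from the change of variable $t=u-1+r$, which turns the integrand into $e^{-(u-1)^2/(2\gs_d^2)}\,e^{-(u-1)r/\gs_d^2}\,e^{-r^2/(2\gs_d^2)}$, together with the elementary fact that $\int_0^2 e^{-(u-1)r/\gs_d^2}e^{-r^2/(2\gs_d^2)}\dd r=\frac{\gs_d^2}{u-1}\big(1+O(1/u^2)\big)$; equivalently it is the standard Mills--ratio expansion $\bP(\cN>x)=(x\sqrt{2\pi})^{-1}e^{-x^2/2}(1+O(x^{-2}))$ applied at the two endpoints.

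For \eqref{eq:uh} I would take the logarithm of the identity $g(u_h)=2h\chi(\gb)$ coming from \eqref{defu}, obtaining
\[
\frac{(u_h-1)^2}{2\gs_d^2}\, =\, \log(1/h)-\log\!\big(2\chi(\gb)\big)+\log\frac{\gs_d}{u_h\sqrt{2\pi}}+O(1/u_h)\, .
\]
A first rough pass gives $u_h\sim\gs_d\sqrt{2\log(1/h)}$, hence $\log u_h=\tfrac12\log\log(1/h)+\tfrac12\log(2\gs_d^2)+o(1)$; substituting back one gets
\[
(u_h-1)^2\, =\, 2\gs_d^2\log(1/h)-\gs_d^2\log\log(1/h)-2\gs_d^2\log\!\big(4\sqrt{\pi}\,\chi(\gb)\big)+o(1)\, .
\]
Taking the square root and expanding $\sqrt{1+x}=1+x/2+O(x^2)$ then yields precisely $u_h=u\big(2\chi(\gb),h\big)+o\big(1/\sqrt{\log(1/h)}\big)$; the constant matches \eqref{eq:u} because $\sqrt{2}\,\gs_d\cdot 2\chi(\gb)\sqrt{2\pi}/\gs_d=4\sqrt{\pi}\,\chi(\gb)=2a\sqrt{\pi}$ for $a=2\chi(\gb)$.

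For \eqref{eq:uh2}, take for instance $I_h=[0,r(h)]$, so that the probability in question is $\int_{u_h-1}^{u_h-1+r(h)}\frac{1}{\gs_d\sqrt{2\pi}}e^{-t^2/(2\gs_d^2)}\dd t$. On this interval the logarithm of the density varies by $O\big((u_h-1)r(h)\big)=o(1)$, since $u_h\sim\gs_d\sqrt{2\log(1/h)}$ and $r(h)=o\big(1/\sqrt{\log(1/h)}\big)$; hence the integral equals $r(h)\,\frac{1}{\gs_d\sqrt{2\pi}}e^{-(u_h-1)^2/(2\gs_d^2)}(1+o(1))$. The expansion of $g$ and the definition of $u_h$ give $\frac{1}{\gs_d\sqrt{2\pi}}e^{-(u_h-1)^2/(2\gs_d^2)}=\frac{u_h}{\gs_d^2}g(u_h)(1+o(1))=\frac{2\chi(\gb)}{\gs_d^2}u_h h(1+o(1))$, which is \eqref{eq:uh2}; the case $I_h=[-r(h),0]$ is identical. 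For \eqref{eq:uh2b}, the Gaussian tail bound gives $\bP(\gs_d\cN\ge u_h-1+c)\le\frac{\gs_d}{(u_h-1+c)\sqrt{2\pi}}e^{-(u_h-1+c)^2/(2\gs_d^2)}$, and using $(u_h-1+c)^2\ge(u_h-1)^2+2c(u_h-1)$ and $\frac{\gs_d}{u_h-1+c}=O\big(1/\sqrt{\log(1/h)}\big)$ this is at most $O\big(1/\sqrt{\log(1/h)}\big)\,e^{-(u_h-1)^2/(2\gs_d^2)}\,e^{-c(u_h-1)/\gs_d^2}$. The middle factor is $O\big(h\sqrt{\log(1/h)}\big)$ by the expansion of $g$, while $e^{-c(u_h-1)/\gs_d^2}=e^{-\Theta(\sqrt{\log(1/h)})}\le(\log(1/h))^{-C-1}$ for $h$ small; multiplying the three factors gives \eqref{eq:uh2b}.

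The only genuinely delicate point is the bootstrap in \eqref{eq:uh}: pushing the remainder down to $o\big(1/\sqrt{\log(1/h)}\big)$ forces one to carry the $\log u_h$ correction explicitly and to check that the $O(1/u_h)$ error in the expansion of $g$ and the quadratic remainder in $\sqrt{1+x}$ are indeed of smaller order at that precision. Everything else is a routine Gaussian computation.
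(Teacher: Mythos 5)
Your proposal is correct and follows essentially the same route as the paper: everything rests on the Mills--ratio asymptotics \eqref{eq:asymptZ}, the near-constancy of the Gaussian density over an interval of length $o(1/\sqrt{\log(1/h)})$ for \eqref{eq:uh2}, and the extra factor $e^{-c(u_h-1)/\gs_d^2}$ (decaying faster than any power of $1/\log(1/h)$) for \eqref{eq:uh2b}. The only cosmetic difference is in \eqref{eq:uh}: you invert the defining equation $\bP(\gs_d\cN\in[u_h-1,u_h+1])=2\chi(\gb)h$ directly by taking logarithms and bootstrapping, whereas the paper first verifies \eqref{eq:uah}, i.e.\ that the candidate $u(a,h)$ reproduces the probability $ah$, and then extracts \eqref{eq:uh} — the same Gaussian computation carried out in the opposite direction, and your bootstrap correctly controls the error terms at the required $o(1/\sqrt{\log(1/h)})$ precision.
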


\noindent
\emph{Proof of Lemma~\ref{th:uh}}.
Everything is based on  the well known asymptotic ($x \nearrow \infty$) estimate 
\begin{equation}
\label{eq:asymptZ}
\bP(\cN>x) \,=\, \frac 1{x\sqrt{2\pi}} 
\exp\left(-\frac{x^2}{2}\right) \left( 1+ O\left( \frac1{x^2}\right) \right)\, ,
\end{equation}
which in particular implies (via a relatively cumbersome computation) 
 that
\begin{equation}
\label{eq:uah}
\bP\left( u(a,h)+\gs_d \cN \in [-1,1]\right)\stackrel{h \searrow 0} \sim  ah \, , 
\end{equation}
and we point out that the result is the same if $[u_h-1,u_h+1]$ is replaced by 
$[u_h-1,u_h-1+c]$, any $c>0$: that is, the contribution to the asymptotic behavior is all near 
$u_h-1$.
Using \eqref{eq:uah} together with \eqref{defu} we readily extract  \eqref{eq:uh}.

At this point it is rather straightforward to realize that \eqref{eq:uh2b} holds (this is 
is just a  quantitative version of 
the 
observation that we just made that  the contribution to the asymptotic behavior is all near 
$u_h-1$): the leading effect generated by a shift by $c$ in the Gaussian term  is $\exp(-(2c/\gs_d) \sqrt{2 \log(1/h)})$,
that is a factor that vanishes faster than any power of $1/ \log (1/h)$ and this is the content of \eqref{eq:uh2b}.

The estimate for 
\eqref{eq:uh2} requires more care.    For $I_h= [-r(h), 0]$ (the argument  for  $I_h= [ 0, r(h)]$ is essentially the same) we have
\begin{equation}
\label{eq:uh2.1}
\bP\left( \gs_d \cN -u_h+1 \in [-r(h),0]\right)\, =
\,
\int_{u_h-1-r(h)}^{u_h-1} g_{\gs_d}(z) \dd z \, ,
\end{equation}
with $g_\gs(\cdot)$ the density of $\gs \cN$. 
For $r(h)=o(1/\sqrt{\log (1/h)})$ one directly verifies that 
\begin{equation}
\lim_{h \searrow 0} \sup_{z:\, \vert z-u_h+1 \vert \le r (h)} \left \vert \frac{g_{\gs_d}(z)} {g_{\gs_d}(u_h-1)} \, - \, 1 \,\right \vert \, =0 \, ,
\end{equation}
 so that
\begin{equation}
\int_{u_h-1-r(h)}^{u_h-1} g_{\gs_d}(z) \dd z \, = \, r(h) g_{\gs_d}(u_h-1)(1+o(1)).
\end{equation}
Using the asymptotic equivalence \eqref{eq:asymptZ} and \eqref{defu} we have, when $h\searrow 0$
\begin{equation}
g_{\gs_d}(u_h-1)= \frac{u_h}{\sigma^2_d} \bP \left(  \gs_d \cN  \in [u_h-1,u_h+1]\right)(1+o(1))
=\,   \frac{2 \chi (\gb) h u_h}{\gs_d^2}(1+o(1)).\, 
\end{equation}
Then we can conclude that \eqref{eq:uh2} holds exploiting also the asymptotic expression 
\eqref{eq:uh}
for $u_h$. The proof of Lemma~\ref{th:uh} is therefore complete.
\qed

\subsubsection{Step 2: field decomposition and boundary control estimate}
Let us use the decomposition \eqref{ladeco} of Proposition \ref{decomp} for the LFF. Using the information we have concerning the support of $\phi^{(z)}$, we see that
the value of   $\phi(x)$ for $x \in \partial \gL_N$ is not affected by the realization of $(\phi^{(z)})_{z\in \bbZ^d \setminus \lint 0,k-1 \rint^d}$.
Hence letting 
\begin{equation}
\bP^0_N\,,\  \bP^1_N \textrm{ and } \bP^2_N\, ,
\end{equation}
 denote, respectively, the distribution of 
 \begin{equation}
 \phi_0\,, \   \  \left(\phi^{(z)}\right)_{z\in \lint 0,k-1 \rint^d}
\textrm{ and }  (\phi^{(z)})_{z\in \bbZ^d\setminus \lint 0,k-1 \rint^d}\, ,
\end{equation}
 we obtain from Jensen's inequality that
\begin{multline}
\label{toestima}
 \bE \left[\log \bE \left[e^{\sum_{x\in \tilde \gL_N}(\gb \go_x-\gl(\gb)+h)\gd^u_x } \, \Big | \,\cF^\phi_{ \partial \gL_N} \right]\right]\, \ge \\
  \bE^0_N \otimes \bE^2_N \left[ \log \bE^1_N \left[e^{\sum_{x\in \tilde \gL_N}(\gb \go_x-\gl(\gb)+h)\gd^u_x }\right]\right].
\end{multline}

\begin{figure}[htbp]
\centering
\includegraphics[width=14.5 cm]{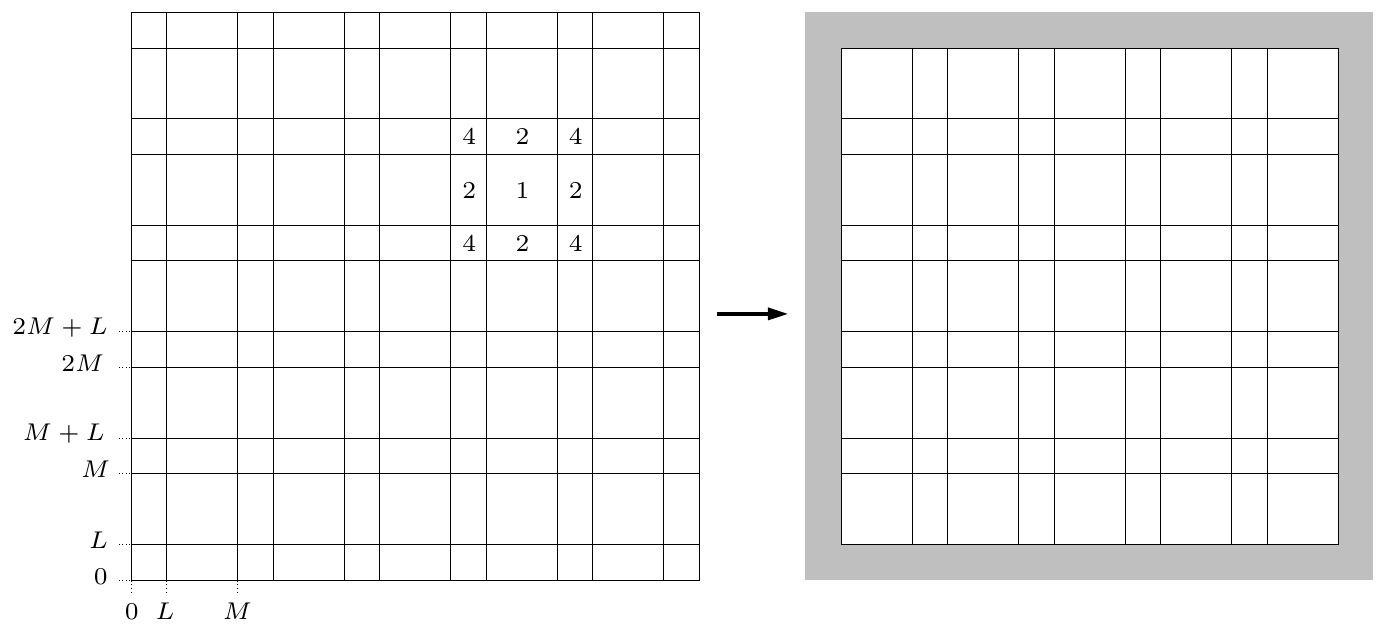}
\vskip-.2cm
\caption{\label{fig:step2} 
On the left there is the set $\gL_N$, covered by partially superposed boxes of edge length $M+L$ (the picture is drawn for $d=2$ just for the sake of visualization purpose: $d2$  is not a relevant dimension for our  problem ). One of this boxes is highlighted by the numbers $1$, $2$ and $4$: one (and only one) of the  fields $\phi^{(z)}$ is supported, for a well chosen $z=z_{\star}\in \bbZ^d$, exactly on this box. On the region marked by $1$, of edge length $M-L$,  only one field
$\phi^{(z)}$ is non zero, that is $\phi^{(z)}(x)=0$ for every $z \neq z_{\star}$ and $x$ in the region marked by $1$. In the regions, belonging to the frame of region $1$, marked by $2$ and $4$, there are respectively $2$ and $4$ values of $z$ (one is $z_\star$) for which the  
$\phi^{(z)}$ fields are non zero. Of course this frame is much smaller than the box ($M \sim L^2$). moreover in dimension $d$ there are regions in which up to $2^d$ fields superpose. Step 2 in the proof consists in
showing that we can erase the contacts in the shadowed frame of the large box $\gL_N$ at little price: we will see that in the end this fully deals with the boundary effects because the long range correlations of the field are all carried by the field $\phi_0$ and we get rid of this field in Step 3. 
}
\end{figure}

\medskip

\noindent Now let us show that one can replace $\tilde \gL_N$ in \eqref{toestima} by a smaller set in order to avoid boundary effects.
Set 
\begin{equation}
\gL'_N=\lint L+1, kM \rint^d \quad \text{ and } \quad \gG_N:= \tilde \gL_N\setminus \gL'_N\, .
\end{equation}
We are going to prove:

\medskip

\begin{lemma}
\label{th:gLprime}
 For $h\ge 0$ we have $\bP^0_N \otimes \bP^2_N$ a.s.\
\begin{equation}
\label{lok}
\bbE \log \bE^1_N \left[e^{\sum_{x\in \tilde \gL_N}(\gb \go_x-\gl(\gb)+h)\gd^u_x } \right]\, 
\ge\, \bbE  \log \bE^1_N \left[e^{\sum_{x\in \gL'_N}(\gb \go_x-\gl(\gb)+h)\gd^u_x } \right]-\gl(\gb)|\gG_N|\, .
\end{equation}
\end{lemma}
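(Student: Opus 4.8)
The plan is to reduce the statement to a short chain of inequalities: first a deterministic bound (valid for every $\phi$ and $\go$) that peels off the factor $e^{-\gl(\gb)|\gG_N|}$, and then a Jensen‑type estimate showing that the leftover disorder factor supported on $\gG_N$ costs nothing after averaging over $\go$.

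Write $w_x:=\gb\go_x-\gl(\gb)+h$. Since $\gL'_N\subseteq\tilde\gL_N$ and $\gG_N=\tilde\gL_N\setminus\gL'_N$, one factors
\[
e^{\sum_{x\in\tilde\gL_N}w_x\gd^u_x}\,=\,e^{\sum_{x\in\gL'_N}w_x\gd^u_x}\prod_{x\in\gG_N}e^{w_x\gd^u_x}\,.
\]
The key elementary observation is that for every $x$ one has $w_x\gd^u_x\ge\gb\go_x\gd^u_x-\gl(\gb)$: indeed $w_x\gd^u_x-\gb\go_x\gd^u_x+\gl(\gb)=(h-\gl(\gb))\gd^u_x+\gl(\gb)$, which equals $\gl(\gb)\ge0$ when $\gd^u_x=0$ (recall $\gl(\gb)\ge\gb\bbE[\go]=0$) and equals $h\ge0$ when $\gd^u_x=1$. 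Hence $\prod_{x\in\gG_N}e^{w_x\gd^u_x}\ge e^{-\gl(\gb)|\gG_N|}\prod_{x\in\gG_N}e^{\gb\go_x\gd^u_x}$, and therefore, $\bP^0_N\otimes\bP^2_N$‑a.s.,
\[
\bE^1_N\Big[e^{\sum_{x\in\tilde\gL_N}w_x\gd^u_x}\Big]\ \ge\ e^{-\gl(\gb)|\gG_N|}\,\bE^1_N\Big[e^{\sum_{x\in\gL'_N}w_x\gd^u_x}\prod_{x\in\gG_N}e^{\gb\go_x\gd^u_x}\Big]\,.
\]

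Taking $\bbE\log$, it then remains to prove that inserting the factor $\prod_{x\in\gG_N}e^{\gb\go_x\gd^u_x}$ does not decrease the disorder‑averaged log‑partition function, i.e.
\[
\bbE\log\bE^1_N\Big[e^{\sum_{x\in\gL'_N}w_x\gd^u_x}\prod_{x\in\gG_N}e^{\gb\go_x\gd^u_x}\Big]\ \ge\ \bbE\log\bE^1_N\Big[e^{\sum_{x\in\gL'_N}w_x\gd^u_x}\Big]\,.
\]
Let $\mu$ denote the (random) probability measure on the configurations $(\phi^{(z)})_{z\in\lint 0,k-1\rint^d}$ obtained from $\bP^1_N$ by tilting with $e^{\sum_{x\in\gL'_N}w_x\gd^u_x}$; then the ratio of the two partition functions above equals exactly $\mu\big[\prod_{x\in\gG_N}e^{\gb\go_x\gd^u_x}\big]$, so the claim is $\bbE\log\mu\big[\prod_{x\in\gG_N}e^{\gb\go_x\gd^u_x}\big]\ge0$. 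Now $\mu$ depends on the disorder only through $(\go_x)_{x\in\gL'_N}$ (the environment $\phi_0$ and $(\phi^{(z)})_{z\notin\lint 0,k-1\rint^d}$ being frozen by the $\bP^0_N\otimes\bP^2_N$‑a.s. statement), and $\gL'_N\cap\gG_N=\emptyset$, so $(\go_x)_{x\in\gG_N}$ is independent of $\mu$. Conditioning on $(\go_x)_{x\in\gL'_N}$, the map $t=(t_x)_{x\in\gG_N}\mapsto\log\mu\big[e^{\sum_{x\in\gG_N}t_x\gd^u_x}\big]$ is a convex function (a log‑moment generating function) vanishing at $t=0$; Jensen's inequality applied to the centered random vector $(\gb\go_x)_{x\in\gG_N}$ gives $\bbE\big[\log\mu[\prod_{x\in\gG_N}e^{\gb\go_x\gd^u_x}]\,\big|\,(\go_x)_{x\in\gL'_N}\big]\ge\log\mu[1]=0$, and integrating over $(\go_x)_{x\in\gL'_N}$ concludes.

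The argument is short; the only delicate points I anticipate are (i) spotting the pointwise inequality $w_x\gd^u_x\ge\gb\go_x\gd^u_x-\gl(\gb)$, which uses both $h\ge0$ and $\gl(\gb)\ge0$ to make the deterministic extraction of $e^{-\gl(\gb)|\gG_N|}$ work, and (ii) recognizing that, after this extraction, the residual disorder factor on $\gG_N$ can be absorbed at zero cost by convexity of the log‑partition function together with the independence of the disorder on the disjoint regions $\gL'_N$ and $\gG_N$. The remainder is bookkeeping.
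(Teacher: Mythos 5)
Your proof is correct, and its skeleton coincides with the paper's: both arguments factor the partition function through the tilted measure $\mu^{\gL'_N}_N$ (your $\mu$) and exploit that this measure depends on the disorder only through $(\go_x)_{x\in\gL'_N}$, so that the variables $(\go_x)_{x\in\gG_N}$ are independent of it and centered. Where you diverge is in how the $\gG_N$ contribution is estimated. The paper applies Jensen once, with respect to $\mu^{\gL'_N}_N$ (concavity of $\log$), obtaining $\bbE\log\mu^{\gL'_N}_N\big[e^{\sum_{x\in\gG_N}(\gb\go_x-\gl(\gb)+h)\gd^u_x}\big]\ge \bbE\,\mu^{\gL'_N}_N\big[\sum_{x\in\gG_N}(\gb\go_x-\gl(\gb)+h)\gd^u_x\big]$, and then independence, $\bbE[\go]=0$, $\gd^u_x\le 1$ and $h\ge 0$ yield $-\gl(\gb)\vert\gG_N\vert$. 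You instead peel off $e^{-\gl(\gb)\vert\gG_N\vert}$ via the pointwise inequality $(\gb\go_x-\gl(\gb)+h)\gd^u_x\ge\gb\go_x\gd^u_x-\gl(\gb)$ (this is exactly where $h\ge 0$ and $\gl(\gb)\ge 0$ enter) and then dispose of the residual centered factor by Jensen in the disorder, using convexity of the log-moment generating function $t\mapsto\log\mu\big[e^{\sum_{x\in\gG_N}t_x\gd^u_x}\big]$. The two routes are equally elementary and rest on the same ingredients; the paper's is a single Jensen step, while yours makes the roles of the hypotheses $h\ge0$ and $\gl(\gb)\ge0$ more transparent. The only point worth adding to your write-up is the routine integrability check legitimizing the last Jensen application: since $\sum_{x\in\gG_N}t_x\,\mu[\gd^u_x]\le\log\mu\big[e^{\sum_{x\in\gG_N}t_x\gd^u_x}\big]\le\sum_{x\in\gG_N}\max(t_x,0)$, the quantity you average is dominated in absolute value by $\gb\sum_{x\in\gG_N}\vert\go_x\vert$, which is integrable, so the expectation is well defined and Jensen applies.
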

\medskip 

Since the cardinality of $\gG_N$ is $O(N^{d-1}L)$, our choice of parameter makes the last term negligible with respect to $N^d h^{2+\gep}$, and thus 
combining \eqref{toestima} and \eqref{lok}, the inequality \eqref{toestim} follows if $\gep <1/\kappa_d$ (which we therefore assume) and if we show that  
\begin{equation}
\label{toestim2}
 \bbE \bE^0_N \log \bE^1_N \left[e^{\sum_{x\in \gL'_N}(\gb \go_x-\gl(\gb)+h)\gd^u_x } \right]
\,  \ge \, N^d\left(\chi(\gb)h^2 -C_{\gb} h^{2+\gep}\right).
\end{equation}
Note that the expectation w.r.t.\ to $\bP^2_N$ is not displayed because $\phi$ restricted to $\gL'_N$ is completely determined by $\phi_0$ and 
$\left(\phi^{(z)}\right)_{z\in \lint 0,k-1 \rint^d}$.

\medskip

\noindent
\emph{Proof of Lemma~\ref{th:gLprime}}.
 We let $\mu^{A}_N$ be the probability measure defined by 
\begin{equation}
 \frac{\dd \mu^{A}_N}{\dd \bP^1_N}\left( (\phi^{(z)})_{z\in \lint 0,k-1\rint}\right)= 
 \frac{e^{\sum_{x\in A}(\gb \go_x-\gl(\gb)+h)\gd^u_x }}{\bE^1_N \left[e^{\sum_{x\in A}(\gb \go_x-\gl(\gb)+h)\gd^u_x }\right]},
\end{equation}
that is the distribution of $(\phi^{(z)})_{z\in \lint 0,k-1\rint}$ when interactions with sites in $A$ are taken into account
(this measure depends on the realization of $\phi_0$ and possibly also of that of $\phi^{(z)}$ for $z\notin \lint 0,k-1\rint$).
We have
\begin{multline}
 \log \bE^1_N \left[e^{\sum_{x\in \tilde \gL_N}(\gb \go_x-\gl(\gb)+h)\gd^u_x } \right]\\
 = \log \bE^1_N \left[e^{\sum_{x\in \gL'_N}(\gb \go_x-\gl(\gb)+h)\gd^u_x }  \right]
 +\log \mu^{\gL'_N}_N \left[e^{\sum_{x\in \gG_N}(\gb \go_x-\gl(\gb)+h)\gd^u_x } \right],
\end{multline}
And thus \eqref{lok} follows by Jensen's inequality as follows:
\begin{equation}
\label{lajens}
 \bbE \log \mu^{\gL'_N}_N \left[e^{\sum_{x\in \gG_N}(\gb \go_x-\gl(\gb)+h)\gd^u_x } \right]
 \,\ge\,   \bbE  \mu^{\gL'_N}_N \left[ \sum_{x\in \gG_N}(\gb \go_x-\gl(\gb)+h)\gd^u_x   \right]
 \,\ge \,   -\gl(\gb)|\gG_N|\, ,
\end{equation}
and in the last step we have used the fact that $ \mu^{\gL'_N}_N $ does not depend
on $\{\go_x\}_{x \in \Gamma_N}$ and then we have just used that $\sum_{x\in \gG_N} \gd^u_x \le 
|\gG_N|$. The proof of Lemma~\ref{th:gLprime} is complete.
\qed

\subsubsection{Step 3: getting rid of the base field $\phi_0$}
Our next step is to get rid of the dependence in $\phi_0$ in \eqref{toestim2}. We can do this combining two facts. Firstly from \eqref{lioub} and our choice of the parameters we know that with an overwhelming large probability $\phi_0$ is small everywhere. 
Secondly, from \eqref{eq:uh2}, we know that the expectation $\bE^1_N[\gd^u_x]$ for $x\in \gd^u_x$ is not much affected by small variations of $\phi_0$.
 Let us define the event $\cC_N$ by
\begin{equation}
\label{eq:cC}
\cC_N\, :=\, \left\{   \left|\phi_0(x)\right|\,\le\,  L^{-1/3}
\text{ for every } x\in \gL'_N \right\}\, .
\end{equation}
A simple union bound using the estimate on the variance \eqref{lioub} 
yields immediately for $h$ small 
\begin{equation}
\label{labrob}
 \bP^0_N \left( \cC^{\cc}_N\right)\, \le\,  \exp\left(-L^{1/4}\right)\, .
\end{equation}
By applying Jensen's inequality  we have also that  for every realization of $\phi_0$
\begin{equation}
\label{eq:prepretoestim3}
\bbE \log \bE^1_N \left[e^{\sum_{x\in \gL'_N}(\gb \go_x-\gl(\gb)+h)\gd^u_x } \right]\, \ge\, 
 \bE^1_N \bbE\left[{\sum_{x\in \gL'_N}(\gb \go_x-\gl(\gb)+h)\gd^u_x } \right]
\, \ge  \,  -|\gL'_N|\gl(\gb),
\end{equation}
and therefore
\begin{multline}
\label{eq:pretoestim3}
\bbE \bE^0 \log \bE^1_N \left[e^{\sum_{x\in  \gL'_N}(\gb \go_x-\gl(\gb)+h)\gd^u_x } \right]\\
\ge \bbE \bE^0_N \left[ \log \bE^1_N \left[e^{\sum_{x\in \gL'_N}(\gb \go_x-\gl(\gb)+h)\gd^u_x } \right]\ind_{\cC_N}\right]-\gl(\gb) 
N^d\, \bP^0_N\left(\cC^{\cc}_N\right)\, .
\end{multline}
Now recalling \eqref{labrob} and \eqref{parami}, this implies that \eqref{toestim2} follows if one proves 
that for every $\phi_0\in \cC_N$
\begin{equation}
\label{toestim3}
 \bbE \log \bE^1_N \left[e^{\sum_{x\in \gL'_N}(\gb \go_x-\gl(\gb)+h)\gd^u_x } \right]
\,  \ge \, N^d\left(\chi(\gb)h^2 -C_{\gb} h^{2+\gep}\right)\, .
\end{equation}

Note that expressions like for example  the leftmost sides of \eqref{eq:prepretoestim3}--\eqref{toestim3}
 are now  random variables: they are measurable with respect to 
the $\gs$-field generated by $\phi_0$. 
We are going to see that 
 not averaging over $\phi_0$ is typically not a problem, at least as long as $\phi_0\in \cC_N$.
The key result in this direction says that the contact density 
is not very much affected by conditioning with respect to $\phi_0$, as long as $\phi_0\in \cC_N$.
It will be repeatedly used in the remainder. Here it is:

\medskip 

\begin{lemma}
\label{th:pertinho}
For every $\phi_0\in \cC_N$,  $x\in \gL'_N$ and $h$ sufficiently small we have 
\begin{equation}
\label{pertinho}
 \left|\bE^1_N[\gd^u_x]- 2\chi(\gb)h\right|\, \le\,    h L^{-1/4}\, .
\end{equation}
\end{lemma}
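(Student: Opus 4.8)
The plan is to write $\bE^1_N[\gd^u_x]$ explicitly as a one–dimensional Gaussian probability and to compare it with $\bP(\gs_d\cN\in[u_h-1,u_h+1])=2\chi(\gb)h$ — the value it would take if the field restricted to $\gL'_N$ had single–site variance exactly $\gs_d^2$ and mean exactly $0$ (recall \eqref{defu}) — controlling the two sources of discrepancy, namely the shift $\phi_0(x)$ and the reduction of the single–site variance caused by conditioning on $\phi_0$, by means of the sharp estimate \eqref{eq:uh2}.

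The first step is to identify the conditional law. Fix $\phi_0\in\cC_N$ and $x\in\gL'_N$. As already observed, the support properties of the fields $\phi^{(z)}$ give $\phi^{(z)}(x)=0$ almost surely for every $z\in\bbZ^d\setminus\lint 0,k-1\rint^d$, so that under $\bP^1_N$ one has $\phi(x)=\phi_0(x)+\sum_{z\in\lint 0,k-1\rint^d}\phi^{(z)}(x)$, the sum being a centered Gaussian independent of $\phi_0$ of variance
\begin{equation*}
v_N(x)\,:=\,\sum_{z\in\lint 0,k-1\rint^d}\Var\big(\phi^{(z)}(x)\big)\,=\,\Var(\phi(x))-\Var(\phi_0(x))\,=\,\gs_d^2-\Var(\phi_0(x)),
\end{equation*}
by independence of the fields in \eqref{ladeco} and $\Var(\phi(x))=G(0,0)=\gs_d^2$. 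By \eqref{lioub} and $L=h^{-1/\kappa_d}$ we get $0\le\gs_d^2-v_N(x)=\Var(\phi_0(x))\le\eta_h$, with $\eta_h:=\sup_x\Var(\phi_0(x))=O\big(h^{1/\kappa_d}\log(1/h)\big)$, while $\phi_0\in\cC_N$ gives $|\phi_0(x)|\le L^{-1/3}$. Hence $\bE^1_N[\gd^u_x]=\bP\big(\sqrt{v_N(x)}\,\cN\in[u_h-1-\phi_0(x),\,u_h+1-\phi_0(x)]\big)$.

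Next I compare this with $\bP(\gs_d\cN\in[u_h-1,u_h+1])=2\chi(\gb)h$ through the intermediate quantity $\bP(\gs_d\cN\in[u_h-1-\phi_0(x),\,u_h+1-\phi_0(x)])$; write $g_{\gs_d}$ for the density of $\gs_d\cN$, which is decreasing on $[u_h-2,\infty)$ for $h$ small. For the shift contribution, since $|\phi_0(x)|\le L^{-1/3}=o\big(1/\sqrt{\log(1/h)}\big)$ and $g_{\gs_d}$ is larger near $u_h-1$ than near $u_h+1$, one has $\big|\bP(\gs_d\cN\in[u_h-1-\phi_0(x),u_h+1-\phi_0(x)])-\bP(\gs_d\cN\in[u_h-1,u_h+1])\big|\le\bP\big(\gs_d\cN-u_h+1\in I\big)$ for a suitable interval $I\subseteq[-L^{-1/3},L^{-1/3}]$ adjacent to $0$, which by \eqref{eq:uh2} (with $r(h)=L^{-1/3}$) is at most $(1+o(1))\tfrac{2\chi(\gb)}{\gs_d^2}u_h h L^{-1/3}$. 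For the variance contribution, write $\sqrt{v_N(x)}\,\cN=\theta_x\,\gs_d\cN$ with $\theta_x=\sqrt{v_N(x)}/\gs_d\in[1-\eta_h/\gs_d^2,\,1]$ for $h$ small; rescaling the interval $[u_h-1-\phi_0(x),\,u_h+1-\phi_0(x)]$ by $\theta_x^{-1}$ displaces each of its endpoints by at most $O(u_h\eta_h)=o\big(1/\sqrt{\log(1/h)}\big)$ and keeps them within $o\big(1/\sqrt{\log(1/h)}\big)$ of $u_h\mp1$, so using that $g_{\gs_d}$ is asymptotically constant on scale $1/\sqrt{\log(1/h)}$ near $u_h-1$ together with $g_{\gs_d}(u_h-1)=(1+o(1))\tfrac{2\chi(\gb)h u_h}{\gs_d^2}$ (both from the proof of Lemma~\ref{th:uh}) this contribution is $O\big(\chi(\gb)h\,u_h^2\eta_h\big)$.

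Finally I collect the bounds: using $u_h=(1+o(1))\gs_d\sqrt{2\log(1/h)}$, $L^{-1/3}=h^{1/(3\kappa_d)}$ and $\eta_h=O\big(h^{1/\kappa_d}\log(1/h)\big)$, the shift contribution is $O\big(h^{1+1/(3\kappa_d)}\sqrt{\log(1/h)}\big)$ and the variance contribution is $O\big(h^{1+1/\kappa_d}(\log(1/h))^2\big)$; since $\kappa_d\ge4d$, for $h$ small both are $\le\tfrac12\,h\,L^{-1/4}=\tfrac12\,h^{1+1/(4\kappa_d)}$, which gives \eqref{pertinho}. The one genuinely delicate ingredient is the sharp local estimate \eqref{eq:uh2}: the whole argument rests on the fact that near criticality the contact probability depends linearly — with the explicit slope $g_{\gs_d}(u_h-1)\asymp h\,u_h$ — on an endpoint displacement of size $o\big(1/\sqrt{\log(1/h)}\big)$, so that both the boundary shift $\phi_0(x)$ and the variance defect $\Var(\phi_0(x))$ contribute only a \emph{relative} error, which — thanks to the polynomial-in-$h$ slack built into the choice $L=h^{-1/\kappa_d}$ with $\kappa_d\ge4d$ — beats $L^{-1/4}$.
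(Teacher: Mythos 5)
Your proof is correct and follows essentially the same route as the paper: the exact Gaussian rewriting of $\bE^1_N[\gd^u_x]$ with variance $\gs_d^2-\Var(\phi_0(x))$ and shifted interval, followed by control of both the mean shift and the variance deficit via the sharp local estimate \eqref{eq:uh2} (equivalently the density asymptotics $g_{\gs_d}(u_h-1)\sim 2\chi(\gb)hu_h/\gs_d^2$), and the observation that the resulting errors $O(hu_hL^{-1/3})$ and $O(hu_h^2\eta_h)$ are beaten by $hL^{-1/4}$ thanks to $L=h^{-1/\kappa_d}$. The only cosmetic difference is that the paper absorbs the variance deficit into a monotone interval-inclusion (sandwich) argument reducing everything to a single application of \eqref{eq:uh2}, whereas you estimate it as a separate error term; both are fine.
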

\medskip

\medskip

\noindent
\emph{Proof of Lemma~\ref{th:pertinho}}.
We introduce the practical notation 
\begin{equation}
\label{eq:phi1}
\phi_1(x)\, :=\, \sum_{z\in \lint 0,k-1\rint^d} \phi^{(z)}(x)\, 
\end{equation}
 (note that the summation defining $\phi_1(x)$ contains between one and $2^d$ non-zero terms),
and 
for  $x\in \gL'_N$ we have that
\begin{equation}\label{lapro}
\begin{split}
\bE^1_N\left[\gd^u_x\right]\, &=\, 
\bP^1_N\left( \phi_1(x) \in  [u-1-\phi_0(x),u+1-\phi_0(x)]\right)
\\
&=\, \bP\left( \sqrt{\gs_d^2- \Var \left( \phi_0(x)\right) } \, \cN \in {[u-1-\phi_0(x),u+1-\phi_0(x)]}\right)\, ,
\end{split} 
 \end{equation}
 where $\bP$ in the last line is just the law of $\cN\sim \cN(0,1)$, which is of course the only random variable in the expression. 
Now, for small $h$  and assuming that $\cC_N$ holds, by monotonicity  the right-hand side \ in \eqref{lapro} is maximized when $\phi_0(x)=L^{1/3}$ and minimized when $\phi_0(x)=-L^{1/3}$.
 If we set $\tilde \sigma:= \sqrt{\gs_d^2- \Var \left( \phi_0(x)\right)}$,  \eqref{lioub} (and the fact that $u$ is asymptotically proportional to   $\log L$) implies that 
for $h$ sufficiently small we have
\begin{equation*}\begin{split}
               \frac{\sigma_d}{\tilde \sigma}[u-1+ L^{-1/3},u+1+L^{1/3}] \supset [u-1+ 2L^{-1/3},u+1],\\
                 \frac{\sigma_d}{\tilde \sigma}[u-1- L^{-1/3},u+1-L^{1/3}] \subset [u-1-L^{-1/3},u+1],
        \end{split}
\end{equation*}
so that
\begin{equation}
 \bP\left( \gs_d\,  \cN \in [u-1+ 2L^{-1/3},u+1]\right)\, \le \,
\bE^1_N\left[\gd^u_x\right]\, \le \,  \bP\left( \gs_d\,  \cN \in {[u-1- L^{-1/3},u+1]}\right)\, .
\end{equation}
Now, using \eqref{eq:uh2} we obtain that 
\begin{equation}
\label{eq:forpertinho}
 \left|\bE^1_N[\gd^u_x]- 2\chi(\gb)h\right|\, \le\,    h
 \bP\left( \gs_d\,  \cN \in {[u-1- L^{-1/3},u-1+ 2L^{-1/3}]}\right) \, \le \, C h u_h   L^{-1/3}
 \, ,
\end{equation}
 with $C=C(\gb, d)$ which can be easily read out  from \eqref{eq:uh2}.
 The proof of Lemma~\ref{th:pertinho} is therefore complete.
 \qed

\subsubsection{Step 4: reducing to estimates on $M$-boxes.}
We are now going to state two technical results. The first (Lemma \ref{pticube}) is   
 \eqref{toestim3}, but with $\gL'_N$ replaced by the set of vertices $x$ which are in the support of a unique $\phi^{(z)}$
 \begin{equation}\gL''_N:= \left\{ x\in \gL'_N \ : \ \exists z, \ \phi_1(x)=\phi^{(z)}(x),\  \bP^1_N-\text{a.s.} \right\}. \end{equation}
 Note that our condition $L\ll M$ ensures that  $\vert \gL''_N\vert \sim\vert \gL'_N\vert$. The reason why the quantity with $\gL''_N$ is easier to handle is that by independence of the $\phi^{(z)}$, the partition function can be factorized (and its $\log$ becomes simply a sum).
 For technical reason, we prove in fact \eqref{toestim3} with $\gL''_N$ but also with an additional constraint of the fields $\phi^{(z)}$.
 
 The second result (Lemma \ref{ctunlemme}) states  that the portion  of the domain that we leave out, i.e. $\gL'_N\setminus \gL''_N$, gives a negligible contribution, this is where the constraint %technical restriction
 imposed in the statement of the first result are used. 

\medskip

Both results are needed for our proof, but while the second is of a technical nature, the first contains the key second moment argument on which  the proof relies. 
\begin{figure}[htbp]
\centering
\includegraphics[width=13 cm]{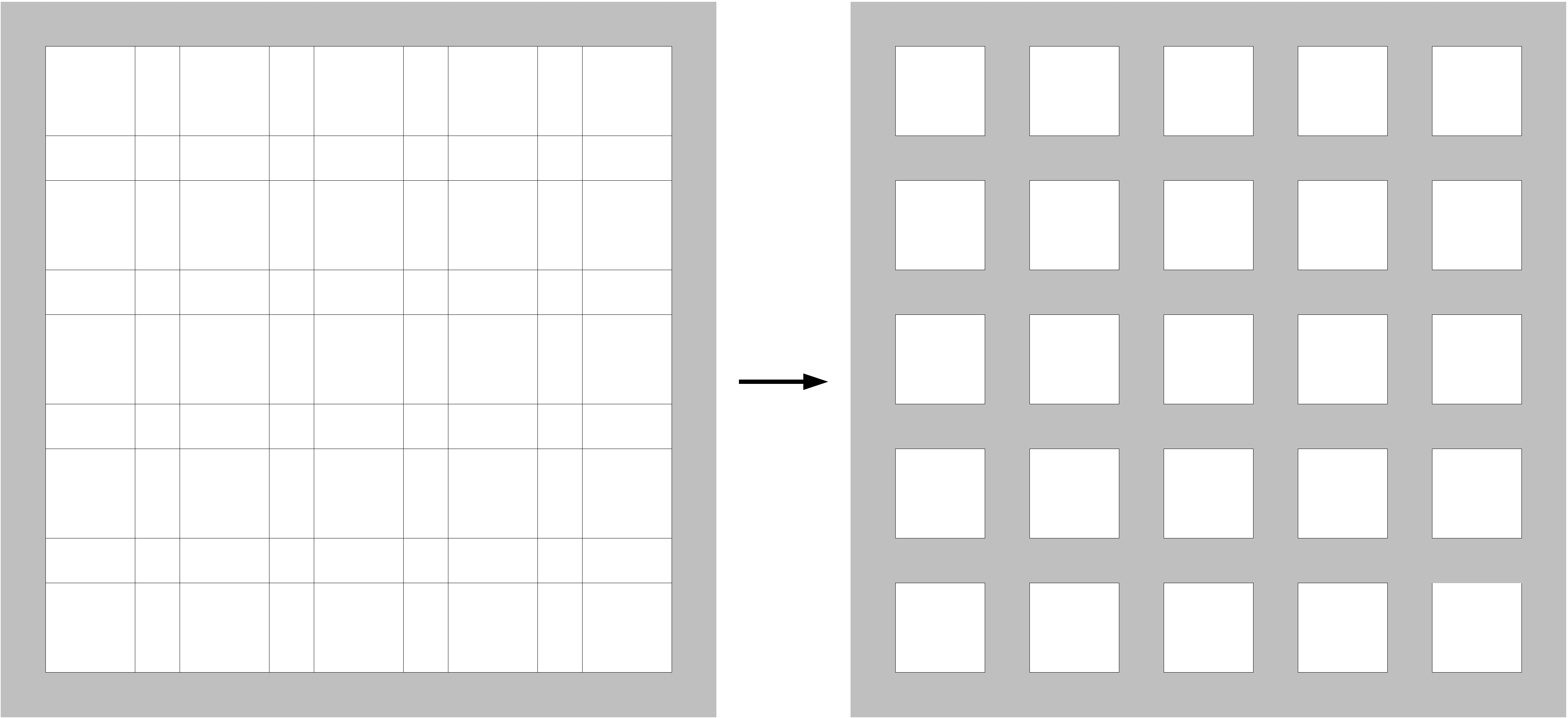}
\vskip-.2cm
\caption{\label{fig:step4} 
In Step 4 to Step 6 we show that it is sufficient to consider the system in which the contacts in the shadowed 
grid-like region are erased. The independence properties of the $\phi^{(z)}$ fields reduce the problem to estimate
the contributions to the free energy in each of the non shadowed boxes of edge length $M-L$. One important ingredient is also the fact that, for every such a box, we introduce a constraint on the $\phi^{(z)}$ fields whose support intersects the box or its frame (the box with the frame is a box of edgelenth $M+L$). This constraint limits the number of contacts that can take place in this box and its frame. A second moment argument turns out to be very efficient thanks to this control on the number of contacts.   
}
\end{figure}
 Let us observe that $\gL''_N$ is a disjoint union of cubes of diameter $M-L$
 \begin{equation}
 \gL''_N\, = \, \bigcup_{z\in \lint 0,k-1 \rint^d} \cB^{(z)}_{L}
 \, ,
 \end{equation}
 where 
 \begin{equation}
 \label{eq:BLz}
 \cB^{(z)}_{L}\, :=\, \lint L+1, M\rint^d  + z M\, .
 \end{equation}
 The space between cubes being much smaller than the cube diameter, 
 $\gL''_N$ covers most of the original box.
 Furthermore let us also introduce  
 \begin{equation}
 \cD_N\, :
 =\, \inter_{z\in \lint 0,k-1\rint^d}\cD^{(z)}_N\ \ \textrm{ with } \ \
 \cD^{(z)}_N\, :=\, \left
\{ \phi^{(z)}:\,\left \vert \left\{ x \, : \, \phi^{(z)}(x)\ge 2^{-d}(u-2)\right\}\right\vert  \le \kappa_d \right\} \,. 
 \end{equation}
where $\kappa_d$ is fixed in such a way that 
\begin{equation}
\label{wop}
1- \bP^1_N \left( \cD^{(z)}_N \right) \, =\, 
\bP^1_N\left(\left \vert\left\{ x \ : \ \phi^{(z)}(x)\ge 2^{-d}(u-2)\right\}\right\vert > \kappa_d\right)\, \le\,  h^2\, .
\end{equation}
Note that $\cD_N$ limits the number of contact points.
Let us show that we can find $\kappa_d$ such that \eqref{wop} holds.

\begin{lemma}
\label{th:pqs4}
For any $c>0$ %$c \in (0, 2^{-d}\gs_d)$ 
we can find $\kappa=\kappa(d, c)\in \bbN$ such that for $h$ small
\begin{equation}
\label{eq:pqs4}
\bP^1_N\left(\left \vert\left\{ x \ : \ \phi^{(z)}(x)\ge c \sqrt{2 \log (1/h)}\right\}\right\vert > \kappa\right)\, \le\,  h^2\, .
\end{equation}
\end{lemma}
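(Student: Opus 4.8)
The plan is a packing-plus-Chernoff estimate. Write $t:=c\sqrt{2\log(1/h)}$. The event in \eqref{eq:pqs4} will be rewritten as a union, over $m$-tuples of sites that are pairwise far apart, of joint exceedance events $\{\phi^{(z)}(x_i)\ge t\text{ for }i=1,\dots,m\}$, and on such ``thinned'' tuples the positive correlations of $\phi^{(z)}$ will turn out to be harmless.

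First I would record two facts about the law of $\phi^{(z)}$ under $\bP^1_N$. From the construction in Proposition~\ref{decomp}, $\phi^{(z)}=(L+1)^{-d/2}\sum_{y\in\lint 0,L\rint^d}\psi^{(y,z)}$ with the $\psi^{(y,z)}$ independent in $y$ and each $\psi^{(y,z)}$ a free field with zero boundary conditions on $B_M^{(y,z)}$; together with the inequality ``killed Green's function $\le$ free Green's function'', $G(0,0)=\sigma_d^2$, and \eqref{eq:Gest-tail}, this gives $\Var(\phi^{(z)}(x))\le\sigma_d^2$ for every $x$ and $\Cov(\phi^{(z)}(x),\phi^{(z)}(y))\le C_d|x-y|^{2-d}$ for $x\neq y$. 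Moreover $\phi^{(z)}$ is supported in a cube of side $M+L-1\le 3L^2$, hence of cardinality at most $3^d L^{2d}=3^d h^{-2d/\kappa_d}$, and by the Gaussian tail estimate \eqref{eq:asymptZ} a single site exceeds level $t$ with probability at most $\bP(\sigma_d\cN\ge t)\le C_c\, h^{c^2/\sigma_d^2}$ for $h$ small.

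Now the core step. Let $V_D:=|\{x\in\bbZ^d:|x|\le D\}|$; a greedy argument shows that any set of at least $V_D m$ sites contains an $m$-subset whose points are pairwise at $\ell_1$-distance more than $D$. Thus, taking $\kappa:=V_D m-1$, the event in \eqref{eq:pqs4} is contained in the event that some $m$-tuple $x_1,\dots,x_m$ in the support of $\phi^{(z)}$, pairwise $D$-separated, satisfies $\phi^{(z)}(x_i)\ge t$ for all $i$; a union bound bounds its probability by $\big(3^d h^{-2d/\kappa_d}\big)^m$ times the maximal joint exceedance probability over such tuples. For a fixed $D$-separated tuple I would use the exponential Chebyshev inequality: with $V:=\Var\big(\sum_{i=1}^m\phi^{(z)}(x_i)\big)$ one has $\bP\big(\phi^{(z)}(x_i)\ge t\text{ for all }i\big)\le\exp\big(-m^2t^2/(2V)\big)$. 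The covariance decay gives $V\le m\sigma_d^2+C_d\sum_{i\neq j}|x_i-x_j|^{2-d}$, and since the $x_i$ are $D$-separated (hence contained in an $\ell_1$-ball of radius $O(Dm^{1/d})$) the double sum is at most $c(d)\,m^{1+2/d}D^{2-d}$, which is $\le m\sigma_d^2$ once $D=D(d,m)$ is a large enough constant; then $V\le 2m\sigma_d^2$ and the joint exceedance probability is at most $\exp\big(-mt^2/(4\sigma_d^2)\big)=h^{mc^2/(2\sigma_d^2)}$. Altogether the probability in \eqref{eq:pqs4} is at most $3^{dm}\,h^{m(c^2/(2\sigma_d^2)-2d/\kappa_d)}$. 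Since $\kappa_d$ will be chosen large enough that $2d/\kappa_d\le c^2/(4\sigma_d^2)$, the exponent of $h$ is at least $mc^2/(4\sigma_d^2)$; fixing $m$ so that $mc^2/(4\sigma_d^2)>2$ and then $\kappa:=V_{D(d,m)}m-1=:\kappa(d,c)$ makes the bound at most $h^2$ for $h$ small. This is consistent: the value $\kappa(d,c)$ so produced does not depend on $\kappa_d$, so $\kappa_d$ may be fixed afterwards large enough (in terms of the finitely many constants $c$ eventually needed) to ensure $2d/\kappa_d\le c^2/(4\sigma_d^2)$.

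The step I expect to be delicate is the Chebyshev bound: positive correlations forbid bounding the joint exceedance probability by the product $h^{mc^2/\sigma_d^2}$ of single-site probabilities (a perfectly correlated field would only give $h^{c^2/\sigma_d^2}$, which is useless since $c^2/\sigma_d^2<2$ in the application). What makes it work is that after thinning to a $D$-separated tuple the variance of the sum is still $(1+o(1))m\sigma_d^2$, provided $D$ is taken large as a function of the target multiplicity $m$, so the Chebyshev estimate behaves like the independent bound up to the constant $\tfrac14$ in the exponent. The only other point requiring care is bookkeeping the constants so that $\kappa_d$, $m$ and $D$ are fixed in a non-circular order ($\kappa_d$ large in terms of $d,c$; then $m$; then $D=D(d,m)$; then $\kappa$).
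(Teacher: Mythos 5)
Your proof is correct, but it follows a genuinely different route from the paper's. The paper does not thin the tuple: it takes a union bound over \emph{all} $\kappa$-tuples in the support, bounds the joint exceedance by the tail of the sum, and controls $\Var\bigl(\sum_j\phi^{(z)}(x_j)\bigr)$ by the worst (densely packed) configuration, a cube of side $\lceil\kappa^{1/d}\rceil$, giving a variance of order $\kappa^{(d+2)/d}$ (quoting \cite[Lemma~6.11]{cf:GL}); the resulting Gaussian gain $h^{c'_d\kappa^{(d-2)/d}}$ grows with $\kappa$, while the entropy factor is frozen at $h^{-2d}$ using (implicitly) $\kappa\le\kappa_d$, so one simply takes $\kappa$ with $c'_d\kappa^{(d-2)/d}>2d+2$. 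You instead pre-sparsify: by the greedy packing step you reduce to $D$-separated $m$-tuples, for which the variance of the sum is essentially additive ($\le 2m\sigma_d^2$), so the gain $h^{mc^2/(2\sigma_d^2)}$ is \emph{linear} in $m$ and must beat a per-point entropy $h^{-2d/\kappa_d}$; this is what forces your extra requirement $\kappa_d\ge 8d\sigma_d^2/c^2$. Your approach is more elementary and self-contained (no worst-case packed-cube variance lemma), at the price of a larger $\kappa(d,c)=V_{D(d,m)}m-1$ and of this additional largeness condition on $\kappa_d$; you correctly flag that it is non-circular, and in fact it is automatic for the paper's eventual choice $\kappa_d=\max(\kappa(d,c),4d)$, since $\kappa(d,c)\ge(2d+1)m-1$ and $m>8\sigma_d^2/c^2$. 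Note that the paper's proof carries the symmetric implicit consistency requirement $\kappa\le\kappa_d$, so neither argument is cleaner on that score.

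One small repair: the parenthetical justification of the covariance bound is off. Being pairwise $D$-separated does \emph{not} imply the $m$ points lie in an $\ell_1$-ball of radius $O(Dm^{1/d})$ (separation is a lower bound on distances, not a confinement). The inequality you want, $\sum_{i\ne j}|x_i-x_j|^{2-d}\le c(d)\,m^{1+2/d}D^{2-d}$ for every $D$-separated $m$-tuple, is nevertheless true: for fixed $i$, ordering the other points by distance, the packing of disjoint balls of radius $D/2$ forces the $j$-th closest point to satisfy $|x_i-x_j|\ge c\,D\,j^{1/d}$, and summing $j^{(2-d)/d}$ over $j\le m$ gives $Cm^{2/d}$ since $(2-d)/d>-1$. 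With this substitution the step, and the rest of your argument, goes through as written.
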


\medskip

Of course  \eqref{wop} holds 
by Lemma~\ref{th:pqs4} by choosing a value of
$c \in (0, 2^{-d}\gs_d)$ and then fixing $\kappa_d=\max(\kappa(d, c), 4d)$
(the requirement $\kappa_d \ge 4d$ is due to 
  \eqref{eq:cond2kappa} below). 

\medskip

\begin{proof}
By a union bound the probability on the left-hand side of \eqref{eq:pqs4}  
is not larger than
\begin{multline}
(M+L)^{d\kappa} \max_{x_1, \ldots, x_ {\kappa}} \bP^1_N\left( \phi^{(z)}(x_j)\ge c \sqrt{2 \log (1/h)} 
\textrm{ for } j=1, \ldots, \kappa \right)
\, \le \\
 (2M)^{d\kappa} \max_{x_1, \ldots, x_ {\kappa}} \bP^1_N\left( \sum_{j=1}^{\kappa}  \phi^{(z)}(x_j)\ge c \kappa  \sqrt{2 \log (1/h)}\right)\, ,
\end{multline}
where of course the $x_j$'s are in the support of $ \phi^{(z)}$. We are therefore reduced to estimating the variance 
of $\sum_{j=1}^{\kappa}  \phi^{(z)}(x_j)$. The first observation is that we can replace  $\bP^1_N$ with $\bP$, that is we can work with the free field in $\bbZ^d$ (this is just because $G(x,y)$ is larger than the covariance of $\phi^{(z)}$ for every choice of $x$ and $y$). The next step is realizing that this variance is maximal when the $x_j$ are \emph{closely packed}: more precisely that the maximum of the variance is bounded above by a constant that depends only on the dimension 
time the variance of the case in which the set $\{x_1,\ldots, x_{\kappa}\}$ is  replaced by the cube of edge length equal to $\lceil \kappa^{1/d} \rceil$ (of course this cube contains more than $\kappa$ points, unless $\kappa^{1/d}$ is an integer number: the full argument is left to the reader and can be found for example in 
\cite[Lemma~6.11]{cf:GL} where one can refer also for an explicit constant (that is however largely overestimated). The computation for the cube is straightforward and yields  a variance which behaves for 
$\kappa$ large like $C_d \kappa^{(d+2)/d}$. Therefore
\begin{equation}
\begin{split}
\bP^1_N\left( \sum_{j=1}^{\kappa}  \phi^{(z)}(x_j)\ge c \kappa  \sqrt{2 \log (1/h)}\right) \, &\le \, \bP\left(
\gs_d \cN \, \ge \, c\, C_d^{-1/2}  \kappa^{1-(d+2)/(2d)} \sqrt{2 \log (1/h)}\right) \\
 & \le \, h^{c^2 \kappa^{(d-2)/d}/ (C_d\gs_d^2)}\, =\, h^{c'_d\kappa^{(d-2)/d}}\,.
\end{split}
\end{equation}
Therefore 
\begin{equation}
\label{eq:choosekappa}
\bP^1_N\left(\left \vert\left\{ x \ : \ \phi^{(z)}(x)\ge 2^{-d}(u-2)\right\}\right\vert > \kappa\right)\, \le\, 
 %(2M)^{d\kappa} h^{c'_d\kappa_d^{(d-2)/d}}\,=\, 
 2^{d\kappa}
 h^{-2d+c'_d\kappa^{(d-2)/d}}\, \le \, h^2\, ,
\end{equation}
with the last step that holds if  $\kappa$ is chosen so that
$c'_d\kappa^{(d-2)/d} > 2d+2$ 
 and if $h$ is sufficiently small. Therefore $\kappa(d,c)$ is identified and the  proof of Lemma~\ref{th:pqs4} is 
 complete.
\end{proof}

\medskip

Here is the first of the two results that we announced:
\medskip

\begin{lemma}
\label{pticube}
 For every $\phi_0\in \cC_N$ we have (note that
 $\vert \cB^{(z)}_{L} \vert=(M-L)^d=L^{2d}$)
 \begin{equation}
 \label{eq:pticube}
  \bbE \log \bE^1_N \left[e^{\sum_{x\in \cB_L^{(z)}}(\gb \go_x-\gl(\gb)+h)\gd^u_x }\ind_{\cD^{(z)}_N}\right]\,\ge\,  L^{2d}(\chi(\gb)h^2-C_\gb h^{2+\gep}).
 \end{equation} 
As a consequence for $\phi_0\in \cC_N$ we have 
  \begin{equation}
  \label{eq:pticube2}
  \begin{split}
  \bbE\log \bE^1_N \left[e^{\sum_{x\in \gL''_N}(\gb \go_x-\gl(\gb)+h)\gd^u_x }\ind_{\cD_N}\right]\, 
  &= \sum _{z\in \lint 0,k-1\rint^d}\bbE\log \bE^1_N \left[e^{\sum_{x\in \cB_L^{(z)}}(\gb \go_x-\gl(\gb)+h)\gd^u_x }\ind_{\cD^{(z)}_N}\right]
  \\
  &\ge\, N^d(\chi(\gb) h^2-C_\gb h^{2+\gep})\, .
 \end{split}
 \end{equation}
 \end{lemma}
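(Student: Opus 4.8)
\emph{Strategy.} The second assertion \eqref{eq:pticube2} follows at once from the first: the fields $(\phi^{(z)})_{z\in\lint 0,k-1\rint^d}$ are independent, $\gL''_N=\bigsqcup_z\cB^{(z)}_L$, and for $x\in\cB^{(z)}_L$ the variable $\gd^u_x$ is measurable with respect to $\phi^{(z)}$ and the (frozen) $\phi_0$ only, so the partition function factorizes over $z$, its logarithm is a sum of independent terms, and one applies \eqref{eq:pticube} to each; replacing $k^dL^{2d}$ by $N^d=k^dL^{2d}(1+O(1/L))$ costs only $O(h^2/L)=O(h^{2+1/\kappa_d})$ per site, which is absorbed since $\gep<1/\kappa_d$. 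So everything reduces to \eqref{eq:pticube}. Fix $z$ and $\phi_0\in\cC_N$, set $V_x:=\gb\go_x-\gl(\gb)+h$ and $Z:=\bE^1_N\big[\exp\big(\sum_{x\in\cB^{(z)}_L}V_x\gd^u_x\big)\ind_{\cD^{(z)}_N}\big]$, and note $e^{V_x\gd^u_x}=1+(e^{V_x}-1)\gd^u_x$ since $\gd^u_x\in\{0,1\}$. The proof is a second moment estimate carried out with the \emph{exact} leading constant.

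I would first record two preliminary facts. Since $\bbE[e^{V_x}]=e^h$ (because $\bbE[e^{\gb\go-\gl(\gb)}]=1$), we get $\bbE[Z]=\bE^1_N[e^{h\sum_x\gd^u_x}\ind_{\cD^{(z)}_N}]$; on $\cD^{(z)}_N$ one has $\sum_{x\in\cB^{(z)}_L}\gd^u_x\le\kappa_d$ (a contact at $x$ forces $\phi^{(z)}(x)\ge u-1-L^{-1/3}\ge 2^{-d}(u-2)$ for $h$ small), so expanding the exponential and invoking Lemma~\ref{th:pertinho} together with $|\cB^{(z)}_L|=L^{2d}$ gives $\bbE[Z]-1=2\chi(\gb)h^2L^{2d}(1+o(1))$. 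Moreover $Z\ge\bP^1_N(\{\sum_x\gd^u_x=0\}\cap\cD^{(z)}_N)\ge 1-\bE^1_N[\sum_x\gd^u_x]-h^2=1-o(1)$, so for $h$ small $X:=Z-1\ge-1/2$ \emph{surely}. Then, by the elementary inequality $\log(1+x)\ge x-\tfrac12 x^2-C|x|^3$ valid for all $x\ge-1/2$ ($C$ an absolute constant),
\[\bbE\log Z\ \ge\ \bbE[X]-\tfrac12\bbE[X^2]-C\,\bbE[|X|^3].\]
Going to third order is essential: the cruder bound $\bbE\log Z\ge 2\log\bbE[Z]-\log\bbE[Z^2]$ loses exactly the factor that separates $\chi(\gb)$ from $2\chi(\gb)$.

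The heart of the argument is the computation of $\bbE[X^2]=(\bbE[Z^2]-\bbE[Z]^2)+(\bbE[Z]-1)^2$. Introducing an independent replica $\tilde\phi^{(z)}$ of $\phi^{(z)}$ (same disorder), with contact set $\tilde A$ and event $\tilde\cD^{(z)}_N$, and averaging over $\go_x$ site by site, one obtains
\[\bbE[Z^2]-\bbE[Z]^2=\bE^1_N\otimes\tilde\bE^1_N\Big[e^{h|A\triangle\tilde A|}\big((e^{2h}+r)^{|S|}-e^{2h|S|}\big)\ind_{\cD^{(z)}_N}\ind_{\tilde\cD^{(z)}_N}\Big],\]
with $A=\{x\in\cB^{(z)}_L:\gd^u_x=1\}$, $S=A\cap\tilde A$ and $r:=e^{2h}(e^{\gl(2\gb)-2\gl(\gb)}-1)=e^{2h}/(2\chi(\gb))$. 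On $\cD^{(z)}_N\cap\tilde\cD^{(z)}_N$, $|A|,|\tilde A|\le\kappa_d$, so the $|S|=0$ term drops, $e^{h|A\triangle\tilde A|}=1+O(h)$, and the total $|S|\ge2$ contribution is at most $(e^{2h}+r)^{\kappa_d}\bP(|S|\ge2)$. By independence of the replicas $\bbE[|S|]=\sum_x\bP^1_N(\gd^u_x=1)^2=4\chi(\gb)^2h^2L^{2d}(1+o(1))$, and $\bP(|S|\ge2)\le\tfrac12\sum_{x\ne y}\bP^1_N(\gd^u_x=\gd^u_y=1)^2$; a two--point Gaussian estimate bounds $\bP^1_N(\gd^u_x=\gd^u_y=1)$ by $\bP^1_N(\gd^u_x=1)^2\,h^{-c(x,y)}$ up to a polylogarithmic factor, where $c(x,y)<1$ is governed by the correlation of $\phi^{(z)}$ at $x,y$ and is bounded away from $1$ uniformly in $x\ne y$ (the covariance of $\phi^{(z)}$ is dominated by $G$, and $\max_{x\ne 0}G(0,x)<G(0,0)=\gs_d^2$). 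Splitting $\sum_{x\ne y}$ into near and far pairs, one checks both parts are $o(h^2L^{2d})$ for our choice of $L,k$. Hence $\bP(|S|=1)=4\chi(\gb)^2h^2L^{2d}(1+o(1))$, and since the $|S|=1$ term equals $r\,\bP(|S|=1,\cD^{(z)}_N,\tilde\cD^{(z)}_N)(1+O(h))$, we get $\bbE[Z^2]-\bbE[Z]^2=2\chi(\gb)h^2L^{2d}(1+o(1))$ and therefore $\bbE[X^2]=2\chi(\gb)h^2L^{2d}(1+o(1))$. The third moment is negligible: $X$ is, up to lower--order multi--contact terms, a sum of one independent contribution per disorder variable, so a Rosenthal--type inequality (using $\gl(3\gb)<\infty$) yields $\bbE[|X|^3]=O(h^3L^{3d})=o(h^2L^{2d})$. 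Plugging into the displayed bound, $\bbE\log Z\ge 2\chi(\gb)h^2L^{2d}(1+o(1))-\chi(\gb)h^2L^{2d}(1+o(1))-o(h^2L^{2d})\ge L^{2d}(\chi(\gb)h^2-C_\gb h^{2+\gep})$ for a suitable $\gep=\gep_d>0$, uniformly in $z$ and $\phi_0\in\cC_N$; this is \eqref{eq:pticube}.

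\emph{Main obstacle.} The crux is obtaining the second moment with the \emph{sharp} constant $2\chi(\gb)$, equivalently showing that the off--diagonal overlap contributions ($|S|\ge2$) --- which encode simultaneous atypical excursions of the free field at two sites and are genuinely correlated on short scales --- are of strictly lower order. This is exactly where the constraint $\cD^{(z)}_N$ (bounding the number of contacts per box) and a quantitative two--point tail estimate for the lattice free field are indispensable; the bookkeeping of the near--pair sum, where the correlation exponent enters and which dictates how small $\gep_d$ must be taken, is the delicate technical point.
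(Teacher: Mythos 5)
Your proposal is correct in substance and rests on the same engine as the paper's proof: the restriction to $\cD^{(z)}_N$ to cap the number of contacts (hence the replica overlap), Lemma~\ref{th:pertinho} for the one--point contact density, a replica second--moment computation in which the overlap--one term produces the factor $e^{\gl(2\gb)-2\gl(\gb)}-1=1/(2\chi(\gb))$, and a two--point Gaussian tail estimate (covariance of $\phi^{(z)}$ dominated by $G$, with $\max_{x\neq 0}G(0,x)<\gs_d^2$) to make the overlap--$\ge 2$ term of order $h^{2+\gep}L^{2d}$; this is exactly the content of \eqref{eq:2mom-magic} and Lemma~\ref{th:lemformagic}, including the point that the admissible correlation exponent must beat $d/\kappa_d$, which is why $\kappa_d\ge 4d$ is imposed. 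Where you diverge is the bookkeeping of the elementary expansion: the paper first factors out the homogeneous part as in \eqref{eq:splitLB}, so the remaining disorder partition function has disorder mean exactly $1$ and is bounded below by $1-C_1(\gb)L^{2d}h$ surely, and then the inequality $\log(1+y)\ge y-\tfrac{1}{2(1-\eta)}y^2$ with $\eta=C_1(\gb)L^{2d}h=o(1)$ closes the argument with only a second moment. You instead expand $\log Z$ for the full (disordered plus $h$--reward) partition function, which buys a slightly more unified computation ($\bbE[X]$ carries the reward $2\chi(\gb)h^2L^{2d}$, $\tfrac12\bbE[X^2]$ the penalty $\chi(\gb)h^2L^{2d}$) but costs you a third absolute moment $\bbE[|X|^3]=o(h^2L^{2d})$, which you only wave at via a ``Rosenthal--type inequality''; this is the one step of your route with no counterpart in the paper, and while it can be carried out (on $\cD^{(z)}_N$ the expansion of $Z-1$ has at most $\kappa_d$ factors per configuration, the single--contact part is a sum of independent terms of size $O(h)$ with $L^{2d}$ summands, and the multi--contact parts have coefficients $\le h^{1+b}$, so all contributions are $o(h^2L^{2d})$ given $\kappa_d\ge 4d$ and $\gl(s)<\infty$ for all $s$), a cleaner fix inside your own scheme is to note that $X\ge -CL^{2d}h$ deterministically and use the same $\eta$--dependent second--order inequality as the paper, which removes the third moment altogether. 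Two further small points to make quantitative: the $(1+o(1))$'s must be $1+O(h^{\gep})$ to yield the stated error $C_\gb h^{2+\gep}$ (Lemma~\ref{th:pertinho} and the analogue of \eqref{eq:m24} provide this), and in the variance you only need the upper bound, so the indicators $\ind_{\cD}$ may simply be dropped when bounding $\bP(|S|=1)$ by $\bbE[|S|]$, as you implicitly do.
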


 Of course this is not yet sufficient to conclude as there is no direct way to show that adding the sites of $\gL'_N\setminus\gL''_N$ has a positive contribution on the free energy.
 
 \medskip
 
We will in fact content ourselves with showing that this contribution is not too negative. This is the object of our second result. It   requires the introduction of some further notation.
 We define for $A\subset \gL'_N$,  $\tilde Z_{A, \go}$ as a partition function restricted to $\cD_N$ for which the interaction is present only for sites in $A$
 \begin{equation}
 \label{eq:ZAgo}
 \tilde Z_{A, \go}\, := \, \bE^1_N \left[e^{\sum_{x\in A}(\gb \go_x-\gl(\gb)+h)\gd^u_x }\ind_{\cD_N}\right]\, .
 \end{equation}
Moreover we define
\begin{equation}
\label{eq:Mh}
K_{h}\, :=\, \inf\{ K\ge \sqrt{\log 1/h}  :\, 
\bbE[ |\go_x| \, ; \, |\go_x|> K]\le h^3\}\,.
\end{equation}
The condition  $K\ge \sqrt{\log 1/h}$ is artificial, but it is convenient for us to have $\lim_{h\searrow 0} K_h= \infty$. Since we have assumed $\bbE[\exp(t \go_x)]< \infty$ for every $t\in \bbR$ we have that
$K_{h}=o( \log(1/h))$. We set
\begin{equation}
\overbar{\go}_x\, :=\, \go_x\ind_{\{ |\go_x|\le K_h\}} .
\end{equation} 
Here is the second result.

 \medskip
 
 \begin{lemma}
 \label{ctunlemme}
For  every $\gga>0$ there exists $h_0$ such that for all $h\in [0,h_0]$ we have
\begin{equation}\label{zladif}
\bbE \log \tilde Z_{B, \overbar{\go}} - \bbE \log \tilde Z_{A,\overbar{\go}}\, \ge\,  - h^{2-\gga} \left \vert B\setminus A\right\vert\, ,
\end{equation}
for every $A$ and $B$ with  $A\subset B\subset \gL'_N$
 \end{lemma}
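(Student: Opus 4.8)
The plan is to prove Lemma~\ref{ctunlemme} by a single-site stripping argument, reducing the difference $\bbE\log\tilde Z_{B,\overbar{\go}}-\bbE\log\tilde Z_{A,\overbar{\go}}$ to a sum over $x\in B\setminus A$ of contributions of the form $\bbE\log\mu(e^{(\gb\overbar{\go}_x-\gl(\gb)+h)\gd^u_x})$ for suitable tilted measures $\mu$, and bounding each such contribution from below by $-h^{2-\gga}$. More precisely, order the sites of $B\setminus A=\{x_1,\dots,x_m\}$ arbitrarily and write a telescoping identity
\begin{equation}
\label{eq:telescope}
\log \tilde Z_{B,\overbar{\go}}-\log \tilde Z_{A,\overbar{\go}}\,=\,\sum_{j=1}^m \log \frac{\tilde Z_{A\cup\{x_1,\dots,x_j\},\overbar{\go}}}{\tilde Z_{A\cup\{x_1,\dots,x_{j-1}\},\overbar{\go}}}\,=\,\sum_{j=1}^m \log \nu_{j}\!\left[e^{(\gb\overbar{\go}_{x_j}-\gl(\gb)+h)\gd^u_{x_j}}\right],
\end{equation}
where $\nu_j$ is the probability measure on $(\phi^{(z)})_{z}$ obtained from $\bP^1_N(\cdot\mid\cD_N)$ by the density proportional to $e^{\sum_{x\in A\cup\{x_1,\dots,x_{j-1}\}}(\gb\overbar{\go}_x-\gl(\gb)+h)\gd^u_x}$. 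Taking $\bbE$ on both sides, it suffices to show that for each $j$,
\begin{equation}
\label{eq:onesite}
\bbE\log \nu_j\!\left[e^{(\gb\overbar{\go}_{x_j}-\gl(\gb)+h)\gd^u_{x_j}}\right]\,\ge\,-h^{2-\gga}\,.
\end{equation}

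The key point for \eqref{eq:onesite} is that $\nu_j$ does not depend on $\go_{x_j}$ (the density only involves $\go_x$ for $x$ strictly before $x_j$ in the ordering), so we may first average over $\overbar{\go}_{x_j}$. Writing $q:=\nu_j(\gd^u_{x_j}=1)$ and using $\log(1+q(e^{a}-1))\ge q(e^a-1)-\tfrac12 q^2(e^a-1)^2$ together with the fact that $\gd^u_{x_j}$ is Bernoulli under $\nu_j$, one gets
\begin{equation}
\nu_j\!\left[e^{(\gb\overbar{\go}_{x_j}-\gl(\gb)+h)\gd^u_{x_j}}\right]\,=\,1+q\left(e^{\gb\overbar{\go}_{x_j}-\gl(\gb)+h}-1\right),
\end{equation}
and hence $\bbE\log\nu_j[\cdots]\ge \bbE\big[\log(1+q(e^{\gb\overbar{\go}_{x_j}-\gl(\gb)+h}-1))\big]$, a quantity of exactly the shape \eqref{Taylz}/\eqref{elaleq} but with the truncated disorder $\overbar{\go}$. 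There are two things to control: (a) an upper bound on $q$, and (b) the effect of the truncation. For (a), the crucial constraint is $\cD_N$: on $\cD_N$ the number of $x$ in any single box with $\phi^{(z)}(x)\ge 2^{-d}(u-2)$ is at most $\kappa_d$; combined with the estimate \eqref{eq:uh2b} (which says the contribution to $\gd^u_x=1$ coming from the event $\phi_1(x)<2^{-d}(u-2)$ is $o(h(\log 1/h)^{-C})$ uniformly), one deduces that $\nu_j(\gd^u_{x_j}=1)\le C h (\log(1/h))$ — in fact, more crudely, $q=O(h^{1-o(1)})$ suffices. Here one also uses that the Radon--Nikodym density defining $\nu_j$ from $\bP^1_N(\cdot\mid\cD_N)$ is bounded, site by site, by $e^{\gb|\overbar{\go}_x|+h}\le e^{\gb K_h+h}=e^{o(\log(1/h))}$, which only degrades $q$ by a subpolynomial factor and hence keeps $q=h^{1-o(1)}$. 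For (b), by the definition of $K_h$ in \eqref{eq:Mh} we have $\bbE[|\go_x-\overbar{\go}_x|]=\bbE[|\go_x|;|\go_x|>K_h]\le h^3$, and $\gl(\gb)$ and $\gl(2\gb)$ are the same whether computed with $\go$ or, to leading order, with $\overbar{\go}$ (the discrepancy is $O(h^3)$ by dominated convergence and the full exponential moment hypothesis \eqref{eq:lambda}); so the Taylor expansion of $\bbE[\log(1+q(e^{\gb\overbar{\go}-\gl(\gb)+h}-1))]$ reads $qh-\tfrac{q^2}{2}\Var(e^{\gb\overbar{\go}-\gl(\gb)})+O(q^3 e^{3\gb K_h})+O(q h^2)+\dots\,\ge\, qh - C q^2$ for $h$ small. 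Since $q\ge 0$, the only possibly-negative term is $-Cq^2\ge -C h^{2-o(1)}\ge -h^{2-\gga}$ for $h$ small enough, proving \eqref{eq:onesite}.

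Summing \eqref{eq:onesite} over $j=1,\dots,m=|B\setminus A|$ in \eqref{eq:telescope} gives exactly \eqref{zladif}. I expect the main obstacle to be step (a): getting a clean uniform-in-$\phi_0$ (for $\phi_0\in\cC_N$), uniform-in-$j$, uniform-in-the-ordering upper bound $q=\nu_j(\gd^u_{x_j}=1)\le h^{1-o(1)}$, since $\nu_j$ is a complicated Gibbs tilt of the constrained field. The way around it is precisely the constraint $\cD_N$ combined with the truncation $\overbar{\go}$: both the number of ``high'' sites per box and the per-site tilt weight are controlled (by $\kappa_d$ and by $e^{\gb K_h}=h^{-o(1)}$ respectively), so a union bound over which $\le\kappa_d$ sites of the box are high, together with \eqref{eq:uh2} and \eqref{eq:uh2b} to handle each configuration, yields the bound with room to spare. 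Everything else — the telescoping, the single-site Bernoulli computation, the Taylor expansion, the truncation error — is routine given the lemmas already established (in particular Lemma~\ref{th:uh} and Lemma~\ref{th:pqs4}).
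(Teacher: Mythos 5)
Your overall skeleton is the same as the paper's: strip sites one at a time (the paper reduces directly to $B=A\cup\{x\}$, which is your telescoping), observe that the tilted measure does not depend on $\go_{x_j}$ so one can average over $\overbar{\go}_{x_j}$ first, use that $\gd^u_{x_j}$ is Bernoulli under the tilted law to reduce to $\bbE\log\bigl(1+q(e^{\gb\overbar{\go}-\gl(\gb)+h}-1)\bigr)$, and expand so that the only dangerous term is $-Cq^2$. All of that matches the paper's Step 6, and the reduction of the lemma to the single estimate $q\le h^{1-\gga/2}$ is exactly \eqref{lapreuve}.

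The gap is in the step you yourself flag as the obstacle, and the mechanism you propose for it does not work as stated. The claim that the Radon--Nikodym density of $\nu_j$ with respect to $\bP^1_N(\cdot\mid\cD_N)$ is ``bounded, site by site, by $e^{\gb K_h+h}$'' and therefore ``only degrades $q$ by a subpolynomial factor'' is not a valid inference: the tilt is $e^{\sum_{y\in A'}(\gb\overbar{\go}_y-\gl(\gb)+h)\gd^u_y}$ with $A'$ of size up to $|\gL'_N|$, and even on $\cD_N$ the total number of contacts over $A'$ grows with the volume (it is only bounded per box), so no per-site bound controls the global density, and hence no bound on the marginal probability of $\{\gd^u_{x_j}=1\}$ follows. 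Likewise, invoking \eqref{eq:uh2}--\eqref{eq:uh2b} ``for each configuration of high sites'' is circular, because those are Gaussian estimates under $\bP$ or $\bP^1_N$, not under the tilted law $\nu_j$, and transferring them is precisely what needs to be proved. The missing idea is a localization-by-conditioning argument (the paper's Lemma~\ref{derivs}): condition on all local fields $\phi^{(y)}$ with $y\notin I_{x_j}$; then every tilt factor at a site outside the support box cancels between numerator and denominator, and on $\cD_N$ at most $3^d\kappa_d$ contacts can occur inside the support box (a contact forces some neighbouring local field to exceed $2^{-d}(u-2)$, and each such field has at most $\kappa_d$ high points), each contributing a factor at most $e^{\gb K_h+\gl(\gb)+h}$. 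Together with $\bP^1_N(\cD^{(z)}_N)\ge 1/2$ from \eqref{wop}, this bounds the conditional Radon--Nikodym derivative of each $\phi^{(z)}$, $z\in I_{x_j}$, by $2e^{3^{d+1}\kappa_d\gb K_h}=h^{-o(1)}$, and iterating over the at most $2^d$ fields in $I_{x_j}$ lets one transfer the $\bP^1_N$-estimate $\bE^1_N[\gd^u_{x_j}]\le 3\chi(\gb)h$ of \eqref{pertinho} to $\nu_j$, giving $q\le h^{1-\gga/2}$. Without this conditioning/cancellation step your bound on $q$, and hence \eqref{eq:onesite}, is not established.
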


 \medskip
 
 Let us show that combining Lemma~\ref{pticube} and Lemma \ref{ctunlemme} we obtain \eqref{toestim3}
 (and the proof of Proposition \ref{lloobb} is therefore reduced to proving the two lemmas).
 First of all just by restricting the expectation to the event $\cD_N$ we have
  \begin{equation}
   \label{eq:intcut-off-1}
   \bbE \log \bE^1_N \left[e^{\sum_{x\in \gL'_N}(\gb \go_x-\gl(\gb)+h)\gd^u_x } \right]\, \ge 
     \bbE \log \bE^1_N \left[e^{\sum_{x\in \gL'_N}(\gb \go_x-\gl(\gb)+h)\gd^u_x }  \ind_{ \cD_N}\right]
 \end{equation}
 The next step is to remark  that we can replace $\go_x$ by $\overbar \go_x$ without changing much the expectation. In fact, by applying Lemma~\ref{th:theboundz} with $K=K_h$ we see 
 that for any $A\subset \gL'_N$ we have 
 \begin{equation}\label{theboundz}
     \left|\bbE \left[\log \tilde Z_{A,  \go}\right]-
          \bbE \left[\log \tilde Z_{A,\overbar\go}\right]\right|\le \beta |A|h^3.
 \end{equation}
 In particular, using \eqref{theboundz} for $A=\gL'_N$ we see that it is sufficient to prove \eqref{toestim3} for  
\begin{equation}
          \bbE \left[\log \tilde Z_{\gL'_N,\overbar \go}\right] =  \bbE \left[\log \tilde Z_{\gL''_N,\overbar \go}\right]+
         \bbE \left[\log \tilde Z_{\gL'_N,\overbar \go}- \log \tilde Z_{\gL''_N,\overbar \go} \right]
\end{equation}
We now  use Lemma \ref{pticube} and \eqref{theboundz} to bound the first term and Lemma \ref{ctunlemme} for the second one and 
 %But one can also see\eqref{eq:pticube2}  holds also with $\go$ replaced by $\overbar \go$ (possibly with the need of changing the constant $C_{\beta}$) as a direct consequence of  \eqref{theboundz} with $A=\gL''_N$. Hence using \eqref{eq:pticube2} (for $\overbar \go$) and and \eqref{zladif} in Equation \eqref{eq:intcut-off-1}
  we obtain that for $\gamma>0$ and  $h$ sufficiently small (allowed to depend on $\gamma$)
  \begin{equation}
   \bbE \log \bE^1_N \left[e^{\sum_{x\in \gL'_N}(\gb \go_x-\gl(\gb)+h)\gd^u_x } \right]
   \ge\,  N^d(\chi(\gb) h^2-C_\gb h^{2+\gep}) - |\gL'_N \setminus \gL''_N|  h^{2-\gga}\, .
   \end{equation}
  %   \begin{equation}
%   \begin{split}
%    \bbE \log \bE^1_N \left[e^{\sum_{x\in \gL'_N}(\gb \go_x-\gl(\gb)+h)\gd^u_x } \right]
%    \, & \ge \,  \bbE \log \bE^1_N \left[e^{\sum_{x\in \gL'_N}(\gb \overbar{\go}_x-\gl(\gb)+h)\gd^u_x } \right]
%    \\
%    &\ge\,  \bbE \log \tilde Z_{\gL'_N, \overbar{\go}}
%    \\
%     \, &=\,   \bbE \log \tilde Z_{\gL''_N, \overbar{\go}}
%    +\left(\bbE \log \tilde Z_{\gL'_N, \overbar{\go}}-\bbE \log \tilde Z_{\gL''_N, \overbar{\go}} \right)
%  \\  
%    &\ge  \bbE \log \tilde Z_{\gL''_N, \go} - \gb \left\vert \gL''_N \right\vert h^3 
%   - 2d N^{d-1}L h^{2-\gga} 
%    \\
%    &\ge\,  N^d(\chi(\gb) h^2-C_\gb h^{2+\gep})- \gb N^d h^3  - 3d N^{d} h^{12-\gga+1/\kappa_d}\, ,
%  \end{split}
% %    \end{equation}
%   where the first inequality is an evident monotonicity, the second is \eqref{eq:intcut-off-1}, the one before the last is  
%     Lemma \ref{ctunlemme} and the last is \eqref{eq:pticube2} (with the notation \eqref{eq:ZAgo}.
  Now it suffices to recall the choice of the parameters and to choose $\gamma= 1/2\kappa_d$ and  
  \eqref{parami} allows to  conclude that \eqref{toestim3} holds for some $\gep>0$.
  More precisely we have 
 \begin{equation} |\gL'_N \setminus \gL''_N|\le d (k+1) L N^{d-1} \le 2d N^d L^{-1}= 2d N^d h^{\frac 1 \kappa_d}.\end{equation}
 As announced, this means that we are \emph{just} left with proving Lemma~\ref{pticube} and  Lemma~\ref{ctunlemme}. This is the content of the next two steps of the proof.

 \subsubsection{Step 5: the second moment estimate (proof of  Lemma~\ref{pticube})}

To see that \eqref{eq:pticube2}
 follows from \eqref{eq:pticube}, one simply observes that
once $\phi_0$ is fixed, $(\gd^u_x)_{x\in  \cB_L^{(z)}}$ is determined by $\phi^{(z)}$.
 Hence, since  $\bP^1_N$ is a product measure (recall the support properties of $\phi^{(z)}$ 
 and the definition \eqref{eq:BLz} of $\cB^{(z)}_{L}$), the expectation can be factorized. 
 Applying \eqref{eq:pticube} to each term of the sum thus obtained we have
 $$ \bbE \log \bE^1_N \left[e^{\sum_{x\in \gL'_N}(\gb \go_x-\gl(\gb)+h)\gd^u_x } \right]\ge   |\gL''_N|(\chi(\gb)h^2-C_\gb h^{2+\gep}),$$
 and  we conclude (modifying the value of $C_{\beta}$ if necessary) by using the fact that our choice of parameters implies $|\gL''_N|\ge N^d(1-h^{\gep})$.
 
 \medskip

 Let us now turn to the important part which is the proof of \eqref{eq:pticube}.
Since the result does not depend on $z$ let us assume that $z=0$ and write $\cB$ for $\cB^{(z)}_L$. 
With this notational simplification we remark the splitting:
\begin{multline}\label{eq:splitLB}
  \bbE \log \bE^1_N \left[e^{\sum_{x\in \cB}(\gb \go_x-\gl(\gb)+h)\gd^u_x }\ind_{\cD^{(0)}_N}\right]\,=
  \\
   \bbE \log \frac{\bE^1_N \left[e^{\sum_{x\in \cB}(\gb \go_x-\gl(\gb)+h)\gd^u_x }\ind_{\cD^{(0)}_N}\right]}
   {\bbE\bE^1_N \left[e^{\sum_{x\in \cB}(\gb \go_x-\gl(\gb)+h)\gd^u_x }\ind_{\cD^{(0)}_N}\right]} +
   \log \bbE\bE^1_N \left[e^{\sum_{x\in \cB}(\gb \go_x-\gl(\gb)+h)\gd^u_x }\ind_{\cD^{(0)}_N}\right]
  \\
  =\, 
  \bbE \log \bE^{1,h}_N \left[e^{\sum_{x\in \cB}(\gb \go_x-\gl(\gb))\gd^u_x }\, \Big\vert \, {\cD^{(0)}_N}\right]
  + \log  \bbE\bE^1_N \left[e^{h\sum_{x\in \cB}\gd^u_x }\ind_{\cD^{(0)}_N}\right]\, ,
  \end{multline}
  and in the last step we have introduced the \emph{homogeneous pinning} Gibbs measure  $\bP^{1,h}_N$.
Let us first estimate the last addendum in \eqref{eq:splitLB}. We have  
 \begin{equation}
 \begin{split}
    % \log \bbE \bE^1_N \left[e^{\sum_{x\in \cB}(\gb \go_x-\gl(\gb)+h)\gd^u_x }\ind_{\cD^{({0})}_N}\right]=  
 \log \bE^1_N \left[e^{\sum_{x\in \cB}h\gd^u_x }\ind_{\cD^{({0})}_N}\right]
\,     &\ge \,    h\,  \bE^1_N \left[ \sum_{x\in \cB}\gd^u_x \, \bigg| \, \cD^{({0})}_N \right]+\log \bP^1_N(\cD^{({0})}_N)
\\
     &\ge\,  h\,  \bE^1_N \left[ \left(\sum_{x\in \cB}\gd^u_x\right) \ind_{ \cD^{({0})}_N} \right]+\log \bP^1_N(\cD^{({0})}_N)
 \\
     &=\,  h\,  \bE^1_N \left[ \sum_{x\in \cB}\gd^u_x \right] -h
      \bE^1_N \left[ \left(\sum_{x\in \cB}\gd^u_x\right) \ind_{ \left(\cD^{({0})}_N\right)^\complement} \right]+\log \bP^1_N(\cD^{({0})}_N)
      \\
     &\ge\, L^{2d}
     \left(
  2 \chi(\gb) h^2 + O\left(h^2 L^{-1/4}\right)\right) -  L^{2d}h^3 -2h^2
\\
     &\ge\, L^{2d} \left(
   2 \chi(\gb)h^2 -C_\gb h^{2+\gep}\right)\, , 
 \end{split}
 \end{equation}
 where in the step before the last one 
the three terms correspond to the three terms in the previous line and we have used, in order, \eqref{pertinho},
then 
 \eqref{wop} together with $\sum_{x\in \cB}\gd^u_x \le L^{2d}$ and finally again   \eqref{wop}.
 
We are left with estimating 
first addendum in \eqref{eq:splitLB}.
Note that since $\bP^{1,h}_N( \cdot \ | \, \cD^{({0})}_N )$ is  a small modification of $\bP^1_N$, the probability of making one contact in $\cB$ is small also under the 
former measure. 
In fact
\begin{equation}
\begin{split}
\bP^{1,h}_N \left( \sum_{ x\in \cB} \gd^u_x\, \ge \, 1 
\, \bigg| \, \cD^{({0})}_N \right) \, &=\, 
\frac{
 \bE^{1}_N\left[
  e^{h\sum_{x\in \cB}\gd^u_x }\ind_{\left \{\sum_{ x\in \cB} \gd^u_x\, \ge \, 1 \right\} \cap\cD^{(0)}_N} \right]
  }{\bE^{1}_N\left[
  e^{h\sum_{x\in \cB}\gd^u_x }\ind_{\cD^{(0)}_N} \right]}
  \\
  &\le \, \frac{ e^{h \kappa_d}
   \bP^{1}_N\left(
  {\left \{\sum_{ x\in \cB} \gd^u_x\, \ge \, 1 \right\} \cap\cD^{(0)}_N} \right)
  }
  {
  \bP^{1}_N\left(
  {\cD^{(0)}_N} \right)
  }
  \\
  & \le  \,  (1+2\kappa_d h)
     \bP^{1}_N\left(
  {\left \{\sum_{ x\in \cB} \gd^u_x\, \ge \, 1 \right\} \cap\cD^{(0)}_N} \right)
  \\
  & \le  \,  (1+2\kappa_d h)
      \bP^{1}_N\left(
  \sum_{ x\in \cB} \gd^u_x\, \ge \, 1 \right)
  \\
  & \le  \,  (1+2\kappa_d h) \sum_{ x\in \cB} \bE^{1}_N\left[ \gd^u_x\right] \, \le\, C_1(\gb) L^{2d} h\, ,
  \end{split}
  \end{equation}
  where in the second inequality we used \eqref{wop} and in the last one we have used \eqref{pertinho}
  and $C_1(\gb):=4\chi(\gb)$.
As a consequence we have  for every $\go$ %almost surely
 \begin{equation}
 \bE^{1,h}_N \left[e^{\sum_{x\in \cB}(\gb \go_x-\gl(\gb))\gd^u_x } \, \bigg| \, \cD^{({0})}_N \right]\, \ge\,  1- C_1(\gb)  L^{2d} h\, .
 \end{equation}
Hence using the formula $\log (1+y)\ge y-\frac{1}{2(1-\eta)}y^2$, valid for all $y\ge -\eta$  and  $\eta\in(0,1)$,
with $\eta=C_1(\gb)  L^{2d} h$
 we obtain that 
 \begin{multline}
 \label{decoop}
   \bbE \log  \bE^{1,h}_N \left[e^{\sum_{x\in \cB}(\gb \go_x-\gl(\gb))\gd^u_x } \, \Big| \, \cD^{({0})}_N \right]\\
   \ge\,   -\frac1{2 (1-C_1(\gb)L^{2d}h)} \left( \bbE\left(\bE^1_N \left[e^{\sum_{x\in \cB}(\gb \go_x-\gl(\gb))\gd^u_x }\, \Big| \, \cD^{({0})}_N\right]^2\right) -1\right)
   \\
   \ge \, -\frac 12  \left(1+2C_1(\gb)L^{2d}h\right) \left(  \left(\tilde \bE^1_N\right)^{\otimes 2} \left[e^{\sum_{x\in \cB}(\gl(2\gb)-2\gl(\gb))\gd^{(1)}_x\gd^{(2)}_x} \right] -1 \right)\, ,
   \end{multline}
   where the last inequality  it is simply the fact that $1+2 x \ge 1/(1-x)$ pour $x\in [0,1/2]$, so we are assuming 
   $C_1(\gb)L^{2d}h\le 1/2$,  and we have introduced 
 the notation 
 \begin{equation}
 \tilde \bP^1_N(\cdot)\, :=\,  \bP^1_N\left(\cdot\, \Big\vert \cD^{({0})}_N\right)\, .
 \end{equation}
 The notation $\gd^{(i)}_x$, $i=1,2$ denote the set of contact point for the two marginals $\phi^{(i)}$,  $i=1,2$ of the product measure $\left(\tilde \bP^1_N\right)^{\otimes 2}$.
 Now taking advantage of the fact that, because of the restriction to $\cD^{(0)}_N$, under $\left(\tilde \bP^1_N\right)^{\otimes 2}$ we have 
 $\sum_{x\in \cB}\gd^{(1)}_x\gd^{(2)}_x \le \kappa_d$, we deduce that 
   \begin{multline}
   \label{eq:2mom-magic}
  \left(\tilde \bE^1_N\right)^{\otimes 2} \left[e^{\sum_{x\in \cB}(\gl(2\gb)-2\gl(\gb))\gd^{(1)}_x\gd^{(2)}_x} \right]
  \, 
  \le \, 1 -  \left(\tilde \bP^1_N\right)^{\otimes 2} \left( \sum_{x\in \cB}\gd^{(1)}_x\gd^{(2)}_x \ge 1\right) +\\ 
   e^{\gl(2\gb)-2\gl(\gb)}  \left(\tilde \bP^1_N\right)^{\otimes 2} \left(\sum_{x\in \cB}\gd^{(1)}_x\gd^{(2)}_x=1\right)
  +e^{\kappa_d(\gl(2\gb)-2\gl(\gb))} \left(\tilde \bP^1_N\right)^{\otimes 2} \left(\sum_{x\in \cB}\gd^{(1)}_x\gd^{(2)}_x\ge 2\right)
  \\
  =\,  1+\left(e^{\gl(2\gb)-2\gl(\gb)}-1 \right)  \left(\tilde \bP^1_N\right)^{\otimes 2} \left(\sum_{x\in \cB}\gd^{(1)}_x\gd^{(2)}_x=1\right)\\
  +
  \left(e^{\kappa_d(\gl(2\gb)-2\gl(\gb))}-1 \right) \left(\tilde \bP^1_N\right)^{\otimes 2} \left(\sum_{x\in \cB}\gd^{(1)}_x\gd^{(2)}_x\ge 2 \right) 
  .
   \end{multline}
   In the following lemma (whose proof we postpone), we compute a sharp upper bound for the second term in the sum, and show that the third one is negligible.

\medskip

\begin{lemma}
\label{th:lemformagic}
Set $\gep=1/(4\kappa_d)$. There exists $h_0=h_0(\gb, d)$ such that for 
 $h\in (0, h_0(\gb, d))$  and  $\phi_0 \in \cC_N$ we have
 \begin{equation}
 \label{eq:lemformagic1}
   \left(\tilde \bE^1_N\right)^{\otimes 2} \left(\sum_{x\in \cB}\gd^{(1)}_x\gd^{(2)}_x\right)\, \le\, 
   \left( (2\chi(\gb)h)^2+ h^{2+\gep}\right) L^{2d}\,,
\end{equation}
and
\begin{equation}
\label{eq:lemformagic2}
    \left(\tilde \bP^1_N\right)^{\otimes 2} \left(\sum_{x\in \cB}\gd^{(1)}_x\gd^{(2)}_x\ge 2\right) \,\le\,   h^{2+\gep}L^{2d}\, .
 \end{equation}
\end{lemma}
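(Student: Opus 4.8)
The plan is to prove both estimates starting from the key observation that, under $\bP^1_N$, the contact density at each site $x \in \cB$ is $\bE^1_N[\gd^u_x] = 2\chi(\gb)h(1+o(1))$ by Lemma~\ref{th:pertinho}, and that the fields $\phi^{(z)}$ are \emph{independent} across $z$ with exponentially decaying correlations \emph{within} each box. The essential point is that the conditioning on $\cD^{(0)}_N$ (the event that $\phi^{(0)}$ has at most $\kappa_d$ high points) barely changes these first-order statistics, since by \eqref{wop} we have $\bP^1_N(\cD^{(0)}_N)\ge 1-h^2$, so for any event $E$, $\tilde\bP^1_N(E)\le \bP^1_N(E)/(1-h^2)= \bP^1_N(E)(1+O(h^2))$ and $\tilde\bP^1_N(E)\ge \bP^1_N(E)-h^2$.

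For \eqref{eq:lemformagic1}, I would write
\begin{equation}
\left(\tilde \bE^1_N\right)^{\otimes 2}\left(\sum_{x\in\cB}\gd^{(1)}_x\gd^{(2)}_x\right)
=\sum_{x\in\cB}\left(\tilde\bE^1_N[\gd^u_x]\right)^2
\le \sum_{x\in\cB}\left(\frac{\bE^1_N[\gd^u_x]}{1-h^2}\right)^2,
\end{equation}
using the product structure of $\left(\tilde\bP^1_N\right)^{\otimes 2}$ for the first equality. Now $\bE^1_N[\gd^u_x]\le 2\chi(\gb)h + hL^{-1/4}$ by \eqref{pertinho}, so each squared term is at most $(2\chi(\gb)h)^2 + O(h^2 L^{-1/4})$, and summing over the $L^{2d}$ sites of $\cB$ gives exactly $\left((2\chi(\gb)h)^2 + h^{2+\gep}\right)L^{2d}$ for $h$ small, since $L^{-1/4}=h^{1/(4\kappa_d)}=h^{\gep}$ with the choice $\gep=1/(4\kappa_d)$.

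For \eqref{eq:lemformagic2}, I would first pass back to $\bP^1_N$ via $\left(\tilde\bP^1_N\right)^{\otimes 2}(\cdot)\le \left(\bP^1_N\right)^{\otimes 2}(\cdot)(1+O(h^2))$, and then estimate $\left(\bP^1_N\right)^{\otimes 2}\left(\sum_{x\in\cB}\gd^{(1)}_x\gd^{(2)}_x\ge 2\right)$ by a union bound over pairs of (distinct) sites: it is at most $\sum_{x\ne y\in\cB}\left(\bP^1_N\right)^{\otimes 2}(\gd^{(1)}_x\gd^{(1)}_y=1,\ \gd^{(2)}_x\gd^{(2)}_y=1) = \sum_{x\ne y}\left(\bP^1_N(\gd^u_x=\gd^u_y=1)\right)^2$. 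The key sub-step is a two-point bound: $\bP^1_N(\gd^u_x=\gd^u_y=1)\le C h^2$ \emph{uniformly} when $x,y$ are far apart (using decorrelation of high peaks — this is exactly the ``large excursions have mild correlations'' input, quantified via the Gaussian covariance bound $G(x,y)\le C_d|x-y|^{2-d}$), while for $x,y$ close one still has $\bP^1_N(\gd^u_x=\gd^u_y=1)\le \bP^1_N(\gd^u_x=1)\le C h$. Splitting the pair sum into near pairs (at most $C L^{2d}$ of them, each contributing $\le C h^2$) and far pairs (at most $L^{4d}$ of them, each contributing $\le C h^4$, and $L^{4d}h^4 = h^{4-4d/\kappa_d}\ll h^{2+\gep}L^{2d}=h^{2+1/(4\kappa_d)-2d/\kappa_d}$ provided $\kappa_d$ is large enough, which holds since $\kappa_d\ge 4d$) yields the claimed bound $h^{2+\gep}L^{2d}$.

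The main obstacle will be the uniform two-point decorrelation estimate $\bP^1_N(\gd^u_x=\gd^u_y=1)\le Ch^2$ for well-separated $x,y$: this requires controlling a bivariate Gaussian probability of the form $\bP(\gs_d\cN_1\in[u-1-\phi_0(x),u+1-\phi_0(x)],\ \tilde\gs\cN_2\in[u-1-\phi_0(y),u+1-\phi_0(y)])$ where $(\cN_1,\cN_2)$ has correlation $\le C_d|x-y|^{2-d}$, and showing that for $|x-y|$ not too small this factorizes up to a $(1+o(1))$ factor — hence is comparable to $(\bP^1_N(\gd^u_x=1))^2\asymp h^2$. For $|x-y|$ small the correlation can be close to $1$ and one only gets the trivial bound $Ch$, which is why the near-pair count must be kept to $O(L^{2d})$ (a constant number of neighbors per site, up to the box edge-length $L$ used to define $\gL''_N$, all of which is absorbed). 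I would carry out the bivariate estimate by a standard comparison: the density of a correlated Gaussian pair is bounded by a constant times the product of the marginals when the correlation is bounded away from $1$, and here $|x-y|\ge$ some slowly growing threshold forces the correlation to be $o(1)$.
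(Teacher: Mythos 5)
Your treatment of \eqref{eq:lemformagic1} is fine and is essentially the paper's own argument (factorization over the two replicas, removal of the conditioning on $\cD^{(0)}_N$ at cost $1+O(h^2)$, then Lemma~\ref{th:pertinho}); up to the innocuous matter of absorbing the constant in front of the cross term $h^2L^{-1/4}$, there is nothing to object to there.

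The second estimate, however, has a genuine gap, located precisely at the point you flag as the ``main obstacle'': the near pairs. In the union bound each ordered pair $(x,y)$, $x\neq y$, contributes $\bP^1_N(\gd^u_x\gd^u_y=1)^2$. For pairs below your separation threshold you only use the one-point bound $\bP^1_N(\gd^u_x\gd^u_y=1)\le \bP^1_N(\gd^u_x=1)\le Ch$, so each such pair contributes $C^2h^2$; even with a threshold of order one there are of order $L^{2d}$ such pairs, so their total contribution is of order $L^{2d}h^{2}$, which is \emph{larger} than the target $h^{2+\gep}L^{2d}$ for small $h$ --- it is not absorbed. (And if, as your factorization claim for far pairs requires, the threshold diverges as $h\searrow 0$, the near-pair count is larger still.) So the dichotomy ``full factorization far away, trivial bound nearby'' cannot close the argument: one needs a nontrivial improvement on the two-point function at \emph{all} distances, including nearest neighbours.

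The paper's route supplies exactly this missing ingredient, with no near/far splitting. On $\cC_N$ the event $\{\gd^u_x\gd^u_y=1\}$ forces $\phi_1(x)+\phi_1(y)\ge 2u-3$, and for $x\neq y$ one has $\Var(\phi(x)+\phi(y))\le 2(1+p(d))\gs_d^2$, where $p(d)=G(0,e)/G(0,0)\le p(3)<1$ is the return probability of the simple random walk: the off-diagonal covariance of the LFF never reaches $\gs_d^2$, even for adjacent sites. The Gaussian tail then gives the uniform bound $\bP^1_N(\gd^u_x\gd^u_y=1)\le h^{1+b}$ with $b=b(d)>0$ (any $b<(1-p(d))/(1+p(d))$ works), valid for every pair of distinct sites. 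With this single estimate the crude count $L^{4d}$ over all pairs suffices, since $L^{4d}h^{2+2b}=L^{2d}h^{2+2b-2d/\kappa_d}$ and $2b-2d/\kappa_d>\gep$ because $\kappa_d\ge 4d$. Unless you prove a bound of this type for short distances, your proposed proof of \eqref{eq:lemformagic2} does not go through.
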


\medskip

The application of Lemma~\ref{th:lemformagic} to 
\eqref{eq:2mom-magic} yields (for adequate choice of constants)
\begin{multline}
 \left(\tilde \bE^1_N\right)^{\otimes 2} \left[e^{\sum_{x\in \cB}(\gl(2\gb)-2\gl(\gb))\gd^{(1)}_x\gd^{(2)}_x} \right] -1
 \, \ge \\
  \frac 1{2 \chi(\gb)}  \left( (2\chi(\gb)h)^2+C h^{2+\gep}\right) L^{2d} -
 Ch^{2+\gep}\left(e^{\kappa_d(\gl(2\gb)-2\gl(\gb))}-1 \right)   L^{2d}\\
 =\,  
\left( 2\chi(\gb)h^2 -C' h^{2+\gep} \right) L^{2d}
 \, ,
\end{multline}
and going back to \eqref{decoop} we get (for a different constant $C$)
\begin{equation}
 \bbE \log  \bE^{1,h}_N \left[e^{\sum_{x\in \cB}(\gb \go_x-\gl(\gb))\gd^u_x } \, \Big| \, \cD^{({0})}_N \right]
 \, \ge\, \left( \chi(\gb)h^2 -C h^{2+\gep} \right) L^{2d}\, ,
\end{equation}
and we are done with the proof of  Lemma~\ref{pticube}.
\qed

\medskip

\noindent
\emph{Proof of Lemma~\ref{th:lemformagic}.} 
The proof relies essentially on controlling the first and second moment of the sum $\sum_{x\in \cB}\gd^{(1)}_x\gd^{(2)}_x$. 
For what concerns \eqref{eq:lemformagic1} we observe that
\begin{equation}
   \left(\tilde \bE^1_N\right)^{\otimes 2} \left(\sum_{x\in \cB}\gd^{(1)}_x\gd^{(2)}_x\right)\, =\, 
  \sum_{x\in \cB} \left( \tilde \bE^1_N \left[ \gd^u_x\right] \right)^2\, ,
\end{equation}
and that 
\begin{equation}
\begin{split}
\tilde \bE^1_N \left[ \gd^u_x\right] \, =\, 
\frac{ \bE^1_N \left[ \gd^u_x \ind_{\cD_N^{(0)}}\right] }{ \bP^1_N \left( \cD_N^{(0)}\right)}
\, \le \, (1+2h^2) \bE^1_N \left[ \gd^u_x\right]\, &\le \, 
 (1+2h^2) \left( 2 \chi(\gb) h + h L^{-1/4}
 \right)
 \\
 &\le \, 2 \chi(\gb) h + h^{1+b}\, ,
 \end{split}
\end{equation}
and in the last step $b$ is any positive number smaller than $1/(4\kappa_d)$ and the step holds for $h$
smaller that a constant that depends on $\gb$, $d$ and on the choice of $b$.
Therefore the square of this last expression is bounded by 
$(2 \chi(\gb) h)^2+h^{2+b}$, for example $b=1/(5 \kappa_d)$ for $h$ smaller than a constant that depends only on $\gb$ and $d$. The proof of \eqref{eq:lemformagic1} is therefore complete.

\medskip

For what concerns \eqref{eq:lemformagic2} we start like for  \eqref{eq:lemformagic1}, that is we 
get rid of the conditioning with respect to $\cD_N^{(0)}$ and then we proceed with a union bound:
\begin{equation}
\label{eq:m24}
\begin{split}
  \left(\tilde \bP^1_N\right)^{\otimes 2} \left(\sum_{x\in \cB}\gd^{(1)}_x\gd^{(2)}_x\ge 2\right) 
  \,&\le\,
  (1+2h^2)^2  \left(\bP^1_N\right)^{\otimes 2} \left(\sum_{x\in \cB}\gd^{(1)}_x\gd^{(2)}_x\ge 2\right) 
  \\
   \,&\le\,
  (1+2h^2)^2 \sum_{x,y\in \cB: \, x\neq y}  \left(\bP^1_N\right)^{\otimes 2} \left(
  \gd^{(1)}_x\gd^{(2)}_x \gd^{(1)}_y\gd^{(2)}_y=1
  \right)
  \\
   \,&=\,
  (1+2h^2)^2 \sum_{x,y\in \cB: \, x\neq y}  \bP^1_N \left( \gd^u_x \gd^u_y=1\right)^2
  \\
   \,&\le \,
  (1+2h^2)^2 L^{4d} \max_{x\neq y} \bP^1_N \left( \gd^u_x \gd^u_y=1\right)^2\, .
\end{split}
\end{equation}
We proceed by observing that (recall \eqref{eq:phi1})
\begin{equation}
\label{eq:m24.2}
\begin{split}
 \bP^1_N \left( \gd^u_x \gd^u_y=1\right) \, &\le \,
 \bP^1_N \left( \phi_1 (x) \ge u-1- \phi_0(x), \, 
  \phi_1 (y) \ge u-1- \phi_0(x)\right) 
  \\
  & \le \, 
   \bP^1_N \left(  \phi_1 (x) +  \phi_1 (y) \ge 2u-3\right)\, .
   \end{split}
\end{equation}
This is just a Gaussian tail estimate for a centered Gaussian variable with variance 
equal to $\Var(\phi(x)+\phi(y)- \phi_0(x)-\phi_0(y))= \Var(\phi(x)+\phi(y))- 
\Var(\phi_0(x)+\phi_0(y))$ and, for $x \neq 0$, it is therefore smaller than
$\Var(\phi(x)+\phi(y)= 2 (1+p(d)) \gs^2_d$, where $p(d)\in (0,1)$ is the probability that
the the simple random walk, starting from the origin, revisits the origin. The constant 
$p(d)$ has an expression in terms of an integral involving a Bessel function  (see for example \cite[Section~5.9]{cf:Finch}), in particular $p(d)$ decreases to zero as $d$ becomes large: here we will just use
$p(d) \le p(3)= 0.3405\ldots < 7/20$).
Hence we are left with estimating
\begin{equation}
\label{eq:m24.3}
 \bP^1_N \left( \gd^u_x \gd^u_y=1\right) \, \le \,
 \bP\left( \sqrt{2 (1+p(d))} \gs_d \cN \, \ge\, 2u-3\right)\, , 
\end{equation} 
and \eqref{eq:asymptZ}, along with the fact that $u_h$ is asymptotically equivalent to
$\gs_d \sqrt{2 \log(1/h)}$, readily yields that 
\begin{equation}
\label{eq:m24.4}
 \bP^1_N \left( \gd^u_x \gd^u_y=1\right) \, \le \,
h^{1+b}
\, , 
\end{equation} 
for every $b\in (0, (1-p(d))/(1+p(d)))$, and $h$ sufficiently small. Going back to \eqref{eq:m24}we conclude that
\begin{equation}
\label{eq:cond2kappa}
  \left(\tilde \bP^1_N\right)^{\otimes 2} \left(\sum_{x\in \cB}\gd^{(1)}_x\gd^{(2)}_x\ge 2\right)\, \le \, 
  L^{4d} h^{2+2 b}\, =\, L^{2d} h^{2+2b-2d/\kappa_d}
  \,,
 \end{equation}
 for every $b\in (0, (1-p(d))/(1+p(d)))$ and $h$ sufficiently small. Since $(1-x)/(1+x)> (1-2x)$ for every 
 $x>0$ we can choose $b=1-2p(d)>1-7/10=3/10$. Since $2b -2d/\kappa_d >3/5 - 2 d/\kappa_d\ge 1/10> 1/(4 \kappa_d)$ (for the last inequality and the one before the last one, recall that $\kappa_d$ is chosen to be at least 
 $4d$) 
 and this 
 completes the proof of Lemma~\ref{th:lemformagic}.
\qed

 \subsubsection{Step 6: the proof of Lemma \ref{ctunlemme}}
 Of course it is sufficient to prove the result for $B=A\cup\{x\}$, because then the general results can be obtained by adding vertices one by one.
We just have to prove then that 
\begin{equation}
\label{letr}
 \bbE \log \tilde\nu^{A, \overbar{\go}} \left(e^{(\gb \overbar{\go}_x-\gl(\gb)+h)\gd^u_x}\right)\ge -C h^{2-\gga}.
\end{equation}
where $\tilde\nu^{A, \overbar{\go}}$ denotes to the distribution of $(\phi^{(z)})_{z\in \bbZ^d}$ associated with the partition function $\tilde Z_{A, \overbar{\go}}$, cf. \eqref{eq:ZAgo}: note that it depends only on $(\overbar{\go}_y)_{y\in A}$.
We are going to show a stronger statement than \eqref{letr}: in fact we are going to show that  \eqref{letr} holds
also if we average only   
 with respect to $\overbar{\go}_x$  (we use the notation $\bbE_x$) and we freeze the  realization of $(\overbar{\go}_y)_{y\neq x}$.
Setting $p_{A, \overbar{\go}}=p_{A, \overbar{\go}, \gb ,h}:=\tilde\nu^{A, \overbar{\go}}(\gd^u_x=1)$ -- it is a random variable measurable with respect to 
$(\overbar{\go}_y)_{y\in A}$ --
 we  have
\begin{equation}
\begin{split}
\bbE_x \log \tilde\nu^{A, \overbar{\go}} \left[ e^{(\gb \overbar{\go}_x-\gl(\gb)+h)\gd^u_x}\right]
\,& =\,  \bbE_x\log\left(1+p_{A, \overbar{\go}}\left(e^{\gb\overbar{\go}_x-\gl(\gb)+h}-1\right) \right)\,  
\\ 
&\ge\,  - p_{A, \overbar{\go}}^2 \left(e^{\gl(2\gb)-2\gl(\gb)}-1 \right)\, ,
\end{split}
\end{equation}
where in the last step holds if $p_{A, \overbar{\go}} \le 1/2$, so  we can use  the inequality $\log(1+y) \ge y-y^2$ that holds for $y \ge -1/2$: in fact $p_{A, \overbar{\go}} \le 1/2$ yields   $p_{A, \overbar{\go}}(e^{\gb\overbar{\go}-\gl(\gb)}-1) \ge -1/2$ for every value of $\overbar{\go}$ and $\gb$. 

Hence we are reduced to proving that for every $\gb$, $d$ and every $\gga>0$ there exists $h_0$ such that for every 
finite subset $A$ of $\bbZ^d$, every $h\in (0, h_0)$, every $\phi_0 \in \cC_N$  and every realization of $\overbar{\go}$
\begin{equation}
\label{lapreuve}
 %\tilde\nu^A[ \gd^u_x]
 p_{A, \overbar{\go}}\, 
 \le\,   h^{1-\gga/2}\, .
\end{equation}

Recalling 
\eqref{pertinho}, the strategy is  now to show that $\tilde \nu_A$ is not too different from $\bP^1_N$.
And, since $\phi_0$ is fixed, $\phi(x)$ is determined by the realization $\phi^{(z)}$ for at most $2^d$ values of $z$.
We will prove that conditioned to all the rest, the marginal distribution of  $\phi^{(z)}$ has a bounded Radon-Nikodym derivative. Namely (recall that $\vert\overbar{\go}_x \vert \le K_h$, with $K_h$ given in \eqref{eq:Mh}): 

\medskip

\begin{lemma}
\label{derivs}
 For every $z$, $A$ and every measurable subset $B$ of $\bbR^{\bbZ^d}$
 we have that when  $\phi^{(y)}\in \cD^{(y)}_N$ for $y\ne z$,
 \begin{equation}\label{low}
  \tilde\nu^{A, \overbar{\go}}
  \left( \phi^{(z)}\in B \ \big| \ \phi^{(y)}, y \ne z\right)\,\le\,  2 e^{ \gb K_h 3^{d+1} \kappa_d}\bP^1_N
  \left(\phi^{(z)}\in B\right)\, . 
 \end{equation}
\end{lemma}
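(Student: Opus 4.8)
\emph{Proof of Lemma~\ref{derivs} --- plan.} The plan is to fix $z$, condition (as in the statement) on the fields $(\phi^{(y)})_{y\ne z}$ assumed to satisfy $\phi^{(y)}\in\cD^{(y)}_N$ for $y\ne z$, and keep $\phi_0\in\cC_N$ frozen throughout. Since the $\phi^{(z')}$ are independent under $\bP^1_N$, since $\cD_N=\inter_{z'}\cD^{(z')}_N$ with each $\cD^{(z')}_N$ depending only on $\phi^{(z')}$, and since for $x\in\gL'_N$ (in particular for $x\in A$; recall $\tilde Z_{A,\overbar{\go}}$ is only defined for $A\subset\gL'_N$) one has $\phi(x)=\phi_0(x)+\phi_1(x)$, with $\phi_1(x)$ depending on $\phi^{(z)}$ only through $\phi^{(z)}(x)$ and only when $x\in\supp(\phi^{(z)})$, the conditional law of $\phi^{(z)}$ under $\tilde\nu^{A,\overbar{\go}}$ equals the $\bP^1_N$-marginal law of $\phi^{(z)}$ tilted by
\[ g(\phi^{(z)})\,:=\,\exp\Big(\sum_{x\in A\cap\supp(\phi^{(z)})}(\gb\overbar{\go}_x-\gl(\gb)+h)\gd^u_x\Big)\,\ind_{\cD^{(z)}_N}\,, \]
and normalised by $\bE^1_N[g]$. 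I would therefore prove the lemma by bounding $g$ from above by $e^{3^d\kappa_d(\gb K_h+1)}$ and $\bE^1_N[g]$ from below by $\tfrac12 e^{-3^d\kappa_d(\gb K_h+\gl(\gb))}$; both rest on one combinatorial estimate on the number of contacts.

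The estimate I would establish is: on $\cD^{(z)}_N$, under the above conditioning,
\[ \big|\{x\in A\cap\supp(\phi^{(z)}):\,\gd^u_x=1\}\big|\,\le\,3^d\kappa_d\,. \]
Indeed, if $\gd^u_x=1$ with $x\in\supp(\phi^{(z)})$ then $\phi_1(x)=\phi(x)-\phi_0(x)\ge u-1-|\phi_0(x)|\ge u-2$, using $\phi_0\in\cC_N$ and $h$ small so that $L^{-1/3}\le 1$ and $u>2$. The sum $\phi_1(x)$ has at most $2^d$ nonzero terms $\phi^{(z')}(x)$, namely those $z'$ with $x\in\supp(\phi^{(z')})$; since $M=L^2+L>L$, any such $z'$ differs from $z$ by at most one in each coordinate, so the totality of indices $z'$ arising in this way (as $x$ ranges over $\supp(\phi^{(z)})$) lies in a fixed set of at most $3^d$ indices. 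As these at most $2^d$ terms sum to at least $u-2>0$, one of them is $\ge 2^{-d}(u-2)$. Hence every contact point $x$ in $\supp(\phi^{(z)})$ belongs to $\{x':\phi^{(z')}(x')\ge 2^{-d}(u-2)\}$ for one of these $\le 3^d$ indices $z'$; by the definition of $\cD^{(z')}_N$ --- using $\ind_{\cD^{(z)}_N}$ in $g$ for $z'=z$, and the conditioning $\phi^{(y)}\in\cD^{(y)}_N$ for $z'=y\ne z$ --- each of these sets contains at most $\kappa_d$ points, giving the claim.

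With this in hand the rest is bookkeeping. On $\cD^{(z)}_N$ each nonzero term $(\gb\overbar{\go}_x-\gl(\gb)+h)\gd^u_x$ lies in $[-(\gb K_h+\gl(\gb)),\,\gb K_h+1]$ (since $|\overbar{\go}_x|\le K_h$, $\gl(\gb)\ge 0$, $h\le 1$) and there are at most $3^d\kappa_d$ of them, while $g=0$ off $\cD^{(z)}_N$; hence
\[ e^{-3^d\kappa_d(\gb K_h+\gl(\gb))}\,\ind_{\cD^{(z)}_N}\,\le\,g\,\le\,e^{3^d\kappa_d(\gb K_h+1)}\,. \]
Using $\bP^1_N(\cD^{(z)}_N)\ge 1-h^2\ge\tfrac12$ from \eqref{wop} (for $h$ small) gives $\bE^1_N[g]\ge\tfrac12 e^{-3^d\kappa_d(\gb K_h+\gl(\gb))}$, so the conditional density of $\phi^{(z)}$ is at most $2\,e^{3^d\kappa_d(2\gb K_h+\gl(\gb)+1)}$. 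Since $K_h\to\infty$ as $h\searrow 0$, for $h$ small one has $\gl(\gb)+1\le\gb K_h$, whence this is at most $2\,e^{3\cdot 3^d\kappa_d\gb K_h}=2\,e^{3^{d+1}\kappa_d\gb K_h}$, which is \eqref{low}.

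The only genuinely delicate point is the combinatorial contact count of the second paragraph: it combines the geometry of the overlapping supports $\supp(\phi^{(z')})$, the uniform smallness of $\phi_0$ on $\cC_N$, and the cardinality constraints built into the events $\cD^{(z')}_N$. Everything else --- identifying the conditional density, the exponential bounds, and the use of \eqref{wop} --- is routine, and the slack in the exponent ($3^{d+1}$ against the $3^d(\cdot)$ one actually produces) comfortably absorbs the lower order contributions.
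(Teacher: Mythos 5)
Your proof is correct and takes essentially the same route as the paper's: you identify the conditional law as a tilt of the $\bP^1_N$-marginal of $\phi^{(z)}$ (the paper's explicit Radon--Nikodym formula), prove the same bound of $3^d\kappa_d$ on the number of contacts in $A\cap\supp(\phi^{(z)})$ via the neighbouring constraints $\cD^{(z')}_N$, $|z-z'|\le 1$, and the smallness of $\phi_0$ on $\cC_N$, and then conclude with the same exponential bookkeeping, $\bP^1_N(\cD^{(z)}_N)\ge 1/2$, and the divergence of $K_h$ to absorb $\gl(\gb)+1$ into $\gb K_h$. The only differences are cosmetic (you bound numerator and denominator slightly asymmetrically, and you make explicit the use of $\phi_0\in\cC_N$ which the paper leaves implicit), and they lead to the same constant $2e^{3^{d+1}\kappa_d\gb K_h}$.
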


\medskip

We want to apply Lemma~\ref{derivs} to estimate from above  $\tilde\nu^{A,\overbar{\go}}\left( \gd^u_x=1\right)$
and for this we observe that the event 
$\gd^u_x=1$ relies only on the realization of finitely many $\phi^{(z)}$. More precisely
\begin{equation}
\label{eq:forRN2d}
\left\{(\phi^{(z)})_{z \in \bbZ^d}:\, \gd^u_x=1\right\}\, =\, 
\left\{(\phi^{(z)})_{z \in \bbZ^d}:\,\sum_{z\in I_x} \phi^{(z)}(x) \in [u-1, u+1]- \phi_0(x)\right\}
\, ,
\end{equation}
where the set $I_x:=\{ z : x\in \supp (\phi^{(z)})\}$, contains at least one point and at most $2^d$.
An immediate consequence of Lemma~\ref{derivs} is that 
 the Radon-Nykodym derivative with repect to $\bP^1_N$ of the distribution of 
 $(\phi^{(z)})_{z\in I_x}$ under 
$\tilde\nu^{A, \overbar{\go}}$  is bounded above by 
\begin{equation}(2 e^{\gb K_h 3^{d+1} \kappa_d})^{|I_x|}\le(2 e^{ \gb K_h 3^{d+1} \kappa_d})^{2^d}\end{equation}
As the event in \eqref{eq:forRN2d} depends only on $(\phi^{(z)})_{z\in I_x}$ we have in particular (recall that $K_h=o(\log(1/h))$)

% It is therefore possible to write 
% this event as a countable disjoint union of product events, that is events of the form 
% $\{(\phi^{(z)})_{z \in \bbZ^d}:\, \phi^{(z)}(x)\in B_{z}$ for $z \in I_x\}$, for a proper choice of Borelian subsets 
% $B_z$ of $\bbR$ and Lemma~\ref{derivs} implies
% \begin{equation}
%  \tilde\nu^{A, \overbar{\go}}\left( \phi^{(z)}(x)\in B_{z} \textrm{ for } z \in I_x\}
%  \right) \, \le \, (2 e^{\gb K_h 3^{d+1} \kappa_d})^{\vert I_x \vert}\bP^1_N
%  \left( \phi^{(z)}(x)\in B_{z} \textrm{ for } z \in I_x\}
%  \right)\,,
% \end{equation}
% and therefore the same estimate extends to the countable union \eqref{eq:forRN2d}.
% If we  we see that from
% Lemma~\ref{derivs} we obtain that for every $x$
\begin{equation}
   \tilde\nu^{A,\overbar{\go}}\left( \gd^u_x=1\right)\le (2 e^{\gb K_h 3^{d+1} \kappa_d})^{2^d}\bP^1_N\left( \gd^u_x=1\right)\,
   \le  \, (2 e^{\gb K_h 3^{d+1} \kappa_d})^{2^d} 3 \chi(\gb) h \, \le 
   \,  h^{-\gga/2} h\, ,
\end{equation}
where 
we have applied  \eqref{pertinho} in the inequality before the last one. 
\qed

\begin{proof}[Proof of Lemma \ref{derivs}]
% First let us note
% one can prove \eqref{low} with $\bP^1_N(\phi^{(z)}\in B)$ replaced by  $\bP^1_N(\phi^{(z)}\in B \ | \ \cD_N)$,
% which coincides, by independence, with $\bP^1_N(\phi^{(z)}\in \cH \ | \ \cD^{(z)}_N)$. 
% This is simply because $\bP^1_N ( \cD^{(z)}_N)$ has probability close to one as $h$ becomes small, cf. \eqref{wop},
% so 
% \begin{equation}
% \bP^1_N \left(\phi^{(z)}\in B\, \big \vert \, \cD_N^{(z)} \right) \, \le \frac{\bP^1_N \left(\phi^{(z)}\in B\right)}{\bP^1_N\left( \cD_N^{(z)}\right)}\, \le \, \frac{3}{2} \, \bP^1_N \left(\phi^{(z)}\in B\right)\,.
% \end{equation}
We recall that
\begin{equation}
\tilde \cB_z\, =\,  (L^2+L)z+ \lint 1,L^2+2L-1\rint^d\, =\, Mz+ \lint 1,M+L-1\rint^d\, .
\end{equation}
is the support of $\phi^{(z)}$.
Now the conditional Radon-Nikodym derivative given $\phi^{(y)}$, $y\ne z$  is equal to
\begin{equation}
\label{eq:lvmq2}
  \frac{e^{\sum_{x\in \tilde \cB_z \cap A}(\gb \overbar{\go}_x-\gl(\gb)+h)\gd^u_x}\ind_{\cD^{(z)}_N}}{\bE^{1,(z)}_N\left[e^{\sum_{x\in \tilde \cB_z \cap A}(\gb \overbar{\go}_x-\gl(\gb)+h)\gd^u_x}\ind_{\cD^{(z)}_N}\right]}
\end{equation}
where the superscript $(z)$ in the expectation underlines that the average is taken only w.r.t. $\phi^{(z)}$.
We consider the sum over $\tilde \cB_z \cap A$ and not $A$ because terms coming from $x\in (\tilde \cB_z)^{\cc}$ are completely determined by $\phi^{(y)}$, $y\ne z$ and therefore cancel out in the numerator and denominator.

Because of the restriction to $\cD_N$ there are at most $3^d\kappa_d$ contacts in $\tilde \cB_z$.
Indeed if $\gd^u_x=0$ for some $x\in \tilde \cB_z$ then there must exist  $z'$ with $|z-z'|\le 1$ for which $\phi^{(z')}(x) \ge  2^{-d}(u-2)$ (because otherwise   the sum $\sum_{y\in \bbZ^d} \phi^{(y)}$ which contains at most $2^d$ non-zero terms is smaller than $u-2$). We conclude using the constraint $\cD^{(z')}_N$ and the fact that there are $3^d$ choices for $z'$. 
Thus using our uniform bound on $\overbar \go$ we have
\begin{equation}
\label{eq:lvmq3}
\left|\sum_{x\in \tilde \cB_z\cap A}(\gb \overbar{\go}_x-\gl(\gb)+h)\gd^u_x\,\right| \le\,   3^d \kappa_d \left( \gb K_h + |h| +\gl(\beta) \right) \le \frac{1}{2} 3^{d+1} \kappa_d \gb K_h  \, ,
\end{equation}
 where the last inequality is valid for $h$ small enough (recall that $K_h$ diverges when $h\to 0$). 
 Using this bound on the numerator and denominator we obtain
 that the Radon Nykodym is bounded above by 
\begin{equation} 
\frac{1}{\bP^{1}_N\left(\cD^{(z)}_N\right)}  e^{3^{d+1} \kappa_d \gb K_h}\, ,
\end{equation}
 which yields the desired estimates because $\bP^{1}_N(\cD^{(z)}_N)\ge 1/2$.
 \end{proof}

\section{Proof of Theorem \ref{th:paths}: Lower bound on the height} 
\label{sec:LBH}

The main object of this Section is to prove inequality (A) in  \eqref{upperandlower} holds. 
It can be reformulated as follows:

\medskip

\begin{proposition}
\label{lescontactz}
Given $\gep>0$, for all $h\le h_0(\gep)$, there exists $c(\gep,h)>0$
such that almost surely for $N$ sufficiently large (depending on $\go$) we have
 \begin{equation}
 \bP_{N, \go, \gb, h}\left( \sum_{x\in \mathring{\gL}_N} \ind_{\{ |\phi(x)| \le (1-\gep)\sigma_d\sqrt{2\log (1/h)}\}}\ge \gep N^d \right) \,\le\,  e^{-c N^d}\, .
 \end{equation}
\end{proposition}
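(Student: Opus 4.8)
\emph{Proof plan.} Put $v_h:=\sigma_d\sqrt{2\log(1/h)}$, let $\mathcal A_N$ denote the event in the statement, and write $H_N$ for the Hamiltonian $\sum_{x\in\gL_N}(\gb\go_x-\gl(\gb)+h)\gd_x$. The plan is to compare the contribution of $\mathcal A_N$ to the partition function with $Z_{N,\go,\gb,h}$ itself: since $\bP_{N,\go,\gb,h}(\mathcal A_N)=\bE_N[e^{H_N}\ind_{\mathcal A_N}]/Z_{N,\go,\gb,h}$, and since by Proposition~\ref{lloobb} together with a standard bounded–differences concentration of $N^{-d}\log Z_{N,\go,\gb,h}$ around $\tf(\gb,h)$ one has, $\bbP$–a.s.\ and for $N$ large, $Z_{N,\go,\gb,h}\ge\exp\!\big(N^d(\chi(\gb)h^2-\eta)\big)$ for any prescribed $\eta>0$, it is enough to produce a constant $c=c(\gep,h)>0$ such that, with $\bbP$–probability $\ge 1-e^{-cN^d}$, one has $\bE_N[e^{H_N}\ind_{\mathcal A_N}]\le\exp\!\big(N^d(\chi(\gb)h^2-3c)\big)$. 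By the shift of Proposition~\ref{th:superadd} together with Lemma~\ref{th:uh} we may and do replace the $0$–boundary/level–$0$ set–up by boundary data of size $u_h\sim v_h$ with the pinning window placed at level $u_h$; the $O(N^{d-1})$ boundary layer this creates is harmless.

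The proof uses two scales. On a mesoscopic scale, tile $\mathring\gL_N$ by boxes $Q_j$ of side $L$, and call $Q_j$ \emph{tall} if $|\phi|\ge(1-\gep/4)v_h$ at some point of $Q_j$. A box in which $|\phi|<(1-\gep/4)v_h$ everywhere carries no level–$u_h$ contact (for $h$ small, since then $(1-\gep/4)v_h<u_h-1$), hence contributes nothing to $H_N$; so on the event that a fraction $\mu$ of the $Q_j$ are not tall the Hamiltonian is carried by at most $(1-\mu)N^d$ sites. Feeding this set into the universal upper bound (Proposition~\ref{th:ub} and Remark~\ref{rem:forfuture}) — after the union bound over the location of the non–tall boxes, and, crucially, using the \emph{sharp} value $\chi(\gb)h^2$ rather than an upper bound of the form $C_+h^2$ — forces $\bE_N[e^{H_N}\ind_{\{\ge\mu N^d/L^d\text{ non-tall}\}}]\le\exp\!\big(N^d((1-\mu)\chi(\gb)h^2+o(h^2))+\text{entropy}\big)$, and an appropriate choice of $L$ makes this $\le\exp\!\big(N^d(\chi(\gb)h^2-c)\big)$. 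Thus, with overwhelming $\bP_{N,\go,\gb,h}$–probability, all but a small fraction of the $Q_j$ are tall; this refines the corresponding statement of \cite{cf:GL}, the improvement being entirely due to plugging in the sharp localization height $u_h\sim v_h$ and the sharp free energy $\tf(\gb,h)\sim\chi(\gb)h^2$.

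Granting this, assume moreover $\mathcal A_N$: at least $\gep N^d$ sites have $|\phi|\le(1-\gep)v_h$, hence at least $(\gep/2)(N/L)^d$ boxes are tall \emph{and} contain a point with $|\phi|\le(1-\gep)v_h$. Connecting the two points by a nearest–neighbour path inside the box of length $\le dL$ produces an edge $(x,y)$ with $|\phi(x)-\phi(y)|\ge\delta\sqrt{\log(1/h)}$ for some $\delta=\delta(\gep,L,d)>0$, and different boxes give vertex–disjoint such edges. So $\mathcal A_N$, intersected with the good event, implies the existence of a family of $(\gep/2)(N/L)^d$ disjoint ``steep'' edges, and one is left to show that the free field assigns probability at most $e^{-c'N^d\log(1/h)}$ to any prescribed such family: this beats the $e^{O(N^d)}$ entropy of the union bound over families of edges as well as the crude disorder bound $e^{H_N}\le e^{\sum_x(\gb\go_x-\gl(\gb)+h)_+}\le e^{CN^d}$ (valid off a disorder set of probability $e^{-cN^d}$, by Cram\'er), once $h$ is small.

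This last free–field estimate is the delicate point, because the increments $\phi(x)-\phi(y)$ along distinct edges are not independent — the covariance matrix of $m:=(\gep/2)(N/L)^d$ of them has largest eigenvalue of order $\log N$ — so a direct exponential–moment bound cannot produce the needed $e^{-c'N^d\log(1/h)}$. The remedy is a decoupling step: colour the $Q_j$ by the $2^d$ parities of their grid coordinates, so that two boxes of a given colour are at $\ell^\infty$–distance $\ge L$, and condition on $\phi$ outside the interiors of the boxes of one colour. By the spatial Markov property the restrictions of $\phi$ to those interiors become independent, each equal to the harmonic extension $m_j$ of its Gaussian boundary data plus an independent zero–boundary free field $\psi_j$. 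Since $\mathrm{osc}_{Q_j}m_j=O(\sqrt{\log L})$ (maximum principle plus a Gaussian tail, up to an atypical set of probability $h^{\Omega(\delta^2)}$ which is absorbed) whereas $\delta\sqrt{\log(1/h)}\to\infty$, for $h$ small a steep edge in $Q_j$ forces $|\psi_j(x)-\psi_j(y)|\ge\tfrac12\delta\sqrt{\log(1/h)}$ at some edge of $Q_j$, an event of conditional probability $\le dL^d\,e^{-\delta^2\log(1/h)/(8\sigma_d^2)}$ by a Gaussian tail for a variance–$\le 2\sigma_d^2$ increment. Conditional independence then gives a clean binomial bound for ``at least $2^{-d}m$ boxes of the colour carry a steep edge'', which for $h$ small is $\le e^{-c(\gep,h)N^d}$; summing over the $2^d$ colours finishes the proof. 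Keeping $L=L(\gep)$ fixed makes $\delta$ a small fixed constant, and $\delta^2\log(1/h)\to\infty$ still dominates all $L$–dependent constants; the same decoupling can also be used to carry out the mesoscopic step without letting $L$ grow.
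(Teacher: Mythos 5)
Your overall architecture (show that most mesoscopic boxes carry a point at height $\approx u_h$, then convert a density of low sites into a density of steep nearest-neighbour increments and beat the disorder by a Gaussian estimate) is the right one, and your final steep-edge step is close in spirit to the paper's (which uses a single global Dirichlet-energy bound rather than your colouring/decoupling). But the mesoscopic step, which is the heart of the matter, has a genuine gap. First, the reduction "we may replace the $0$-boundary set-up by boundary data $u_h$ with the pinning window at level $u_h$" is not available: Proposition~\ref{th:superadd} is a statement about $\bbE\log$ of a \emph{modified partition function} with re-sampled LFF boundary data, i.e.\ about the free energy; it is not an identity between finite-volume path measures, and Proposition~\ref{lescontactz} is a statement about the fixed measure $\bP_{N,\go,\gb,h}$ with zero boundary and contacts $\gd_x=\ind_{[-1,1]}(\phi(x))$. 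In that model your key geometric claim is simply false: a box in which $|\phi|<(1-\gep/4)\gs_d\sqrt{2\log(1/h)}$ everywhere is not contact-free -- such boxes are exactly where contacts can accumulate -- so the Hamiltonian is not carried by the tall region and the restricted application of Proposition~\ref{th:ub} does not give the factor $(1-\mu)$. The paper exploits the opposite mechanism: if the harmonic average of a fixed-size box is below $(1-\gep/2)u_h$, then (by the Markov property) the centre is a contact with conditional probability at least $h^{1-\gep/2}\gg h$, so a density of such boxes forces a contact fraction incompatible with the upper bound extracted from the sharp asymptotics of Theorem~\ref{th:mainF} by convexity (Lemmas~\ref{lalem} and~\ref{lalem2}).

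Second, even granting a shifted window, your energy/entropy bookkeeping does not close. Deleting a $\mu$-fraction of the volume from the universal bound gains only $\mu\,\chi(\gb)h^2N^d$, while the union bound over the positions of the non-tall boxes costs of order $H(\mu)(N/L)^d$; beating the entropy therefore forces $L\gtrsim h^{-2/d}$. But with such an $L$ the steep-edge increment is only of order $\sqrt{\log(1/h)}/L$, the per-box tail $e^{-\gd^2\log(1/h)/(8\gs_d^2)}$ tends to $1$, and the final binomial estimate yields nothing; conversely, with $L$ fixed the entropy term $\sim(N/L)^d$ swamps the $O(h^2N^d)$ gain. Your closing remark that the decoupling lets you run the mesoscopic step with $L$ fixed cannot be a pure $\bP_N$ estimate either: under $\bP_N$ the field is $O(1)$ and a fixed box is typically \emph{not} tall, so tallness at height $\sqrt{\log(1/h)}$ must be produced by the pinning rewards, i.e.\ some comparison with the free energy or contact fraction is unavoidable. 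This is precisely what the paper's route supplies with $L$ fixed, because the relative contact surplus $h^{1-\gep/2}/h\to\infty$ produced by a low harmonic average does not degenerate as $h\searrow0$, whereas your $\mu\chi(\gb)h^2$ gain does. Once the mesoscopic statement is in place, your last step (disjoint steep edges, crude disorder bound via Cram\'er, concentration of $\log Z$ and the lower bound of Proposition~\ref{lloobb}) can be made to work, but as written the proposal does not prove the proposition.
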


\medskip

For simplicity we redefine from now till the end, that is the entire Sections \ref{sec:LBH} and \ref{sec:UBH},     the value of $u_h$  by keeping only the leading behavior, that is we set 
 \begin{equation}
\label{def2uh}
 u_h\,:= \, \sigma_d\sqrt{2\log (1/h)}\, .
\end{equation}
Recalling \eqref{defu}-\eqref{eq:uh}, we see that the newly defined $u_h$ coincides  to first order with the one used in Section~\ref{sec:LB}.

\medskip
 The proof of Proposition~\ref{lescontactz} is divided into three steps. To each step is devoted one of the subsections that follow.

% The aim is showing that for every $\gep\in (0,1)$ and $\eta\in (0,1)0$ we can find $h_0$ such that for every $h\le h_0$  
% \begin{equation}
% \label{eq:mainLB}
% \lim_{N \to \infty}
% \bbE \left[ \bP_{N, \go, \gb, h} \left( 
%  \sum_{x \in \mathring \gL_N} \ind_{\vert \phi(x) \vert \le (1- \eta) \gs_d \sqrt{2\log (1/h)}}
% \, \ge \, \gep N^d
% \right) \right] \, =\, 0\,  .
% \end{equation}

\subsection{Step 1: Upper bound on  the contact fraction}

The first important ingredient to prove Proposition \ref{lescontactz} is a quantitative control on the contact density under
$\bP_{N, \go, \gb, h}$. We know the contact density  is close to the optimal density $p_h$ when  $h \searrow 0$, and in this limit $p_h\sim 2 \chi(\gb) h$
 (see \eqref{optimiz}). But  we can extract from Theorem \ref{th:mainF} also
 upper and lower  Large Deviations estimates:
 for our arguments we just need  a control from above, and this is what we are going to develop next.
Given $\eta>0$ we set 
\begin{equation}
\label{beneta}
B_{N, \eta}\, :=\, \left\{ \phi\in \bbR^{\mathring\gL_N} \, :\,\frac{1}{N^d}
\sum_{x\in\mathring \gL _N} 
\gd_x
\,
\le  \, \left (2 \chi(\gb) +\eta\right) h \right\}\, .
\end{equation}
Then we have the following: 

\medskip

\begin{lemma}
\label{lalem}
 For every $\eta>0$, there exists constant $c(\beta,\eta)>0$ and $h_0$ such that for all $h\in(0,h_0)$,we have almost-surely for $N$ sufficiently large 
 \begin{equation}
 \bP_{N, \go, \gb, h} \left( B^{\cc}_{N, \eta} \right)\le e^{-c h^2 N^d}.
 \end{equation}
\end{lemma}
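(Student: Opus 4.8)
The plan is to obtain the upper large deviation bound on the contact fraction directly from the upper bound on the free energy, Proposition~\ref{th:final-ub}, via an exponential Chebyshev (tilting) argument. The key observation is that the event $B_{N,\eta}^{\cc}$ forces a contact fraction at least $(2\chi(\gb)+\eta)h$, which is strictly above the optimal density $p_h\sim 2\chi(\gb)h$; since $p\mapsto \max_p\bbE[\log(1+p\xi)]$ behaves like $ph-\tfrac{p^2}{4\chi(\gb)}+O(p^3)$ near the maximizer (cf.~\eqref{laleq}), pushing the density away from $p_h$ by a fixed multiple of $h$ costs an amount of order $h^2$ per site in the free energy. Making this quantitative for a fixed realization of $\go$ (and not just on average) is exactly what the finite-$N$ comparison should provide.

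Concretely, first I would introduce a tilting parameter $\lambda>0$ and write, for any $N$,
\begin{equation}
\bP_{N,\go,\gb,h}(B_{N,\eta}^{\cc})\,=\,\bP_{N,\go,\gb,h}\!\left(\sum_{x\in\mathring\gL_N}\gd_x\ge (2\chi(\gb)+\eta)hN^d\right)\,\le\, e^{-\lambda(2\chi(\gb)+\eta)hN^d}\,\frac{Z_{N,\go,\gb,h+\lambda}}{Z_{N,\go,\gb,h}}\,,
\end{equation}
since multiplying the weight by $e^{\lambda\sum_x\gd_x}$ turns the pinning parameter $h$ into $h+\lambda$. Taking $\bbP$-expectation of the logarithm and invoking \eqref{eq:theF} together with Proposition~\ref{th:final-ub}, one gets, in the limit $N\to\infty$,
\begin{equation}
\limsup_{N\to\infty}\frac{1}{N^d}\log\bP_{N,\go,\gb,h}(B_{N,\eta}^{\cc})\,\le\,-\lambda(2\chi(\gb)+\eta)h+\chi(\gb)(h+\lambda)^2-\tf(\gb,h)+C_\gb(h+\lambda)^3\,,
\end{equation}
where the $\tf(\gb,h)$ term is nonnegative and may be dropped, or bounded below using the lower bound $\tf(\gb,h)\ge \chi(\gb)h^2-C_\gb h^{2+\gep}$ of Proposition~\ref{lloobb}. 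Choosing $\lambda=c_1 h$ with $c_1>0$ a small constant depending on $\eta$ (e.g.\ $\lambda$ of order $\eta h$), the dominant terms become $-c_1\eta h^2+O(h^2)$ times a negative constant, so that the right-hand side is at most $-c(\gb,\eta)h^2$ for $h$ small enough; this yields the claimed $e^{-c h^2 N^d}$ bound for $N$ large.

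The one subtlety is that the statement is almost-sure in $\go$, whereas the argument above controls only $\bbE\log$ of the ratio of partition functions. To upgrade to an almost-sure statement one needs concentration of $\log Z_{N,\go,\gb,h}$ (and of $\log Z_{N,\go,\gb,h+\lambda}$) around its mean; this follows from standard Gaussian-type or bounded-difference concentration inequalities for disordered pinning partition functions — changing one $\go_x$ changes $\log Z$ by a quantity controlled by $\gl(\gb)$, $|\go_x|$ and $h$, which under the assumption \eqref{eq:lambda} gives sub-exponential fluctuations of order $o(N^d)$, in fact of order at most $\sqrt{N^d}$ times a constant — so that the $\bbP$-a.s.\ value of $\frac{1}{N^d}\log Z$ agrees with $\frac{1}{N^d}\bbE\log Z$ up to a vanishing error. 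Combining this with the Chebyshev bound and a Borel--Cantelli argument gives the lemma. I expect the main (though routine) obstacle to be setting up the concentration step cleanly under the general integrability hypothesis on $\go$, since the increments of $\log Z$ are not uniformly bounded; the cut-off of $\go$ at scale $K_h$ used elsewhere in the paper (see \eqref{eq:Mh}) is the natural device to handle this.
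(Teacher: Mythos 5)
Your argument is essentially the paper's proof: the exponential tilting bound you write is the same inequality the paper derives via convexity of $v\mapsto\log Z_{N,\go,\gb,h+v}\left(B^{\cc}_{N,\eta}\right)$ and its derivative at $v=0$, and the conclusion then follows from the sharp free energy asymptotics with a tilt of order $\eta h/\chi(\gb)$; the paper simply invokes the a.s.\ convergence of $N^{-d}\log Z_{N,\go,\gb,h}$ to $\tf(\gb,h)$ instead of redoing concentration, but your cut-off plus Azuma/Borel--Cantelli plan is the standard way to justify that step and is exactly the technique the paper uses later in Lemma~\ref{th:ctrlP}. One correction: your first alternative of "dropping" the $-\tf(\gb,h)$ term because $\tf\ge 0$ does not work, since the right-hand side then retains the positive leading term $\chi(\gb)h^2$ coming from $\chi(\gb)(h+\lambda)^2$ and the bound is never negative; you must use the lower bound $\tf(\gb,h)\ge \chi(\gb)h^2-C_\gb h^{2+\gep}$ of Proposition~\ref{lloobb} (equivalently the asymptotics of Theorem~\ref{th:mainF}) to cancel it, after which $\lambda=\eta h/(2\chi(\gb))$ yields the rate $-\eta^2 h^2/(4\chi(\gb))$, exactly as in the paper.
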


\medskip

\begin{rem}
Actually we only need to show the result for one positive value of  $\eta$, no need to choose it arbitrarily small.
In fact, we are going to apply   Lemma~\ref{lalem} with $\eta = \chi(\gb)$.
Nevertheless,  we feel that a  precise result  gives more intuition about the proof.
\end{rem}

\medskip

\begin{proof}
Given an event $A$, we use the notation  $Z_{N, \go, \gb, h}(A)$ for the partition function restricted to the set $A$ that is
\begin{equation}\label{restriparti}
 Z_{N, \go, \gb, h}(A)= \bP_{N, \go, \gb, h}(A)Z_{N, \go, \gb, h}.
 \end{equation}
For every $v>0$, using the convexity of $v\mapsto \log Z_{N, \go, \gb, h+v}\left(B^{\complement}_{N, \eta}\right)$ and the fact that its derivative at the origin is
\begin{equation}  
\bE_{N, \go, \gb, h}\left[\sum_{x \in \mathring \gL _N} \gd_x  \ \Bigg\vert \ B^{\complement}_{N, \eta} \right]\,\ge\,   (2\chi(\gb)+\eta) h
N^d \, ,\end{equation}
we have
\begin{equation}
\label{eq:forcontr}
\log \frac{ Z_{N, \go, \gb, h+v}}{ Z_{N, \go, \gb, h}}\, \ge\, 
\log \frac{ Z_{N, \go, \gb, h+v}\left(B^{\complement}_{N, \eta}\right)}{ Z_{N, \go, \gb, h}}\, \ge\,  (2\chi(\gb)+\eta) h v 
N^d
%\left\vert \mathring{\gL}_N\right\vert
+\log \bP_{N, \go, \gb, h} \left( B^{\complement}_{N, \eta} \right)\,,
\end{equation}
where we have used that
 \begin{equation}
\bP_{N, \go, \gb, h} \left( B^{\complement}_{N, \eta} \right) \, =\frac { Z_{N, \go, \gb, h}\left(B^{\complement}_{N, \eta}\right)}{ Z_{N, \go, \gb, h}}.
\end{equation}
Dividing by $N^d$ and taking the limit in \eqref{eq:forcontr}
we obtain that $\bbP( \dd \go)$-a.s.
\begin{equation}
\limsup_{N\to \infty} \frac{1}{N^d}\log \bP_{N, \go, \gb, h} \left( B^{\complement}_{N, \eta} \right)\,\le\,  \tf(\beta,h+v)-\tf(\beta,h)- (2\chi(\gb)+\eta) h v \, .
\end{equation}
Choose now $v=b h$ with $b:= \eta/(2 \chi(\gb))$. 
By applying the precise asymptotic results of Theorem \ref{th:mainF}  we obtain
\begin{equation}
\tf(\beta,h+v)-\tf(\beta,h)- (2\chi(\gb)+\eta) h v \stackrel{h\searrow 0} \sim h^2 
\left( \chi(\gb) b^2 - \eta b\right) \, =\,  - h^2  \frac{\eta^2}{4 \chi(\gb)}\, ,
\end{equation}
which completes the proof of Lemma~\ref{lalem}.
\end{proof}

\subsection{Step 2:  lower bound on harmonic averages}

We introduce a length $L\in \bbN$ sufficiently large (how large is specified below) and divide $\gL_N$ into disjoint cubes of edge length $L$. We introduce the disjoint cubes 
$C_L^z:= \lint 0,L \rint^d + z L $ and 
their \emph{ centers} $x_L(z):= zL  +(1, \ldots,1) \lfloor L/2 \rfloor$.
We choose $L$ so that the variance of a zero-boundary free field on $C^z_L$ at the center of the cube is close to variance of the infinite volume field
 (recall \eqref{eq:Gest0})  

 \begin{equation}\label{conditL}
 \sigma^2_{d,L}:=G_{L}(x_L(0),x_L(0))\ge \sigma^2_d(1-\gep/2).
\end{equation}
 We consider only  $z \in \lint 0, \lfloor N/L\rfloor-1 \rint^d=: I_{N,L}$ meaning that we consider cubes for which $C_L^z \subset  \gL_N$. 
 We let $\phibar_L(z)$ denote the harmonic average, at the center $x_L(z)$ of the cube $C_L^z$, 
 of the field  on the boundary of $C_L^z$
\begin{equation}\label{lharmo}
\phibar_L(z)\, :=\, \sum_{x \in \partial C_L^z} p_{L, z}(x) \phi(x)\, .
\end{equation}
where $\partial C_L^z:=  C_L^z \setminus  (\lint 1,L-1 \rint^d + z L) $ is the internal boundary of  $C_L^z$
and $p_{L, z}(x)$ is the probability that a simple symmetric random walk issued from the center $x_L(z)$
hits $\partial C_L^z$ at $x$.

We are going to show that for most $z$'s, $\phibar_L(z)$ lies above height 
$(1-\gep/2)u_h$.
We introduce the event
\begin{equation}
F_{N, \gep}\, :=\, 
\left\{ 
\phi\in \bbR^{\mathring \gL_N}:\,
\left \vert 
\Xi_{N, L}(\phi)
 \right \vert 
 \, \le \, \frac {\gep }{2 L^d} N^d
 \right\}\,, 
\end{equation}
where
\begin{equation}
\Xi_{N, L}(\phi)\, :=\, 
\left\{ z \in I_{N,L}:\,
 \left \vert \phibar_L(z) \right \vert
 \, \le (1-\gep/2) u_h  
 \right\}\, ,
\end{equation}
is a  random subset of $I_{N,L}$.

\begin{lemma}\label{lalem2}
Given $\gep>0$, the exists $h_0(\gep)$ such for all $h\in(0,h_0)$, there exists $c(\gep,h)>0$ for which 
for all $N$ sufficiently large we have
\begin{equation}
\label{eq:cncl1.1}
 \bP_{N, \go, \gb, h} \left( F^{\cc}_{N, \gep} \right)\, \le e^{-c N^d}.
\end{equation} 

\end{lemma}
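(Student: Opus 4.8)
The plan is to show that the harmonic average $\phibar_L(z)$ is, for most cubes $C_L^z$, close to the ``typical height'' $u_h$ — more precisely above $(1-\gep/2)u_h$ — under the pinning measure, by combining the contact‑fraction control of Lemma \ref{lalem} with a Gaussian cost estimate. The key mechanism is this: for a fixed cube $C_L^z$, on the event that the boundary harmonic average $\phibar_L(z)$ is small (say below $(1-\gep/2)u_h$ in absolute value), the conditional law of $\phi(x_L(z))$ given $\cF^\phi_{\partial C_L^z}$ is a Gaussian with mean $\phibar_L(z)$ and variance $\gs^2_{d,L}\ge \gs_d^2(1-\gep/2)$. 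Then the probability (under this conditional Gaussian) that $\phi(x_L(z))$ lies in $[-1,1]$ — i.e.\ that the center is a contact — is \emph{at least} of order $\exp(-((1-\gep/2)u_h+1)^2/(2\gs^2_{d,L}))$, which by the choice \eqref{def2uh} of $u_h$ and \eqref{conditL} is bounded below by $h^{(1-\gep/2)^2/(1-\gep/2)}\cdot h^{o(1)} = h^{1-\gep/2+o(1)}$, a quantity that is \emph{much larger than} $h$ when $h$ is small. In other words, cubes with a small harmonic average are forced to have an \emph{anomalously large} contact probability at their centers.

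The second step turns this per‑cube statement into a global contradiction with Lemma \ref{lalem}. First I would pass to the event $B_{N,\eta}$ of Lemma \ref{lalem} with $\eta=\chi(\gb)$, so that up to a cost $e^{-ch^2N^d}$ we may assume the total contact fraction is at most $3\chi(\gb)h$. Next, on $F^{\cc}_{N,\gep}$ there are at least $\tfrac{\gep}{2L^d}N^d$ cubes $z\in\Xi_{N,L}(\phi)$. For each such cube the argument of the previous paragraph (applied conditionally on $\cF^\phi_{\partial C_L^z}$, using that the events involving different cubes' interiors are conditionally independent given the $\gs$‑field generated by the boundaries, by the spatial Markov property of the LFF) shows that the conditional probability that the center $x_L(z)$ is \emph{not} a contact is at most $1-h^{1-\gep/2+o(1)}$. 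A standard Chernoff/large‑deviations bound over these $\asymp N^d/L^d$ conditionally independent Bernoulli‑like trials then shows that, except on an event of probability $e^{-c'N^d}$ under $\bP_N$, at least a fraction — say half — of the cubes in $\Xi_{N,L}(\phi)$ have their center as a contact, hence there are at least $\tfrac{\gep}{4L^d}\,h^{1-\gep/2+o(1)}\cdot N^d$ contact points coming just from these centers. Reweighting by the pinning Hamiltonian costs at most a multiplicative $e^{C N^d \cdot(\gb K_h+\gl(\gb)+h)\cdot(\text{contact fraction})}$, which is subexponential of the right order, so the same bound (with a slightly worse constant) holds under $\bP_{N,\go,\gb,h}$; this is the standard trick already used via \eqref{restriparti} and the convexity argument in the proof of Lemma \ref{lalem}. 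Finally, since $\tfrac{\gep}{4L^d}h^{1-\gep/2+o(1)} \gg 3\chi(\gb)h$ for $h$ small (the exponent $1-\gep/2$ beating $1$), this contradicts membership in $B_{N,\eta}$. Therefore $\bP_{N,\go,\gb,h}(F^{\cc}_{N,\gep}\cap B_{N,\eta})\le e^{-cN^d}$, and combining with Lemma \ref{lalem} gives \eqref{eq:cncl1.1}.

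A couple of technical points need care. First, the conditional contact probability lower bound must be made uniform in the boundary configuration \emph{on the event} $\{|\phibar_L(z)|\le(1-\gep/2)u_h\}$; this is immediate since the Gaussian density at distance at most $(1-\gep/2)u_h+1$ from the interval $[-1,1]$ is monotone and one just plugs in the worst case. Second, to get genuine conditional independence across cubes one conditions on $\cF^\phi_{\cup_z\partial C_L^z}$; given this $\gs$‑field the fields in the cube interiors are independent zero‑boundary LFFs (shifted by the respective harmonic extensions), and the events $\{x_L(z)\text{ is a contact}\}\cap\{z\in\Xi_{N,L}\}$ decouple — note that membership in $\Xi_{N,L}$ is itself measurable with respect to this boundary $\gs$‑field, which is exactly what makes the Bernoulli‑trial bound legitimate. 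The main obstacle is the last point combined with the quantitative bookkeeping: one must be careful that the almost‑surely‑in‑$\go$ statement survives (handled exactly as in Lemma \ref{lalem}, via Borel–Cantelli once the $\bbE\log Z$ asymptotics are invoked) and that the $h^{o(1)}$ corrections in the contact‑probability exponent, coming from the polynomial prefactor in \eqref{eq:asymptZ} and from \eqref{eq:Gest0}, are genuinely negligible against the gap between the exponents $1-\gep/2$ and $1$ — this forces the choice of $L=L(\gep)$ large but fixed and $h$ small depending on $\gep$, which is precisely the order of quantifiers in the statement.
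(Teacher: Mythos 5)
Your core estimate coincides with the paper's: conditionally on the boundary $\gs$-field, the centers $\phi(x_L(z))$ are independent Gaussians with mean $\phibar_L(z)$ and variance $\gs^2_{d,L}\ge \gs_d^2(1-\gep/2)$, so on $F^{\cc}_{N,\gep}$ the total number of contacts stochastically dominates a binomial with $\ge \gep N^d/(2L^d)$ trials and success probability $\ge h^{1-\gep/2}$, and this is incompatible (via the binomial lower tail) with the contact-fraction bound supplied by Lemma~\ref{lalem} with $\eta=\chi(\gb)$; this is exactly the mechanism of the paper. One slip in the bookkeeping: you cannot ask that ``half of the cubes'' in $\Xi_{N,L}$ carry a center contact --- the conditional success probability is only of order $h^{1-\gep/2}$, so that event is itself exponentially unlikely; what you can ask for (and what your subsequent count $\tfrac{\gep}{4L^d}h^{1-\gep/2+o(1)}N^d$ actually corresponds to) is half of the \emph{expected} number of center contacts, which still dwarfs $3\chi(\gb)hN^d$ and is all that is needed.

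The genuine gap is the transfer of the resulting $\bP_N$-estimate to $\bP_{N,\go,\gb,h}$. The multiplicative ``reweighting cost'' $e^{CN^d(\gb K_h+\gl(\gb)+h)\cdot(\textrm{contact fraction})}$ you invoke is not valid pathwise: $K_h$ is a truncation level, not an almost sure bound on the $\go_x$ (these are unbounded, and $\max_{x\in\gL_N}\go_x$ grows with $N$), and you supply no truncation-error control in the spirit of Lemma~\ref{th:theboundz}; nor does the convexity/tilting computation in the proof of Lemma~\ref{lalem} that you cite do this job, since that argument compares free energies at $h$ and $h+v$ and says nothing about the $\bP_{N,\go,\gb,h}$-probability of a fixed Gaussian event. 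You also never bound the denominator $Z_{N,\go,\gb,h}$ from below, which is indispensable. The paper's solution is the estimate \eqref{lacompute}--\eqref{lacompute2}: take $\bbE$ of the restricted partition function, so that the factors $e^{\gb\go_x-\gl(\gb)}$ average exactly to one and, on $B_{N,\eta}$, only $e^{h\sum_x\gd_x}\le e^{3\chi(\gb)h^2N^d}$ survives, and bound the denominator below by the contribution of trajectories with no contacts, which is sub-exponential in $N^d$ by entropic repulsion; Markov's inequality and Borel--Cantelli then give the almost sure statement. With that substitution (or with an a.s.\ order-statistics control of $\sum_x\gb(\go_x)_+\gd_x$ over at most $3\chi(\gb)hN^d$ contact sites), your argument closes and is essentially the paper's proof.
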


\begin{proof}
 Using Lemma \ref{lalem} we can in fact look only at the probability of
 $F_{N,  \gep}^{\cc} \cap B_{N, \eta}$ for some arbitrary value of $\eta$.
 We choose $\eta=\chi(\beta)$ and simply denote the corresponding event by $B_N$.
 Now we have for any event $A$  and $h>0$  
 \begin{equation}
 \label{lacompute}
 \begin{split}
\bbE \left[\bP_{N, \go, \gb, h} \left( A \cap B_N\right)\right]
 \, &\, = \bbE\left[ \frac{Z_{N, \go, \gb, h}\left(A \cap B_N\right)}
{Z_{N, \go, \gb, h}}\right]\\
&\le \,
 \bbE\left[Z_{N, \go, \gb, h}\left(A \cap B_N\right) \right]/  \bP_N\left(  \forall x \in \mathring \gL _N, \ \phi(x)>1 \right)
 \\
 &=\,  
 \bE_N\left[ e^{h \sum_{x\in \mathring \gL_N}\gd_x} \ind_{A\cap B_N}\right]
 \Big/  \bP_N\left(   \forall x \in \mathring \gL _N, \ \phi(x)>1\right)\, ,
 \end{split}
 \end{equation}
 where we have simply bounded the denominator by the contribution of trajectories with no contacts.
 Given that the probability in the denominator  behaves  sub-exponentially in the volume $N^{d}$ (recall the entropic repulsion estimates discussed below \eqref{eq:theF})
 and that the number of contact is bounded above by $3\chi(\beta)hN^d $, we have for $N$ sufficiently large
 \begin{equation}\label{lacompute2}
  \bbE \left[\bP_{N, \go, \gb, h} \left( A \cap B_N\right)\right]
  \le e^{4\chi(\beta)h^2 N^d}\bP_N\left(A\cap B_N\right)\,. 
 \end{equation}
Hence to prove 
\eqref{eq:cncl1.1}
 it is sufficient to show that
\begin{equation} 
\bP_N\left(F_{N, \eta, \gep}^{\cc}\cap B_N \right) \, \le \, e^{-5\chi(\beta) h^2 N^d}\, ,
\end{equation}
and then apply the Borel-Cantelli via a Markov inequality bound. 
We are in fact going to show that for any realization of $\phibar_L$ for which 
$\vert \Xi_{N,L}(\phi)\vert > \frac{\gep}{2 L^2} N^d,$
 \begin{equation}
 \label{aprovee} 
  \bP_N\left( B_N \, \big| \, \phibar_L(z), \,  z\in I_{N,L}\right)\,\le\,   e^{-5\chi(\beta) h^2 N^d}\, .
 \end{equation}
The Markov property for the LFF states that under $\bE_N$  
the random variables 
$\phi (x_L(z))-\phibar_L(z)$ are IID Gaussian variables with variance $\sigma^2_{d,L}$, which are indendent of $(\phibar_L(z))_{z\in I_{N,L}}$.
In particular, conditioned to $\phibar_L$,  $(\gd_{x_L(z)})_{z\in \lint 0, \lfloor N/L \rfloor -1\rint^d}$ are independent Bernoulli variables with respective parameters 
\begin{equation}
q_z\left(\phibar_L\right)\,:=\, \frac{1}{\sqrt{2\pi}}\int_{-1}^1 \exp\left(-\frac{(t-\phibar_L(z))^2}{2 \sigma^2_{d,L}}\right)\dd t\,.
\end{equation}
Note in particular that the above expression is  decreasing in $|H(z)|$ and using also \eqref{conditL}
we see that  for $z\in \Xi_{N,L}$
\begin{equation}
\begin{split}
q_z\left(\phibar_L\right)\, &\ge\,  \frac{1}{\sqrt{2\pi}}\int_{-1}^1 \exp\left(-\frac{(t-(1-\gep/2)u_h)^2}{2 \sigma^2_{d,L}}\right)\dd t
\\
& = \, P\left( 
\gs_d \cN \in \left[a_\gep  u_h - 1/a_\gep
,
a_\gep  u_h - 1/a_\gep 
\right]\right)
=:\overbar q (h,d,L, \gep)\, ,
\end{split}
\end{equation}
where in the intermediate step we used $a_\gep:=(1-\gep/2)^{1/2}$. Replacing $u_h$ by its value \eqref{def2uh} and using 
\eqref{eq:asymptZ} in a  rough way, we obtain that 
$ \overbar q\ge h^{1-\gep/2}$, at least for $h$ sufficiently small. 

\medskip

Hence in particular, considering within $\mathring{\gL}_N$ only the points of the form $x_L(z)$ with $z\in \Xi_{N,L}$ we obtain that, conditioned to $H$, when  $\vert\Xi_{N,L}\vert> \frac{\gep}{2 L^2} N^d$, the quantity $\sum_{x\in \mathring \gL_N} \gd_x$ stochastically dominates 
a binomial random variable of parameters $\overbar q$ and $\lceil \gep/(2 L^2) \rceil N^d$ (we then omit the integer part for notational convenience.
We have thus 
 \begin{equation}
 \bP_N \left(B_N \ \big| \ \phibar_L (z), z\in I_{N,L} \right)\, \le\,  P\left( \textrm{Bin}( \gep  N^d/ (2L^d), h^{1-\eta/2} ) \, >\,  3 \chi(\gb) hN^d\right),
\end{equation}
where $\textrm{Bin}(n,p)$ denotes a binomial random variable of parameters $n$ and
$p$. By the first inequality in Lemma~\ref{th:binomial}
applied with  $n=  \gep  N^d/ (2L^d)$ and  $\gD= 6 \chi(\gb) h^{\gep/2} L^d / \gep$, it is just a matter of choosing $h$ suitably small to get to 
\begin{equation}
  \bP_N\left(B_N \ | \ \overbar\phi(z), z\in I_{N,L}\right)\,  \le \, 
\exp\left(-  \frac{\gep h^{1- \gep/2}}{2L^d} N^d\right) 
\, , 
\end{equation}
which largely proves \eqref{aprovee}.
\end{proof}

\subsection{Step 3:   positive density of low sites is incompatible 
with
harmonic average lower bound}

Now let us consider the event whose probability we wish to bound  in Proposition \ref{lescontactz} which is 

\begin{equation}
\label{eq:mainE} A_{N,\gep}:=
\left \{ \phi\in \bbR^{\mathring \gL_N}:\,
 \sum_{x\in \mathring \gL_N} \ind_{\{\vert \phi(x) \vert \le (1- \gep)u_h\}}
\, \ge \, \gep N^d
\right\} \, .
\end{equation}
Using Lemma \ref{lalem} and Lemma \ref{lalem2}, it is sufficient to prove that the probability of $A_{N,\gep}\cap B_{N}\cap F_{N,\gep}$ decays exponentially with the volume.
From \eqref{lacompute2} (with $A= A_{N,\gep}\cap F_{N,\gep}$)
we deduce that 
\begin{equation}
 \bbE \left[\bP_{N, \go, \gb, h} \left(A_{N,\gep}\cap B_N\cap F_{N,\gep}\right)\right]\le e^{4\chi(\beta)h^2 N^d}\bP_N\left(A_{N,\gep}\cap B_N\cap F_{N,\gep}\right)\,.
\end{equation}
Hence we are left with showing that 
\begin{equation}
\label{eq:lwst63}
 \bP_N\left(A_{N,\gep}\cap F_{N,\gep}\right)\, \le\,  e^{-5\chi(\beta)h^2 N^d}.
\end{equation}
Now to conclude we need observe that on the event $A_{N,\gep}\cap F_{N,\gep}$
the Hamiltonian $H_N(\phi)$ is anomalously large.
Indeed, on $A_{N,\gep}$ , we have necessarily  
\begin{equation}
 \left\vert\left \{ z \in \lint 0, (N/L)-1 \rint \ : \ \exists x\in zL+ \lint 1, L\rint^d , |\phi(x)|  \le (1- \gep)u_h  \right\}\right\vert\, \ge\,  \gep \frac{3N^d}{4L^d},
\end{equation}
and hence on $A_{N,\gep}\cap F_{N,\gep}$ we have 
\begin{equation}
\label{leqwas}
 \left\vert\left\{ z \ : \ |H(z)|\ge (1-\gep/2)u_h \text{ and } \exists x\in zL+ \lint 1, L\rint^d , |\phi(x)|  \le (1- \gep)u_h  \right\}\right\vert
  \, \ge\,   \gep \frac{N^d}{4L^d} .
 \end{equation}
Note that for each $z$ which satisfies the property in the right-hand side \ of \eqref{leqwas} we can find $x_1\in zL+ \lint 1, L\rint^d$ such that $ |\phi(x_1)|\le (1- \gep)u_h$, and since $\phibar_L(z)$ is a weighted average of the values of $\phi$ on the boundary of $zL+ \lint 0, L\rint^d$,  there exists $x_2$ in the boundary of $zL+ \lint 0, L\rint^d$ such that $|\phi(x_2)|\le (1- \gep/2)u_h$.

\medskip

Considering a path of minimal length (which in this case has to be smaller than $dL$) between $x_1$ and $x_2$, we obtain that there exists a pair of neighbors $y_1\in zL+ \lint 0, L\rint^d$ $y_2\in zL+ \lint 1, L\rint^d$,  such that 
\begin{equation} |\phi(x_1)-\phi(x_2)| \ge \frac{\gep}{2dL} u_h.\end{equation}
%{HERE I AM USING THAT $x_2$ HAS A NEIGHBOR IN  $zL+ \lint 1, L\rint^d$) why this remark? You mean that $x_2$ is not in a corner? Yes or on an edge in dimension $3$ etc... but these vertices do not contribute to the harmonic mean so can be ignored. But maybe just mentionning the subject is going to confuse the reader. This is just something to ensure that we are considering distinct edges (which itself is not that important)}
Given that these edges are necessarily distinct for different values of $z$, we obtain that 
\begin{equation} A_{N,\gep}\cap F_{N,\gep} \subset \left\{ \phi \in  \bbR^{\mathring \gL_N} \ : \  \sumtwo{\{x,y\} \in \gL_N }{x\sim y}(\phi(x)-\phi(y))^2 \ge   \frac{\gep^3 u^2_h}{16d^2L^{d+2}} N^d \right\} \end{equation}
To conclude, we observe that 
${\gep^3 u^2_h}/(16dL^{d+1})$ can be made arbitrarily large by choosing $h$ small, and use that
by \cite[(B.8)]{cf:GL} one can find $C>0$ such that 
\begin{equation}
 \bP_{N} \left( \sumtwo{\{x,y\} \in \gL_N }{x\sim y}(\phi(x)-\phi(y))^2 \ge CN^d \right) \, \le \, e^{-N^d}.
\end{equation}
Therefore \eqref{eq:lwst63} holds and the proof of Proposition~\ref{lescontactz} is therefore complete.
\qed

\section{Proof of Theorem \ref{th:paths}: Upper bound on the height}
\label{sec:UBH}

In this Section we  prove inequality (B) in  \eqref{upperandlower}. We keep the definition in the previous section \eqref{def2uh} for the value of $u_h$.

\medskip

\begin{proposition}
\label{lescontactzB}
Given $\gep>0$, for all $h\le h_0(\gep)$, there exists $c=c(\gep,h)>0$ and $N_0(\gep, h, \go)$, with $N_0=N_0(\gep, h, \go)<\infty$ $\bbP(\dd \go)$-a.s., 
such that $\bbP(\dd \go)$-a.s. $N\ge N_0$  we have 
 \begin{equation}
 \label{eq:DNeps}
 \bP_{N, \go, \gb, h}\left( \cD (N , \gep) \right)\,  \le \, e^{-c N^d} \ \ \ \  \text{ 
with } \ 
  \cD (N , \gep)\, :=\, \left\{
 \sum_{x\in \mathring{\gL}_N} \ind_{\{ |\phi(x)| \ge (1+\gep)u_h\}}\ge \gep N^d \right\}\, .
 \end{equation}
\end{proposition}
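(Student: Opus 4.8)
The plan is to carry out the program announced in the introduction: reduce to a quenched bound on a restricted partition function, and establish it via a mesoscopic control of the contact fraction combined with a multiscale rigidity estimate for the field.

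\textbf{Reduction.} Exactly as in Section~\ref{sec:LBH} (cf.\ \eqref{restriparti}), write $\bP_{N, \go, \gb, h}(\cD(N,\gep))=Z_{N, \go, \gb, h}(\cD(N,\gep))/Z_{N, \go, \gb, h}$. For the denominator I would combine Proposition~\ref{lloobb} with a concentration inequality for $\log Z_{N, \go, \gb, h}$ around its $\bbP$-mean (bounded differences after truncating $\go_x$ as in \eqref{eq:Mh}, handling the truncation error by \eqref{theboundz}; or Gaussian concentration when $\go$ is Gaussian) together with Borel--Cantelli, to get $\log Z_{N, \go, \gb, h}\ge N^d(\chi(\gb)h^2-C_\gb h^{2+\gep_d})$ for $N$ large, $\bbP(\dd\go)$-a.s. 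It then suffices to prove that for $h$ small
\begin{equation*}
\bbE\log Z_{N, \go, \gb, h}(\cD(N,\gep))\,\le\, N^d\big(\chi(\gb)h^2 - c(\gep)h^2\big)
\end{equation*}
with $c(\gep)>0$, and to upgrade this to an almost sure statement (with a smaller $c(\gep)$, still eventually dominating $C_\gb h^{\gep_d}$) by the same concentration plus Borel--Cantelli input.

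\textbf{Mesoscopic decomposition and the universal bound per box.} Fix a mesoscopic scale $M=M(h)$, polynomial in $1/h$ and chosen so that the contact fraction concentrates in $M$-boxes ($M^d$ just a bit larger than $h^{-2}$, as anticipated in the introduction). Partition $\gL_N$ into $M$-boxes $Q_j$ separated by a width-$O(1)$ grid. Since $(1+\gep)u_h>1$ for $h$ small, every site with $|\phi(x)|\ge(1+\gep)u_h$ is a non-contact site, and a pigeonhole argument shows that on $\cD(N,\gep)$ at least $\gep'(N/M)^d$ boxes are \emph{bad}, each containing $\ge\gep'' M^d$ such sites, for $\gep',\gep''$ depending only on $\gep$. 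After a union bound over the families $\mathcal J$ of bad boxes, conditioning on the grid field (spatial Markov property, as exploited in Section~\ref{sec:splitting}) so that the boxes decouple, and applying the universal upper bound of Proposition~\ref{th:ub} inside each box, one is reduced — modulo the grid contribution, controlled as in Section~\ref{sec:LB} via the finite-range decomposition, and modulo the $\go$-fluctuations, controlled by concentration — to a box-by-box estimate of the form $|Q_j|(\chi(\gb)h^2+O(h^3))$ to which, for each bad box, one must add the free-field log-probability (with the realized boundary condition) of producing $\ge\gep'' M^d$ sites above $(1+\gep)u_h$ inside $Q_j$. The whole gain must come from these negative terms, and the mesoscopic contact-fraction control (the analogue, valid down to $M$-boxes, of Lemma~\ref{lalem}/Lemma~\ref{lalem2}) is what rules out the competing scenario in which a bad box still collects its $\asymp 2\chi(\gb)hM^d$ contacts on its low sites: by strict concavity of $p\mapsto\bbE\log(1+p\xi)$ that over-concentration is itself lossy by $\asymp\gep'' M^d h^2$.

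\textbf{The rigidity input and the main obstacle.} If the boundary field of a bad box $Q_j$ is itself not too high, say below $(1+\gep/2)u_h$ on most of $\partial Q_j$, then raising $\ge\gep'' M^d$ sites of $Q_j$ above $(1+\gep)u_h$ forces the field up by $\gep u_h/2$ over a set of capacity $\gtrsim M^{d-2}$, a free-field event of probability at most $\exp(-c(\gep)u_h^2M^{d-2})$; summed over the bad boxes this yields a gain $\gtrsim c'(\gep)u_h^2N^d/M^2$ that in the chosen scale window beats the free-energy slack, the grid errors and the $\go$-fluctuations. The genuine difficulty is that the boundary conditions of the $M$-boxes are \emph{not} a priori controlled: the field can sit uniformly high on a macroscopic region at only sub-exponential, capacity-type cost, so neither the per-box estimate above applies directly, nor (at least for small $d$) can one afford a blind union bound over all $2^{(N/M)^d}$ choices of $\mathcal J$. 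Both issues are resolved by one device, an induction over a geometric tower of scales $M=M_0\ll M_1\ll\cdots\ll M_K\asymp N$: running the analysis of the previous paragraph at scale $M_{j+1}$ shows that only a small density of $M_{j+1}$-blocks can be high, which (i) furnishes the boundary control needed to run it at scale $M_j$, and (ii) forces the bad $M_j$-boxes to organize hierarchically, so the effective combinatorial entropy at the bottom scale is $\exp(o(N^dh^2))$ rather than $2^{(N/M)^d}$. Descending from $K\asymp\log N$ down to $0$ while keeping the accumulated errors — grid fields, overlaps between consecutive scales, and the conditionings on "good boundary" events — summable, uniformly in $N$, is the technically heaviest part of the argument, and is the full multiscale analysis referred to in the introduction.
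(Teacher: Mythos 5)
Your reduction step and the two mesoscopic ingredients you invoke do have counterparts in the paper (the two-sided contact-density control in boxes of volume $\asymp h^{-2}(\log(1/h))^2$ is Lemma~\ref{locadenz}, proved by a fractional-moment refinement, Lemma~\ref{th:forub}, of Proposition~\ref{th:ub}; the transfer of exponential $\bP_N$-estimates to $\bP_{N,\go,\gb,h}$ is Lemma~\ref{th:ctrlP}). The genuine gap is in the step you yourself flag as the heaviest: the top-down induction over scales that is supposed to provide both the boundary control for your per-box capacity estimate and the reduction of the $2^{(N/M)^d}$ entropy of bad-box families. As sketched it does not work. The inductive input ``only a small density of $M_{j+1}$-blocks can be high'' cannot come from free-field estimates: by masslessness/entropic repulsion, pushing the field above $(1+\gep)u_h$ on a macroscopic region costs only $\exp(-O(u_h^2 N^{d-2}))$, so at the top scale (one block, zero boundary data) your inductive statement is essentially Proposition~\ref{lescontactzB} itself, and at intermediate scales it is simply false under $\bP_N$. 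It can therefore only be extracted from the pinning measure, i.e.\ from the contact-density information; but you give no mechanism converting density control, which lives at the fixed scale $\asymp h^{-2}$, into highness control at the larger scales, and ``few high sites in the surrounding $M_{j+1}$-block'' does not control the field on the lower-dimensional set $\partial Q_j$ where the capacity bound needs it. The entropy reduction to $e^{o(h^2N^d)}$ is likewise asserted, not argued. Finally, the bookkeeping ``each high site is a non-contact site, hence costs $\asymp \chi h^2$'' does not survive the passage to $\bbE\log Z_{N,\go,\gb,h}(\cD\cap\cdots)$: after splitting off $\log \bP_N(\cdot)$, Proposition~\ref{th:ub} applied to the conditioned law is insensitive to the restriction (the set of high sites is random), which is precisely the ``rewards may be collected elsewhere, only rare spikes touch the pinning region'' obstruction stressed in the introduction; the wide two-sided window $[\chi h,3\chi h]$ of the density control does not by itself force a free-energy deficit.

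The paper resolves exactly these two obstacles by never controlling the \emph{height} of the coarse field at all. It writes $\phi=\psi+\overbar\psi$ with $\overbar\psi$ the harmonic extension from a level-$0$ conditioning grid, and its multiscale analysis (Proposition~\ref{smalllap}) bounds only the \emph{bi-gradients} of $\overbar\psi$, i.e.\ near-affineness on most level-$0$ boxes --- an event whose failure, unlike lowness, is exponentially unlikely in the volume under $\bP_N$. The gain is then a purely local, conditional free-field estimate (Proposition~\ref{unecas}): on a box where $\overbar\psi$ is nearly affine, having the prescribed contact density in all $6^d$ elementary sub-boxes is incompatible, up to a small conditional probability, with a positive density of sites above $(1+\gep)u_h$ (an affine profile that is too low produces too many contacts, too high too few, intermediate forbids high points). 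Conditional independence of $\psi$ across level-$0$ boxes plus a binomial bound then yields $\bP_N(\cC(N,\gd)\cap\cD(N,\gep))\le e^{-h^{2+\eta}N^d}$ with no union bound over bad families, and Lemma~\ref{th:ctrlP} transfers this to $\bP_{N,\go,\gb,h}$. To salvage your route you would either have to prove your inductive highness statement under the pinning measure at every scale (which looks at least as hard as the proposition) or replace height/boundary control by an affineness-type control of the coarse field, which is in effect the paper's proof.
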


\medskip

The proof will be achieved through various steps of which we give here a quick sketch that could be useful as a guideline:

\medskip

\begin{enumerate}
\item Section~\ref{sec:prepgrid}: \emph{construction of  a hierarchy of (almost) coverings of $\gL_N$}. 
To prove Proposition~\ref{lescontactzB}  we need to exploit the fact that the contact fraction is close to $p_h:= 2h \chi(\beta)$. However, a statement about the global density like Lemma \ref{lalem} is not sufficient. We need and will  show that if we divide $\gL_N$ into boxes (hyper-cubes) -- we will call them \emph{level-$0$ boxes} -- of volume roughly $h^{-2}$, the empirical contact density  in \emph{most of these boxes} is close to $p_h$. Such a statement is only about  level-$0$ boxes, but later on in the proof we will need a full hierarchy of boxes.
In such hierarchy,  level-$0$ is the lowest  level,
 the highest being the one of $\gL_N$ itself. 
 It is more practical to introduce from the start the full hierarchy even if up to
Subsection~\ref{sec:multiscale} only level-$0$ is used. For a part of the argument the level-$0$ boxes will be further  split 
into $6^d$ sub-boxes that will be called \emph{elementary boxes}.
\item Section~\ref{density}: \emph{control of contact density on elementary boxes}.
We will introduce an event $\cC(N, \gd)$, with $\gd>0$ a parameter that is simply going to be chosen proportional to $\gep$ in the end, on which the field has  approximately the correct contact fraction in most of the elementary boxes (that are just a further partition of each of  the 
level-$0$ boxes into a finite number, precisely $6^d$,   boxes). We will show that the $\bP_{N, \go, \gb, h}$ probability 
of the complement of  $\cC(N, \gd)$ is negligible, in the sense that it is $O(\exp(-c N^d))$ for some $c>0$. 
\item Section~\ref{sec:Decomp}: \emph{the main body of the argument}. We write $\phi= \psi + \overbar \psi$, with $\psi$ and $\overbar \psi$ independent. This decomposition is similar to the one made in Section \ref{sec:splitting}:   $\overbar \psi$ has small variance and contains the long range correlations, while 
the field $\psi$ has no correlations when we consider sites that belong to different level-$0$ boxes. We introduce at this stage 
one more event, called $\cB (N, \gd)$ that contains the requirements we demand on the $\overbar \psi$ and such that
the $\bP_{N, \go, \gb, h}$ probability of the complement of $\cB (N, \gd)$ is $O(\exp(-c N^d))$.
 In a nutshell, what we require on $\overbar \psi$ is that it is   
 close to being affine on most of the level-$0$ bowes and we do this by passing through second order discrete derivatives of $\phi$: arguments would have been much more straightforward if we were 
 able to show (with the proper exponential  probability estimate) that the field $\overbar \psi$ is small or
 that $\overbar \psi$ is almost flat, i.e. its gradient is small: the point is that we would like to get rid of 
 $\overbar{\psi}$ and exploit the independence properties of $\psi$.   We will however explain why we cannot prove this. 
Nevertheless, working with \emph{locally affine} $\overbar \psi$ will turn out to be sufficient 
to bound in a satisfactory way the $\bP_{N, \go, \gb, h}$ probability 
of $\cB (N, \gd)\cap \cC (N, \gd)\cap \cD (N, \gep)$. For readability we split this section into two and we 
devote a separate section to the probability estimates on the $\psi$ field inside the level-$0$ boxes.
\item Section~\ref{sec:level-0}:
  \emph{Level-$0$ estimates}. Here we 
  exploit the fact that we are in  $\cC(N , \gd)$, hence with a control on the contact density in most of the level-$0$ boxes, and in
   $\cB (N, \gd)$, hence with a strong control on the long range correlated field $\overbar \psi$ (in most of the level-$0$ boxes).
 We pick any of the good level-$0$ boxes 
  and  develop geometric arguments, coupled with probability bounds, that show that it is improbable 
  that the absolute value of the field goes above level $(1+ \gep)u_h$. 
\item Section~\ref{sec:multiscale}: the multiscale bound.
This section is devoted to bounding the probability of the complement of $\cB (N, \gd)$ and this is the step in which the multiscale construction introduced in Section~\ref{sec:prepgrid} is exploited. 
\end{enumerate}

\subsection{Construction of nested (almost) coverings of $\gL_N$}
\label{sec:prepgrid}

\medskip

Before stating the main result of this subsection we must introduce some notations.  
As announced, 
we want to \emph{cover}  $\gL_N$ with cubic boxes with volume   $c(\log(1/h)/h)^2$, $c$ a positive constant: we will call them \emph{level-0 boxes}. On top of this we want to construct a hierarchy of boxes: for each $i\ge 1$ we want to construct boxes of level $i$ which are obtained by grouping $2^d$ disjoint boxes at level $i-1$ (meaning that boxes at level $i$ of the hierarchy will have volume  asymptotically equivalent to   $2^i c (\log(1/h)/h)^2$).
Finally, on top of this we require that at each level of the hierarchy, some amount of free space is left between the boxes.
We stop the procedure once \emph{we reach}  $\gL_N$. Since we want to cover most of $\gL_N$ with level-0 boxes, in the sense that we want that the fraction of uncovered sites can be made arbitrarily small, it turns out to be more practical for the construction to start from the top level box, that is $\gL_N$,  and work down to when we get to level-$0$.

This structure will be of fundamental importance for the multiscale analysis introduced in Section \ref{sec:multiscale}. We introduce it  beforehand because the statement about the local contact density presented in Section \ref{density} needs to be formulated in terms of level-$0$ boxes in our hierarchy, but we stress that up to Section \ref{sec:multiscale} we are going to need only the level-0 of this construction. 

\medskip

 Set $\tilde N_0:= N$ and  define $\tilde N_j$ recursively for $j\ge 1$. Given $\varkappa\in (0, 1)$   we set also
\begin{equation}\label{larecurr}
\tilde N_j\, :=\, \left \lfloor \frac { \tilde N_{j-1}- 4 \left \lfloor \tilde N_{j-1}^{1-\varkappa} \right\rfloor } 2 \right\rfloor \, ,
\end{equation} 
and 
\begin{equation}
 J(h,N)\,:=\, \inf\left\{ j\ge 1 \ : \ \tilde N_{j+1}\le 7( \log(1/h)/h)^{2/d} \right\}\, .
\end{equation}
The parameter $\varkappa$ can be chosen arbitrarily in $(0,1)$, for example $\varkappa=1/2$, but this time readability is helped if we do not make the constant explicit. 
Note also that, if $h$ is sufficiently small and  $N>  7(\log(1/h)/h)^{2/d}$,  we have
\begin{equation}\label{tildenj}
7(\log(1/h)/h)^{2/d} \le \tilde N_J \le  15( \log(1/h)/h)^{2/d}\, . 
\end{equation}

Then we construct recursively a sequence $(\tilde \cC_j)_{0\le j \le J}$ of collections of  $2^{dj}$ disjoint boxes of edge length $\tilde N_j$ within $\gL_N$.

\begin{itemize}
 \item 
We let $\tilde \cC_j= \{ \tilde B_{j,k},  k\in \lint 1, 2^{dj}\rint \}$ denote the collection of boxes at step $j$.
We initiate with $\tilde B_{0,1}:=\gL_N$  and $\tilde \cC_0:= \{  \gL_N\}$.
 
\item Now for $j\ge 1$, given $B_{j-1,k_0}$ a generic box in $\tilde \cC_{j-1}$
 we introduce the $2^d$ disjoint hypercubes $\{\tilde B_{j,k}:\, 
k\in \lint 2^dk_0+ 1, 2^d(k_0+1)\rint \}$ of edge length $\tilde N_j$ satisfying $\tilde B_{j,k}\subset \tilde B_{j-1,k_0}$.
The cubes $\tilde B_{j,k}$  are placed inside $\tilde B_{j-1,k_0}$
as explained in Fig.~\ref{fig:multisc1}.
If we consider the $d$ hyperplanes that bisect $\tilde B_{j-1,k_0}$ and are orthogonal 
respectively to $e_1, e_2, \ldots , e_d$ we split $\tilde B_{j-1,k_0}$ into $2^d$ chambers and the $2^d$ disjoint hypercubes
we have just introduced are simply placed at the center of each chamber.
\item  We also introduce the \emph{conditioning grid} $\tilde G_j$: for every $\tilde B_{j-1,k_0}$ we build a portion of the grid
by considering the union 
of  the external boundary 
of  $\tilde B_{j-1,k_0}$ and of the portion in  $\tilde B_{j-1,k_0}$ of the $d$ bisecting hyperplanes we have introduced 
at the previous point (they are the boundaries of the $2^d$ chambers).  We repeat the procedure for each one of the $2^{d(j-1)}$ hypercubes $\tilde B_{j-1,k_0}$ and, by considering the union of the sets we have constructed we obtain 
$\tilde G_j$ (that has therefore $2^{d(j-1)}$ connected components). We add to this collection 
$\tilde G_0$ which is $\bbZ\setminus  \mathring{\gL}$: we could have added just the external boundary, but this is notationally convenient. 
In practice, it is more compact to work with the cumulative grid,  we define the cumulative grid for $j=0, 1, \ldots$ by
\begin{equation}
\tilde \bbG_j\,:=\, \bigcup_{j'=0}^j G_{j'}\, .
\end{equation}
\end{itemize}

\begin{rem}
 In the above construction, \emph{bisecting hyperplanes'} and \emph{at the center of each chamber} have to be considered after integer rounding if necessary (so that the hyperplanes  and the boxes $\tilde B_j$ are subsets of $\bbZ^d$).
\end{rem}

   \begin{figure}[htbp]
\centering
\includegraphics[width=14.5 cm]{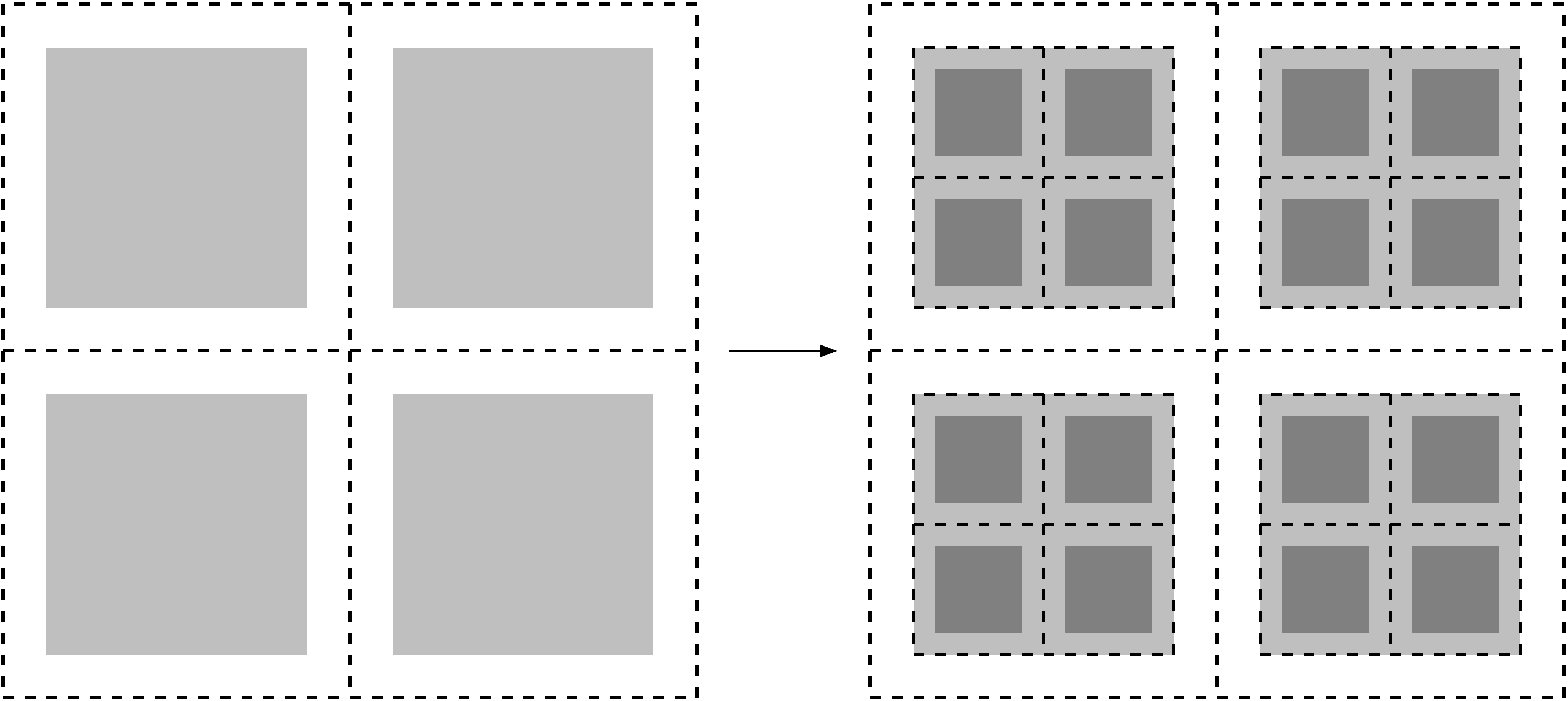}
\vskip-.2cm
\caption{\label{fig:multisc1} 
We draw two bisecting steps of the construction of the nested almost coverings of $\gL_N$, going say from level $j-1$ to $j$ and from level $j$ to $j+1$ in the preliminary version of the construction (that is before inversion: but of course we can view it the other way around ).
The dashed lines in the two figures mark a portion of the conditioning grid, but they are also the boundary of boxes:
notably on the left the largest cube (square in the figure) delimited by dashed line is one of the $\tilde B$ boxes at level $j-1$,
say $\tilde B_{j-1,k_0}$. This box contains $2^d$ boxes of level $j$:
they are denoted by $\tilde B_{j, k}$, $k\in \lint 2^d k_0+1, 2^d k_0+2^d\rint$. This operation is then repeated in the drawing in the right
and if we choose one of this boxes, say $\tilde B_{j, k_1}$, the next level boxes inside $\tilde B_{j, k_1}$ are 
$\tilde B_{j+1, k}$ with $k\in \lint 2^d k_1+1, 2^d k_1+2^d\rint$. 
}
\end{figure}

Now we introduce for $j=0,1, \ldots, J$ the decreasing sequence of sets (that are all unions of hypercubes)
with $\tilde D_0:= \gL_N$ and for $j=1, \ldots, J$
\begin{equation}
\tilde D_j\, :=\, \bigcup_{\tilde B \in \tilde \cC_j} \tilde B\, .
\end{equation}
Now we reverse the order by introducing for $j=0, \ldots , J$
\begin{equation}
N_j\, :=\, \tilde N_{J-j},  \ \ \ \    B_j\, :=\, \tilde B_{J-j}  , \ \ \ \ \ D_j:=  \tilde D_{J-j} , 
 \ \ \ \    G_j\, :=\, \tilde G_{J-j},  \ \ \ \    \bbG_j\, :=\, \tilde \bbG_{J-j} \, .
\end{equation}

Note that with this order reversing,  $N_j$ 
is close to $2^j N_0$ and $N_0$ does not depend much on $N$ (cf.\ \eqref{tildenj}), namely $h^{2/d}N_0/ (\log (1/h))^{2/d}\in [7,15]$. 
Furthermore, with our construction, the fraction of $\gL_N$ which is not covered by level zero boxes $B_{0,k}$ is small.
The content of this paragraph is made more precise and quantitative by:

\medskip

\begin{lemma}
\label{th:encadre}
With the notations specified above, for every $\varkappa>0$ there exists  $C_\varkappa>0$ and $h_0>0$ such that for every $j\ge 1$
and every  $h\in (0, h_0]$
we have 
\begin{equation}
\label{encadre}
 2^{j}N_0\, \le\, N_j \, \le\,  (1+ C_{\varkappa} N^{-\varkappa}_0) 2^{j}N_0\, . 
\end{equation}
In particular we have for  $C'_{\varkappa}=2dC_\varkappa$
\begin{equation}\label{conseq}
 \frac{\left \vert \bigcup_{k=1}^{2^{dJ}}B_{0,k}\right\vert }{N^d} \ge \left(1-   C'_{\varkappa} N_0^{-\varkappa}\right)  \, .
\end{equation}
\end{lemma}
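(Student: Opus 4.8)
The recursion \eqref{larecurr} is, up to the floor functions, the map $t \mapsto (t - 4 t^{1-\varkappa})/2$, and iterating it from $\tilde N_0 = N$ down to $\tilde N_{J-j} = N_j$ should be controlled by comparison with the clean recursion $s_{m} = s_{m-1}/2$ perturbed by a geometric-type error coming from the $t^{1-\varkappa}$ correction. So the first move is to undo the floors: since $\lfloor s \rfloor \geq s-1$ and $4\lfloor t^{1-\varkappa}\rfloor \leq 4 t^{1-\varkappa}$, one gets $\tilde N_j \geq (\tilde N_{j-1} - 4\tilde N_{j-1}^{1-\varkappa})/2 - 1$ and, in the other direction, $\tilde N_j \leq \tilde N_{j-1}/2$ (the subtracted terms are nonnegative once $\tilde N_{j-1}$ is large). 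The upper bound $\tilde N_j \leq \tilde N_{j-1}/2$ immediately gives $\tilde N_{J-j} \leq 2^{j-J} N$, but that is the wrong normalization; the clean statement we want is phrased in terms of $N_0$, so the real content is the two-sided comparison of $N_j$ with $2^j N_0$.

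I would prove \eqref{encadre} by downward induction in the original index, equivalently by induction on $j$ in the reversed index. The lower bound $N_j \geq 2^j N_0$ is the easy half: from $\tilde N_{j-1} \leq \tilde N_{j-2}/2 \leq \cdots$, each step at most halves, so going \emph{backwards} from $N_0 = \tilde N_J$ to $N_j = \tilde N_{J-j}$ each step at least doubles, giving $N_j \geq 2^j N_0$ (one must check the floor in \eqref{larecurr} does not destroy this, but $\lfloor x \rfloor \geq x/2$ for $x\geq 1$ together with the fact that $4\lfloor \tilde N^{1-\varkappa}\rfloor \leq \tilde N/2$ once $\tilde N$ is large handles it cleanly; all $\tilde N_j$ with $j \leq J$ exceed $7(\log(1/h)/h)^{2/d}$ by \eqref{tildenj}, which is large for $h$ small). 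For the upper bound, write $\tilde N_{j-1} \geq 2 \tilde N_j$ from \eqref{larecurr} after dropping floors appropriately, i.e. $\tilde N_{j-1} \leq 2\tilde N_j + 4\tilde N_{j-1}^{1-\varkappa} + O(1)$; since $\tilde N_{j-1} \leq 2^{J-j+1}N \cdot(\text{const})$ crudely, the error term $4\tilde N_{j-1}^{1-\varkappa}$ is summable against the dyadic scaling. Concretely, setting $a_j := N_j / (2^j N_0)$, the recursion translates into $a_{j} \leq a_{j-1}(1 + c N_{j-1}^{-\varkappa})$ for a constant $c$ depending only on $\varkappa$ and $d$; since $N_{j-1} \geq 2^{j-1} N_0$ by the lower bound already proven, $\sum_{j\geq 1} N_{j-1}^{-\varkappa} \leq N_0^{-\varkappa}\sum_{j\geq 1} 2^{-(j-1)\varkappa} = C_\varkappa' N_0^{-\varkappa}$, and taking the product $\prod_j (1 + cN_{j-1}^{-\varkappa}) \leq \exp(c C_\varkappa' N_0^{-\varkappa}) \leq 1 + C_\varkappa N_0^{-\varkappa}$ for $h$ small (so $N_0$ large). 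This gives \eqref{encadre}.

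For the consequence \eqref{conseq}: the $2^{dJ}$ level-zero boxes $B_{0,k}$ are disjoint cubes of edge length $N_0 = \tilde N_J$ sitting inside $\gL_N$, one per "chamber" at the bottom of the hierarchy, so $|\bigcup_k B_{0,k}| = 2^{dJ} N_0^d$. On the other hand, unwinding the construction, at each bisection step level $j$ fits $2^d$ cubes of edge $N_j$ inside a cube of edge $N_{j-1}$; by \eqref{encadre}, $2 N_j \geq N_{j-1}/(1 + C_\varkappa N_0^{-\varkappa})$, so the linear fraction of each edge retained at each of the $J$ steps is at least $(1 + C_\varkappa N_0^{-\varkappa})^{-1}$, hence after $J$ steps the retained linear fraction is at least $(1 + C_\varkappa N_0^{-\varkappa})^{-J}$. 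Raising to the $d$-th power, $2^{dJ}N_0^d / N^d \geq (1 + C_\varkappa N_0^{-\varkappa})^{-dJ}$. Since $J \leq \log_2(N/N_0) \leq C \log N$ while $N_0^{-\varkappa}$ is a fixed (small, $h$-dependent but $N$-independent) quantity — wait, this is where care is needed: $J$ grows with $N$, so $(1+C_\varkappa N_0^{-\varkappa})^{-dJ}$ does \emph{not} stay close to $1$ uniformly in $N$. The correct bookkeeping is instead to track the \emph{additive} uncovered volume directly: at step $j$ the uncovered fraction increase is controlled by the ratio of the "wasted frame" $4\lfloor \tilde N_{j-1}^{1-\varkappa}\rfloor$ to $\tilde N_{j-1}$, i.e. $\lesssim \tilde N_{j-1}^{-\varkappa}$ per linear dimension, hence $\lesssim d\, \tilde N_{j-1}^{-\varkappa}$ of the volume at that scale; summing the geometric series $\sum_{j} \tilde N_{j}^{-\varkappa} \lesssim \tilde N_J^{-\varkappa} = N_0^{-\varkappa}$ (dominated by the smallest, i.e. bottom, scale) yields $1 - |\bigcup_k B_{0,k}|/N^d \leq C'_\varkappa N_0^{-\varkappa}$ with $C'_\varkappa = 2dC_\varkappa$. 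The main obstacle is precisely this last point: one must be careful that although the number of levels $J$ diverges with $N$, the total uncovered fraction is dominated by the coarsest (largest $j$ in the tilde indexing, i.e. \emph{bottom}, smallest) scales because the wasted frame at scale $\tilde N_j$ is a fraction $\sim \tilde N_j^{-\varkappa}$ which is summable in $j$ with the smallest box $N_0$ giving the dominant contribution — so the divergence of $J$ is harmless and \eqref{conseq} holds with an $N$-independent constant. Everything else is routine manipulation of the floor functions and geometric sums.
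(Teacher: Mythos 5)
Your proof of \eqref{encadre} is correct and is essentially the paper's argument: the trivial halving inequality from \eqref{larecurr} gives $N_j\ge 2N_{j-1}$, and the upper bound comes from turning the recursion into $N_j\le 2N_{j-1}(1+cN_{j-1}^{-\varkappa})$ and telescoping, with the product controlled by the geometric sum $\sum_i (2^iN_0)^{-\varkappa}$ via the lower bound just proved (your parenthetical worries about the floor in the lower-bound direction are superfluous, since the floor and the subtracted term only help there). Where you diverge from the paper is \eqref{conseq}. The paper gets it in one line: since $N_J=\tilde N_0=N$ and the $2^{dJ}$ disjoint level-$0$ boxes have total volume $2^{dJ}N_0^d$, the covered fraction equals $(2^JN_0/N_J)^d$, which by \eqref{encadre} at $j=J$ is at least $(1+C_\varkappa N_0^{-\varkappa})^{-d}\ge 1-dC_\varkappa N_0^{-\varkappa}$. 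Your mid-proof alarm ("the retained fraction is $(1+C_\varkappa N_0^{-\varkappa})^{-J}$, and $J$ grows with $N$") was a self-created problem: \eqref{encadre} is a \emph{cumulative} bound that already sums the per-level waste, so it must not be applied once per level; the per-level loss is of size $N_{j-1}^{-\varkappa}$, not $N_0^{-\varkappa}$. Your fallback — adding up the per-scale wasted volume fractions $\lesssim d\,\tilde N_{j-1}^{-\varkappa}$ and summing the geometric series dominated by the bottom scale $N_0$ — is a valid alternative and in effect re-proves the content of \eqref{encadre} in volume form; it is slightly more laborious than the paper's route but has the small virtue of making transparent why the divergence of $J$ is harmless, and it yields a compatible (indeed somewhat better) constant, so the conclusion with $C'_\varkappa=2dC_\varkappa$ stands.
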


\medskip

\begin{proof}
 The lower bound in \eqref{encadre} is immediate since \eqref{larecurr} implies $N_j\ge 2 N_{j-1}$.
 As for the upper bound in \eqref{encadre}, 
by definition $N_0= \tilde N_J\ge 7((1/h) \log (1/h))^{2/d}$ (recall \eqref{tildenj}),   \eqref{larecurr} implies that,  if $h$ is sufficiently small, for every $j\ge 1$
we have 
 \begin{equation}
 N_{j}\le  2N_{j-1}+  10 (N_{j-1})^{1-\varkappa}\, =\,  2 N_{j-1}\left( 1 + 5 N_{j-1}^{-\varkappa} \right)\,.
 \end{equation}
Hence iterating and using the lower bound we obtain
\begin{equation}
 \frac{N_j}{N_0}\le  2^j \prod_{i=1}^j \left(  1+ 5 N^{-\varkappa}_{i-1} \right)
 \le 2^j \prod_{i=0}^\infty \left(  1+ 5 (2^{i}N_0)^{-\varkappa} \right) \, \le \, 2^j \left(1+ C_\varkappa N_0^{-\varkappa}\right)\, ,
\end{equation}
with $C_\varkappa=6/(1-2^{-\varkappa})$. Therefore \eqref{encadre} is proven.
 The inequality \eqref{conseq} comes from the fact that
from \eqref{encadre} we have

\begin{equation}
  \frac{\left\vert \bigcup_{k=1}^{2^{dJ}}B_{0,k}\right\vert}{N^d}
  = \left(\frac{2^{J} N_0}{N_J}\right)^d\ge (1+C_{\varkappa} N^{-\varkappa}_0)^{-d} \, , 
\end{equation}
from which the result follows. 
\end{proof}

\medskip

\noindent Finally, we divide the level-$0$ boxes, whose edge length $N_0$ satisfies (because of \eqref{tildenj}) 
\begin{equation}
\label{eq:0boxsize}
7(\log(1/h)/h)^{2/d} \,\le\,   N_0 \,\le\,   15( \log(1/h)/h)^{2/d}\, ,
\end{equation} 
 into $6^d$ cubes of edge length 
 $\lfloor N_0/6 \rfloor$. More precisely $\overbar{B}_{6^dk+i}$, $i=1,\dots,6^d$ are obtained by dividing $B_{0,k}$, see Fig.~\ref{fig:6d}. We set $\overbar{N}_h=\lfloor N_0/6 \rfloor$ and set $\overbar \cC_0:=\{ \overbar{B}_l, l\in \lint 1, 6^d 2^{dJ}\rint \}$.  Note that \eqref{conseq} is also valid for $\bigcup_l \overbar{B}_{l}$, possibly increasing the value of $C'_\varkappa$.
Moreover $\overbar{N}_h$ depends on $N$ only mildly: in fact  from  \eqref{tildenj} we have
\begin{equation}
\label{eq:barboxsize}
(\log (1/h)/h)^{2/d}\,  \le \,  \overbar{N}_h \le 3 ( \log (1/h)/h)^{2/d} \, .
\end{equation}
We refer to $\overbar{B}_l$ as an \textit{elementary box} and to $B_{0,k}$ as a  level-$0$ box.

We let 
\begin{equation}
K\, := \, 2^{dJ} \ \textrm{ and } \overbar{K}\,:=\,  6^d 2^{dJ}\, , 
\end{equation}
denote the number of  as level-$0$ boxes and elementary boxes respectively.

\subsection{Control of the contact density in elementary boxes}
\label{density}

For $k\in \lint 1, \overbar{K} \rint$ let us define $\zeta(k)$ to be the contact fraction inside $\overbar{B}_{k}$.
\begin{equation}\label{lazeta}
 \zeta(k):= \frac 1{\overbar{N}_h^{d}}\sum_{x\in \overbar{B}_{k}} \ind_{[-1,1]}(\phi(x))\, ,
 \end{equation}
and let $\cC(N,\gd)$ be the event that most boxes have a contact fraction \emph{reasonably} close to the optimal value $p_h=2\chi(\beta) h$
\begin{equation}
\label{eq:cCset}
\cC(N,\gd):=\left\{ \#\left\{ k\in \lint 1, \overbar{K}\rint  \ : \ \frac{\zeta(k)}{\chi(\beta) h}\not \in [1, 3]  \right\} \,\le\,   \gd \overbar{K} \right\}\, . 
\end{equation}
In the end $\gd$ will be chosen proportional to $\gep$. 
Our first step is to prove that $\cC(N,\gd)$ has  probability close to one.

\medskip

\begin{lemma}\label{locadenz}
Choose an arbitrary value of $\gd>0$. Then there exists $h_0>0$ such that for $h \in (0, h_0]$ there exists $c=c(h)>0$
and $N_{h}>0$ such that for  $h \in (0, h_0]$ and $N\ge N_{h}$ we have
\begin{equation}
 \label{eq:step1p}
\bbE \left[ \bP_{N, \go, \gb, h} \left( \cC(N,\gd)^{\cc} \right) \right]\, = e^{-c(h)\gd N^d}.
\end{equation}
\end{lemma}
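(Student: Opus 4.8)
The plan is to prove Lemma~\ref{locadenz} by combining the upper bound on the global contact fraction (Lemma~\ref{lalem}) with a large-deviations estimate for the contact density inside a single elementary box, and then using the (approximate) independence of the fields living on well-separated boxes.

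\textbf{Reduction to the annealed bound.} First I would translate the quenched probability into an annealed estimate, exactly as in the proof of Lemma~\ref{lalem2}: restricting to the event $B_N=B_{N,\chi(\gb)}$ of Lemma~\ref{lalem} (whose complement has $\bP_{N,\go,\gb,h}$-probability at most $e^{-ch^2N^d}$), one bounds the partition function below by the entropic-repulsion contribution of contact-free configurations, and above trivially on $B_N$ where the number of contacts is at most $3\chi(\gb)hN^d$. This yields, for $N$ large,
\begin{equation}
\bbE\left[\bP_{N,\go,\gb,h}\left(\cC(N,\gd)^{\cc}\cap B_N\right)\right]\, \le\, e^{4\chi(\gb)h^2N^d}\,\bP_N\left(\cC(N,\gd)^{\cc}\right)\, ,
\end{equation}
so it suffices to show that under the \emph{free} LFF law $\bP_N$, the event $\cC(N,\gd)^{\cc}$ has probability $\le e^{-c(h)\gd N^d}$ for some $c(h)>0$ (and then Borel--Cantelli via Markov closes the a.s.\ statement).

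\textbf{Single-box large deviation.} The key point is to estimate, for a fixed elementary box $\overbar B_k$, the probability that $\zeta(k)/(\chi(\gb)h)\notin[1,3]$. Here I would use the finite-range decomposition $\phi=\phi_0+\sum_z\phi^{(z)}$ of Proposition~\ref{decomp} (or a direct spatial-Markov decomposition at scale $\overbar N_h$): write $\phi$ on $\overbar B_k$ as an independent field plus a small-variance ``base'' field $\phi_0$ which is uniformly small on a high-probability event. On that event, for each site $x$ the conditional probability of a contact is $\bP(\gs_d\cN\in[-1-\phi_0(x),1-\phi_0(x)])$, which by Lemma~\ref{th:uh}-type estimates (applied at $u=0$ rather than $u=u_h$; here one uses that the relevant Gaussian density is close to $c_d$, so the single-site contact probability under the \emph{free} field is $\Theta(1)$, \emph{not} $\Theta(h)$). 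This is the subtlety: under $\bP_N$ the contact density is of order one, so $\zeta(k)\gg\chi(\gb)h$ typically, and the event ``$\zeta(k)<\chi(\gb)h$'' is the rare one. Its probability is controlled by a union bound: having contact fraction below $\chi(\gb)h$ inside a box of volume $\overbar N_h^d$ forces at least $(1-\chi(\gb)h)\overbar N_h^d$ sites with $|\phi(x)|>1$, and by the entropic-repulsion/Gaussian computation this costs $\exp(-c\,\overbar N_h^d)$ for some $c>0$ depending only on $d$; since $\overbar N_h^d\asymp(\log(1/h)/h)^2$, this is $\le\exp(-c'h^{-2}\log(1/h)^2)$, which is $\le e^{-h^2\overbar N_h^d}$ with huge room to spare. (The event $\zeta(k)>3\chi(\gb)h$ is not rare, so the constraint in $\cC(N,\gd)$ is only ``$\in[1,3]$'' to make the \emph{lower} deviation the binding one — I should double-check the direction, but the argument is symmetric in spirit: whichever side is the small-probability side gets the $\exp(-c\overbar N_h^d)$ bound.)

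\textbf{From one box to all boxes.} Having $p:=\bP_N(\zeta(k)/(\chi(\gb)h)\notin[1,3])\le e^{-c\overbar N_h^d}$, I would bound
\begin{equation}
\bP_N\left(\cC(N,\gd)^{\cc}\right)\, \le\, \binom{\overbar K}{\gd\overbar K}\,\Big(\sup_{\text{configs of bad boxes}}\bP_N(\text{those }\gd\overbar K\text{ boxes are all bad})\Big)\, .
\end{equation}
Because distinct elementary boxes are separated by free space of width $\Omega(N_0^{1-\varkappa})$ (from the grid construction of Section~\ref{sec:prepgrid}) and the LFF has exponentially decaying correlations after conditioning on the separating grid, the events ``box $k$ is bad'' are, up to a harmless conditioning on the base field $\phi_0$, essentially independent across boxes; more precisely one conditions on $\phi$ restricted to the conditioning grid $\bbG_0$ and on $\phi_0$, under which the boxes are genuinely independent, and the per-box bad probability is still $\le 2e^{-c\overbar N_h^d}$. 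Hence the joint probability for $\gd\overbar K$ prescribed bad boxes is $\le(2e^{-c\overbar N_h^d})^{\gd\overbar K}$, and the binomial factor is $\le 2^{\overbar K}$. Since $\overbar N_h^d\to\infty$ as $h\to 0$, for $h$ small we get $\bP_N(\cC(N,\gd)^{\cc})\le(4e^{-c\overbar N_h^d})^{\gd\overbar K}\le e^{-c(h)\gd N^d}$ with $c(h)\asymp c\,\overbar N_h^d/N_0^d\cdot\#\{\text{boxes}\}/N^d$, i.e.\ a genuine (though $h$-dependent) positive constant times $N^d$, which comfortably beats the $e^{4\chi(\gb)h^2N^d}$ prefactor for $h$ small.

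\textbf{Main obstacle.} The delicate part is handling the base field $\phi_0$ and the grid conditioning cleanly enough to get \emph{genuine} independence of the bad-box events while keeping the per-box estimate essentially unchanged — i.e.\ verifying that conditioning on $\phi_0\in\cC_N$ (or its analogue here) and on the values of $\phi$ on the separating grid perturbs each box's contact-fraction law only negligibly, so that the single-box large-deviation bound survives. This is exactly the kind of ``decoupling + small perturbation'' estimate that Proposition~\ref{decomp} and Lemmas~\ref{th:pertinho}, \ref{derivs} are designed to supply, and I expect the proof to invoke them (or their straightforward analogues at scale $\overbar N_h$ and at height $u=0$) rather than to redo the work.
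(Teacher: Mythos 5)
Your reduction to a free-field estimate is where the argument breaks down. The annealed reduction in the style of \eqref{lacompute2} only helps if the target event, intersected with the small-contact-fraction event $B_{N,\eta}$ of Lemma \ref{lalem}, has $\bP_N$-probability decaying exponentially in $N^d$ with a rate that beats $4\chi(\gb)h^2$. For $\cC(N,\gd)^{\cc}$ this is simply false: because the LFF is massless, shifting the whole field to height of order $\sqrt{\log (1/h)}$ costs only $e^{-o(N^d)}$ (this is the entropic repulsion estimate $\log \bP_N(\phi(x)>1 \text{ for all } x)=o(N^d)$ recalled below \eqref{eq:theF}), and on such configurations every elementary box has $\zeta(k)$ far below $\chi(\gb)h$ while the global constraint of $B_{N,\eta}$ is satisfied; hence $\bP_N\left(\cC(N,\gd)^{\cc}\cap B_{N,\eta}\right)\ge e^{-o(N^d)}$. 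For the same reason your single-box claim, that forcing $(1-\chi(\gb)h)\overbar{N}_h^d$ sites to have $|\phi(x)|>1$ costs $e^{-c\overbar{N}_h^d}$, is incorrect: that event has probability $e^{-o(\overbar{N}_h^d)}$. The opposite deviation, $\zeta(k)>3\chi(\gb)h$, is in fact \emph{typical} under $\bP_N$ (the free-field contact density is $\approx c_d$, of order one), so neither side of the constraint is exponentially costly for the Gaussian measure alone. The true penalty for an atypical local contact fraction is energetic, not entropic: a box whose density deviates by $\chi(\gb)h$ loses free energy of order $h^2$ per site (the same variance mechanism as in Proposition \ref{th:ub}), and this can only be detected at the level of the partition function under the disordered measure, never at the level of $\bP_N$.

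This is precisely how the paper proceeds: it bounds the fractional moment $\bbE\left[Z_{N,\go,\gb,h}\left(\cC(N,\gd)^{\cc}\right)^{\theta_h}\right]$ with $\theta_h=1/\log(1/h)$, decomposes over the set $I$ of bad elementary boxes, and applies Lemma \ref{th:forub}: the unconstrained bound \eqref{padrao} gives $\chi(\gb)h^2$ per site in the good region, while the constrained bound \eqref{eq:forub2} gives a deficit of order $h^2$ per site inside each box whose contact fraction violates \eqref{bbloc}-type bounds, the boxes being treated one at a time by conditioning (the measure $\mu$) so that no independence or decoupling of the Gaussian field across boxes is ever needed. Since each elementary box has volume $\overbar{N}_h^d\gg h^{-2}$ and there are at least $\gd\overbar{K}$ bad boxes, the total deficit is $c\,\gd\,\chi(\gb)h^2N^d$ below $\tf(\gb,h)$, and \eqref{eq:step1p} follows by Markov's inequality. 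The missing ingredient in your proposal is therefore this fractional-moment (change-of-measure) estimate converting the local density constraint into a free-energy cost; no refinement of the finite-range decomposition, of the grid conditioning, or of the per-box Gaussian large-deviation bound can rescue the reduction to $\bP_N$.
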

\medskip

\begin{rem}
 The result would be valid also replacing $[1,3]$ 
 by an arbitrarily  small interval centered at $2$, but this is useless for the rest of the proof and, with our choice, we   avoid introducing one more parameter.
\end{rem}

\medskip

\begin{proof}
 We are going   to prove an upper bound on $Z_{N, \go, \gb, h}\left( \cC(N,\gd)^{\cc} \right)$ which shows that it is typically much smaller than $Z_{N, \go, \gb, h}$.
 More precisely, by the Markov inequality it is sufficient to show that for every $h>0$ small there exists $\theta\in(0,1)$  such that  
 \begin{equation}
 \limsup_{N\to \infty} \frac{1}{\theta N^d}\log \bbE\left[Z_{N, \go, \gb, h}\left( \cC(N,\gd)^{\cc} \right)^{\theta}\right]<  \tf(\gb,h).
 \end{equation}
 We decompose $Z_{N, \go, \gb, h}\left( \cC(N,\gd)^{\cc} \right)$ according to the position of atypical density blocks.
 Given $I\subset \lint 1,\overbar{K}\rint$ we set 
  \begin{equation}
 \cA_I\, :=\,\left\{ \left\{ k \ : \ \left\vert \zeta(k)- 2 \chi(\beta) h \right\vert\, \le\, {\chi(\gb)} % \gd 
 h \right\}=I  \right\}\, ,
   \end{equation}
   so $ \cC(N,\gd)^{\cc}$ coincides with the union of the events $\cA_I$ with $\vert I \vert \ge \gd \overbar{K}$. 
 We then observe that 
  for $\theta \in (0,1]$
 \begin{equation}
  Z_{N, \go, \gb, h}\left( \cC(N,\gd)^{\cc} \right)^{\theta}
  \le \sum_{\{ I \ :  \ |I| \ge \gd \overbar{K}\}}  Z_{N, \go, \gb, h}\left( \cA_I \right)^{\theta}\, ,
 \end{equation}
 where we have used the elementary inequality $(\sum_j a_j)^\theta \le \sum_j a_j^\theta$ that holds for $a_j>0$. 
Therefore 
\begin{equation}
  \bbE\left[ Z_{N, \go, \gb, h}\left( \cC(N,\gd)^{\cc} \right)^{\theta}\right] \le 2^{\overbar{K}} \max_{|I| \ge \gd \overbar{K}  } \bbE \left[ Z_{N, \go, \gb, h}\left( \cA_I \right)^{\theta} \right].
\end{equation}
Now we can use the fact that $\overbar{K}\le (N/\overbar{N}_h)^d$ (simply because the boxes are disjoint), and we obtain that 
\begin{equation}
 \frac{1}{\theta N^d}  \log\bbE\left[ Z_{N, \go, \gb, h}\left( \cC(N,\gd)^{\cc} \right)^{\theta}\right]\le  \frac{\log 2}{\theta \overbar{N}^{d}_h }+      \frac{1}{\theta N^d} \max_{|I| \ge\gd  \overbar{K}  } \log \bbE \left[ Z_{N, \go, \gb, h}\left( \cA_I \right)^{\theta} \right].
\end{equation}
If one sets $\theta=\theta_h:= (\log 1/h)^{-1}$, the first term on the right-hand side \ is $O( h^2 (\log 1/h)^{-1})$. 
Hence we can conclude if we show that there exists $c>0$ such that  for $h$ sufficiently small
\begin{equation}
\label{aprouver0}
 \frac{1}{\theta_h N^d} \max_{|I| \ge  \gd \overbar{K}  } \log \bbE \left[ Z_{N, \go, \gb, h}\left( \cA_I \right)^{\theta_h} \right] \le  \chi(\beta)(1- c \gd)h^2\, .
\end{equation}
% after taking $\log$ and dividing by $\theta N^d$, the following additive term
% \begin{equation}
%  \theta^{-1} M^{-d} \log 2\, =\,  \theta^{-1} h^2 (\ell(h))^{-2}\log 2\, .
%  \end{equation}
%  Taking $\theta= \ell(h)^{-1}$ we see that this term is  $o(h^2)$. Hence we only have to find a good bound for $\bbE \left[ Z_{N, \go, \gb, h}\left( \cA_I \right)^{\theta} \right]$.
For this we use the following technical lemma, whose proof is postponed till the end of the proof we are developing.
 \medskip

 \begin{lemma} 
 \label{th:forub}
  Recall that
 $\xi_x=e^{h+\gb\go_x-\gl(\gb)}-1$.
  For any measure on  $\bP_{\gL}$ on $\{0,1\}^{\gL}$ and every $\theta \in (0, 1)$ 
we have for every $h\ge 0$ that
\begin{equation}
\label{eq:forub1}
\bbE\left[ \bE_{\gL}\left[\prod_{\{x \ : \ \gd_x=1\}} (1+ \xi_x) \right] ^\theta\right] \le \left(\max_{p\in(0,1]}\bbE\left[ (1+p\xi)^{\theta} \right]\right)^{\vert \gL\vert}.
\end{equation}
In the limit where $\theta$ and $h$ tend to zero we have
\begin{equation}
\label{padrao}
 \frac{1}{\theta}\max_{p\in(0,1)} \log \bbE\left[ (1+p\xi)^{\theta} \right]= \chi(\gb)h^2+O(h^2\theta)+O(h^3).
\end{equation}
If furthermore  $\bP_{\gL}$ is such that for $\eta>0$ and  $h>0$
\begin{equation}\label{bbloc}
\bP_{\gL} \left( 
\left| \frac{1}{|\gL|} \sum_{x\in\gL}\gd_x-2\chi(\gb)h\right|\ge \eta h \right) \,=\, 1 \,,
\end{equation}
then there exist  $C=C_\gb>0$, $h_0=h_{0,\gb , \eta}$ and  $\theta_0=\theta_{0,\gb}$  such that for all $h\in (0, h_0]$ and $\theta\in (0, \theta_0]$ and for all $\gL$
\begin{equation}
\label{eq:forub2}
\frac{1}{\theta |\gL|}\log \bbE\left[ \bE_{\gL}\left[\prod_{\{x \ : \ \gd_x=1\}}(1+ \xi_x) \right] ^\theta\right]\,  \le\,  
\chi(\gb)\left(1- \frac{\eta^2}{4 \chi(\gb)^2} \right)h^2 + \frac{\log 2}{\theta |\gL|} + C(h^2\theta+ h^3) \, .
\end{equation}  
 \end{lemma}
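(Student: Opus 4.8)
The three displays of Lemma~\ref{th:forub} are proved in order, the first being the substantive one and the other two essentially consequences of it. For \eqref{eq:forub1} the plan is to mimic the change-of-measure argument behind Proposition~\ref{th:ub}, with the logarithm replaced by $t\mapsto t^\theta$. Let $p_\star\in[0,1]$ be a maximizer of the concave continuous map $p\mapsto\bbE[(1+p\xi)^\theta]$ on $[0,1]$, put $M:=\bbE[(1+p_\star\xi)^\theta]\ge1$ and $W:=\prod_{x\in\gL}(1+p_\star\xi_x)>0$ a.s.\ (all the expectations below are finite by \eqref{eq:lambda} and $\theta<1$). Concavity of $t\mapsto t^\theta$ gives the tangent-line bound $t^\theta\le s^\theta+\theta s^{\theta-1}(t-s)$ for $s,t>0$; applying it with $s=W$, $t=\bE_{\gL}[\prod_{\{x:\gd_x=1\}}(1+\xi_x)]$, taking $\bbE$, and using the independence of the $\xi_x$ together with Tonelli yields
\[
\bbE\Big[\bE_{\gL}\big[\textstyle\prod_{\{x:\gd_x=1\}}(1+\xi_x)\big]^\theta\Big]\ \le\ M^{|\gL|}+\theta\Big(\bE_{\gL}\big[\,a^{|\gd|}\,b^{\,|\gL|-|\gd|}\,\big]-M^{|\gL|}\Big),
\]
where $|\gd|:=\#\{x:\gd_x=1\}$, $a:=\bbE[(1+p_\star\xi)^{\theta-1}(1+\xi)]$ and $b:=\bbE[(1+p_\star\xi)^{\theta-1}]$. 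Writing $1+\xi=(1+p_\star\xi)+(1-p_\star)\xi$ and $1=(1+p_\star\xi)-p_\star\xi$ one gets $a=M+(1-p_\star)\Psi$ and $b=M-p_\star\Psi$ with $\Psi:=\bbE[\xi(1+p_\star\xi)^{\theta-1}]$. The optimality of $p_\star$ forces $\Psi=0$ when $p_\star\in(0,1)$ (first-order condition), $\Psi\ge0$ when $p_\star=1$, and $\Psi\le0$ when $p_\star=0$ (which, since $\bbE\xi=e^h-1$, can only occur for $h\le0$); in every case $a\le M$ and $b\le M$, so the correction term is $\le0$ and \eqref{eq:forub1} follows.

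For \eqref{padrao} I would Taylor-expand $(1+p\xi)^\theta=1+\theta\log(1+p\xi)+O(\theta^2\log^2(1+p\xi))$ and control the moments of $\log(1+p\xi)$ by those of $\go$ exactly as in the passage leading to \eqref{laleq} (here $\gl(3\gb)<\infty$ suffices), obtaining $\bbE[(1+p\xi)^\theta]=1+\theta(ph-\tfrac{p^2}{4\chi(\gb)})+\theta\,O(p^3+p^2h+ph^2)+O(\theta^2(p^2+h^2))$ uniformly for $p,h,\theta$ small; the maximizer then obeys $p_\star=2\chi(\gb)h+O(h^2)$ (as in \eqref{optimiz}, the dependence on $\theta$ being of lower order) and $\tfrac1\theta\max_p\log\bbE[(1+p\xi)^\theta]=\chi(\gb)h^2+O(h^2\theta)+O(h^3)$. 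Finally, for \eqref{eq:forub2} I would use \eqref{bbloc} to split $\bP_{\gL}=\alpha\bP_{\gL}^{<}+(1-\alpha)\bP_{\gL}^{>}$, with $\bP_{\gL}^{<}$ supported on $\{\sum_x\gd_x\le(2\chi(\gb)-\eta)h|\gL|\}$ and $\bP_{\gL}^{>}$ on $\{\sum_x\gd_x\ge(2\chi(\gb)+\eta)h|\gL|\}$; subadditivity of $t\mapsto t^\theta$ gives $\bbE[\bE_{\gL}[\,\cdot\,]^\theta]\le\bbE[(\bE^{<})^\theta]+\bbE[(\bE^{>})^\theta]$ and hence the $\tfrac{\log2}{\theta|\gL|}$ term. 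On the support of $\bP_{\gL}^{<}$, for every $v\ge0$,
\[
\prod_{\{x:\gd_x=1\}}(1+\xi_x)=e^{v\sum_x\gd_x}\prod_{\{x:\gd_x=1\}}\big(e^{-v}(1+\xi_x)\big)\le e^{v(2\chi(\gb)-\eta)h|\gL|}\prod_{\{x:\gd_x=1\}}\big(1+\xi^{(v)}_x\big),
\]
where $\xi^{(v)}_x:=e^{(h-v)+\gb\go_x-\gl(\gb)}-1$ is the same object with $h$ shifted to $h-v$; applying \eqref{eq:forub1} to $\bP_{\gL}^{<}$ with disorder $\xi^{(v)}$ and then \eqref{padrao} gives $\tfrac1{\theta|\gL|}\log\bbE[(\bE^{<})^\theta]\le v(2\chi(\gb)-\eta)h+\chi(\gb)(h-v)^2+O(h^2\theta+h^3)$, and the choice $v=\eta h/(2\chi(\gb))$ turns the right-hand side into $\chi(\gb)(1-\tfrac{\eta^2}{4\chi(\gb)^2})h^2+O(h^2\theta+h^3)$. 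The $\bP_{\gL}^{>}$ case is identical with $v=-\eta h/(2\chi(\gb))\le0$ (so that the shifted parameter $h-v>0$ stays of order $h$) and yields the same value; combining the two cases proves \eqref{eq:forub2}.

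The only genuinely delicate point is \eqref{eq:forub1}: one must hit upon the tangent-line identity and then verify — at the interior maximizer through the first-order condition, and at the two boundary maximizers through the one-sided optimality inequality — that the correction term $\theta(\bE_{\gL}[a^{|\gd|}b^{|\gL|-|\gd|}]-M^{|\gL|})$ is non-positive. Once \eqref{eq:forub1} is in hand, \eqref{padrao} is a routine (if slightly tedious) uniform expansion, and \eqref{eq:forub2} reduces to the elementary one-variable optimization of $v\mapsto v(2\chi(\gb)\mp\eta)h+\chi(\gb)(h-v)^2$, whose minimum is $\chi(\gb)(1-\eta^2/(4\chi(\gb)^2))h^2$.
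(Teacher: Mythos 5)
Your proposal is correct, but both of its substantive parts follow routes genuinely different from the paper's. For \eqref{eq:forub1} the paper peels off one site at a time: conditionally on the other disorder variables it writes the partition function as $\tilde Z(\xi)\,(1+p_x\xi_x)$ with a conditional contact probability $p_x\in[0,1]$, bounds the $\theta$-moment of the last factor by $\max_{p}\bbE[(1+p\xi)^\theta]$, and iterates; this avoids ever locating or analysing the maximizer. Your tangent-line argument at the IID Bernoulli$(p_\star)$ reference measure is exactly the fractional-moment analogue of the change-of-measure proof of Proposition~\ref{th:ub}, an option the paper explicitly acknowledges (``can be proven using the approach that lead to \eqref{thegeneral}''); it works, at the price of the first-order condition and the boundary cases $p_\star\in\{0,1\}$ (your observation that $p_\star=0$ forces $h\le 0$, via $\bbE\xi=e^h-1$, is the right one), and of checking integrability of the negative moments $\bbE[(1+p_\star\xi)^{\theta-1}(1+\xi)]$ and $\bbE[(1+p_\star\xi)^{\theta-1}]$, which \eqref{eq:lambda} guarantees. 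For \eqref{eq:forub2}, after the same factor-$2$ splitting producing the $\frac{\log 2}{\theta|\gL|}$ term, the paper applies H\"older with the tilt $g=\prod_x(1+q_\pm\xi')^{1/(1-\theta)}$, $\xi'=e^{\gb\go-\gl(\gb)}-1$, and concludes through the explicit expansions \eqref{lesformols}; you instead tilt by $e^{v\sum_x\gd_x}$, absorb $e^{-v}$ into the exact shift $h\mapsto h-v$ of the pinning parameter, reuse \eqref{eq:forub1}--\eqref{padrao} at the shifted parameter and optimize over $v$, landing on the same constant $\chi(\gb)\bigl(1-\eta^2/(4\chi(\gb)^2)\bigr)$. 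This is arguably cleaner, since it recycles the first two parts instead of redoing a separate H\"older computation; just record the two small points that the low-density branch needs $h-v\ge 0$, i.e.\ $\eta\le 2\chi(\gb)$ (otherwise that branch is empty and can be dropped), and that on the high-density branch $h-v=(1+\eta/(2\chi(\gb)))h$, so your $O(h^2\theta)+O(h^3)$ constants pick up an $\eta$-dependence -- a feature the paper's own $O(\theta q_\pm^2)+O(q_\pm^3)$ terms share. Part \eqref{padrao} is treated by you and by the paper at the same level of detail (a Taylor expansion following the proof of Proposition~\ref{th:final-ub}).
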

 
 \medskip
 
 \begin{rem}
 The inequality \eqref{eq:forub2} is valid for all size of boxes. However it provides a better bound than \eqref{padrao} only if  $\theta |\gL|$ is much larger  than $h^{-2}$.
 \end{rem}

 Consider $I\subset \lint 1,\overbar{K}\rint$ with $\vert I\vert \ge \gd \overbar{K} $. 
 Let $k_1\le  \dots\le  k_{|I|}$ denote the elements of $I$.
 We can prove by induction that for every $j\le |I|$ 
 \begin{equation}
 \label{leone}
 T_1(I, j)\, :=\,
  \frac{1}{\theta_h} \log \bbE \left[ \left(\bE_N \left[ e^{\sum_{x\in \bigcup_{1\le i\le j} \overbar{B}_{k_i}} (\beta \go_x-\gl(\beta)+h)\gd_x }    \ \bigg\vert \ \cA_I \right]\right)^{\theta_h}\right]\, \le\,  \frac j2 \chi(\gb)h^2 \overbar{N}_h^d.
 \end{equation}
 The result for $j=1$ is a direct consequence of \eqref{eq:forub2}: we work with $\eta=\chi(\gb)$ and then we use that  
 $ 1/(\overbar{N}_h^{d} \theta_h)= O(h^2/ \log (1/h))\ll  h^2$,
 so that the term $(\log 2)/(\theta_h|\gL|)$, as well as the term $C(h^2 \theta_h +h^3)$, can be absorbed into the leading order, for $h$ sufficiently small. 
 Now for $j> 1$, we just need to apply \eqref{eq:forub2} to the measure 
 $\mu$ defined by 
 \begin{equation}
  \mu(A):= \frac{\bE_N \left[ e^{\sum\limits_{x\in \left(\cup_{1\le i\le j-1} \overbar{B}_{k_i}\right)} (\beta \go_x-\gl(\beta)+h)\gd_x } \ind_{A}   \ \bigg\vert \ \cA_I \right]}{\bE_N \left[ e^{\sum\limits_{x\in \left(\cup_{1\le i\le j-1} \overbar{B}_{k_i}\right)} (\beta \go_x-\gl(\beta)+h)\gd_x }   \  \bigg\vert \ \cA_I \right]}\, . 
 \end{equation}
We obtain that 
\begin{multline}
\log  \bbE\left[ \left(\frac{\bE_N \left[ e^{\sum\limits_{x\in \left(\cup_{1\le i\le j} \overbar{B}_{k_i}\right)} (\beta \go_x-\gl(\beta)+h)\gd_x }  \ \ \bigg\vert\ \cA_I\right]}{\bE_N \left[ e^{\sum\limits_{x\in \left(\cup_{1\le i\le j-1} \overbar{B}_{0,k_i}\right)} (\beta \go_x-\gl(\beta)+h)\gd_x }    \ \ \bigg\vert \ \cA_I \right]}\right)^{\theta_h}\right]\\
  \\ =  \log  \bbE\left[ \mu\left(e^{\sum_{x\in \overbar{B}_{k_j}} (\beta \go_x-\gl(\beta)+h)\gd_x } \right)^{\theta_h}\right]\, 
   \le\,  \frac 12\theta_h \overbar{N}^d_h \chi(\gb)h^2
\end{multline}
where the inequality is obtained by applying  \eqref{eq:forub2}, in the same way as for the case $j=1$, when averaging with respect to $(\go_x)_{x\in \overbar{B}_{k_j}}$. This completes the induction argument and \eqref{leone} is proven.
 
Using the same trick and \eqref{eq:forub1}-\eqref{padrao} we obtain that 
\begin{multline}
\label{letwo}
T_2(I)\, :=\,
 \frac{1}{\theta_h} \log \bbE\left[ \left(\frac{\bE_N \left[ e^{\sum_{x\in \mathring \gL_N} (\beta \go_x-\gl(\beta)+h)\gd_x }  \ \ \Big\vert \ \cA_I\right]}{\bE_N \left[ e^{\sum_{x\in \bigcup_{i\in I} \overbar{B}_{k_i}} (\beta \go_x-\gl(\beta)+h)\gd_x }    \ \ \Big\vert \ \cA_I \right]}\right)^{\theta_h}\right]\\
 \le \,
  \left[(N-1)^d - |I| \overbar{N}^d_h \right]  (\chi(\gb) h^2 + C \theta_h h^2)\, .
\end{multline}

Finally, 
combining \eqref{leone} and \eqref{letwo}, by apply the Cauchy-Schwarz inequality  we obtain 
\begin{equation}
\begin{split}
  \frac{2}{\theta_h} \log \bbE \left[ Z_{N, \go, \gb, h}\left( \cA_I \right)^{(\theta_h/2)} \right]
  & \le \,  \frac{2}{\theta_h} \log  \bbE\left[ \left({\bE_N \left[ e^{\sum_{x\in \mathring \gL_N} (\beta \go_x-\gl(\beta)+h)\gd_x }  \ \ \Big\vert \ \cA_I\right]}\right)^{\theta_h/2}\right]
  \\
  &\le \, T_1(I, \vert I \vert) + T_2(I)\, \le \,  \chi(\gb)h^2 \left(N^d -\frac 12   |I| \overbar{N}^d_h\right) +C \theta_h h^2 N^d 
  \\ & \le \, 
  \left( 1-\frac \gd 3\right) \chi(\gb)h^2 N^d\, ,
  \end{split}
\end{equation}
 which is  \eqref{aprouver0} with $c=1/3$, except for the (clearly irrelevant) fact that $\theta_h$ is replaced by $\theta_h/2$.
 \end{proof}

 \medskip
 
\begin{proof}[Proof of Lemma \ref{th:forub}]
The inequality \eqref{eq:forub1} can be proven using the approach that lead to  \eqref{thegeneral}, but we give here a different proof.
We proceed  by induction on $N=|\gL|$. Note that we can assume without loss of generality that $\gL:=\lint 0,N\rint$ and write $\bE_N$ for $\bE_{\gL}$.
The result is obvious for $N=0$. Now given $N\ge 1$ set 
\begin{equation}
p_{N}(\xi_1,\dots, \xi_{N-1}):= \frac{\bE_{N}\left[\prod_{\{x\in\lint 1,N-1\rint \ : \ \gd_x=1\}}\left(1+\xi_x \right) \gd_N \right]}{\bE_{N}\left[\prod_{\{x\in\lint 1,N-1\rint \ : \ \gd_x=1\}} \left(1+\xi_x\right) \right]}\ ,
\end{equation}
together with 
\begin{equation} 
Z_N(\xi):= \bE_N \left[\prod_{\{x\in\lint 1,N\rint \ : \ \gd_x=1\}}(1+\xi_x) \right] \ \ \ \text{ and } \ \  \  \tilde Z_{N}(\xi):= \bE_N \left[\prod_{\{x\in\lint 1,N-1\rint \ : \ \gd_x=1\}}(1+\xi_x) \right].     
\end{equation}
Note that
\begin{equation}
 Z_N(\xi)
\,=\,  \tilde Z_{N}(\xi) (1+p_N \xi_N)\, ,
\end{equation}
and by raising both sides to the 
 power $\theta$ and taking the average with respect to $\xi_N$ we obtain
\begin{equation}
%\begin{split}
\bbE \left[ Z^{\theta}_N(\xi) \ \big | \ (\xi_{x})_{1\le x\le N-1}\right]= \tilde  Z_{N}^{\theta}(\xi) \bbE\left[ (1+p_N \xi_N)^{\theta}  \right]
\le
\tilde Z_{N}^{\theta}(\xi)  \max_{p\in(0,1] }\bbE\left[ (1+p \xi)^{\theta} \right]\, ,
%\end{split}
\end{equation}
and by  taking the average with respect to all other variables on both sides 
we obtain that 
\begin{equation}
\bbE \left[ Z^{\theta}_N(\xi)\right] \, \le \, \bbE \left[ \tilde Z^{\theta}_N(\xi)\right]  \max_{p\in(0,1] }\bbE\left[ (1+p \xi)^{\theta} \right]\, .
%\, =\, \sup_{\bP_{N-1}} \bbE \left[  Z^{\theta}_{N-1}(\xi)\right]  \max_{p\in(0,1] }\bbE\left[ (1+p \xi)^{\theta} \right]
\end{equation} 
We can now take the supremum over $\bP_N$ and we can conclude the induction step because 
\begin{equation}
\sup_{\bP_N}\bbE \left[ \tilde Z^{\theta}_N(\xi)\right] \, =\,\sup_{ \bP_{N-1}} \bbE \left[  Z^{\theta}_{N-1}(\xi)\right]\, .
\end{equation}
The bound \eqref{eq:forub1} is therefore established.

%\begin{equation} \bbE[ \overbar Z_{N}^{\theta}] \le \left(\max_{p\in(0,1] }\bbE\left[ (1+p \xi)^{\theta} \right]\right)^{N-1}.\end{equation}
\medskip

As for the optimizing problem \eqref{padrao} we can proceed as for Proposition~\ref{th:final-ub} using Taylor expansion and showing first that $p_{h,\theta}$ tends to $0$, then that $p_{h,\theta}\sim 2 \chi(\gb) h$, then $p_{h,\theta}= 2 \chi(\gb) h+ O(\theta h) +O(h^2)$ and then the final claim. We do not detail  these steps. 

\medskip

For what concerns \eqref{eq:forub2} (recall that we are therefore assuming \eqref{bbloc}), at the cost of loosing a factor two in the estimate on $\bbE[ Z^{\theta}_N]$ we can assume 
that one of the following holds
\begin{equation}
  \bP_{N} \left( 
\sum_{x=1}^N\gd_x  \ge  (2 \chi(\beta)+\eta) h N \right) \,=\, 1 \, \text{ or }
  \bP_{N} \left( 
\sum_{x=1}^N\gd_x  \le  (2 \chi(\beta)-\eta) h N \right) \,=\, 1 \,.
\end{equation}
Using H\"older inequality we have that for every positive random variable $g$
\begin{equation}
\label{eq:forforub1}
\bbE\left[ Z_{N}^\theta\right]\,
 \le\, 
  \bbE[Z_{N} g^{\theta-1}]^{\theta}\bbE[g^\theta]^{1-\theta}\, .
 \end{equation}
 Now we set $g=\prod_{x\in \lint 1, N\rint }(1+ q \xi')^\frac{1}{1-\theta}$ with $\xi'= e^{\gb\go -\gl(\gb)}-1$ (using $\xi$ instead of $\xi'$ would give an analogous result but computations are easier with $\xi'$).
 We are going to set $ q=q_{\pm}:= (2 \chi(\beta)\pm \eta)h$ depending on which assumption we have on the contact fraction.

The following asymptotic statements hold in the limit where both $q$ and $\theta$
go to zero
\begin{equation}
\begin{split}
 \bbE\left[(1+ q \xi')^{-1}\right]&=1+q^2 \Var (\xi')+O(q^3),\\
  \bbE\left[(1+\xi')(1+ q \xi)^{-1}\right]&=1-q\Var(\xi')+O(q^2),\\
  \bbE\left[(1+ q \xi')^{\frac{\theta}{1-\theta}}\right]&=1-\frac{\theta(1-2\theta)}{2(1-\theta)^2}\Var(\xi')q^2+O(\theta q^3).
 \end{split}
\end{equation}
Note in particular performing a second order expansion in $h$ that for fixed $\eta$ with our choice of $q$  we have for $h$ sufficiently small (depending 
on $\eta$ and on $\gb$) 
\begin{equation}\label{lesformols}
\begin{split}
 e^h\bbE\left[(1+\xi')(1+ q_- \xi')^{-1}\right]\, &\ge\,  \bbE\left[(1+ q_- \xi')^{-1}\right],\\
e^h  \bbE\left[(1+\xi')(1+ q_+ \xi')^{-1}\right]\, &\le\,  \bbE\left[(1+ q_+ \xi')^{-1}\right].
\end{split}
\end{equation}
% where the inequality of course holds for $h$ small. 
If $\sum_{x\in \lint 0,N\rint} \gd_x \le N q_-$, $\bP_N$-a.s.\ we can replace $\sum \gd_x$ by its upper bound $q_-|\gL|$ and altogether we obtain
\begin{multline}
\bbE\left[ Z_{N}^\theta\right] \le  \left( e^h\bbE\left[(1+\xi')(1+ q_- \xi)^{-1}\right] \right)^{q_-N \theta} 
\\ \times \left(\bbE\left[(1+ q_- \xi)^{-1}\right] \right)^{(1-q_-)N \theta} \left( \bbE\left[(1+ q_- \xi')^{\frac{\theta}{1-\theta}}\right]\right)^{N(1-\theta)}.
\end{multline}
Hence applying \eqref{lesformols} we obtain 
\begin{multline}
 \frac{1}{N\theta}\log  \bbE\left[ Z_{N}^\theta\right]
 \le q_-h -   \frac{q^2_-}{2}\Var(\xi')+O(\theta q^2)+O(q^3)\\
 =\, \left( \chi(\gb)-\frac{\eta^2}{4\chi (\beta)}\right) h^2+  O(h^2\theta)+O(h^3).
\end{multline}
In the same manner when  $\sum_{x\in \lint 0,N\rint} \gd_x \ge N q_+$, $\bP_N$-a.s. we have 
\begin{multline}
\bbE\left[ Z_{N}^\theta\right] \le  \left( e^h\bbE\left[(1+\xi')(1+ q_+ \xi)^{-1}\right] \right)^{q_+N \theta} 
\\ \times \left(\bbE\left[(1+ q_+ \xi)^{-1}\right] \right)^{(1-q_+)N \theta} \left( \bbE\left[(1+ q_+ \xi')^{\frac{\theta}{1-\theta}}\right]\right)^{N(1-\theta)}.
\end{multline}
and we obtain
\begin{multline}
 \frac{1}{N\theta}\log  \bbE\left[ Z_{N}^\theta\right]
 \le q_+ h -   \frac{q^2_+}{2}\Var(\xi')+O(\theta q^2)+O(q^3)\\
 =\, \left( \chi(\gb)-\frac{\eta^2}{4\chi (\beta)}\right) h^2+  O(h^2\theta)+O(h^3).
\end{multline}
This completes the proof of Lemma \ref{th:forub}.
\end{proof}

\subsection{The main body  of the argument (Proof of Proposition~\ref{lescontactzB})}
\label{sec:Decomp}

We start by an important preliminary result. 
By using the control we have on the local contact density and on the free energy, we are going to show that if 
the probability of a sequence of  events  decays exponentially in the volume size under $\bP_N$ with a  rate suitably controlled from below, then it also decays exponentially under  $\bP_{N, \go,\beta, h}$. Said differently, the result we are going to state allows to neglect the environment and to reduce the estimates to Gaussian field estimates.

\medskip

\begin{lemma}
\label{th:ctrlP}
Recall $\gep_d>0$ from  Propositon~\ref{lloobb}. Consider for $h\le h_0$,  $N$ larger than a suitable $N_0(h)$  and a sequence of
 events $A_N$ such
that $\bP_N(A_N) \le \exp(-h^{2+\eta}N^d)$, with $\eta \in (0, \gep_d)$.
Then there exists  $h'_0>0$ and $N'_0(h,\go)$ such that for every $h \le h'_0$ and $N\ge N'_0(h)$
\begin{equation}
 \bP_{N, \go,\beta, h} \left (A_N\right) \,\le \,  \exp(-N^d h^{2+ \eta}/4)\, .
\end{equation} 
\end{lemma}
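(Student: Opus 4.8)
The plan is to reduce the $\bP_{N,\go,\beta,h}$-probability of $A_N$ to its $\bP_N$-probability using the standard change-of-measure inequality
\begin{equation*}
\bP_{N,\go,\beta,h}(A_N)\, =\, \frac{Z_{N,\go,\beta,h}(A_N)}{Z_{N,\go,\beta,h}}\, =\, \frac{\bE_N\left[e^{\sum_{x\in\gL_N}(\gb\go_x-\gl(\gb)+h)\gd_x}\ind_{A_N}\right]}{Z_{N,\go,\beta,h}}\, ,
\end{equation*}
so that the whole game is to bound the numerator from above and the partition function from below by quantities of the right exponential order. For the numerator, the key observation is that on the event $A_N$ the Hamiltonian $\sum_x(\gb\go_x-\gl(\gb)+h)\gd_x$ is controlled once one knows that the number of contacts is not anomalously large; indeed, $\sum_x\gd_x\le |\gL_N|$ trivially, but the useful bound should come from the contact density control. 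I would first split $A_N$ according to whether $\sum_{x\in\mathring\gL_N}\gd_x\le 3\chi(\gb)hN^d$ or not. On the first piece, one writes $\sum_x(\gb\go_x-\gl(\gb)+h)\gd_x$ and uses a crude bound: after replacing $\go_x$ by $\overbar\go_x$ (truncation at $K_h$, with the negligible error from \eqref{theboundz}-type estimates), the positive part of the Hamiltonian is at most $C K_h \cdot 3\chi(\gb)hN^d = o(h^{2-\eta}N^d)$ since $K_h=o(\log(1/h))$, and this is swallowed by the decay $e^{-h^{2+\eta}N^d}$ of $\bP_N(A_N)$. On the second piece one needs the large-deviation bound on the contact fraction: by Lemma~\ref{lalem} (applied with $\eta=\chi(\gb)$, so the bad event has density exceeding $3\chi(\gb)h$), together with the change-of-measure identity run in the easy direction, the $\bP_N$-probability of having more than $3\chi(\gb)hN^d$ contacts is itself exponentially small at rate $c h^2 N^d$, and again the Hamiltonian on that piece is controlled (now using integrability of $\go$ more carefully, or simply $\bbE\log Z$-type bounds).

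For the denominator, I would use the entropic repulsion lower bound already recalled in the excerpt: $Z_{N,\go,\beta,h}\ge \bE_N[e^{\sum_x(\gb\go_x-\gl(\gb)+h)\gd_x}\ind_{\{\phi(x)>1\ \forall x\in\mathring\gL_N\}}]=\bP_N(\phi(x)>1\ \forall x\in\mathring\gL_N)$, which is $e^{o(N^d)}$ (sub-exponential in the volume), exactly as used in \eqref{lacompute}. Alternatively, and more in the spirit of getting the sharp rate, one can use the lower bound on the free energy from Proposition~\ref{lloobb}: $\bbE\log Z_{N,\go,\beta,h}\ge N^d(\chi(\gb)h^2-C_\gb h^{2+\gep_d})$ for $N$ large, combined with a concentration statement for $\log Z$ around its mean (a bounded-differences / Gaussian concentration argument, or simply the fact that for our purposes a lower bound on $\log Z$ that holds $\bbP$-a.s.\ for $N\ge N_0(\go)$ suffices, obtainable from the super-additivity in Proposition~\ref{th:superadd} plus a law-of-large-numbers argument). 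Either way one gets $Z_{N,\go,\beta,h}\ge \exp(-o(h^{2+\eta}N^d))$ or better $\exp((\chi(\gb)h^2-o(h^2))N^d)$ for $N$ large, $\bbP$-a.s.

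Putting the pieces together: $\bP_{N,\go,\beta,h}(A_N)\le \exp(C K_h\cdot 3\chi(\gb)hN^d)\,\bP_N(A_N)\,/\,Z_{N,\go,\beta,h}$ on the bulk piece, and since $C K_h h=o(h^{2+\eta})$ as $h\searrow0$ (because $K_h=o(\log(1/h))$, so $K_h h\ll h^{1+\eta}$, hence $K_h h\cdot$ nothing extra — one must be slightly careful: $K_h h \cdot 3\chi(\gb) = o(h)$, times $N^d$, which is NOT $o(h^{2+\eta}N^d)$). Here lies the main subtlety, and I expect \textbf{this to be the main obstacle}: a naive bound on the Hamiltonian by (number of contacts)$\times K_h$ gives a term of order $h\cdot K_h\cdot N^d = o(hN^d)$ which is far too large — it dwarfs $h^{2+\eta}N^d$. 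So one cannot afford to lose the factor $h^{1+\eta}$; the Hamiltonian must be controlled by a genuine annealing/$\log$-Laplace computation rather than a worst-case bound. The correct route is to bound $\bbE[\exp((\sum_x(\gb\go_x-\gl(\gb)+h)\gd_x)\ind_{A_N})]$ directly — i.e.\ average over $\go$ first via Proposition~\ref{th:ub} / inequality \eqref{thegeneral}, which shows that the $\go$-averaged restricted partition function $\bbE[Z_{N,\go,\beta,h}(A_N)]$ is at most $\bP_N(A_N)\cdot e^{\,|\gL_N|\max_p\bbE\log(1+p\xi)}\le \bP_N(A_N)\,e^{(\chi(\gb)h^2+C_\gb h^3)N^d}$ by Remark~\ref{rem:forfuture}, \emph{provided} one can commute the $\go$-average past the indicator $\ind_{A_N}$ — which is legitimate because $A_N$ is a $\phi$-event, independent of $\go$. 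Then $\bbE[\bP_{N,\go,\beta,h}(A_N)]\le \bbE[Z_{N,\go,\beta,h}(A_N)]/\inf_\go Z\le \bP_N(A_N)e^{(\chi(\gb)h^2+C_\gb h^3)N^d}e^{o(N^d)}$, but one still needs a \emph{quenched} lower bound on $Z$ matching $e^{\chi(\gb)h^2 N^d}$; this is exactly Proposition~\ref{lloobb} together with $\bbP$-a.s.\ concentration of $\log Z_{N,\go,\beta,h}$ around $\tf(\gb,h)N^d$. Combining, $\bP_{N,\go,\beta,h}(A_N)\le e^{C_\gb h^3 N^d + o(h^2 N^d)}\bP_N(A_N)\le e^{C_\gb h^3 N^d}e^{-h^{2+\eta}N^d}\le e^{-h^{2+\eta}N^d/2}$ for $h$ small (since $h^3\ll h^{2+\eta}$ as $\eta<1$), and after absorbing the subexponential-in-$N$ concentration error one gets $\le e^{-h^{2+\eta}N^d/4}$ for $N\ge N'_0(h,\go)$, as claimed. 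Making the quenched concentration step clean (a Borel--Cantelli argument over the sequence $N$, using bounded-differences for $\log Z$ in the $\go_x$'s restricted to the truncated disorder) is the remaining technical point, but it is routine given the tools already in the paper.
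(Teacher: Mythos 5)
Your overall architecture does match the paper's: control the numerator $Z_{N,\go,\gb,h}(A_N)$ via Proposition~\ref{th:ub}/Remark~\ref{rem:forfuture}, control the denominator via the sharp free-energy lower bound of Proposition~\ref{lloobb}, and upgrade to a quenched statement by disorder truncation, bounded differences and Borel--Cantelli; you also correctly diagnose why any worst-case bound on the Hamiltonian (number of contacts times $K_h$) is hopeless. But the pivotal inequality as you state it is false: you claim $\bbE[Z_{N,\go,\gb,h}(A_N)]\le \bP_N(A_N)\,e^{|\gL_N|\max_p\bbE\log(1+p\xi)}$. Inequality \eqref{thegeneral} bounds $\bbE\log$ of a partition function, not $\log\bbE$; Jensen goes the wrong way, so it yields no control on the annealed quantity. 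Indeed the annealed restricted partition function is exactly $\bbE[Z_{N,\go,\gb,h}(A_N)]=\bE_N\big[e^{h\sum_x\gd_x}\ind_{A_N}\big]$, which for $A_N=\{\gd_x=1 \text{ for all } x\in\mathring\gL_N\}$ equals $e^{h|\mathring\gL_N|}\bP_N(A_N)$: the factor $e^{hN^d}$ cannot be improved to $e^{O(h^2)N^d}$, and $hN^d$ dwarfs $h^{2+\eta}N^d$ --- the very loss you identified and discarded in your first attempt. The correct move, and the one the paper makes, is purely quenched: write $Z_{N,\go,\gb,h}(A_N)=\bP_N(A_N)\,\bE_N\big[e^{\sum_x(\gb\go_x-\gl(\gb)+h)\gd_x}\,\big|\,A_N\big]$ and apply Proposition~\ref{th:ub} to the conditional law $\bP_N(\cdot\,|\,A_N)$, which is just another law of a $\{0,1\}$-valued field independent of $\go$, obtaining $\bbE\log Z_{N,\go,\gb,h}(A_N)\le \log\bP_N(A_N)+(\chi(\gb)h^2+C_\gb h^3)N^d$.

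Once the numerator is controlled only in the form $\bbE\log Z_{N,\go,\gb,h}(A_N)$, you need concentration of the \emph{restricted} quantity $\log Z_{N,\go,\gb,h}(A_N)$ around its mean, not only of $\log Z_{N,\go,\gb,h}$ (for the denominator the a.s.\ convergence of $N^{-d}\log Z_{N,\go,\gb,h}$ to $\tf(\gb,h)$ already suffices); this is precisely where the truncation at $M_N=N^{d/6}$, Azuma's inequality, Lemma~\ref{th:theboundz} and Borel--Cantelli enter in the paper, while your closing remark attaches the concentration step to the wrong object. Two smaller points: your ``either way'' for the denominator is misleading, since the entropic-repulsion bound $Z\ge e^{o(N^d)}$ alone cannot cancel the $+\chi(\gb)h^2N^d$ produced by the numerator, so the sharp bound of Proposition~\ref{lloobb} is indispensable; and the error it leaves is $C_\gb h^{2+\gep_d}N^d$, not a generic $o(h^2N^d)$, which is exactly where the hypothesis $\eta<\gep_d$ (never invoked in your write-up) is needed to conclude that $C_\gb h^3+C_\gb h^{2+\gep_d}\le h^{2+\eta}/2$ for $h$ small.
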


\medskip

\begin{proof}
Because $N^{-d}\log Z_{N, \go,\beta, h}$ converges, it is sufficient to show that (recall \eqref{restriparti})
\begin{equation}
 \frac{1}{N^d}\log Z_{N, \go,\beta, h} \left (A_N\right)\, 
 \le\, 
  \tf(\beta,h)-h^{2+ \eta}/3 \, .
\end{equation}
We are going to show first that 
\begin{equation}\label{lespe}
   \frac{1}{N^d}\bbE \left[ \log Z_{N, \go,\beta, h} \left (A_N\right) \right] \le \tf(\beta,h)-h^{2+ \eta}/2.
\end{equation}
To see this, we note that we have (using Proposition \ref{th:ub} for $\bP_N(  \ \cdot \  | \ A_N)$ and keeping into account 
Remark~\ref{rem:forfuture})
\begin{multline}\label{sadsdf}
   \bbE \left[ \log Z_{N, \go} \left (A_N\right) \right]= \log \bP_N(A_N)
   + \bbE \log \bE_N \left[ e^{\sum_{x\in \mathring \gL_N}(\beta \go_x-\gl(\beta)+h)\gd_x} \ \big | \ A_N\right] \\ \le \, 
   - N^d h^{2+\eta}+  |\mathring \gL_N| (h^2\chi(\beta)+C_\beta h^3) \, .
\end{multline}
We have used $Z_N$ as a shortcut for $Z_{N, \go, \gb, h}$ and we recall that 
$\mathring{\gL}_N = 
\lint 1,N-1 \rint^d$, hence $ |\mathring \gL_N|\le N^d$.
Now in view of Proposition \ref{lloobb}, \eqref{sadsdf} implies that \eqref{lespe} 
holds for $h$ sufficiently small.

% On the other hand from Proposition \ref{lloobb} we have there exists $C'_\beta>0$ such that
%    \begin{equation}
%    \label{dsfsd}
%    \bbE \left[ \log Z_{N, \go}  \right]
%    \ge N^d (h^2\chi(\beta)- C'_\beta h^{2+\gep_d}).
%    \end{equation}
% Subtracting  \eqref{dsfsd} from \eqref{sadsdf} we obtain \eqref{lespe}.
Now to conclude we only need to show that $\log Z_{N, \go}(A_N)$ is concentrated around its mean. This follows from a standard concentration argument, for which we introduce a cut-off $M_N:=  N^{d/6}$ on the disorder variables 
$\overbar \go_x:= \go_x \ind_{\{|\go_x|\le M_N\}}$.
A union bound and the finiteness   of the exponential moment of all orders for $\go$ implies that for $N$ sufficiently large we have
\begin{equation}\label{diff}
\bbP\left( \exists x\in \gL_N, \ |\go_x| \ge M_N\right)\, \le\,  e^{-N^{d/6}}\, .
\end{equation}
Therefore,
 by Azuma's inequality applied to the martingale
$(\bbE[ \log \overbar Z_{N,\overbar \go}(A_N) \ | \ \cG_n ])^{|\mathring \gL_N|}_{n=0}$, where 
$\cG_n:=\sigma( \go_{x_i}, i\le n)$ and $x_1,\dots, x_{|\mathring \gL_N|}$ is an arbitrary enumeration of $\mathring \gL_N$, we obtain that there
\begin{equation}
\bbP\left(  |\log  Z_{N,\overbar \go}(A_N) - \bbE[ \log  Z_{N,\overbar \go}(A_N) ]|\ge u \right)\,  \le\,  
e^{-\frac{ c u^2}{N^d M^2_N}}\, ,
\end{equation}
with $c=1/(\gb+ \gl (\gb)+h)^2$. 
Applying this for $u=N^{3d/4}$ and using \eqref{diff} to bound the probability of
$\{Z_{N,\go}(A_N)\ne\{Z_{N,\overbar{\go}}(A_N)|\} $ we obtain (for $c'=c\wedge 1$)
\begin{equation}
 \bbP\left(\left \vert\log Z_{N,\go}(A_N)-\bbE[ \log  Z_{N,\overbar \go}(A_N) \right \vert > N^{3d/4}\right)\le 2 e^{-c' N^{d/6}}
\end{equation}
At this point wee can  conclude by observing that the difference between $\bbE[ \log  Z_{N,\overbar \go}(A_N) ]$ and 
$\bbE[ \log  Z_{N, \go}(A_N) ]$ is small: this follows by applying Lemma~\ref{th:theboundz} with $K=M_N$ that
yields 
\begin{equation}
\bbE \left[
\left \vert \log  Z_{N, \go}(A_N)-
\log  Z_{N,\overbar \go}(A_N)
\right \vert \right]\, \le \, \gb N^d \bbE \left[ \vert \go_x \vert ;\, \vert \go_x \vert > M_N\right] \, \le \,  \exp\left( -N^{d/6}\right)\, ,
\end{equation}
for $N$ sufficiently large,
again because all exponential moments of $\go$ are finite. This completes the proof of Lemma~\ref{th:ctrlP}.
\end{proof}

\medskip 

\noindent
\emph{Proof of Proposition~\ref{lescontactzB}.}
We aim at applying 
Lemma \ref{th:ctrlP} so to  reduce the proof of Proposition~\ref{lescontactzB} to proving a statement about 
$\bP_N$. However this fails if applied directly to $\cD(N,\gep)$ because
    $\bP_N(\cD(N,\gep))$ does not decrease exponentially in the volume (due to the massless character of the 
    LFF, see e.g. \cite{cf:BDZ}).
 However,    we will show that the event $\cC(N,\gd) \cap \cD(N,\gep)$ does not have this drawback and
  Lemma \ref{locadenz} assures that we can limit ourselves to studying this event. More formally, thanks to 
  Lemma \ref{locadenz} and Lemma \ref{th:ctrlP} it suffices to show that
 given $\gep>0$ there exist $\gd>0$ and  $\eta<\gep_d$ such that for $h$ sufficiently small for all $N$ sufficiently large, we have 
 \begin{equation}
 \label{eq:propoz} 
 \frac{1}{N^d}\log \bP_N\left( \cC(N,\gd) \cap \cD(N,\gep)\right)\, 
 \le\,   - h^{2+\eta}.
 \end{equation}

To prove \eqref{eq:propoz}  we are going to use a decomposition of $\phi$ into
a field $\psi$ which is independent in each level-$0$ boxes, and a field $\overbar \psi$ which displays long range correlation but has a small amplitude.

\medskip

Recall that $\bbG_0$ denotes a grid which isolates each of the boxes at level zero of the hierarchy.
By the Markov property of the LFF,
we can write 
\begin{equation}
\label{eq:2psi}
 \phi =  \phi_{\gL_N} \stackrel{\text{(law)}}{=} \psi +\overbar \psi \, ,  
 \end{equation}
where $\overbar \psi$ is the harmonic continuation of the restriction of $\phi$ to $\bbG_0$, that is the solution of 
\begin{equation}
\begin{cases}
 \Delta \overbar\psi(x)= 0, \text{ for all } x\in \mathring \gL_N \setminus \bbG_0,\\
  \overbar\psi(x)=\phi, \text{ for } x \in \bbG_0\, ,
  % \overbar\psi(x)=0, \text{ for } x \in \partial \gL_N\, .
   \end{cases}
\end{equation}
and $\psi$ a free field on $\mathring{\gL}_N \setminus \bbG_0$ with $0$ boundary condition: therefore $\psi$ is a collection of independent free fields on each of the level-$0$ boxes, and $\overbar \psi$ carries all the long range correlations of $\phi$.
We are going to reduce the proof of  \eqref{eq:propoz} to that of the two following facts about $\psi$ and $\overbar \psi$ (rigorously stated as Proposition \ref{smalllap} and Proposition \ref{unecas} below):
\begin{itemize}
 \item [(A)] With very large probability on most level-$0$ boxes $B_{0,k}$, the field $\overbar \psi$ is almost flat (in the sense that it is very close to an affine function). We mean by this that the probability of the complement is smaller than $e^{- N^d h^{b}}$ for a  value of $b<2$: this is largely sufficient because also a value  
 of $b$ slight larger than $2$ would have been sufficient (see Lemma~\ref{th:ctrlP}).
 \item [(B)] When $\overbar \psi$ is flat on the box $B_{0,k}$, the probability of having both the right number of contact (that is, about $2\overbar{N}^d_0 \chi(\beta)h$) in each of the $6^d$ elementary boxes inside $B_{0,k}$, and a density of high points is small.
\end{itemize}
To conclude from these two statements, we only need to use the fact that 
conditioned to $\overbar \psi$, the various level $0$ boxes are independent. 

\medskip

\begin{rem}
\label{rem:boundary-0}
The fact that the $0$ level boxes are separated from the grid of a distance equal (up to integer rounding) to $N_0^{1-\varkappa}$ and recalling that $N_0 \approx (\log(1/h)/h)^{2/d}$, more precisely 
\eqref{eq:0boxsize},  readily yields (use \eqref{eq:Gest0}) that the variance of $\psi_x$ can differ from  $\gs_d^2$ at most of a term $O(1/ N_0^{(d-2)(1-\varkappa)})$, 
that is by $O(h^c)$ with $c$ a positive constant.  
In view of the estimates we aim at, these variations of the variance turn out to be irrelevant: details will be given in due times but  
the reason is simply that for every $b>0$
\begin{equation}
\frac{P\left( (1+\epsilon(h))\cN \ge b \log (1/h) \right)} {P\left( \cN \ge b \log (1/h) \right)}-1 \, = \, O\left( h^{c'}\right)\, ,
\end{equation}
if $\epsilon(h)=O(h^c)$.
Moreover the distance from the conditioning grid is not only used at level-$0$, but at all levels. However for the higher levels it is used to  assure that 
harmonic extension  has a small variance uniformly in the box: the tool we use in that case is Lemma~\ref{th:var-est0lem}. 
\end{rem}

\medskip

Let us give a more quantitative version of $(A)$ and $(B)$.
First let us define for $e$ and $g$ two unit vectors (not necessarily distinct) of the canonical base of $\bbR^d$ 
$\nabla_{eg} \overbar \psi$ to  be the \textit{bi-gradient} of $\overbar \psi$ in directions $e$ and $g$
\begin{equation}
\label{eq:bigraddef}
\nabla_{eg} \overbar \psi(x)\, :=\,   \overbar \psi(x+e+g)-\overbar \psi(x+e)-\overbar \psi(x+g)+\overbar \psi(x)\,.\end{equation}
Now for $k\in \lint 1,K\rint$  (recall $K=2^{dJ}$ is the number of level-$0$ boxes) we set 
\begin{equation}
\|\nabla_{eg} \overbar \psi  \|_{\infty,k}\,:=\,  \max_{x\in B_{0,k}} |\nabla_{eg} \overbar \psi (x)|\, .
\end{equation}
Finally we let  $\cB(N,\gd)$ the event that all bi-gradients are small on most boxes, that is 
\begin{equation}
\label{eq:cBN}
 \cB(N,\gd)\, :=\, \left\{ \left \vert \{ k\in \lint 1,K\rint \ : \ \|\nabla_{eg} \overbar \psi  \|_{\infty,k} \ge N^{-2}_0  \}\right\vert \le  \gd K \right\} \, .
\end{equation}
Here is the result we need on the even $\cB(N,\gd)$: it is proven in Section~\ref{sec:multiscale}.

\medskip

\begin{proposition}
\label{smalllap}
For any $\gd>0$ and  for $h$ sufficiently small
\begin{equation}
 \bP_N\left( \cB(N,\gd)^{\cc} \right)\,\le\,  e^{- N^d h^{11/6}}\, .
\end{equation} 
\end{proposition}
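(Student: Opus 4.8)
The plan is to reduce Proposition~\ref{smalllap} to a bound on a single bi-gradient $\nabla_{eg}\overbar\psi(x)$ via a union bound, and then to establish that bound through a multiscale (hierarchical) argument exploiting the construction of Section~\ref{sec:prepgrid}. First I would note that $\overbar\psi$ is Gaussian and centered, so for a fixed level-$0$ box $B_{0,k}$ and a fixed point $x\in B_{0,k}$, the quantity $\nabla_{eg}\overbar\psi(x)$ is a centered Gaussian; the whole issue is to show that its variance is at most of order, say, $N_0^{-4-c}$ for some $c>0$, so that $\bP_N(|\nabla_{eg}\overbar\psi(x)|\ge N_0^{-2})$ is super-polynomially small in $N_0$, and hence (since $N_0\approx(\log(1/h)/h)^{2/d}$) smaller than something like $h^{100}$. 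Then a union bound over the $O(N^d)$ points $x$ and the $d^2$ pairs $(e,g)$ shows that \emph{all} bi-gradients on \emph{all} boxes are small except with probability $e^{-N^d h^{b}}$ for the desired $b<2$; in fact one gets a far better exponent, but the Markov-type reduction in Lemma~\ref{th:ctrlP} only needs $b$ slightly above $2$, and $11/6$ is comfortably below that threshold, which is why the weaker bound in the statement suffices.

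The heart of the matter is therefore the variance estimate for $\nabla_{eg}\overbar\psi(x)$. Here I would use the hierarchical decomposition: $\overbar\psi$ is the harmonic extension to $\mathring\gL_N\setminus\bbG_0$ of the restriction of $\phi$ to the grid $\bbG_0=\bigcup_{j'=0}^{J}G_{j'}$. Crucially $\overbar\psi$ can itself be written as a telescoping sum over scales — writing $\overbar\psi=\sum_{j=0}^{J}\chi_j$ where $\chi_j$ is the (conditional-expectation) contribution coming from conditioning on grid level $G_j$ given the coarser levels — and these increments are independent across $j$ by the Markov property of the LFF. For each $j$, $\chi_j$ is harmonic inside each level-$j$ box, and the point $x\in B_{0,k}$ sits at distance at least of order $N_j^{1-\varkappa}$ from the grid $G_j$ (this is exactly the purpose of the free space $4\lfloor\tilde N_{j-1}^{1-\varkappa}\rfloor$ left between boxes in \eqref{larecurr}). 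Applying Lemma~\ref{th:var-est0lem} — the same harmonic-average variance bound \eqref{eq:var-est0} used in the proof of Proposition~\ref{decomp} — gives $\Var(\chi_j(x))=O((N_j^{1-\varkappa})^{2-d})$, which is summable in $j$ (since $N_j\ge 2^j N_0$) and dominated by the $j=0$ term, so $\Var(\overbar\psi(x))=O(N_0^{(2-d)(1-\varkappa)})=O(h^c)$ — this is Remark~\ref{rem:boundary-0}. But to get the bi-gradient one needs a \emph{second}-difference gain: since $\chi_j$ is harmonic away from $G_j$ (and smooth on scale $N_j^{1-\varkappa}$), standard discrete-harmonic gradient/Hessian estimates give that the bi-gradient is smaller than the function itself by a factor of order $(N_j^{1-\varkappa})^{-2}$, i.e. $\Var(\nabla_{eg}\chi_j(x))=O((N_j^{1-\varkappa})^{-2-d})$ (up to $\log$ factors when $d=3$). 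Summing over $j$ again yields $\Var(\nabla_{eg}\overbar\psi(x))=O(N_0^{(-2-d)(1-\varkappa)}\cdot(\log)^{\pm1})$, and choosing $\varkappa$ (say $\varkappa=1/2$) small enough this beats $N_0^{-4}$ by a polynomial margin for $d\ge 3$, which is precisely what the union bound needs.

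The main obstacle, I expect, is making the discrete-harmonic Hessian estimate rigorous and uniform over the whole box $B_{0,k}$ — i.e. showing that the bi-gradient of a function that is discrete-harmonic in a region whose boundary is at distance $\ge R$ is bounded by $O(R^{-2})$ times the sup of the function, with constants depending only on $d$. This is classical (it follows from gradient estimates for the discrete Laplacian, or from representing the harmonic extension via the exit distribution of the walk and differentiating the kernel twice), but it must be applied at every scale $j$ simultaneously and combined with the right tail/variance bookkeeping; moreover in $d=3$ one has to track the logarithmic corrections that already appear in \eqref{lioub}. A secondary technical point is confirming that the increments $\chi_j$ really are independent (or at least that their variances add without cross terms that could spoil the bound), which follows from the spatial Markov property applied successively to the nested grids, exactly as in the construction of the $\psi^{(y,z)}$ in Proposition~\ref{decomp}. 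Once these pieces are in place, the proposition follows by assembling the per-point Gaussian tail bound, the union bound over the at most $d^2 N^d$ relevant $(x,e,g)$, and the observation that $e^{-(\log(1/h))^{2}}\ll h^{11/6}$, so that the complement event has $\bP_N$-probability at most $e^{-N^d h^{11/6}}$ as claimed.
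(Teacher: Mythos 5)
Your multiscale setup (the telescoping decomposition $\overbar\psi=\sum_j\psi_j$ with independent increments, the use of the gap left between boxes and the conditioning grid, and a per-scale variance bound for the bi-gradient of order $r^{-(d+2)}$ up to logarithms, obtained via a harmonic Hessian estimate rather than the kernel-difference/local CLT route of Lemma~\ref{th:d+2}) is in the right spirit and close to what the paper does. But the reduction step is where the proposal breaks down, and the break is fatal. You propose to show that \emph{all} bi-gradients on \emph{all} level-$0$ boxes are below $N_0^{-2}$ up to a probability $e^{-N^dh^{b}}$, by a union bound over the $O(N^d)$ points. The per-point (or per-box) failure probability you can get from the Gaussian tail is a quantity depending on $h$ only --- of the form $\exp(-c\,h^{-a})$ for some $a>0$ --- and it does \emph{not} decrease with $N$. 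Since the number of level-$0$ boxes is $K\approx N^d h^2/(\log(1/h))^2$, the union bound only gives $K\exp(-c h^{-a})$, which tends to infinity as $N\to\infty$ at fixed $h$; worse, the expected number of bad boxes grows like $N^d$, so for large $N$ the event ``all bi-gradients are small'' actually fails with probability tending to one. In other words, the statement you are trying to prove by union bound is false in the relevant regime, and no sharpening of the variance estimate can fix this: an exponential-in-$N^d$ bound cannot come from a union bound whose summands are $N$-independent.

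This is precisely why $\cB(N,\gd)$ in \eqref{eq:cBN} is formulated as a \emph{density} statement (at most a fraction $\gd$ of bad boxes), and why the paper's proof has a second probabilistic layer that your proposal omits: at each scale $j$, the increments $\psi_j$ restricted to disjoint level-$j$ boxes are independent, so the number of level-$j$ boxes on which the scale-$j$ bi-gradient exceeds its allotted threshold (the paper uses $h^{4/d}/(j^2(\log(1/h))^2)$, with the thresholds summing over $j$ to at most $N_0^{-2}$) is dominated by a binomial random variable; the binomial bound of Lemma~\ref{th:binomial2} then yields a probability exponentially small in the \emph{number of boxes} at that scale, i.e.\ in $N^dh^2$ up to logarithms, and independence across scales allows one to intersect over $j$. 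It is this concentration-of-the-fraction-of-bad-boxes argument, not a union bound, that produces the $e^{-N^dh^{11/6}}$ decay (and the per-scale threshold allocation $\propto j^{-2}$, which you also do not address, is what makes the bad-box fractions summable to $\gd$). To repair the proposal you would keep your variance estimates, but replace ``all boxes good via union bound'' by ``at most an $\eta/j^2$ fraction of level-$j$ boxes bad, via independence across boxes and binomial concentration, simultaneously for all $j$ by independence across scales.''
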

\medskip

By putting together Lemma~\ref{th:ctrlP} 
and Lemma~\ref{smalllap} we have that for every $\eta>0$, $h$ sufficiently small and $N \ge N_0(\go, \gd, \eta,h)$
\begin{equation}
\label{eq:smalllap}
 \bP_{N,\go,\gb,h}\left( \cB(N,\gd)^{\cc} \right)\,\le\,  e^{- h^{2+ \eta} N^d }\, .
\end{equation}
But since $11/6<2$ we can avoid   Lemma~\ref{th:ctrlP} 
and obtain in a more direct way from Lemma~\ref{smalllap} that the right-hand side in \eqref{eq:smalllap}
can be improved to $ e^{- h^{2-\eta} N^d }$, any $\eta < 1/6$, but this is irrelevant for what follows.

\medskip

\begin{rem}
The reason for us to consider bi-gradient of of $\overbar \psi$ instead of $\overbar \psi$ or its gradient is that the spatial correlations of the bi-gradient are summable (which is not the case for neither $\overbar \psi$ nor for its gradient). 
The corresponding statement for $\psi$ or for the gradient of $\psi$ would not hold for this reason.
\end{rem}

\medskip

Given $k\in \lint 1,K\rint$, and recalling that  $\overbar{B}_{6^dk+i}$, $i=1,\dots,6^d$ are disjoint boxes of edge length $\overbar{N}_h$ located in $B_{0,k}$
we define   $\cE^{(1)}(k)$ to be the event 
that all elementary boxes in $B_{0,k}$ have a typical contact density.
Recalling \eqref{lazeta} we set
\begin{equation}
\label{eq:cE1}
  \cE^{(1)}(k):=  \bigcap_{i=1}^{6^d}\left\{ \frac{\zeta(6^dk+i)}{\chi(\beta) h}\in [1, 3]   \ \right\},
  \end{equation}
  We let  $\cE^{(2)}(k)= \cE^{(2)}(k,\gd,\gep)$ be the event that the box $B_{0,k}$ displays a density $\gd$ of high points
  (recall $u_h=\sigma_d\sqrt{2\log (1/h)}$)
  \begin{equation}
  \label{eq:cE2}
   \cE^{(2)}(k):=  \left\{
 \sum_{x\in B_{0,k}} \ind_{\{ |\phi(x)| \ge (1+\gep)u_h\}}\ge \gd  N^d_0\right\}
\end{equation}
Our second result is that  $\cE^{(1)}(k) \cap     \cE^{(2)}(k) $ is unlikely in boxes where the bi-gradient of $\overbar \psi$ are small: to the the proof is devoted Section~\ref{sec:level-0}.
 
\medskip

\begin{proposition}
\label{unecas}
Given $\eta>0$ there exists $ h_0(\eta,\gd,\gep)>0$ such that if $h\le h_0(\eta,\gd,\gep)$ and  $\max_{e,g}\|\nabla_{eg} \overbar \psi  \|_{\infty,k} \le N^{-2}_0$
we have 
\begin{equation}
\label{eq:unecas}
 \bP_N\left(  \cE^{(1)}(k) \cap     \cE^{(2)}(k) \ \Big \vert \  \overbar \psi \right)\, \le\,   \eta\, .
\end{equation} 
\end{proposition}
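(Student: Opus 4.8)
The plan is to use the decomposition $\phi = \psi + \overbar \psi$ introduced just above the statement, with $\overbar \psi$ now \emph{fixed}: since $\psi$ and $\overbar \psi$ are independent, conditioning on $\overbar \psi$ simply leaves $\psi$ with its unconditional law, namely a zero–boundary free field on $B_{0,k}$ whose single–site variance lies in $[\gs_d^2(1-o(1)),\gs_d^2]$ (cf.\ Remark~\ref{rem:boundary-0}). The bi-gradient hypothesis $\max_{e,g}\|\nabla_{eg}\overbar \psi\|_{\infty,k}\le N_0^{-2}$ forces $\overbar \psi$ to agree on $B_{0,k}$, up to an additive constant $C_0$ (and up to $C_0/36$ on each elementary box), with an affine function $A(x)=a+v\cdot x$; so all that matters about $\overbar \psi$ is the pair $(a,v)$, equivalently the interval $[A_{\min},A_{\max}]$ that $A$ sweeps over $B_{0,k}$ and its oscillation $\Delta:=A_{\max}-A_{\min}$.

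The easy case first: if $\max_{x\in B_{0,k}}|\overbar \psi(x)|\le(1+\gep/2)u_h$, then on $\cE^{(2)}(k)$ a high site $x$ satisfies $|\psi(x)+\overbar \psi(x)|\ge(1+\gep)u_h$ with $|\overbar \psi(x)|\le(1+\gep/2)u_h$, hence $|\psi(x)|\ge(\gep/2)u_h$; since $\Var\psi(x)\le\gs_d^2$ this has probability $\le 2h^{\gep^2/4}$, so $\bE_N[\#\text{high sites in }B_{0,k}]\le 2N_0^d h^{\gep^2/4}$ and Markov gives $\bP_N(\cE^{(2)}(k)\mid\overbar \psi)\le 2h^{\gep^2/4}/\gd\le\eta$ for $h$ small. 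The same argument works whenever $\overbar \psi$ does not climb much above $u_h$ anywhere.

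In the complementary regime, where $\overbar \psi$ does get substantially above $u_h$ somewhere, I would instead bound $\bP_N(\cE^{(1)}(k)\mid\overbar \psi)$ by exhibiting an elementary box whose contact density cannot sit in $[\chi(\beta)h,3\chi(\beta)h]$. Using affineness of $A$ there are three sub-cases. (a) If $\Delta\le(\gep/3)u_h$ then $|\overbar \psi|>(1+\gep/3)u_h$ on all of $B_{0,k}$, so on every elementary box a contact at $x$ forces $|\psi(x)|\ge(1+\gep/4)u_h$, whence $\bE_N[\#\text{contacts in }\overbar B_j]\le 2\overbar N_h^d h^{1+\gep/4}$ and Markov kills $\zeta(j)\ge\chi(\beta)h$. (b) If $\Delta$ exceeds a dimensional constant times $u_h$, the slab $\{|\overbar \psi|\le(1+\gep/4)u_h\}$ is thinner than one elementary box, so — and this is exactly where the subdivision into $6^d$ rather than $2^d$ boxes is used — at least one elementary box lies entirely in $\{|\overbar \psi|>(1+\gep/8)u_h\}$ and we conclude as in (a). (c) For intermediate $\Delta$, either $\overbar \psi$ stays within $O(1)$ of $\pm u_h$ throughout, which puts us back in the easy case for $\cE^{(2)}$, or the set $\{x\in B_{0,k}:|\overbar \psi(x)|\le(1-\gep/8)u_h\}$ contains a positive fraction of $B_{0,k}$, in which case the expected contact count is $\gg hN_0^d$ and, with overwhelming probability, $\zeta(i)>3\chi(\beta)h$ for some elementary box.

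The hard part is the last sub-case: upgrading "the contact count has large mean" to "the contact count is actually large with probability $1-o(1)$", i.e.\ a lower-tail concentration statement for $\sum_x\ind_{[-1,1]}(\psi(x)+\overbar \psi(x))$. I would obtain it exactly as in Lemma~\ref{lalem2}: partition the relevant region into sub-cubes of a large constant edge length $\ell$ and condition on $\psi$ along the separating grid; then $\psi(y)$ minus its harmonic average over the sub-cube boundary is, at the centers $y$, an i.i.d.\ Gaussian family of variance $\gs_{d,\ell}^2\approx\gs_d^2$, so conditionally the events "$y$ is a contact" are independent Bernoulli, and for a $1-o(1)$ fraction of centers with $|\overbar \psi(y)|\le(1-\gep/8)u_h$ their parameters are $\ge \mathrm{const}\cdot h^{(1-\gep/16)^2}\gg h$ (the extra $O(\sqrt{\log(1/h)})$-Gaussian shift coming from the harmonic average of $\psi$ pushes only a negligible fraction of centers out of this window, by a Markov bound on the count of "bad" centers). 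A Chernoff bound then forces the center–contact count, hence the full contact count on that elementary box, to exceed $3\chi(\beta)h\,\overbar N_h^d$ with probability $1-o(1)$. The remaining bookkeeping is routine but delicate: the $O(1)$ error between $\overbar \psi$ and its affine approximant affects the Gaussian tails $q(\cdot)$ only by a sub-polynomial-in-$h$ factor, negligible against the gap between $h$ and $h^{(1\pm\gep/c)^2}$; the $o(1)$ correction to $\Var\psi(x)$ is harmless (Remark~\ref{rem:boundary-0}); and one needs the elementary geometric estimate that a slab of thickness below $\overbar N_h$ meets at most a fixed fraction of the $6^d$ elementary boxes.
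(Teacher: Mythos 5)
Your overall strategy matches the paper's: reduce $\overbar \psi$ to an affine function $A$ via the bi-gradient hypothesis (the paper's Lemma~\ref{th:ctrldev}, with error $d^2$), then argue that either $A$ is high on some elementary box (too few contacts there, first moment plus Markov, killing $\cE^{(1)}$), or $A$ is low on a chunk of the box (too many contacts, killing $\cE^{(1)}$, which requires a lower-tail concentration statement), or $|A|\le (1+O(\gep))u_h$ throughout (few high points, first moment plus Markov, killing $\cE^{(2)}$). Where you genuinely diverge is the concentration step in the ``too low'' case: the paper proves a two-point correlation bound for contact indicators at distance $\ge |\log h|^2$ (Lemma~\ref{th:alm54}), gets $\Var/\mathrm{mean}^2=O(1/|\log h|)$ and concludes by Chebyshev, whereas you rerun the grid-conditioning scheme of Lemma~\ref{lalem2}: condition on $\psi$ on a grid of constant mesh $\ell$, discard the few centers whose harmonic average is of order $\gep u_h$ (Markov on the count of bad centers), dominate the remaining center-contacts by independent Bernoulli variables of parameter $\gtrsim h^{(1-\gep/16)^2}\gg h$, and apply Chernoff. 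Both routes are valid; yours recycles machinery already in the paper and gives a stronger (stretched-exponential in $\overbar N_h^d h^{1-c\gep}$) failure probability, while the paper's second-moment lemma is self-contained at level $0$ and delivers exactly the $o(1)$ bound the proposition asks for. Your reading of why the subdivision is into $6^d$ rather than $2^d$ boxes is also correct.

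Two points need repair before this is a proof. First, your case decomposition is not exhaustive: inside case (c) you claim that either $\overbar \psi$ stays within $O(1)$ of $\pm u_h$ or $\{|\overbar \psi|\le (1-\gep/8)u_h\}$ fills a positive fraction of $B_{0,k}$, but an affine $A$ with range $[(1+\gep/10)u_h,(1+3\gep/5)u_h]$ over the whole box (so $\Delta=\gep u_h/2$, intermediate) satisfies neither, and note the first alternative is in any case incompatible with the standing assumption $\max_{B_{0,k}}|\overbar \psi|>(1+\gep/2)u_h$ of the complementary regime. The omitted configurations are all of the form ``$|A|\ge(1+c\gep)u_h$ on some elementary box'', so they are handled by the first-moment argument you already run in (a); the paper avoids this bookkeeping entirely by normalizing $a\ge 0$, $b_i\ge 0$ and testing $H_0$ only at the two diagonal points $y_1,y_2$, which makes its trichotomy manifestly exhaustive, and you should either adopt that normalization or restate your cases as ``some elementary box lies in $\{A\ge(1+c\gep)u_h\}$ / some elementary box meets $\{|A|\le(1-c\gep)u_h\}$ in a positive fraction / otherwise $|A|\le(1+C\gep)u_h$ everywhere''. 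Second, in case (b) the assertion ``the slab is thinner than one elementary box, hence some elementary box misses it'' is not literally true: a thin diagonal slab can meet very many of the $6^d$ boxes. What is true, and suffices, is that the slab $\{|A|\le(1+\gep/4)u_h\}$ cannot meet both the elementary box containing the maximizer of $A$ and the one containing its minimizer once $\Delta>3(1+\gep/4)u_h$, because the oscillation of $A$ inside a single elementary box is only $\Delta/6$; with that substitution, and the constant-tuning you already flag (the $d^2$ affine error and the $O(h^c)$ variance defect of Remark~\ref{rem:boundary-0}), your argument goes through.
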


\medskip

 By Proposition \ref{smalllap}, in order to get to \eqref{eq:propoz} it is sufficient to show that 
 \begin{equation}
  \bP_N\left( \cC(N,\gd) \cap \cD(N,\gep) \cap \cB(N,\gd) \right)\, \le\,  \exp\left(- h^{2+\eta} N^d\right)\, , 
 \end{equation}
for some small $\gd>0$ (which is allowed to depend on $\gep$).
We are going to show that whenever  $\overbar \psi\in \cB(N,\gd)$ we have 
\begin{equation}\label{aprouver}
 \bP_N\left( \cC(N,\gd) \cap \cD(N,\gep) \ \big\vert  \  \overbar \psi \right) \, \le\,   \exp\left(- h^{2+\eta} N^d\right)\, .
\end{equation}
Now we record as a lemma the observation that when  $\cC(N,\gd) \cap \cD(N,\gep)$ holds then  $\cE^{(1)}(k) \cap     \cE^{(2)}(k)$ holds in a lot of boxes: also the proof of this result is delayed till the end of the main argument.

\medskip

\begin{lemma}
\label{top}
Given $\gep>0$ sufficiently small and $\gd\le \gep/(6^d+4)$ then we have
for all $N$ sufficiently large 
 \begin{equation}\cC(N,\gd) \cap \cD(N,\gep) \cap \cB(N,\gd)    \subset    \left\{
 \sum_{k=1}^K \ind_{\cE^{(1)}(k) \cap     \cE^{(2)}(k)} \ind_{\{ \max_{e,g}\|\nabla_{eg} \overbar \psi  \|_{\infty,k} < N^{-2}_0\}} \ge  \gd K\right\}\, .
  \end{equation}
\end{lemma}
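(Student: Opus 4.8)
\textbf{Plan of proof for Lemma~\ref{top}.} The statement is an inclusion of events, so the plan is to fix a configuration $\phi$ (equivalently a pair $(\psi,\overbar\psi)$ through the decomposition \eqref{eq:2psi}) lying in $\cC(N,\gd)\cap \cD(N,\gep)\cap \cB(N,\gd)$ and to count the level-$0$ boxes $B_{0,k}$ for which $\cE^{(1)}(k)\cap\cE^{(2)}(k)$ holds together with $\max_{e,g}\|\nabla_{eg}\overbar\psi\|_{\infty,k}<N_0^{-2}$. The strategy is a simple inclusion--exclusion over the indices $k\in\lint 1,K\rint$: I would control separately the number of ``bad'' indices for each of the three defining properties and show that, because each bad set has density at most a fixed multiple of $\gd$, the complement still has density at least $\gd$ once $\gd$ is small enough relative to $\gep$.

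\textbf{Step 1: bad boxes from $\cB(N,\gd)$.} By definition \eqref{eq:cBN}, on the event $\cB(N,\gd)$ the number of $k$ with $\max_{e,g}\|\nabla_{eg}\overbar\psi\|_{\infty,k}\ge N_0^{-2}$ is at most $\gd K$ for each fixed pair $(e,g)$; since there are $d^2$ such pairs (or rather $d(d+1)/2$ ordered-up-to-symmetry, but a crude $d^2$ bound suffices), the number of $k$ for which $\max_{e,g}\|\nabla_{eg}\overbar\psi\|_{\infty,k}\ge N_0^{-2}$ is at most $d^2\gd K$. This is the only place $\cB(N,\gd)$ enters. (If the definition \eqref{eq:cBN} is read with the max already inside, then this bad set has size $\le\gd K$ directly and the factor $d^2$ disappears; either way the bound is $C_d\gd K$.)

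\textbf{Step 2: bad boxes from $\cC(N,\gd)$.} On $\cC(N,\gd)$ the number of \emph{elementary} boxes $\overbar B_l$ with $\zeta(l)/(\chi(\beta)h)\notin[1,3]$ is at most $\gd\overbar K=6^d\gd K$. A level-$0$ box $B_{0,k}$ fails $\cE^{(1)}(k)$ only if at least one of its $6^d$ elementary sub-boxes is atypical, so the number of $k$ failing $\cE^{(1)}(k)$ is at most the number of atypical elementary boxes, hence at most $6^d\gd K$.

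\textbf{Step 3: bad boxes from $\cD(N,\gep)$, and conclusion.} Here lies the one genuinely quantitative point. On $\cD(N,\gep)$ we have $\sum_{x\in\mathring\gL_N}\ind_{\{|\phi(x)|\ge(1+\gep)u_h\}}\ge\gep N^d$. Writing $n(k):=\sum_{x\in B_{0,k}}\ind_{\{|\phi(x)|\ge(1+\gep)u_h\}}$, the high points lying outside $\bigcup_k B_{0,k}$ number at most $N^d-\sum_k N_0^d=(C'_\varkappa N_0^{-\varkappa})N^d$ by Lemma~\ref{th:encadre}, which is $o(N^d)$ and in particular $\le(\gep/2)N^d$ for $N$ large. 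Hence $\sum_k n(k)\ge(\gep/2)N^d$. Split the index set into those $k$ with $n(k)\ge\gd N_0^d$ (i.e.\ $\cE^{(2)}(k)$ holds) and the rest; using $n(k)\le N_0^d$ on the good ones and $n(k)<\gd N_0^d$ on the bad ones, and $K N_0^d\le N^d$, we get $\#\{k:\cE^{(2)}(k)\}\,N_0^d + \gd N_0^d K \ge (\gep/2)N^d\ge(\gep/2)KN_0^d$, so $\#\{k:\cE^{(2)}(k)\}\ge(\gep/2-\gd)K$. Combining with Steps~1 and~2, the number of $k$ for which $\cE^{(1)}(k)\cap\cE^{(2)}(k)\cap\{\max_{e,g}\|\nabla_{eg}\overbar\psi\|_{\infty,k}<N_0^{-2}\}$ holds is at least
\begin{equation}
\left(\frac\gep2-\gd\right)K - 6^d\gd K - d^2\gd K \ \ge\ \left(\frac\gep2-(6^d+d^2+1)\gd\right)K\,.
\end{equation}
With the hypothesis $\gd\le\gep/(6^d+4)$ (adjusting the absolute constant in the hypothesis if one wants to carry the $d^2$ explicitly — or absorbing it since $\cB(N,\gd)$'s definition already hides it) this is $\ge\gd K$ for $\gep$ small, which is exactly the claimed inclusion. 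The main obstacle, such as it is, is purely bookkeeping: making sure the $o(N^d)$ loss from the uncovered region (Lemma~\ref{th:encadre}) and the constants $6^d$, $d^2$ are all dominated once $\gd$ is chosen proportional to $\gep$ with a dimension-dependent constant; no probabilistic input beyond the already-established Lemmas~\ref{locadenz}, \ref{th:encadre} and the definitions is required.
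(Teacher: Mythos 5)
Your overall strategy --- counting, for each of the three events, the level-$0$ boxes that fail the corresponding property and concluding by a union bound over $k\in\lint 1,K\rint$ --- is exactly the paper's argument, so the structure is fine. The gap is quantitative and sits in your Step 3 together with the final count: you allow the uncovered region $\gL_N\setminus\bigcup_k B_{0,k}$ to absorb up to $(\gep/2)N^d$ high points, which only yields $\#\{k:\cE^{(2)}(k)\ \mathrm{holds}\}\ge(\gep/2-\gd)K$, and you carry an extra $d^2\gd K$ from Step 1. Your concluding inequality then needs $\gep/2\ge(6^d+d^2+2)\gd$, which does \emph{not} follow from the stated hypothesis $\gd\le\gep/(6^d+4)$; the required comparison is homogeneous in $(\gep,\gd)$, so invoking ``$\gep$ small'' cannot rescue it, and even dropping the $d^2$ you would need $\gd\le\gep/(2\cdot 6^d+6)$. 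Since the constant $6^d+4$ is part of the statement (and is the one fed into the proof of Proposition \ref{lescontactzB}), ``adjusting the constant in the hypothesis'' is not an available move.

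The fix is short and recovers the paper's bookkeeping. First, read $\cB(N,\gd)$ with the maximum over $(e,g)$ inside (this is how the event is used, and it is what the proof of Proposition \ref{smalllap} actually delivers), so the third bad set has cardinality at most $\gd K$ with no factor $d^2$. Second, do not concede $(\gep/2)N^d$ to the uncovered region: by Lemma \ref{th:encadre} its size is at most $C'_\varkappa N_0^{-\varkappa}N^d$, and since $N_0$ is of order $(\log(1/h)/h)^{2/d}$ this fraction is small when $h$ is small --- note it does \emph{not} tend to $0$ as $N\to\infty$ for fixed $h$, so your justification ``for $N$ large'' is the wrong smallness; what is used is $h\le h_0(\gep)$. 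Bounding the uncovered contribution by $\gd N^d$ (or directly by $\gd N_0^d K$) gives, as in the paper, $\gep N^d\le N_0^d\sum_k\ind_{\cE^{(2)}(k)}+\gd N_0^dK+(N^d-KN_0^d)$, hence $\#\{k:\cE^{(2)}(k)^{\cc}\}\le(1-\gep+2\gd)K$. Summing the three bad counts then gives at most $\bigl(1-\gep+(6^d+3)\gd\bigr)K\le(1-\gd)K$ exactly when $\gd\le\gep/(6^d+4)$, which is the claimed inclusion.
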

According to Proposition \ref{unecas},  conditioning to  $\overbar \psi$ and assuming $\overbar \psi \in \cB(N,\gd)$ the sum

\begin{equation}\sum_{\{ k \ : \ \max_{e,g}\|\nabla_{eg} \overbar \psi  \|_{\infty,k} \le N^{-2}_0 \}} \ind_{ \cE^{(1)}(k) \cap     \cE^{(2)}(k)}\end{equation} is stochastically dominated by a binomial of parameters $\eta$ and $K$.
  We choose $\eta:= \gd/3$ and from the second bound in Lemma~\ref{th:binomial}  applied for $\Delta= 3$,
 we obtain that when $\overbar \psi \in \cB(N, \gd)$ 
 \begin{equation}
 \begin{split}
  \bP_N\left( \sum_{\{ k \ : \ \max_{e , g}\|\nabla_{eg} \overbar \psi  \|_{\infty,k} \le N^{-2}_0 \}} \ind_{ \cE^{(1)}(k) \cap     \cE^{(2)}(k)} \ge \gd K \ \bigg | \ \overbar \psi \right)\, &\le\, P\left( \mathrm{Bin}(K,\gd/3) \ge \gd K\right) 
  \\ 
  &\le\,   e^{-(\gd/2) K}\, .
  \end{split}
 \end{equation}
But  $K=2^{dJ}=( N/ N_0 )^d\ge c N^d h^2 (\log h)^{-2}$ for some $c>0$,  this is sufficient to conclude that \eqref{aprouver} holds. This completes the proof of \eqref{eq:propoz} and therefore  Proposition~\ref{lescontactzB} is established. 
\qed

\medskip 

\begin{proof}[Proof of Lemma \ref{top}]
We need to show that on the event $\cC(N,\gd) \cap \cD(N,\gep) \cap \cB(N,\gd)$ we have
\begin{equation}
\label{cprouve}
 \sum_{k=1}^K \ind_{\cE^{(1)}(k)^{\cc}}+ \sum_{k=1}^K   \ind_{\cE^{(2)}(k)^{\cc}}
 +  \sum_{k=1}^K \ind_{\{ \max_{e , g}\|\nabla_{eg} \overbar \psi  \|_{\infty,k} \ge N^{-2}_0\}} \le (1-\gd) K.
\end{equation}
First of all on  $\cC(N,\gd)$, recall \eqref{eq:cCset}, we have 
 \begin{equation}
 \sum_{k=1}^K \ind_{\cE^{(1)}(k)^{\cc}} \,\le\, \gd \overbar{K} \,=\,  6^d \gd K\, .
 \end{equation} 
 Moreover on $\cD(N,\gep)$
 \begin{equation}
 \gep N^d \le \sum_{x\in \mathring{\gL}_N} \ind_{\{ |\phi(x)| \ge (1+\gep)u_h\}}
 \,\le\,  N^d _0\sum_{k=1}^K \ind_{\cE^{(2)}(k)} 
 +\gd N_0^d K+ (N^d- KN^d_0)\, ,
 \end{equation}
 where the first term follows estimating from above with $N_0^d$ the number of points that are two high in a 
 box $B_{0,k}$ where there are at least $\gd N_0^d$ points that are two high, the second accounts  
 for the fact that in all other boxes (overestimated by \emph{all boxes} tout court) there at most $\gd N_0^d$
 high points, and the third accounts for the points that are not in the level $0$ boxes.
 Recalling that $(N/N_0)^d\ge K$, if $h$ is sufficiently small (so that the term
 $(N^d- KN^d_0)/N^d$ is small), we have
 \begin{equation}
  \sum_{k=1}^K \ind_{\cE^{(2)}(k)^{\cc}} \le (1+2 \gd-\gep) K\, .
 \end{equation}
Finally the third sum is smaller than $\gd K$ by the definition of 
 $\cB(N,\gd)$. Collecting all the estimates 
 we see that on $\cC(N,\gd) \cap \cD(N,\gep) \cap \cB(N,\gd)$ the 
the left-hand side of \eqref{cprouve} is bounded above by $(1+ 6^d +3\gd -\gep) K$. 
Since $\gd\le \gep/(6^d+4)$, 
  \eqref{cprouve} holds and the proof of Lemma \ref{top} is complete.
\end{proof}

\subsection{Level-$0$ estimates: the proof of Proposition \ref{unecas}}
\label{sec:level-0}
A first observation is the following 

\medskip

\begin{lemma}
\label{th:ctrldev}
 If $\max_{e, g}\|\nabla_{eg} \overbar \psi  \|_{\infty,k} \le N^{-2}_0$, then there exists $a\in \bbR$ and $b\in \bbR^d$ such that for all $x\in B_{0,k}$
 we have,
\begin{equation}
 \forall x\in B_k \ \ \  | \overbar \psi(x)-a+b\cdot x|\le d^2. 
\end{equation}
\end{lemma}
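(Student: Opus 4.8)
The plan is to reconstruct the affine function from the bi-gradient bound by summing. First I would set $B_{0,k}= x_0+\lint 0,N_0-1\rint^d$ for a suitable corner $x_0$, and work with the relative coordinate $y=x-x_0\in\lint 0,N_0-1\rint^d$; write $\psi^\star(y):=\overbar\psi(x_0+y)$. The idea is that the bi-gradient $\nabla_{eg}\overbar\psi$ is precisely the discrete second difference, so if it vanished exactly then $\psi^\star$ would be affine; since it is merely bounded by $N_0^{-2}$, we get an affine approximation with error controlled by $N_0^{-2}$ times the number of summation steps, which is $O(N_0)$ in each of the at most $d$ coordinate directions, hence $O(d/N_0)\le d^2$ once $h$ (and hence $N_0$) is small.

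Concretely I would proceed as follows. Define $b_j:=\psi^\star(e_j)-\psi^\star(0)$ for $j=1,\dots,d$ and $a:=\psi^\star(0)$ (these correspond, after translating back, to an affine function $a'-b'\cdot x$ with $a',b'$ adjusted by the shift $x_0$, which is harmless). The key step is a telescoping estimate: moving from $0$ to a general $y$ one coordinate at a time, the increment of $\psi^\star$ along direction $e_j$ at a point $z$ differs from $\psi^\star(e_j)-\psi^\star(0)$ by a sum of bi-gradients $\nabla_{e_j e_i}\psi^\star$ along the path connecting $0$ to $z$. More precisely, for fixed $j$, the function $z\mapsto \psi^\star(z+e_j)-\psi^\star(z)$ has first differences in direction $e_i$ equal to $\nabla_{e_j e_i}\psi^\star(z)$, each bounded by $N_0^{-2}$; summing over a monotone lattice path of length at most $dN_0$ gives $|\psi^\star(z+e_j)-\psi^\star(z)-b_j|\le dN_0\cdot N_0^{-2}=d/N_0$. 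Then summing these single-step increments along a path from $0$ to $y$ (length at most $dN_0$) yields
\begin{equation}
\Big| \psi^\star(y)-a-\sum_{j=1}^d b_j y_j\Big|\le dN_0\cdot \frac{d}{N_0}=d^2.
\end{equation}
Translating back, with $b'=-b$ (or $b=-b'$, sign chosen to match the statement's $-b\cdot x$) and $a'$ absorbing the contribution of $x_0$, this is exactly $|\overbar\psi(x)-a'+b'\cdot x|\le d^2$ for all $x\in B_{0,k}$.

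The only thing to be careful about is bookkeeping: one must make sure the lattice paths used stay inside $B_{0,k}$ (which is automatic since $B_{0,k}$ is a box and one can always route a monotone path through the box) and that the counting of steps is done correctly so the constant comes out as $d^2$ rather than something slightly larger — but since the statement only asks for the clean bound $d^2$ and $N_0\to\infty$ as $h\searrow0$, there is slack to spare and one could even afford a cruder count. So there is no real obstacle here; the lemma is a routine discrete-calculus computation, and the mild point is simply organizing the double telescoping (first reconstruct the gradient from the bi-gradient, then reconstruct the function from the gradient) cleanly. I would present it in exactly that two-layer form.
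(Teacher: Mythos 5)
Your argument is correct and is essentially identical to the paper's proof: the same two-layer telescoping, first bounding $|\overbar\psi(x+e)-\overbar\psi(x)-b_i|$ by $dN_0\cdot N_0^{-2}=d/N_0$ via a sum of at most $dN_0$ bi-gradients, then summing at most $dN_0$ such single-step errors to get the bound $d^2$ (the sign convention for $b$ is immaterial since $b\in\bbR^d$ is arbitrary). No differences worth noting.
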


\medskip

\begin{proof}
 We can replace without loss of generality $B_{0,k}$ by $\lint -N_0/2, N_0/2\rint^d$ with appropriate rounding.
 We set 
 \begin{equation}
 \label{eq:a-and-b}
 a\, :=\, \overbar \psi(0)\ \text{ and } \ 
 b_i\,:=\,\overbar \psi(e)-\overbar \psi(0)\, .
 \end{equation}
 We have for every $x\in \lint -N_0/2, N_0/2\rint^d$ and $i\in \lint 1, d\rint$ 
\begin{equation}
 \left\vert\overbar
 \psi(x+e)-
 \overbar
 \psi(x)-b_i
 \right \vert \, \le \, d N^{-1}_0\, ,
 \end{equation}
simply because the term in the absolute value can be written as the sum of 
at most $N_0 d$ terms of the form $\nabla_{eg} \overbar \psi(z)$.
Using this we see that 
\begin{equation}
\left \vert\overbar \psi(y)-  \overbar \psi(0)-b\cdot y\right \vert\, \le\,  d^2\, ,
\end{equation}
 since the term in the absolute value can be written as the sum of at most $N_0d $ terms of the form $\overbar \psi(x+e)-  \overbar \psi(x)-b_i$.
 \end{proof}
 
\medskip 

We recall that if we work on  $\overbar{\psi} \in \cB(N, \gd)$, most level-$0$ boxes are good in the sense that they satisfy the bi-gradient requirement of Lemma~\ref{th:ctrldev}. 
 We assume that the (good) level-$0$ box $B_k$ we are considering is centered at $0$, that $\overbar \psi(0)\ge 0$ and that $b_i\ge 0$ for all $i$: by symmetry this yields no loss of generality, as we explain in the caption of Figure~\ref{fig:6d}. 
We set ($\cdot$ is the scalar product)
 \begin{equation}
 H_0(x)\, :=\,  a+b\cdot x\, ,
 \end{equation}
 so  Lemma~\ref{th:ctrldev} implies that  
 $\max_{x \in B_k}\vert \overbar{\psi}(x) -  H_0(x)\vert \le d^2$.
 We set $y_1:=(\overbar{N}_h, \overbar{N}_h,\dots, \overbar{N}_h)$, $y_2:=2y_1$, $y_3:=3y_1$ and we  let $\overbar{B}_i$ denote the cube of edge length $\overbar{N}_h$ whose maximal corner (for the lexicographic  order) is $y_i$, see Figure~\ref{fig:6d} and its caption.

\begin{figure}[htbp]
\centering
\includegraphics[width=14.5 cm]{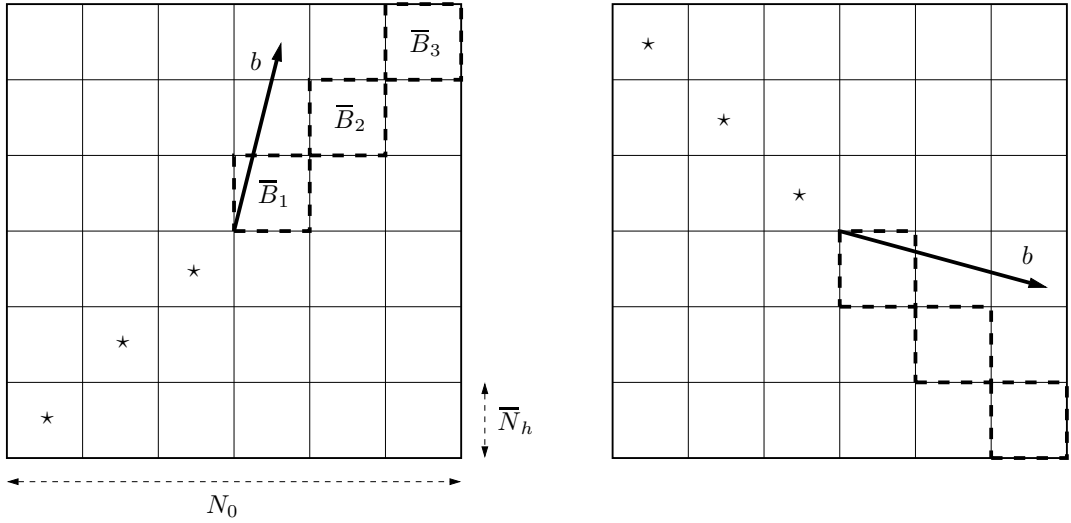}
\vskip-.2cm
\caption{\label{fig:6d} 
A level-$0$ box $B_{0,k}$, of edge length $N_0$, is partitioned into $6^d$ boxes (hypercubes) $\overbar{B}_1, \ldots, \overbar{B}_{6^d}$ of edge length $\overbar{N}_0$, up to integer rounding (so it may be that only a subset of 
$B_k$ is really covered (but what is left out is just a negligible fraction, $O(1/N_0)$, of the sites in $B_k$).
Note that we have relabeled the elementary boxes contained in $B_{0,k}$ both because they  were originally labeled with indexes going from 
$6^d k+1$ to $6^d (k+1)$ and because we have made a specific choice of the first three elementary boxes that minimizes notations.  
With reference to \eqref{eq:a-and-b}, the drawing considers the case $a\ge 0$ and, in the case on the left, $b=(b_1, \ldots,b_d)$ has non negative entries. In this case the three steps analysis -- see points $(i)$-$(iii)$ -- just focuses on 
three elementary boxes $\overbar{B}_1$ to $\overbar{B}_3$. Whenever one or more entries of $b$ are non positive it suffices to change of quadrant, like in the case on the right. And if $a \le 0$ it suffices to change the orientation and choose the elementary boxes marked by $\star$. So the $2^{d+1}$ cases that we need to analyse are all equivalent. 
}
\end{figure}

To prove Propositionı\ref{unecas}, we are going to distinguish three cases.

\medskip

\begin{itemize}
 \item [$(i)$\ ] $H_0(y_1)\le (1-\gd) u_h$. In this case   $H_0$ is small in  the whole $\overbar{B}_1$ and this yields, with large probability, too many contacts with respect to what  $\cE^{(1)}$ allows.
 \item [$(ii)$\,] $H_0(y_2)\ge (1+\gd) u_h$. In this case $H _0$ is large in $\overbar{B}_3$, which implies that, with large probability, there are too few contacts in this box with respect to what is required by  $\cE^{(1)}$. 
 \item[$(iii)$] $H_0(y_1)\ge (1-\gd) u_h$ and $H_0(y_2)\le (1+\gd) u_h$. In this case $H_0(y_3)\le (1+3\gd) u_h$, meaning in particular that $|H_0(x)|\le (1+3\gd) u_h$  on the full level $0$ box (because of our assumption $a$ and $b_i$ positive).
This makes $\cE^{(2)}$ unlikely if one chooses $\gd=\gep/6$ and $h$ sufficiently small.
\end{itemize}
\medskip

We will treat
cases $(ii)$ and $(iii)$  using first moment estimates and Markov inequality.
Case $(i)$  requires a less straightforward second moment computation.

\medskip

\subsubsection{The case (i)}
On $\overbar{B}_1$ we have  $-d^2\le \overbar \psi \le (1-\gd) u_h+d^2$. 
Hence we have for $h$ sufficiently small 
\begin{equation}
\label{eq:alm54}
 \bE_N\left[ \sum_{x \in \overbar{B}_1} \ind_{[-1,1]}(\phi(x)) \ \bigg| \ \overbar \psi \right]\, \ge \,
  \overbar{N}^d_h P \left( \gs_d \cN \in (1-\gd)u_h +d^2+ [0,2]\right)\, \ge \, 
  \overbar{N}^d_h h^{1-\gd},
\end{equation} 
where we have taken the worst case scenario for $\overbar \psi$ and we have added a shift of one to compensate 
for the boundary effects on the variance, cf.
Remark~\ref{rem:boundary-0}: the effect of the boundary on the variance is $O(h^c)$ so a shift of a unit is largely sufficient because $u_h=O(\sqrt{\log(1/h)})$.  
The probability in the intermediate term  turns out to be bounded below by $h^b$ for every $b> (1-\gd)^2$ by a direct Gaussian estimate. So the choice $b=1-\gd$ yields  the lower 
bound of $h^{1-\gd}$ and \eqref{eq:alm54} holds. We now proceed to a second moment estimate to get a concentration result on the number of contacts. 
The most technical estimate of this step is in the following lemma:

\medskip

\begin{lemma}
\label{th:alm54}
When $|x-y|\ge \vert\log h\vert ^{2}$ and $|v_1|, |v_2|\le u_h-1$ we have
\begin{multline}\label{linneq}
  \bE_N\left[ \ind_{[-1,1]}(\psi(x)+v_1)\ind_{[-1,1]}(\psi(y)+v_2) \right]\\
  \le   (1+C|\log h|^{-1})\bE_N\left[ \ind_{[-1,1]}(\psi(x)+v_1)  \right] \bE_N\left[\ind_{[-1,1]}(\psi(y)+v_2)  \right]\, ,
\end{multline}
with $C$ a constant that depends only on $d$. 
\end{lemma}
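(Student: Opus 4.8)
The plan is to establish \eqref{linneq} by a direct Gaussian computation, exploiting the fact that the free field $\psi$ on each level-$0$ box (with zero boundary condition) has exponentially decaying covariance, so that for $|x-y|\ge |\log h|^2$ the correlation between $\psi(x)$ and $\psi(y)$ is extremely small. First I would write the left-hand side of \eqref{linneq} explicitly: if $(\psi(x),\psi(y))$ is a centered Gaussian vector with variances $\sigma_x^2,\sigma_y^2$ and covariance $\rho$, then
\begin{equation}
\bE_N\left[ \ind_{[-1,1]}(\psi(x)+v_1)\ind_{[-1,1]}(\psi(y)+v_2) \right]\,=\,\int_{[-1-v_1,1-v_1]\times[-1-v_2,1-v_2]} g(s,t)\,\dd s\,\dd t\, ,
\end{equation}
with $g$ the density of $(\psi(x),\psi(y))$. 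The point is that $g(s,t)=g_x(s)g_y(t)\bigl(1+O(\rho\,|s|\,|t|/\sigma^4)+O(\rho^2/\sigma^4)\bigr)$ uniformly for $(s,t)$ in the relevant rectangle, where $g_x,g_y$ are the one-dimensional marginal densities; this is just the Taylor expansion of the bivariate Gaussian density in the off-diagonal entry of its covariance matrix, and it is legitimate because $\sigma_x,\sigma_y$ are bounded below (they are within $O(h^c)$ of $\sigma_d^2$, cf. Remark~\ref{rem:boundary-0}) and $\rho\to 0$.

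The key quantitative input is the size of $\rho=\mathrm{Cov}(\psi(x),\psi(y))$. Since $\psi$ restricted to a level-$0$ box is a free field with zero boundary condition, its covariance is bounded above by the killed Green function $G_{B_{0,k}}(x,y)\le G(x,y)\le C_d|x-y|^{2-d}$ by \eqref{eq:Gest-tail}; actually, because of the killing at the box boundary, one gets exponential decay in $|x-y|$, but even the polynomial bound $C_d|x-y|^{2-d}\le C_d|\log h|^{2(2-d)}$ is far more than enough for the factor $(1+C|\log h|^{-1})$ claimed. The remaining task is to control the relative error: on the rectangle $[-1-v_1,1-v_1]\times[-1-v_2,1-v_2]$ the coordinates satisfy $|s|,|t|\le u_h+1=O(\sqrt{\log(1/h)})$, so $\rho\,|s|\,|t|/\sigma^4 = O(|\log h|^{2(2-d)}\cdot \log(1/h)) = O(|\log h|^{-1})$ since $d\ge 3$, and similarly $\rho^2/\sigma^4$ is negligible. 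Integrating this pointwise estimate over the rectangle and factorizing the product of integrals yields exactly \eqref{linneq}.

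The main obstacle, such as it is, is bookkeeping: one must make sure the expansion of the bivariate Gaussian density is uniform over the whole rectangle (including the part where $|s|,|t|$ are as large as $u_h$), and that the variances, which differ slightly from $\sigma_d^2$ because of the boundary of the level-$0$ box, do not spoil the estimate — but Remark~\ref{rem:boundary-0} already records that these variance perturbations are $O(h^c)$ and hence absorbed. A clean way to organize this is to condition: write $\psi(y) = \frac{\rho}{\sigma_x^2}\psi(x) + \zeta$ with $\zeta$ independent of $\psi(x)$ of variance $\sigma_y^2-\rho^2/\sigma_x^2$, so that
\begin{equation}
\bE_N\left[ \ind_{[-1,1]}(\psi(y)+v_2)\,\big|\,\psi(x)=s\right] = P\Bigl(\zeta \in [-1-v_2-\tfrac{\rho s}{\sigma_x^2},\,1-v_2-\tfrac{\rho s}{\sigma_x^2}]\Bigr)\, ,
\end{equation}
and compare this with the unconditional probability; the shift $\rho s/\sigma_x^2 = O(|\log h|^{2(2-d)}\sqrt{\log(1/h)})$ moves the integration interval by an amount $o(1)$, and a mean-value estimate on the Gaussian density gives the multiplicative error $1+O(|\log h|^{-1})$. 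Then integrating in $s$ against the density of $\psi(x)$ restricted to $[-1-v_1,1-v_1]$ completes the proof.
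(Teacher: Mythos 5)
Your proposal is correct and, in its ``clean way to organize this'' form, is essentially the paper's own proof: condition on $\psi(x)$ via Gaussian regression, bound the resulting shift $c_N(x,y)\psi(x)=O(|\log h|^{2(2-d)}u_h)$ of the target interval together with the $O(|\log h|^{4(2-d)})$ change of conditional variance, and observe that the worst case $d=3$ produces exactly the multiplicative error $1+O(|\log h|^{-1})$. The only inaccuracy is the parenthetical claim that killing at the box boundary gives exponential decay of the covariance (it does not for the massless field), but since you only ever use the polynomial bound $\mathrm{Cov}(\psi(x),\psi(y))\le G(x,y)\le C_d|x-y|^{2-d}$, just as the paper does, nothing in the argument is affected.
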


\medskip 

\begin{proof} 
 We need to bound 
  $\bE_N[ \ind_{[-1,1]}(\psi(y)+v_2) \ | \ \psi(x)]$ when $\psi(x)\in [-1,1]-v_1$.
 Let $\sigma^2_{y}$ denote the variance of $\psi(y)$ and $(\sigma^{(x)}_y)^2$ is the  variance of $\psi(y)$  conditional to $\psi(x)$:
 let us remark from now that these variances are bounded below by the variance of the field on one site conditioned to its nearest neighbors, that is  $1/(2d)$.
 The covariance of $\psi(x)$ and $\psi(y)$, that is $\bE_N[\psi(x) \psi(y)]$, is  denoted, as usual, by $G_N(x,y)$.
 Recall that, by standard computations on bi-variate Gaussian vectors, $\bE_N[\psi(y) \vert \psi(x)]= c_N(x,y) \psi(x)$ with
\begin{equation}
\label{eq:somctrlvar1}
 c_N(x,y)\,:=\,  \frac{G_N(x,y)}{\gs_x^2}\, \le\, 2d G(0, y-x) \, \le  C \vert\log h\vert ^{2 (2-d)}\, ,
 \end{equation}
 with $C$ a constant that depends only on $d$: in this proof we re-use  $C$ precisely in this sense, possibly updating its value. 
Moreover for the conditional variance we have the formula
\begin{equation}
\label{eq:somctrlvar2}
(\sigma^{(x)}_y)^2\,=\,  \left(1- \left(\frac{G_N(x,y)}{\gs_x \gs_y}\right)^2
\right) \gs_y^2 \, =\, \gs_y^2 - \left(c_N(x,y)\gs_x\right)^2\, .
 \end{equation}
 Therefore \eqref{eq:somctrlvar1} and \eqref{eq:somctrlvar2} yield
 \begin{equation}
 (\Delta \sigma)^2\, :=\,  \sigma^2_{y}- (\sigma^{(x)}_y)^2\, \, =  \, \left(c_N(x,y)\gs_x\right)^2 \, \le\,  G(0,0)C^2\,  \vert \log h\vert^{4 (2-d)}\,.
 \end{equation}
Now we remark that  we have 
 \begin{equation}\begin{split}
   \bE_N[ \ind_{[-1,1]}(\psi(y)+v_2) ]&
  = \frac{1}{\sqrt{2\pi}\sigma_{y}}\int_{[-1,1]} e^{ -\frac{(u-v_2 )^2}{2\sigma^2_{y}}}\dd u,\\
  \bE_N[ \ind_{[-1,1]}(\psi(y)+v_2) \ | \ \psi(x)=z]
   &= \frac{1}{\sqrt{2\pi}\sigma^{(x)}_{y}}\int_{[-1,1]} e^{ -\frac{(u-v_2- c_N(x,y)z)^2}{2 (\sigma^{(x)}_y)^2}}\dd u\, ,
 \end{split}\end{equation}
 with some small abuse of notation in the second identity. 
 We aim at controlling the ratio between these two quantities. For this it is sufficient to observe that for every $|z|, |v_2|\le u_h$ and $|u|\le 1$
\begin{equation}
\begin{split}
\frac{\sigma_{y}}{\sigma^{(x)}_{y}} e^{ \frac{(u-v_2 )^2}{2\sigma^2_{y}}-\frac{(u-v_2- c_N(x,y) z)^2}{2 (\sigma^{(x)}_y)^2}
  } \, &= \, \sqrt{1+ \left(\frac{\Delta \sigma}{\sigma^{(x)}_{y}}\right)^2} e^{ -\frac{(u-v_2 )^2 (\Delta \sigma)^2}{2\sigma^2_{y}(\sigma^{(x)}_{y})^2}+ \frac{(u-v_2 )^2- (u-v_2- c_N(x,y) z)^2 }
  {2(
\sigma^{(x)}_{y})^2}}\\
&\le \, 
\sqrt{1+ \left(\frac{\Delta \sigma}{\sigma^{(x)}_{y}}\right)^2} e^{  \frac{ \vert u-v_2\vert c_N(x,y) \vert z\vert }
  {(
\sigma^{(x)}_{y})^2}}\, 
\le\,  1+  \frac{C}{\vert \log h \vert}\, ,
%( u^2 _h (\log h)^{2(2-d)} ).
\end{split}
\end{equation}
where in the last inequality we have used \eqref{eq:somctrlvar1} and \eqref{eq:somctrlvar2}, obtaining thus an upper bound of
$1+C \max(\vert \log h\vert ^{4(2-d)}, u_h^2 \vert \log h\vert ^{2(2-d)})$ and the worst case estimate is for $d=3$. We have therefore completed the proof of Lemma~\ref{th:alm54}.
 \end{proof}
 \medskip

Using \eqref{linneq} when $|x-y|\ge \vert\log h\vert^2$ and $\ind_{[-1,1]}(\phi(x))\ind_{[-1,1]}(\phi(y))\le \ind_{[-1,1]}(\phi(x))$ when $|x-y|< (\log h)^2$
we obtain that 
\begin{multline}
 \bE_N\left[ \sum_{x,y\in \overbar{B}_1} \ind_{[-1,1]}(\phi(x))\ind_{[-1,1]}(\phi(y)) \ \bigg| \ \overbar \psi \right]  \\ \le (1 + C|\log h|^{-1}) \left(\bE_N\left[ \sum_{x\in\overbar{B}_1} \ind_{[-1,1]}(\phi(x)) \ \bigg| \ \overbar \psi \right] \right)^2 \\ +  (\log h)^{2d}
  \bE_N\left[ \sum_{x\in \overbar{B}_1} \ind_{[-1,1]}(\phi(x)) \ \bigg| \ \overbar \psi \right] \, .
\end{multline}
This inequality can be written in compact form if we call $V_N$ the conditional variance of $ \sum_{x\in \overbar{B}_1} \ind_{[-1,1]}(\phi(x))$ and $m_N$ the conditional mean, which, by \eqref{eq:alm54}, is bounded below by $\overbar{N}_h^d h^{1-\gd}\ge h^{-1-\gd}$ (by \eqref{eq:barboxsize}):
\begin{equation}
\frac{V_N}{m_N^2}\,  \le\,  \frac{C}{\vert \log h \vert} + \frac{\vert \log h\vert^{2d} }{m_N} \, \le \, \frac{2C}{\vert \log h \vert}\,.
\end{equation}
Therefore, using again \eqref{eq:alm54}, we see that for every $\eta>0$ 
\begin{equation}
 \bP_N\left( \sum_{x\in\overbar{B}_1} \ind_{[-1,1]}(\phi(x))\le  \overbar{N}^d_h h^{1-\gd/2} \, \bigg\vert\, \overbar \psi\right)  \, \le\,  \, \eta\,,
\end{equation}
by choosing $h$ small, and, by recalling  \eqref{eq:cE1}, we see that in the case $(i)$, inequality \eqref{eq:unecas} holds. 
\medskip

\subsubsection{The case {\rm (ii)}}
In this case it suffices to observe that 
\begin{multline}
\label{eq:alm55}
 \bE_N\left[ \sum_{x \in \overbar{B}_3} \ind_{[-1,1]}(\phi(x)) \ \bigg| \ \overbar \psi \right]\, \le\, 
  \overbar{N}^d_h P \left( \gs_d \cN \ge  (1+\gd)u_h -d^2\right)
  \\
  \le \, 
  \overbar{N}^d_h \exp\left( - \frac {\left((1+\gd)u_h -d^2\right)^2}{2 \gs_d^2}\right)\, 
  \le  \overbar{N}^d_h
  h^{1+\gd}\, ,
\end{multline} 
where, in the last step, we have used \eqref{def2uh}
and from this, applying Markov's inequality we have (recall   \eqref{eq:cE1})
\begin{equation}
\bP\left(\cE^{(1)} \ | \ \bar \psi\right) \le \bP_N\left( \sum_{x \in \overbar{B}_3} \ind_{[-1,1]}(\phi(x)) \ge h\chi(\beta)\overbar{N}^d_h
 \ \bigg| \ \overbar \psi \right)\, \le\, (\chi(\beta))^{-1}h^{-\gd}\, .
\end{equation}
% , is again incompatible  with $\cE^{(1)}$, i.e. with the fact that the contact fraction has to be bounded below by $h$ times a positive constant. 
So, also  in the case (ii) we have that \eqref{eq:unecas} holds.
\medskip

\subsubsection{The case  {\rm (iii)}} Here again just a first moment estimate suffices. As we pointed out 
$H_0(x)\le (1+3\gd) u_h$ for every $x$ in the level-$0$ box $B_k$ we are considering. So $\overbar \psi (x)\le  (1+3\gd) u_h +d^2$ for $x\in B_k$ and if we choose $\gd\le \gep/6$ we have that for $h$ sufficiently small
\begin{equation}
\bE_N\left[ \sum_{x \in B_k} \ind_{(-(1+ \gep)u_h, (1+\gep)u_h)^\complement}(\phi(x)) \ \bigg| \ \overbar \psi \right]\, \le\, 2 N_0^d 
P \left( \gs_d \cN \ge \gep u_h/2\right) \, \le \,  2 N_0^d h^{\gep^2/5}\, ,
\end{equation} 
and the Markov inequality immediately yields that the conditional probability of the event $\cE^{(2)}$, defined in  \eqref{eq:cE2}, can be made arbitrarily small, in particular smaller than $\eta$, by choosing $h$ smaller than a suitable constant (that depends on $\gep$). 
Therefore also  in the case $(iii)$ we obtain \eqref{eq:unecas}.

\medskip

This completes the proof of Proposition~\ref{unecas}. 
\qed

\subsection{The multiscale bound: proof of  Proposition~\ref{smalllap}}
\label{sec:multiscale}

Recall the two scale decomposition \eqref{eq:2psi} and that we need to control $\overbar \psi$. 
We start by a notational remark: we have  
$\overbar{\psi}= \bE\left[ \phi_{\gL_N}\vert \bbG _0\right]$ and $\psi= \phi_{\gL_N}- \bE\left[ \phi_{\gL_N}\vert \bbG _0\right]$, where
we set for conciseness 
$\bE[ \,\cdot\, \vert A]:=\bE[ \,\cdot\, \vert \cF_A]$.
We recall that we use the notation of $\phi_A$ for the free field with $0$ boundary conditions outside $A\subset \bbZ^d$.
In fact, in this section we avoid using $\bP_N$: the law of $\phi$ under $\bP_N$  just coincides with the law of $\phi_{\mathring{\gL}_N}$ (and the probability is just denoted by $\bP$).
By noticing that
$\bE[ \phi_{\gL_N}\vert  \bbG _J]=0$ simply because by definition $\bbG _J= \tilde \bbG_0= \gL_N^\complement$
we see that
\begin{equation}
\label{eq:multiscale}
\overbar \psi\, =\, \sum_{j=1}^J \psi_j \, , 
\end{equation}
where
\begin{equation}
\label{eq:multiscale-1}
  \psi_j\,:=\, \bE\left[\phi_{\gL_N}\, \big \vert\,   \bbG _{j-1}\right]-
 \bE\left[\phi_{\gL_N}\, \big \vert\,   \bbG _{j}\right]\, .
\end{equation}
By constructions, the field $\psi_j(x)$ and $\psi_j(y)$ are independent if $x$ and $y$ belong to different level-$j$ boxes.
And of course $(\psi_j)_{1, \dots, J}$ is a family of independent fields. 
This independence is going to play a crucial role.

Let us recall  now the definitions
\eqref{eq:bigraddef}-\eqref{eq:cBN}.
From \eqref{eq:multiscale} of course we have
\begin{equation}
\label{eq:fmsc-1}
\left\vert \nabla_{e g} \overbar \psi(x)\right\vert \, \le \, 
\sum_{j=1 }^J \left\vert \nabla_{eg} \psi_j(x)\right\vert\, .
\end{equation}

\medskip

\noindent
\emph{Proof of proof of  Proposition~\ref{smalllap}.}
With reference to \eqref{eq:fmsc-1}: we need to estimate the left-hand side and we are going to  do so 
by estimating every level-$k$ separately. We start by estimating the variance of 
$\nabla_{eg} \psi_j(x)$ and for this we use
\begin{equation}
\label{eq:varbound5}
\Var\left( \nabla_{eg} \psi_j(x)\right)\,=\, \bE \left[
 \Var_{ \bbG _{j}}\left(  \bE\left[\nabla_{eg} \phi_{\gL_N}(x)\, \big \vert\,   \bbG _{j-1}\right] \right)\right]
  \, \le \, \Var\left(  \bE\left[\nabla_{eg} \phi_{\gL_N}(x)\, \big \vert\,   \bbG _{j-1}\right] \right)\,,
\end{equation}
where $ \Var_{ \bbG _{j}} ( \cdot) $ is the variance with respect to $\bP(\cdot \vert  \bbG _{j})$.
But we can go even farther: in fact 
\begin{equation}
\label{eq:varbound6}
\Var\left( \nabla_{eg} \psi_j(x)\right)\, \le \, \Var\left(  \bE\left[\nabla_{eg} \phi(x)\, \big \vert\,   \bbG _{j-1}\right] \right)\,,
\end{equation}
where $\phi$ is now the infinite volume field (note the difference between the right(most)-hand sides in \eqref{eq:varbound5}
and \eqref{eq:varbound6}). This is simply because, using the same trick we have repeatedly used up to now,  
$\phi $ can be written as sum of two independent fields: the first one is the trace of the infinite volume LFF over the complement of $\gL_N$ harmonically continued in $\gL_N$ and the second one is a free field with Dirichlet boundary conditions outside $\gL_N$. Hence the variance in right-hand side of \eqref{eq:varbound6} can be written as the sum of two variances and one of them coincides with the right-hand side of \eqref{eq:varbound5} and \eqref{eq:varbound6}
holds.

At this point we can apply Lemma~\ref{th:d+2}, together with Lemma~\ref{th:encadre},  to the right-hand side of  \eqref{eq:varbound6}
and we see that for every $j\in \lint 1, 2^J\rint$
\begin{equation}
\label{eq:varbound7}
\max_{e, g} \max_{k\in \lint 1, 2^{d(J-j+1)}\rint}\max_{x \in B_{j-1,k}}\Var\left( \nabla_{eg} \psi_j(x)\right)\, \le \, \frac{C}{N_{j}^{d+3/2}}\, ,
%\,\le \, \frac{C'}{N_{0}^{d+3/2}2^{j(d+3/2)}}\, ,
\end{equation}
where the exponent $3/2$ has been chosen arbitrarily in $(1,2)$    (we could even choose $2$ if we introduce a 
logarithmic  
correction, see  Lemma~\ref{th:d+2}), but any number larger than $1$ suffices for our purposes. $C$ is just a $d$ dependent constant
and it has been chosen also to compensate the fact that for readability we replaced $N_{j-1}$ with $N_j$. 
From \eqref{eq:varbound7} we directly obtain  the Gaussian tail estimate
\begin{equation}
\bP\left( \left \vert  \nabla_{eg} \psi_j(x)\right\vert \, \ge \, \frac{h^{4/d}}{j^2 {(\log(1/h))^2} }\right) \,
\le \, 2 \exp\left(- \frac{h^{8/d}N_j^{d+3/2}}{2C j^4  (\log(1/h))^4}
\right)\, ,
\end{equation}
uniformly in $e$, $g$ and in $x$ in all the level-$(j-1)$ boxes. Therefore, by a union bound, we have that
if $\check B_j$ is the union of the $2^d$ level-$(j-1)$ boxes contained in the level-$j$ box $B_j$
\begin{multline}
\bP\left( \max_{e, g}
\max_{x \in \check B_j} 
\left \vert  \nabla_{eg} \psi_j(x)\right\vert \, \ge \, 
 \frac{h^{4/d}}{j^2 (\log(1/h))^2 }\right) 
 \,
\le 
\\
 2d^2 N_j^d \exp\left(- \frac{h^{8/d}N_j^{d+3/2}}{2C j^4 ( \log(1/h))^4}
\right)\,=: \, p_j(h)\, .
\end{multline}
Now we remark that  \eqref{eq:0boxsize} implies
\begin{equation}
h^{8/d}N_0^{d+3/2}\, \ge \, h^{(8/d)-(2/d)(d+3/2)}\, \ge\, h^{-1/3}\, , 
\end{equation}
and for $h$ sufficiently small we have for every $j\in \lint 1,  J\rint$
\begin{equation}
  \log p_j(h)\, \le \, -\frac{h^{-1/3} 2^{j(d+3/2)}}{2Cj^4 (\log(1/h))^4} + d \log N_j +\log (2 d^2)
\, \le \,  -\frac{h^{-1/4} 2^{jd} \exp(j)}{j^4 }\, ,
\end{equation}
because $d \log N_j \le j d \log 2+ \log (1/h)$ for $h$ small and in the last step we have used $2^{3/2} > e$.
We record explicitly the bound that we will use
\begin{equation}
\label{eq:pjbound}
p_j(h)\, \le \, \exp\left(- \frac{h^{-1/4} 2^{jd}\exp(j)}{j^4 } \right)\, ,
\end{equation}
and that we choose $h$ sufficiently small to guarantee that $\sup_j p_j(h)<1$. But we are going to choose $h$ small to
satisfy also the stronger requirement that for given  $\eta>0$ we have $p_j(h)<(1/2)^{2j^2/\eta}$ for every $j$: 
we are going to choose $\eta:= 6\gd/ \pi^2$, with $\gd$ the constant  entering the definition \eqref{eq:cBN} of $\cB(N, \gd)$, but this is going to be irrelevant till the very last steps of the proof. 
 We can then 
apply the binomial bound in Lemma~\ref{th:binomial2} and obtain
\begin{multline}
\bP\left(
\#
\left \{
k \in \lint1, 2^{d(J-j+1)}\rint:\, \max_{e, g} \max_{x\in \check B_{j,k}}
\left \vert  \nabla_{eg} \psi_j(x)\right\vert 
\, \ge \, \frac{h^{4/d}}{j^2 (\log(1/h))^2 }
\right\}
 \, \ge \, \frac{\eta}{j^2}  2^{d(J-j+1)}
\right)
\\
\le\, \left(p_j(h)\right)^{\eta 2^{d(J-j+1)}/j^2}\, \le \, 
 \exp\left(- \eta \frac{h^{-1/4} 2^{d(J+1)}\exp(j)}{j^6 } \right)\, .
\end{multline}
Using the independence of $(\psi_j)_{j=1,\ldots,J}$ we can control  all levels at the same time:
\begin{equation*}
\bP\left(\exists j \in \lint 1, J\rint \textrm{ s.t. }
\#
\left \{
k :\, \max_{e, g} \max_{x\in \check B_{j,k}}
\left \vert  \nabla_{eg} \psi_j(x)\right\vert 
\, \ge \, \frac{h^{4/d}}{j^2 (\log(1/h))^2}
\right\}
 \, \ge \, \frac{\eta}{j^2}  2^{d(J-j+1)}
\right)
\end{equation*}
\begin{equation}
\begin{split}
\label{eq:estpr41}
%\phantom{movemovemove}
 &\le \, 1- \prod_{j=1}^J \left(
1-  \exp\left(- \eta \frac{h^{-1/4} 2^{d(J+1)}\exp(j)}{j^6 } \right)
\right)
\\
& \le \, 1- \prod_{j=1}^J \left(
1-  \exp\left(- 2\eta h^{-1/5} 2^{d(J+1)}j 
\right) \right)\\
&\le \, 1- \exp\left(\frac 12\sum_{j=1}^\infty 
 \exp\left(- 2\eta h^{-1/5} 2^{d(J+1)}j 
\right) \right)
\\
& \le \,
 \exp\left(- \eta h^{-1/5} 2^{d(J+1)}
\right) \, \le \,
 \exp\left(-  h^{2-\frac 16} N^d
\right) \, ,
\end{split}
\end{equation}
Now we recall \eqref{eq:fmsc-1} and the definition \eqref{eq:cBN} of the event $\cB(N, \gd)$
and we see that on the complementary of the event whose probability is estimated in \eqref{eq:estpr41}
we have that 
\begin{equation}
\max_{e, g}\Vert \nabla _{eg} \overbar \psi \Vert_{\infty, k}\,\le\, \frac{h^{4/d}}{(\log(1/h))^2} \sum_{j=1}^\infty \frac 1{j^2} \, \le \,
\frac 1{N_0^2}\, ,
\end{equation}
except for at most a fraction $\gd=\eta \sum_{j\in \bbN} j^{-2}$ of the $K=2^{dJ}$ level-$0$ boxes.
This completes the proof of Proposition~\ref{smalllap}.
\qed

 \appendix
 
 \section{Technical estimates}

\subsection{Standard binomial bounds}
 $\textrm{Bin}(n,p)$ denotes a binomial random variable of parameters $n$ and
$p$.
\medskip

\begin{lemma}
\label{th:binomial}
For $\gD \in [0,1/6]$ we have  
\begin{equation}
\label{eq:binleft}
P\left(\mathrm{Bin}(n,p)  \, \le\,  p \gD n \right)\, \le\, \exp(-np/2)\, ,
\end{equation}
and for $\gD \ge 3$
\begin{equation}
 \label{eq:binleft2}
P\left(\mathrm{Bin}(n,p)  \, \ge\,   p \gD n \right)\, \le\, \exp(-np/2)\, .
\end{equation}
\end{lemma}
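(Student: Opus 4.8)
\textbf{Proof proposal for Lemma~\ref{th:binomial}.}

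The plan is to derive both tail bounds from the standard exponential Chernoff estimate for the binomial distribution, handling the lower tail \eqref{eq:binleft} and the upper tail \eqref{eq:binleft2} separately but with the same mechanism. First I would recall that for any $t>0$, setting $X=\mathrm{Bin}(n,p)$, we have the moment generating function $\bbE[e^{tX}]=(1-p+pe^t)^n$, and likewise $\bbE[e^{-tX}]=(1-p+pe^{-t})^n$. Using $1+x\le e^x$ we get the two clean bounds $\bbE[e^{tX}]\le \exp\big(np(e^t-1)\big)$ and $\bbE[e^{-tX}]\le \exp\big(np(e^{-t}-1)\big)$.

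For the lower tail, apply Markov's inequality to $e^{-tX}$: for any $t>0$,
\begin{equation}
P(X\le p\gD n)\le e^{tp\gD n}\,\bbE[e^{-tX}]\le \exp\big(np(e^{-t}-1)+tp\gD n\big)=\exp\big(np(e^{-t}-1+t\gD)\big).
\end{equation}
It then suffices to exhibit a single value of $t$ for which $e^{-t}-1+t\gD\le -1/2$ uniformly over $\gD\in[0,1/6]$. Taking $t=2$ (say) gives $e^{-2}-1+2\gD\le e^{-2}-1+1/3=0.4689\ldots<-1/2$? No — let me instead note that the worst case is $\gD=1/6$, where the exponent is $e^{-t}-1+t/6$; minimizing in $t$ or simply choosing a convenient $t$ such as $t=\log 6$ yields $e^{-\log 6}-1+\tfrac{\log 6}{6}=\tfrac16-1+\tfrac{\log 6}{6}=-0.5349\ldots\le -1/2$. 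Since $e^{-t}-1+t\gD$ is nondecreasing in $\gD$, this bound holds for all $\gD\in[0,1/6]$, which gives \eqref{eq:binleft}.

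For the upper tail, apply Markov's inequality to $e^{tX}$: for any $t>0$,
\begin{equation}
P(X\ge p\gD n)\le e^{-tp\gD n}\,\bbE[e^{tX}]\le \exp\big(np(e^t-1-t\gD)\big),
\end{equation}
and again one needs a single $t>0$ making $e^t-1-t\gD\le -1/2$ for all $\gD\ge 3$. Here the expression is nonincreasing in $\gD$, so the worst case is $\gD=3$, where one must check $e^t-1-3t\le -1/2$ for some $t$: taking $t=1$ gives $e-1-3=-1.281\ldots\le-1/2$, so $t=1$ works and yields \eqref{eq:binleft2}. The only mild subtlety — and the closest thing to an obstacle — is making sure the chosen constants $t$ actually clear the $-1/2$ threshold at the endpoint of the stated range of $\gD$, together with the monotonicity-in-$\gD$ observation that lets one reduce to that endpoint; everything else is the textbook Chernoff computation.
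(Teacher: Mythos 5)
Your proof is correct. Both you and the paper run a Chernoff-type argument via the exponential Markov inequality, but the implementations differ in a way worth noting. The paper keeps the exact binomial rate function $f(p,\gD)= p\gD\log\gD + (1-p\gD)\log((1-p\gD)/(1-p))$ and then uses convexity of $p\mapsto f(p,\gD)$ to linearize it, $f(p,\gD)\ge p\,(1-\gD(1+\log(1/\gD)))$, reducing matters to checking that $1-\gD(1+\log(1/\gD))\ge 1/2$ at $\gD=1/6$ and $\gD=3$ (with monotonicity in $\gD$). You instead bound the moment generating function by $\bbE[e^{\pm tX}]\le\exp\big(np(e^{\pm t}-1)\big)$ via $1+x\le e^x$, which makes the exponent linear in $p$ from the start, so no convexity step is needed; the price is that you work with a fixed, non-optimized $t$, and you must exhibit explicit choices ($t=\log 6$ for the lower tail, $t=1$ for the upper tail) clearing the $-1/2$ threshold at the endpoint values of $\gD$ — which you do correctly, using monotonicity in $\gD$ to reduce to $\gD=1/6$ and $\gD=3$. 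Your route is slightly more elementary; the paper's is the one that generalizes to optimal large-deviation rates. One cosmetic remark: in the aborted attempt with $t=2$ your numerical value is wrong ($e^{-2}-1+1/3\approx-0.531$, not $0.4689$; in fact $t=2$ would also have worked), but since you discard that attempt and verify $t=\log 6$ correctly, nothing is affected.
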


\medskip

\noindent
\emph{Proof.}
For $\gD \in [0, 1/p]$  set
with \begin{equation} 
f(p, \gD)\, :=\, p \gD \log \gD +(1- p\gD) \log ((1- p \gD)/(1-p)).
\end{equation}
By the exponential Markov inequality  we have that for every  $\gD \in [0,1]$  
\begin{equation}
P\left(\mathrm{Bin}(n,p)  \, \le  p \gD n \right) \, \le\,   \exp \left( - n f(p, \gD)  \right)\, ,
\end{equation}  
and for every  $\gD \in [1,1/p]$
\begin{equation}
P\left(\mathrm{Bin}(n,p)  \, \ge  p \gD n \right)\,  \le\,   \exp \left( - n f(p, \gD)  \right)\, .
\end{equation}  
The convexity of
$p \mapsto f(p, \gD)$ yields $f(p, \gD) \ge p \partial_p f(0, \gD)= p (1- \gD(1+ \log(1/\gD)))$, so the result 
follows because $1- \gD(1+ \log(1/\gD))\ge 1/2$ for $\gD\le 1/6$ and $\gD\ge 3$.
\qed

\medskip

\begin{lemma}
\label{th:binomial2}
Given $\eta>0$ such that
$p\le (1/2)^{2/\eta}$, we have $P(\mathrm{Bin}(n,p) \ge \eta n)\le p^{\eta n/2}$.
\end{lemma}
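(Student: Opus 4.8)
This is a standard large-deviation upper bound for the upper tail of a binomial, so the plan is to apply the exponential Markov (Chernoff) inequality and then simplify the rate function using the hypothesis $p \le (1/2)^{2/\eta}$. First I would recall that for any $\lambda > 0$,
\begin{equation}
P(\mathrm{Bin}(n,p) \ge \eta n) \le e^{-\lambda \eta n}\, \bbE\left[e^{\lambda \mathrm{Bin}(n,p)}\right] = e^{-\lambda \eta n} \left(1 - p + p e^{\lambda}\right)^n = \left(e^{-\lambda \eta}(1 - p + p e^{\lambda})\right)^n.
\end{equation}
Rather than optimizing over $\lambda$ exactly (which would reproduce the full rate function $f(p,\eta)$ as in Lemma~\ref{th:binomial}), I would make the cheap choice $e^{\lambda} = 1/p$, i.e.\ $\lambda = \log(1/p)$, which is legitimate since $p \in (0,1)$. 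With this choice $1 - p + p e^{\lambda} = 1 - p + 1 = 2 - p \le 2$, so the bound becomes
\begin{equation}
P(\mathrm{Bin}(n,p) \ge \eta n) \le \left(p^{\eta}\cdot 2\right)^n = \left(2 p^{\eta}\right)^n.
\end{equation}

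\textbf{Conclusion.} It then remains to absorb the factor $2^n$ into $p^{\eta n/2}$. Since $p \le (1/2)^{2/\eta}$ we have $p^{\eta/2} \le 1/2$, hence $2 p^{\eta/2} \le 1$, and therefore
\begin{equation}
\left(2 p^{\eta}\right)^n = \left(2 p^{\eta/2}\right)^n p^{\eta n / 2} \le p^{\eta n/2},
\end{equation}
which is exactly the claimed inequality $P(\mathrm{Bin}(n,p) \ge \eta n) \le p^{\eta n/2}$.

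\textbf{Main obstacle.} There is essentially no obstacle here: the only thing to be a little careful about is the bookkeeping of which power of $p$ is used where. The non-optimal choice $e^\lambda = 1/p$ is what makes the proof short, and the hypothesis on $p$ is tailored precisely so that the leftover constant $2^n$ can be killed by splitting off half the exponent $\eta n/2$. If one wanted a cleaner statement one could track the exact Chernoff rate, but that is unnecessary for the applications in Section~\ref{sec:multiscale}, where $\eta$ and $p$ are chosen so that the crude bound suffices.
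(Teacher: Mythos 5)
Your proof is correct. The route is only mildly different from the paper's: the paper dispenses with the exponential Markov inequality altogether and simply writes
\begin{equation*}
P(\mathrm{Bin}(n,p)\ge \eta n)\,=\,\sum_{k\ge \eta n}\binom{n}{k}p^k(1-p)^{n-k}\,\le\, p^{\eta n}\sum_{k\ge \eta n}\binom{n}{k}\,\le\, p^{\eta n}2^n\,,
\end{equation*}
i.e.\ it extracts $p^{\eta n}$ directly from each term and bounds the remaining sum of binomial coefficients by $2^n$. Your Chernoff bound with the non-optimal tilt $e^{\lambda}=1/p$ produces $\bigl((2-p)p^{\eta}\bigr)^n$, which is the same estimate (marginally sharper, since $2-p\le 2$), and from that point on the two arguments coincide: the hypothesis $p\le (1/2)^{2/\eta}$ gives $2p^{\eta/2}\le 1$, which absorbs the factor $2^n$ and leaves $p^{\eta n/2}$. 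So the only real difference is direct summation versus exponential Markov; both are one-line crude bounds, the paper's being slightly more elementary, yours making transparent that the estimate is a (deliberately suboptimal) Chernoff bound, consistent with the exact rate function used in Lemma~\ref{th:binomial}.
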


\medskip

\begin{proof}
If follows by remarking that
\begin{equation}
P(\mathrm{Bin}(n,p) \ge \eta n)\, =\, \sum_{k\ge \eta n} \left(\begin{array}{c}n \\ k\end{array}\right) p^k(1-p)^k 
\,\le\, p^{\eta n} \sum_{k\ge \eta n} \left(\begin{array}{c}n \\ k\end{array}\right)\, \le \, p^{\eta n}2^n\, \, .
\end{equation} 
\end{proof}

\subsection{A disorder cut-off estimate}
\label{sec:cut-off}
Let us consider a finite subset $A$ of $\bbZ^d$ and an arbitrary measure $\bP$ on $\bbR^{\bbZ^d}$ and the corresponding 
product $\gs$-algebra, with $\cB$ a measurable event in this $\gs$-algebra. We set 
$Z_{A, \go}(\cB):= \bE[ \exp(\sum_{x \in A}(\gb\go_x-\gl(\gb)+h) \gd_x) \ind_\cB]$
with $\gd_x$ a random variable on the probability space we just introduced taking only values $0$ or $1$. For $K>0$ we introduce also
$\overbar \go_x := \go_x \ind_{\vert \go_x\vert \le K}$ and
\begin{equation}
L_K\, :=\, \bbE\left[ \vert \go_x \vert ; \, \vert \go_x \vert >K\right]\,.
\end{equation}

\medskip

\begin{lemma}
\label{th:theboundz}
We have 
\begin{equation}
\label{eq:theboundz}
     \big\vert\bbE \left[\log Z_{A,  \go}(\cB)\right]-
          \bbE \left[\log  Z_{A,\overbar\go}(\cB)\right]\big\vert\, \le\,  \beta |A| L_K.
 \end{equation}
\end{lemma}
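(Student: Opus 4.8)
\textbf{Proof plan for Lemma~\ref{th:theboundz}.}
The plan is to reduce the statement to a one-site-at-a-time replacement argument, exploiting the fact that changing a single disorder variable $\go_x$ to $\overbar\go_x$ affects $\log Z_{A,\go}(\cB)$ by at most $\gb|\go_x - \overbar\go_x|$ in absolute value. First I would enumerate the sites of $A$ as $x_1,\dots,x_{|A|}$ and define the interpolating disorder configurations $\go^{(j)}$ that agree with $\overbar\go$ on $\{x_1,\dots,x_j\}$ and with $\go$ on the rest, so that $\go^{(0)}=\go$ and $\go^{(|A|)}=\overbar\go$. By a telescoping sum and the triangle inequality, it suffices to bound $\bbE|\log Z_{A,\go^{(j-1)}}(\cB) - \log Z_{A,\go^{(j)}}(\cB)|$ by $\gb\, L_K$ for each $j$, since summing over $j$ yields the claimed $\gb|A|L_K$.

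For a single replacement at a site $x=x_j$, I would use the elementary fact that for any probability measure and any bounded measurable perturbation of the Hamiltonian by $(\gb \go_x - \gb \overbar\go_x)\gd_x$, the change in the log-partition function is controlled: writing $Z_{A,\go^{(j-1)}}(\cB) = \bE[e^{(\gb\go_x-\gl(\gb)+h)\gd_x} \cdot (\text{rest})\ind_\cB]$ and similarly with $\overbar\go_x$, the ratio of the integrands is $e^{\gb(\go_x-\overbar\go_x)\gd_x}$, which lies in the interval between $e^{-\gb|\go_x-\overbar\go_x|}$ and $e^{\gb|\go_x-\overbar\go_x|}$ because $\gd_x\in\{0,1\}$. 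Hence
\begin{equation}
\left| \log Z_{A,\go^{(j-1)}}(\cB) - \log Z_{A,\go^{(j)}}(\cB)\right| \le \gb |\go_x - \overbar\go_x| = \gb |\go_x| \ind_{\{|\go_x|>K\}}\, .
\end{equation}
Taking expectations gives $\gb\,\bbE[|\go_x|;\,|\go_x|>K] = \gb\, L_K$, and since the $\go_x$ are identically distributed this bound is uniform in $j$.

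Summing the $|A|$ one-site estimates via the telescoping identity
\begin{equation}
\log Z_{A,\go}(\cB) - \log Z_{A,\overbar\go}(\cB) = \sum_{j=1}^{|A|} \left( \log Z_{A,\go^{(j-1)}}(\cB) - \log Z_{A,\go^{(j)}}(\cB)\right)
\end{equation}
and applying the triangle inequality under the expectation yields \eqref{eq:theboundz}. There is essentially no obstacle here: the only point requiring a small amount of care is that the integrands may be unbounded (since $\go_x$ has all exponential moments but is not bounded), so one should note that $Z_{A,\go}(\cB)$ is $\bbP$-a.s.\ finite and that the pointwise inequality above holds for every fixed realization, after which taking expectations is legitimate; the finiteness of $L_K$ is guaranteed by $\bbE[|\go|]<\infty$, which is part of our standing assumptions.
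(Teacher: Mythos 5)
Your proof is correct: the pointwise bound $|\gb(\go_x-\overbar\go_x)\gd_x|\le \gb|\go_x|\ind_{\{|\go_x|>K\}}$, valid because $\gd_x\in\{0,1\}$, does give $\bigl|\log Z_{A,\go^{(j-1)}}(\cB)-\log Z_{A,\go^{(j)}}(\cB)\bigr|\le\gb|\go_{x_j}|\ind_{\{|\go_{x_j}|>K\}}$ for every realization, and telescoping plus $\bbE[|\go|;|\go|>K]=L_K$ yields exactly \eqref{eq:theboundz}. The paper reaches the same bound by a one-shot version of the same idea: it writes the difference of the two log-partition functions as $\bbE\log \bE_{A,\overbar\go}\bigl[\exp\bigl(\gb\sum_{x\in A}\go_x\ind_{\{|\go_x|>K\}}\gd_x\bigr)\bigr]$, i.e. a Gibbs average under the measure tilted by $\overbar\go$, and then sandwiches the exponent deterministically between $\gb\sum_x\go_x\ind_{\{\go_x<-K\}}$ and $\gb\sum_x\go_x\ind_{\{\go_x>K\}}$; keeping the signs in this way shows the difference of expectations is in fact bounded by $\gb|A|$ times the larger of the two one-sided truncated tails, which is then coarsened to $L_K$. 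Your site-by-site absolute-value argument is marginally cruder (it adds the two one-sided tails) but gives the stated constant, and it actually controls the stronger quantity $\bbE\bigl|\log Z_{A,\go}(\cB)-\log Z_{A,\overbar\go}(\cB)\bigr|$; also note that the telescoping is not really needed, since the same ratio-of-integrands bound applies to all sites simultaneously. Two small points of bookkeeping: the integrands are bounded for each fixed disorder realization (the exponent is at most $\sum_{x\in A}(\gb|\go_x|+\gl(\gb)+|h|)$), so finiteness of $Z_{A,\go}(\cB)$ is automatic; what one should record instead, as the paper does, is that one may assume $\bP(\cB)>0$ so that the logarithms are finite, and that $\bbE\log Z_{A,\overbar\go}(\cB)$ is finite because $\overbar\go$ is bounded, which legitimizes writing the difference of expectations as the expectation of the difference.
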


\medskip

 \begin{proof}
 We can assume $\cP(\cB)>0$.
 We have 
 \begin{equation}\label{eq:alaf}
       \bbE \left[\log \Z_{A,  \go}\right]-
          \bbE \left[\log Z_{A,\overbar \go}\right] 
          =  \bbE \log \bE_{A, \overbar \go} \left[ \exp\left(
  \gb \sum_{x\in A} \go_x \ind_{\{ |\go_x|> K\}} \gd_x
  \right) \right] \,, 
 \end{equation}
  where we have introduced the measure $\bP_{A, \overbar \go}$ associated to the partition 
  function $Z_{A, \overbar \go}(\cB)$ neglecting $\cB$ in the notation for conciseness.
   As we trivially have (for every realization of $\phi$)
   \begin{equation}
   \sum_{x\in A} \go_x  \ind_{\{ \go_x<-  K\}} \,\le\,   \sum_{x\in A} \go_x \ind_{\{ |\go_x|> K\}} \gd_x\,  \le\,  \sum_{x\in A} \go_x  \ind_{\{ \go_x> K\}} \, ,
   \end{equation}
the right-hand side of \eqref{eq:alaf} is smaller, respectively larger, than 
\begin{equation}
\beta \bbE\left[ \sum_{x\in A} \go_x \ind_{\{ \go_x> K\}}\right]\, , \quad \text{ respectively }  \beta \bbE\left[ \sum_{x\in A} \go_x \ind_{\{\go_x<- K\}}\right]\, ,
\end{equation}
both of which in absolute value are smaller  than 
\begin{equation} 
\beta |A|  \bbE\left[ |\go_x|\ind_{\{ \vert \go_x\vert > K\}} \right]\, =\, \beta |A|  L_K\, .
\end{equation}
\end{proof}

\subsection{Harmonic extension estimates}
For the next result $B\neq \emptyset$ is an arbitrary finite connected subset of $\bbZ^d$ and $H=H_B$ is the harmonic extension of the trace
on $\bbZ^d \setminus B$  of 
a LFF $\phi$, that is $\Delta H(x)=0$ for every $x\in B$ and $H(x)=\phi(x)$ for $x  \in \bbZ^d \setminus B$. 
We recall for this section that $G(\cdot, \cdot)$ is the Green function of the walk in $\bbZ^d$, cf. Section~\ref{sec:building}.
\medskip

\begin{lemma}
\label{th:var-est0lem}
For $d\ge 3$ for every  $x\in B$ we have 
\begin{equation}
\label{eq:var-est0lem}
\Var \left(H(x)\right)\, \le\, C_d \left(\left(\dist\left(x, \bbZ^d \setminus B\right)+1\right)^{2-d}\right)\,,
\end{equation}
with $C_d>0$ the constant appearing in \eqref{eq:Gest-tail}. 
\end{lemma}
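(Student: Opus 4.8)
The plan is to rewrite $\Var(H(x))$ in terms of Green functions and then plug in the two–sided tail bound \eqref{eq:Gest-tail}. Recall first that, by the (domain) Markov property of the LFF, for $x\in B$ we have $H(x)=\bE\big[\phi(x)\,\big|\,\cF^\phi_{\bbZ^d\setminus B}\big]$, and that conditionally on $\cF^\phi_{\bbZ^d\setminus B}$ the field $(\phi(y))_{y\in B}$ is Gaussian with mean $(H(y))_{y\in B}$ and covariance $G_B(\cdot,\cdot)$, the Green function of the walk killed upon exiting $B$ (this is exactly the spatial Markov property used, e.g., in the construction of Proposition~\ref{decomp}). In particular $\Var\big(\phi(x)\,\big|\,\cF^\phi_{\bbZ^d\setminus B}\big)=G_B(x,x)$ is deterministic, and the law of total variance gives
\begin{equation*}
\Var(H(x))\,=\,\Var(\phi(x))-\bE\Big[\Var\big(\phi(x)\,\big|\,\cF^\phi_{\bbZ^d\setminus B}\big)\Big]\,=\,G(x,x)-G_B(x,x)\,.
\end{equation*}

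Next I would use the first–exit decomposition of the Green function. Let $\tau_B:=\inf\{t>0:\ S_t\notin B\}$, which is $P_x$–a.s.\ finite since $B$ is bounded. Splitting the occupation–time integral defining $G(x,x)$ at $\tau_B$ and applying the strong Markov property at $\tau_B$ yields
\begin{equation*}
G(x,x)\,=\,G_B(x,x)+E_x\big[G(S_{\tau_B},x)\big]\,,
\end{equation*}
hence $\Var(H(x))=E_x[G(S_{\tau_B},x)]$. Since $S_{\tau_B}\in\bbZ^d\setminus B$ we have $|S_{\tau_B}-x|\ge \dist(x,\bbZ^d\setminus B)\ge1$ (the last inequality because $x\in B$), so $S_{\tau_B}\neq x$ and, as $t\mapsto t^{2-d}$ is non‑increasing for $d\ge3$, \eqref{eq:Gest-tail} gives $G(S_{\tau_B},x)\le C_d|S_{\tau_B}-x|^{2-d}\le C_d\,\dist(x,\bbZ^d\setminus B)^{2-d}$. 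Taking expectations, $\Var(H(x))\le C_d\,\dist(x,\bbZ^d\setminus B)^{2-d}$; and since $\dist(x,\bbZ^d\setminus B)\ge1$ implies $\dist(x,\bbZ^d\setminus B)+1\le 2\,\dist(x,\bbZ^d\setminus B)$, we get $\dist(x,\bbZ^d\setminus B)^{2-d}\le 2^{d-2}\big(\dist(x,\bbZ^d\setminus B)+1\big)^{2-d}$, which is \eqref{eq:var-est0lem} up to the dimensional factor $2^{d-2}$ (it can be absorbed into $C_d$, the statement only needing a $d$–dependent constant; alternatively one may keep the sharper form with $\dist(x,\bbZ^d\setminus B)^{2-d}$, since that is all that is used in \eqref{eq:var-est0}).

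There is no genuine difficulty here. The three points worth a line each are: the identification of the conditional covariance of the LFF with $G_B$ (the spatial Markov property, already invoked in the paper); the a.s.\ finiteness of $\tau_B$, immediate from $B$ being finite; and the elementary replacement of $\dist(\cdot)^{2-d}$ by $(\dist(\cdot)+1)^{2-d}$. If one preferred to avoid conditional–variance bookkeeping, an equivalent route is to write $H(x)=\sum_{y}P_x(S_{\tau_B}=y)\phi(y)$, expand $\Var(H(x))=\sum_{y,y'}P_x(S_{\tau_B}=y)P_x(S_{\tau_B}=y')\,G(y,y')$, and recognise this double sum as $G(x,x)-G_B(x,x)$ via the same first–exit identity; the Green–function estimate is then applied exactly as above.
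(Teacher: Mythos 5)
Your argument is correct and follows essentially the same route as the paper: both identify $\Var(H(x))=G(x,x)-G_B(x,x)=E_x[G(S_{\tau_B},x)]$ (the paper cites \cite[p.~1676]{cf:BDG} for the variance identity rather than deriving it via total variance, and writes the exit expectation as $\sum_z p_x(z)G(z,x)$) and then apply \eqref{eq:Gest-tail} together with $|S_{\tau_B}-x|\ge\dist(x,\bbZ^d\setminus B)$. Your remark about the harmless $2^{d-2}$ factor needed to pass from $\dist^{2-d}$ to $(\dist+1)^{2-d}$ is a fair observation on the constant, not a gap.
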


\medskip

\begin{proof}
We use  that for $x\in B$ 
we have
\begin{equation}
\label{eq:var-formula}
\Var \left(H(x)\right)\,=\, \Var \left(\phi(x) \right) -  \Var \left(\phi^B_x \right)\, ,
\end{equation}
with  $\phi^B$ a free field with zero boundary conditions in $\bbZ^d\setminus B$. A proof of \eqref{eq:var-formula} is for example in \cite[p.~1676]{cf:BDG}.
If we reinterpret this formula in random walk terms we have that the variance under analysis is the expected number of visits to $x$ (by the walk that starts from $x$) after hitting the external boundary of $B$. If we call $p_x(z)$, $z \in \bbZ^d\setminus B$,  the hitting probability of the walk issued from $x$ -- of course $p_x(z)>0$ only if dist$(z,B)=1$ -- then, by using \eqref{eq:Gest-tail}, we  get to 
%\begin{multline}
\begin{equation}
\label{eq:var-formula2}
\Var \left(H(x)\right)\,=\, \sum_{z} p_x(z) G(z,x)\, \le \, 
%\\
C_d \sum_{z} \frac{p_x(z)}{(\vert z-x\vert +1)^{d-2}} \, \le \, \frac{C_d}{\left(\dist (x,\bbZ^d\setminus B)+1 \right)^{d-2}}\, .
\end{equation}
%\end{multline}
\end{proof}

Next we consider the case of $B= \mathring{\gL}_M
=\lint 1, M-1 \rint^d$ 
of (the trace of) and infinite volume LFF on $\bbZ^d \setminus \mathring{\gL}_M$.
So now $H=H_{\mathring{\gL}_M}$.
The covariance of $H$ can be written for $x, y\in \gL_M$ as 
\begin{equation}
\overbar G(x,y)\, :=\, \bE_x \left[ \sum_{n=\tau_M}^{\infty} \ind_{\{X_n=y\}} \right]\, ,
\end{equation} 

with $\tau_M$ the hitting time of the internal boundary of $\gL_M$ as in  \eqref{zoop}. 

\medskip

\begin{lemma}
\label{th:d+2}
For every $d\ge 3$ and every $\kappa> 1$ there is $C>0$ such that for every $r\ge 2$
\begin{equation}
\label{eq:d+2}
\sup_{x:\, \dist(x, \gL_M^\complement)\ge r }
\Var \left( H(x)-H(x+e)-H(x+g)+  H(x+e+g) \right)\, \le\,  C \frac{(\log (r))^\kappa}{r^{d+2}}\,.
\end{equation}
\end{lemma}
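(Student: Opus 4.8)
The goal is to bound the variance of the second discrete mixed derivative (the bi-gradient) of the harmonic extension $H = H_{\mathring{\gL}_M}$ of an infinite volume LFF from outside $\mathring{\gL}_M$, evaluated at a point $x$ which is at distance at least $r$ from $\gL_M^\complement$. The plan is to use the random walk representation $\overbar G(x,y) = \bE_x[\sum_{n=\tau_M}^\infty \ind_{\{X_n=y\}}]$ of the covariance and the fact that, since the point $x$ and its three neighbours $x+e$, $x+g$, $x+e+g$ all lie deep inside the box, the walk must travel a distance at least $r$ before it can ``feel'' the boundary. The key gain over Lemma~\ref{th:var-est0lem} is that taking a \emph{second} difference produces a gain of two extra powers of $r$, up to a logarithmic correction; morally, a first difference of the harmonic extension kernel decays like $r^{-(d-1)}$ instead of $r^{-(d-2)}$, and the second difference like $r^{-d}$, but here the variance of the double difference decays like $r^{-(d+2)}$ because the variance itself is already an $r^{-(d-2)}$ quantity and differencing it twice in a ``symmetric'' way gains two more powers.

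\textbf{Key steps.} First I would write, for the centered Gaussian vector in question,
\begin{equation}
\Var\left( \nabla_{eg} H(x) \right)\, =\, \sum_{u,v\in\{0,e,g,e+g\}} (-1)^{|u|+|v|}\, \overbar G(x+u, x+v)\, ,
\end{equation}
i.e.\ a second-order difference in \emph{both} arguments of $\overbar G$. Next I would expand $\overbar G$ using the strong Markov property at $\tau_M$: $\overbar G(x+u,x+v) = \sum_{z\in\partial} p_{x+u}(z)\, G(z,x+v)$, where $p_{x+u}(\cdot)$ is the hitting distribution of the internal boundary $\partial$ of $\gL_M$ started from $x+u$, and $G$ is the full-space Green function. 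Thus
\begin{equation}
\Var\left( \nabla_{eg} H(x) \right)\, =\, \sum_{z\in\partial}\Big(\sum_u (-1)^{|u|} p_{x+u}(z)\Big)\Big(\sum_v (-1)^{|v|} G(z,x+v)\Big)\, .
\end{equation}
The two bracketed factors are, respectively, a second difference of the hitting probability in the starting point and a second difference of $G(z,\cdot)$ near $x$. For the second factor I would use the standard estimate that $|\nabla_{eg} G(z,x)| \le C (\dist(z,x)+1)^{-d}$ (two discrete derivatives of the Green function, each gaining one power, from the gradient estimates on $G$ which follow from \eqref{eq:Gest-tail} together with the known smoothness of the continuum kernel). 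For the first factor, I would either use the analogous second-difference bound on the harmonic measure $p_{\cdot}(z)$, or more robustly split: since $p_{x+u}(z)$ sums to $1$ over $z$, and the four signed coefficients sum to zero, one can pair terms. The cleanest route: bound one bracket trivially by $\sum_u p_{x+u}(z) \le 4 \max_u p_{x+u}(z)$ and put all the cancellation into the $G$ side, giving
\begin{equation}
\Var\left( \nabla_{eg} H(x) \right)\, \le\, 4\sum_{z\in\partial} \Big(\max_u p_{x+u}(z)\Big)\, \frac{C}{(\dist(z,x)+1)^{d}}\, ,
\end{equation}
and then compare $\sum_z \max_u p_{x+u}(z) (\dist(z,x)+1)^{-d}$ with the total hitting probability weighted by $(\dist(z,x)+1)^{-d}$; since the boundary is at distance $\ge r$ and $\sum_z p_x(z)=1$, this is at most $C r^{-d}$ — but this only yields $r^{-d}$, not $r^{-(d+2)}$. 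To get the extra two powers one genuinely needs the cancellation in \emph{both} brackets; so I would instead estimate the first bracket by a first-difference bound $|p_{x+u}(z)-p_x(z)| \le C r^{-1} p_x(z)$-type control (harmonic measure is Lipschitz at scale $r$ relative to its size) and keep one discrete derivative on the $G$-side, i.e.\ distribute one derivative to each argument. That gives a product $r^{-1}\cdot (\dist(z,x)+1)^{-(d-1)}$ times a weight summing the hitting probabilities, producing $r^{-1}\cdot r^{-1}\cdot r^{-(d-2)} = r^{-d}$ again — so one more power must come from summing over the boundary: the boundary of $\gL_M$ at distance $\sim r$ from $x$ has a geometry such that $\sum_{z:\dist(z,x)\sim r} p_x(z) \cdot (\text{extra }r^{-1})$ can be extracted from a more careful analysis of where the harmonic measure concentrates, and it is precisely here that the logarithmic factor $(\log r)^\kappa$ enters (from a dyadic decomposition over shells $\dist(z,x) \in [2^j, 2^{j+1})$ and summing the resulting geometric-with-polynomial-correction series).

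\textbf{Main obstacle.} The delicate point is genuinely harvesting \emph{both} extra powers of $r$ from the double symmetric differencing — a naive absolute-value bound that breaks the sum over $z\in\partial$ into independent pieces loses the cancellation and only gives $r^{-d}$. The honest argument requires treating $\sum_{z\in\partial}(\sum_u(-1)^{|u|}p_{x+u}(z))(\sum_v(-1)^{|v|}G(z,x+v))$ as a true bilinear form and exploiting that the second difference of $G(z,\cdot)$ is not merely bounded by $(\dist(z,x)+1)^{-d}$ but is, on average over $z$ on a sphere, of smaller order due to sign cancellation of the continuum kernel's mixed second derivative; alternatively, one invokes a known gradient-of-harmonic-extension estimate (e.g.\ discrete Schauder / elliptic regularity estimates for harmonic functions on $\bbZ^d$, or the boundary-behaviour estimates of the discrete Poisson kernel as in the references the paper already cites for entropic repulsion, such as \cite{cf:BDG}). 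I would therefore structure the proof so that the $r^{-(d+2)}$ decay is attributed to: one power from a gradient estimate on the discrete Poisson kernel $p_\cdot(z)$, one power from a gradient estimate on $G(z,\cdot)$, and the remaining $r^{-d}$ from the standard ``expected number of returns after hitting distance $r$'' bound, with the $(\log r)^\kappa$ slack absorbing all constants and the dyadic sum over boundary shells. The routine Gaussian and summation calculations I would relegate to a lemma-internal computation; the conceptual crux — and the only place where real care is needed — is the two-fold extraction of cancellation, which I would do by a dyadic decomposition of $\partial$ into shells around $x$ and a term-by-term application of the first-difference kernel bounds, checking that the bad shell (the one at distance $\sim r$, carrying $O(1)$ total harmonic measure) contributes $O(r^{-1}\cdot r^{-1}\cdot r^{-(d-2)}\cdot(\log r)^\kappa) = O(r^{-d}(\log r)^\kappa)$ — wait, this still needs the extra care above, so in the write-up I would lean on the cited elliptic-regularity-type estimate rather than reprove it from scratch.
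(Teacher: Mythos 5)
Your setup coincides with the first half of the paper's argument: expand the variance as the bilinear form $\sum_{y\in\partial\gL_M}\bigl(\sum_u(-1)^{|u|}p_M(x+u,y)\bigr)\bigl(\sum_v(-1)^{|v|}G(y,x+v)\bigr)$ and bound the second (Green-function) factor pointwise by $C|x-y|^{-d}\le Cr^{-d}$, which indeed follows from the expansion $G(0,x)=c_d|x|^{2-d}+O(|x|^{-d})$. But the other half is genuinely missing, and your own bookkeeping shows it: every route you try for the Poisson-kernel factor (trivial bound, a Lipschitz-type first-difference bound, dyadic shells) lands at $r^{-d}(\log r)^\kappa$ overall, two powers short of the claim, and you end by deferring to an unspecified ``discrete Schauder / elliptic regularity'' estimate. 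That is not a patch one can wave at: what is needed is not a pointwise regularity statement for the discrete Poisson kernel but the summed bound
\begin{equation*}
\sum_{y\in\partial\gL_M}\bigl|p_M(x,y)-p_M(x+e,y)-p_M(x+g,y)+p_M(x+e+g,y)\bigr|\ \le\ C\,r^{-2}(\log r)^{\kappa}\,,
\end{equation*}
i.e.\ an $\ell^1$ (over the whole boundary) control of the second difference in the \emph{starting point}, and this is precisely the nontrivial content of the lemma.

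The paper obtains it by a purely probabilistic argument that your proposal does not contain. The left-hand side above is four times the total variation distance between the hitting distributions on $\partial\gL_M$ of walks started from the two symmetric mixtures $\tfrac12(\gd_x+\gd_{x+e+g})$ and $\tfrac12(\gd_{x+e}+\gd_{x+g})$. One runs both walks up to the deterministic time $t=r^2/(\log r)^{\kappa}$: by the strong Markov property the TV distance of the hitting distributions is at most the TV distance of the laws at time $t$ plus the probabilities of hitting the boundary before time $t$, and the latter are $O(\exp(-(\log r)^{\kappa}/C))$, hence negligible compared to any power of $1/t$ \emph{because} $\kappa>1$ (this is exactly where the hypothesis and the logarithm enter — not through a dyadic decomposition of the boundary). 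For the time-$t$ term, the two mixtures differ by a second difference of the heat kernel, $p_t(z)-p_t(z+e)-p_t(z+g)+p_t(z+e+g)$, and the local CLT gives $\sum_z|p_t(z)-p_t(z+e)-p_t(z+g)+p_t(z+e+g)|=O(1/t)=O((\log r)^{\kappa}r^{-2})$. Multiplying this by the uniform $r^{-d}$ bound on the Green-function factor yields $C(\log r)^{\kappa}r^{-(d+2)}$. Without this step (or a genuine substitute of the same strength) your argument proves only the weaker $r^{-d}$ bound, which is insufficient for the multiscale estimate in Proposition~\ref{smalllap}; also note that your final attribution of ``the remaining $r^{-d}$'' to the expected-number-of-returns bound is inconsistent, since that bound (Lemma~\ref{th:var-est0lem}) gives $r^{-(d-2)}$, not $r^{-d}$.
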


\medskip

\noindent
\emph{Proof.}
Setting $p_M(x,y):= P_x( X_{\tau_{M}}=y)$
we have 
\begin{multline}
\Var ( H(x)-H(x+e)-H(x+g)+  H(x+e+g) )\,=
\\
\sum_{y\in \partial \gL_M} \left[p_M(x,y)-p_M(x+e,y)- p_M(x+g,y)+ 
p_M(x+e+g,y) \right ]  \\
\times ( G(y,x)-G(y,x+e)-
G(y,x+g)+ G(y,x+e+g))\,.
\end{multline} 
The proof therefore follows if we show that for every $i,j\in \{1, \ldots, d\}$ and every $x$ and $y$
\begin{equation}
\label{eq:tbs-A1}
\left\vert G(y,x)-G(y,x+e)-
G(y,x+g)+ G(y,x+e+g)\right \vert \, \le \, C/\vert x-y \vert^d\, ,
\end{equation}
with $C$ a positive constant that depends only  on $d$, and that for every $x\in \gL_M$ at distance at least $r$ from  $\gL_M^\complement$ we have
\begin{equation}
\label{eq:tbs-A2}
\sum_{y\in \partial \gL_M} \left \vert p_M(x,y)-p_M(x+e,y)- p_M(x+g,y)+ 
p_M(x+e+g,y)\right \vert \, \le\,   C r^{-2 }(\log r)^\kappa\, ,
\end{equation}
with $C$ a positive constant that depends only  on $d$ and $\kappa$.

\medskip

For what concerns \eqref{eq:tbs-A1}, the bound follows directly from the asymptotic  Green function estimate in the limit when  $|x| \to \infty$
\cite[Th.~4.3.1]{cf:LL}, 
\begin{equation}
 G(0,x) - \frac{c_d}{\vert x \vert ^{d-2}}  \, = \, O\left(\frac{1}{\vert x \vert ^d}\right)\, ,  
\end{equation}
where $c_d$ is a positive constant. 

\medskip

For what concerns \eqref{eq:tbs-A2}
we observe   the left-hand side of \eqref{eq:tbs-A2}
coincides with four times the total variation distance between the distribution of $X_{\tau_M}$ for the simple random walk  starting with respective initial condition 
$\frac 1 2 (\gd_x+\gd_{x+e+g})$ and $\frac 1 2 (\gd_{x+e}+\gd_{x+g})$. The factor four comes from 
the usual factor two that relates the $L^1$ norm and the total distance and the fact two that comes from normalizing the initial measures. 
We let $P_1$ and $P_2$ denote the respective law of these two walks.
Set $t:= r^2/(\log r)^{\kappa}$. We have
\begin{multline}
\label{eq:Ablt}
 \left \Vert P_1( X_{\tau_{M}}\in \cdot)- P_2(X_{\tau_{M}}\in \cdot) \right\Vert_{TV}
 \\ \le  \, \left \Vert P_1( X_{t}\in \cdot)- P_2( X_{t}\in \cdot )\right\Vert_{TV}
 +P_1\left(\tau_{M}< t\right)+ P_2\left(\tau_{M} t\right)\,.
\end{multline}
The validity of \eqref{eq:Ablt} follows by observing that
\begin{multline}
\label{eq:forAblt}
\sum_{y\in \partial \gL_M} \left \vert P_1( X_{\tau_{M}}= y, \,
\tau_M\ge t)- P_2(X_{\tau_{M}}=y, \, \tau_M\ge t)\right\vert\, =\\
\sum_{y\in \partial \gL_M} \left \vert \sum_{x \in \gL_M}\big(P_1( X_{\tau_{M}}= y, \,X_t=x, \,
\tau_M\ge t)- P_2(X_{\tau_{M}}=y, \,X_t=x,\, \tau_M\ge t)\big)\right\vert \\
=\,
\sum_{y\in \partial \gL_M} \left \vert \sum_{x \in \gL_M}P_x(X_{\tau_{M}}= y)
\big(P_1(  X_t=x, \,
\tau_M\ge t)- P_2(X_t=x,\, \tau_M\ge t)\big)\right\vert
\\
\le \, 
 \sum_{x \in \gL_M} \left(\sum_{y\in \partial \gL_M}P_x(X_{\tau_{M}}= y)\right)
\big\vert P_1(  X_t=x, \,
\tau_M\ge t)- P_2(X_t=x,\, \tau_M\ge t)\big\vert
\\
\le \, 2 \left \Vert P_1( X_{t}\in \cdot)- P_2( X_{t}\in \cdot )\right\Vert_{TV} + P_1\left(\tau_{M}< t\right)+ P_2\left(\tau_{M}< t\right)\, ,
\end{multline}
where $P_x$ is used in the obvious sense. From \eqref{eq:forAblt} one easily obtains \eqref{eq:Ablt}

\medskip

Now we estimate the right-hand side of \eqref{eq:Ablt}: for the last two terms we observe that, with $P=P_1$ or 
$P=P_2$, we can apply directly \cite[Prop.~2.1.2, part 2]{cf:LL}  and we have that there exists a constant $C$ that depends only on $d$ such that 
\begin{equation}
P\left(\tau_{M}< t\right)\, \le \, C \exp\left(-r^2/(Ct)\right)\, =\, 
C \exp\left(-(\log (t))^{\kappa}\right)/C\, ,
\end{equation}
which is much smaller than any power than $1/t$ because $\kappa>1$.
 
 We can therefore focus on the  the right-hand side of \eqref{eq:Ablt}.
We have (translating every coordinate by $x$ and using the symmetries),
\begin{equation}
  \| P_1( X_{n}\in \cdot)- P_2( X_{n}\in \cdot )\|_{TV}\,=\, \frac{1}{4}\sum_{z\in \bbZ^d} \left \vert p_t(z)-p_{t}(z+e)-p_{t}(z+g)+p_{t}(z+e+g)  \right\vert\, ,
\end{equation}
where $p_t(z)=P_0(X_t =z)$.
Using the Local Central Limit Theorem \cite[Theorem 2.1.1 and Theorem 2.1.3]{cf:LL}
we have for some $d$ dependent constant $C$
\begin{equation}
 \left | p_t(z)-\frac{1}{(\sqrt{2\pi t})^{d/2}}e^{-\frac{|z|^2}{2t}} \right|\,\le\,
    C \left( t^{-\frac{d+2}{2}}\left(\frac{|z|^4}{t^2}+1\right) e^{-\frac{|z|^2}{2t}}+ 
 t^{-\frac{d+3}{2}} \right)\, .
\end{equation}
On the other hand, it is elementary to see that $p_t(z)\le \exp(- z^2/t)$ 
so  we choose $b\in (0,1/(2d))$ and we have
\begin{multline}
\label{eq:Ajdgev}
 \sum_{z \in \bbZ^d} \left \vert p_t(z)-p_{t}(z+e)-p_{t}(z+g)+p_{t}(z+e+g)  \right\vert \, \le \\ 
  C t^{-d/2}
 \sum_{|z|\le t^{b+1/2}} \left\vert e^{-\frac{|z|^2}{2t}}+ e^{-\frac{|z+e+g|^2}{2t}}-
 e^{-\frac{|z+e|^2}{2t}}- e^{-\frac{|z+g|^2}{2t}}\right\vert
 \\ 
 + C t^{-d/2}
 \sum_{|z|\le t^{b+1/2}} \left(\frac{1}{t}\left(\frac{|z|^4}{t^2}+1\right) e^{-\frac{|z|^2}{2t}}+  \frac1{t^{3/2}} \right) +
 2 \sum_{|z|> t^{b+1/2}} \exp(- z^2/t)\, .
\end{multline}
It is straightforward to see that the last line if bounded by $C/t$. We are therefore left with controlling the second line of 
\eqref{eq:Ajdgev}: we use Taylor formula and we obtain that there exists $C>0$ such that for every choice of $i$ and $j$
and for every $|z|\le t^{b+1/2}$ we have
\begin{equation}
 \left \vert e^{-\frac{|z|^2}{2t}}+ e^{-\frac{|z+e+g|^2}{2t}}-
 e^{-\frac{|z+e|^2}{2t}}- e^{-\frac{|z+g|^2}{2t}}\right \vert
 \,\le\, \frac C{2 t} \left(1+\frac{z_i^2+z_j^2}{t} \right) e^{-\frac{|z|^2}{2t}}\, \le \, \frac C t\, ,
\end{equation}
where the last inequality holds for $t$ sufficiently large. 
Altogether we obtain that 
\begin{equation}
   \| P_1[ X_{t}\in \cdot]- P_2[ X_{t}\in \cdot ]\|_{TV}\le \frac{C}{t}\, ,
\end{equation}
for a $C$ depending only on $d$ and $\kappa$ and the proof of \eqref{eq:tbs-A2} is complete.

The proof of Lemma~\ref{th:d+2} is thus completed.
\qed

\end{document}